\renewcommand{\mod}{\mathbin{\;\mathrm{mod}\;}}
\newtheorem{theorem}{Theorem}
\numberwithin{theorem}{section}
\newtheorem{definition}[theorem]{Definition}
\newtheorem{remark}[theorem]{Remark}
\newtheorem{lemma}[theorem]{Lemma}
\newtheorem*{lemma*}{Lemma}
\newtheorem{proposition}[theorem]{Proposition}
\newtheorem{corollary}[theorem]{Corollary}
\newcommand{\maj}{\textrm{maj}} 
\newcommand{\MAJ}{\textrm{MAJ}} 
\newcommand{\UMA}{\textrm{UMA}} 
\title{Measurement-based uncomputation of \\ quantum circuits for modular arithmetic}
\author[1,2]{Alessandro Luongo}
\author[3]{Antonio Michele Miti}
\author[4]{Varun Narasimhachar}
\author[5,*]{Adithya Sireesh}
\affil[1]{\small{Centre for Quantum Technologies, National University of Singapore, Singapore}}
\affil[2]{Inveriant Pte. Ltd., Singapore}
\affil[3]{Department of Mathematics, Sapienza Università di Roma, Rome, Italy} 
\affil[4]{A*STAR Quantum Innovation Centre (Q.InC), Institute of High Performance Computing (IHPC), \:\: \:\: Agency for Science, Technology and Research (A*STAR), Singapore.}
\affil[5]{School of Informatics, University of Edinburgh, Scotland, United Kingdom}
\affil[*]{Corresponding author: \url{asireesh@ed.ac.uk}}
\begin{document}

\maketitle
 

\begin{abstract}
Measurement-based uncomputation (MBU) is a technique used to perform probabilistic uncomputation of quantum circuits. We formalize this technique  for the case of single-qubit registers, and we show applications to modular arithmetic. First, we present formal statements for several variations of quantum circuits performing non-modular addition: controlled addition, addition by a constant, and controlled addition by a constant. We do the same for subtraction and comparison circuits. This addresses gaps in the current literature, where some of these variants were previously unexplored. Then, we shift our attention to modular arithmetic, where again we present formal statements for modular addition, controlled modular addition, modular addition by a constant, and controlled modular addition by a constant, using different kinds of plain adders and combinations thereof. We introduce and prove a ``MBU lemma'' in the context of single-qubit registers, which we apply to all aforementioned modular arithmetic circuits. Using MBU, we reduce the Toffoli count and depth by $10\%$ to $15\%$ for modular adders based on the architecture of \cite{vedral1996quantum}, and by almost $25\%$ for modular adders based on the architecture of \cite{beauregard2002circuit}. Our results have the potential to improve other circuits for modular arithmetic, such as modular multiplication and modular exponentiation, and can find applications in quantum cryptanalysis.
\end{abstract}

\section{Introduction}
Consider a quantum circuit $U_f$ computing a Boolean function
$ f: \{0,1\}^n \to \{0,1\}^m$ between bit strings. 
When $f$ is invertible (meaning that $f$ is injective and $n=m$) we say that  $U_f$ computes $f$ \emph{in-place} if it performs the mapping
\begin{equation} U_f: \ket{x}_n \mapsto \ket{f(x)}_m
\end{equation}
for any bit string $x\in \{0,1\}^n$, where we use the subscripts to denote the number of qubits in a quantum register. An in-place implementation is not always possible. In the most general case, one needs to compute $f$ \emph{out-of-place}, meaning that the circuit $U_f$ realises the following mapping
\begin{equation}
    U_f: \ket{x}_n\ket{0}_m \mapsto \ket{x}_n \ket{f(x)}_m~.
\end{equation}

A realistic implementation of the above gate will, in principle, involve a number $k$ of auxiliary qubits which get mapped in a \emph{garbage} state $\ket{g(x)}$ as follows:
\begin{equation}
    \widetilde{U}_f:\ket{x}_n\ket{0}_m\ket{0}_k \mapsto \ket{x}_n \ket{f(x)}_m\ket{g(x)}_k.
\end{equation}

The auxiliary qubits are entangled with the other registers and must be uncomputed, i.e. restored to their original value, in order to get a reversible out-of-place implementation of $f$. Such uncomputation can be achieved in several ways. 
A first solution, due to \cite{Bennett1973}, simply prescribes copying the output $\ket{f(x)}_m$ into an auxiliary $m$-qubit register via a set of $m$ CNOT gates and then apply $\widetilde{U}_f^\dagger$. Therefore, one can implement $U_f$ as in figure~\ref{fig:bennet-uncomputation}. 
\begin{figure}[h!]
\centering
    \includegraphics[width=0.4\textwidth]{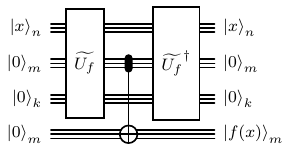}
    \caption{Bennet's garbage uncomputation.}\label{fig:bennet-uncomputation}
\end{figure}
This strategy doubles the resource cost of the reversible computation of $f$, also requiring $m$ more ancillary qubits. Alternatively, one could look for an out-of-place implementation of the garbage function $U_g$, seeing $g$ as a function of $x$ and $f(x)$.
When this unitary is available, one can implement $U_f$ as in figure~\ref{fig:garbage-uncomputation}.
 \begin{figure}[h!]
 \centering
     \includegraphics[width=0.4\textwidth]{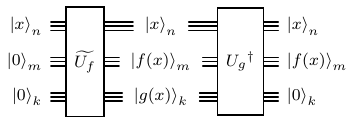}
     \caption{Garbage gate uncomputation.}\label{fig:garbage-uncomputation}
 \end{figure}

In principle, invoking the garbage circuit $U_g$ can be quite resource intensive. Therefore, it is necessary to develop techniques to optimize the synthesis of quantum circuits and their uncomputation.

\paragraph{Arithmetic Circuits.} Effective implementations of binary arithmetic are fundamental to many quantum algorithms, such as quantum algorithms for factoring~\cite{shor1994algorithms, gidney2021factor} and discrete logarithm~\cite{DBLP:journals/iacr/Ekera18}, quantum walks~\cite{childs2002quantum}, and for algorithms that construct oracles for Grover's search and quantum phase estimation, e.g.~\cite{berry2019qubitization,doriguello2022quantum,aggarwal2017quantum}. Arithmetic operations like (modular) addition, subtraction, and comparison (by a constant) have been extensively studied, resulting in numerous implementation strategies, each with its own tradeoffs. Many circuits for arithmetic are based on the plain addition circuit defined by the operation $\ket{x}_n\ket{y}_n \xmapsto{\mathsf{Q_{ADD}}} \ket{x}_n\ket{x+y}_{n+1}$ for two $n$-qubit quantum registers. Other primitives can be derived by composing these circuits for addition. However, there are often more efficient ways to build arithmetic primitives. Broadly, these quantum circuits are designed for both quantum-quantum arithmetic functions, i.e., circuits of the form $\ket{x}\ket{y} \xmapsto{\mathsf{U}} \ket{x}\ket{f(x,y)}$, and quantum-classical arithmetic functions, i.e., circuits of the form $\ket{x} \xmapsto{\mathsf{U}(a)} \ket{x}\ket{f(x,a)}$.

In the computation of our function $f$, intermediate values may need to be appropriately uncomputed. For example, any \emph{ripple carry} addition algorithm, analogously to the \emph{schoolbook} addition method, requires computing the carry bitwise starting from the least significant bits of the two addends.

However, this process leaves carry bits in memory, which need to be uncomputed from the rest of the registers to ensure the correctness of the computation.
Efficient quantum arithmetic circuits are pivotal for advancing quantum computing, underpinning numerous key algorithms. This also involves finding cheaper ways of performing uncomputation on qubits storing garbage/intermediate computations.

\subsection{Main results}
In this work, we improve quantum circuits for modular arithmetic using measurement-based uncomputation, specifically focusing on quantum modular adders. In section~\ref{sec:adders}, we begin by formalizing previously known circuits for addition and defining in a theorem the resources required for their implementation. For all the adders we consider~\cite{cuccaro2004new, vedral1996quantum, draper2000addition, gidney2018halving}, we also cover the possible cases for (controlled) addition, subtraction and comparison (by a constant). While these constructions can be easily derived, we could not find records of many of them in previous literature, e.g., (controlled) comparator by a constant, controlled modular addition, etc. We offer a modular and composable framework to construct these arithmetic circuits from other known arithmetic primitives. We also show a construction for a controlled adder using just a single ancilla by modifying the CDKPM adder (theorem~\ref{thm:cuccaro-controlled-adder-1-extra-ancilla}). In section~\ref{sec:modular_adders}, we consider various tradeoffs for combining the results presented in the previous section to form a quantum circuit that performs a (controlled) modular addition (by a constant). In particular, we show how two different adder architectures (the CDKPM adder~\cite{cuccaro2004new} and Gidney adder~\cite{gidney2018halving}) can be combined, leading to better space-time tradeoffs compared to using either adder individually within the modular adder architecture (see theorem~\ref{thm: modular adder gidney + CDKPM}). These tradeoffs are particularly important when running quantum algorithms in early error-corrected settings. In section~\ref{sec:MBU}, we discuss the proof of the measurement-based uncomputation lemma, which we present below. This technique was already known in the literature~\cite{gidney2018halving, kornerup2021tight, gidney2019approximate}, but to the best of our knowledge, it has not been formalized as a plug-and-play result. While MBU can be generalized to uncomputation of multi-qubit registers, for modular arithmetic we can use the following lemma.

\begin{lemma*}[Measurement-based uncomputation lemma]
    Let $g:\,\{0,1\}^n\to\{0,1\}$ be a function, and let $U_g$ be a self-adjoint $(n+1)$-qubit unitary that maps
$    U_g\ket{x}_n\ket{b}_1=\ket{x}_n\ket{b\oplus g(x)}_1$.
There is an algorithm performing the mapping 
$\sum_x\alpha_x\ket{x}_n\ket{g(x)}_1 \mapsto \sum_x\alpha_x\ket{x}_n\ket{0}_1 $
        using $U_g$, $2$ $\mathsf{H}$ gates, and $1$ $\mathsf{NOT}$ gate with $1/2$ probability.
    It also involves one single-qubit computational basis measurement and a single $\mathsf{H}$ gate.
\end{lemma*}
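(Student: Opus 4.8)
The plan is to convert the Boolean value stored in the target qubit into a relative phase, read it out with a single measurement, and then use phase kickback through $U_g$ to repair the residual phase only in the branch where the measurement fails. Concretely, I would first apply a Hadamard to the last qubit of $\sum_x \alpha_x \ket{x}_n \ket{g(x)}_1$, producing $\sum_x \alpha_x \ket{x}_n \tfrac{1}{\sqrt{2}}\bigl(\ket{0} + (-1)^{g(x)}\ket{1}\bigr)_1$, and then measure that qubit in the computational basis. Since for every $x$ the amplitudes on $\ket{0}$ and $\ket{1}$ both have modulus $1/\sqrt{2}$, both outcomes occur with probability exactly $\tfrac{1}{2}$, independently of the coefficients $\alpha_x$ and of $g$.

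In the outcome-$0$ branch the post-measurement state is, up to normalization, exactly $\sum_x \alpha_x \ket{x}_n \ket{0}_1$, so the uncomputation succeeds with no further gates; this is the ``free'' half of the algorithm and accounts for the stated success probability $1/2$. The work is concentrated in the outcome-$1$ branch, where the normalized state is $\sum_x \alpha_x (-1)^{g(x)} \ket{x}_n \ket{1}_1$: the target qubit is cleanly in $\ket{1}$, but an $x$-dependent sign $(-1)^{g(x)}$ has been imprinted on the data register and must be erased.

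To remove that sign I would reuse $U_g$ via phase kickback. Applying $\mathsf{H}$ to the target sends $\ket{1}\mapsto\ket{-}$, and the defining action of $U_g$ yields the identity $U_g\ket{x}_n\ket{-}_1 = (-1)^{g(x)}\ket{x}_n\ket{-}_1$, obtained by linearity from $U_g\ket{x}\ket{0}=\ket{x}\ket{g(x)}$ and $U_g\ket{x}\ket{1}=\ket{x}\ket{1\oplus g(x)}$. Thus one invocation of $U_g$ multiplies the state by a second factor $(-1)^{g(x)}$, canceling the existing sign and leaving $\sum_x\alpha_x\ket{x}_n\ket{-}_1$; a final $\mathsf{H}$ returns the target to $\ket{1}$ and a $\mathsf{NOT}$ resets it to $\ket{0}$, giving the desired $\sum_x\alpha_x\ket{x}_n\ket{0}_1$. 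This correction costs exactly $U_g$, two $\mathsf{H}$ gates, and one $\mathsf{NOT}$, matching the statement, while the always-present cost is the single $\mathsf{H}$ and the single measurement.

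I expect the only genuine subtlety to be the phase-kickback identity, and in particular the dual role of $U_g$: it computes $g$ in the computational basis yet imprints $(-1)^{g(x)}$ when the target is in the $\ket{\pm}$ basis. This is precisely where the hypothesis that $U_g$ is self-adjoint (equivalently, an involutive XOR oracle) enters, though it follows automatically from the stated action rather than being an additional assumption; everything else is bookkeeping of normalizations across the two equiprobable measurement branches.
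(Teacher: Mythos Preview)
Your proposal is correct and follows essentially the same approach as the paper: apply $\mathsf{H}$ to the target, measure, and in the outcome-$1$ branch use $\mathsf{H}$, $U_g$ (phase kickback on $\ket{-}$), $\mathsf{H}$, and $\mathsf{NOT}$ to cancel the residual $(-1)^{g(x)}$ phase and reset the qubit. Your observation that self-adjointness is automatic from the stated XOR-oracle action is also accurate.
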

Applying the MBU lemma leads to gate costs expressed \emph{``in expectation''}. The expected value is calculated based on a Bernoulli distribution with a probability \( p = \frac{1}{2} \). This is because, for any quantum state $\sum_x\alpha_x\ket{x}_n\ket{g(x)}_1$, if $g(x) \in \{0,1\}$, when the last register is measured in the \( X \) basis, there is an equal probability of obtaining a \( \ket{0} \) or \( \ket{1} \) state. The remainder of section~\ref{sec:MBU} is dedicated to deriving formal statements for modular addition circuits optimized by MBU. While table~\ref{tab:results-main} presents our results for modular addition (section~\ref{sec:MBUmodularaddition}), we also share resource counts for controlled modular addition (section~\ref{sec:MBUcontrolledmodular}), modular addition by a constant (section~\ref{sec:MBUmodularbyaconstant}), and controlled modular addition by a constant (section~\ref{sec:MBUcontrolledmodularbyaconstant}). For example, for the controlled modular adder derived from~\cite{cuccaro2004new}, MBU can reduce the $\mathsf{Tof}$ (Toffoli) count by $n+\frac{1}{2}$ gates in expectation (compare proposition~\ref{prp:modular adder with CDKPM} with proposition~\ref{prp:cdkpm-controlled-modular-adder}).
The minimum number of $\mathsf{Tof}$ gates is achieved by proposition~\ref{prp:gidney-controlled-modular-adder} using $2n+2$ ancilla qubits. 
For modular addition by a constant, MBU reduces the $\mathsf{Tof}$ count by $n$ gates in expectation (compare proposition~\ref{prp:takahashi-cheap-const-mod-adder} with theorem~\ref{thm:mbu-takahashi-cheap-const-mod-adder}), leading to a $16.7\%$ improvement on the original circuit. We obtain similar improvements for controlled modular addition by a constant. We also show a way to perform a double-sided comparison (theorem~\ref{thm:two-sided-comp}), i.e. to see if a given register's value is contained within two other register's values. This circuit is a prime candidate for MBU, leading to a nearly $25\%$ reduction in the cost of the circuit.

\begin{table}[htbp]
\centering
\label{tab:mytable}
\begin{adjustbox}{width=\textwidth}
\begin{tabular}{c||c|c|c|c|c|c|c}
    &\textbf{Logical Qubits}&\multicolumn{2}{c|}{\textbf{Toffoli}}&\multicolumn{2}{c|}{\textbf{CNOT,CZ}}&\multicolumn{2}{c}{\textbf{X}}\\\hline
    &&w/o MBU&with MBU&w/o MBU&with MBU&w/o MBU&with MBU\\\hline
    (5 adder) VBE~\cite{vedral1996quantum}&$4n+2$&$20n+10$&$16n+8$&$20n + 2|p|+22$&$16n+2|p|+18$&$|p|+2$&$|p|+2.5$\\\hline
    (4 adder) VBE&$4n+2$&$16n+4$&$14n+4$&$20n +2|p|+18$&$17n+2|p|+15.5$&$2|p|+1$&$2|p|+1.5$\\\hline
    CDKPM~\cite{cuccaro2004new}&$3n+2$&$8n$&$7n$&$16n+2|p|+4$&$14n+2|p|+3.5$&$2|p|+1$&$2|p|+ 1.5$\\\hline
    Gidney~\cite{gidney2018halving}&$4n+2$&$4n$&$3.5n$&$26n+2|p|+4$&$22.75n+2|p|+3.5$&$2|p|+1$&$2|p|+1.5$\\\hline
    CDKPM+Gidney&$3n+2$&$6n$&$5.5n$&$21n+2|p|+4$&$17.75n+2|p|+3.5$&$2|p|+1$&$2|p|+1.5$\\\hline
    &\textbf{Logical Qubits}&\multicolumn{2}{c|}{$\mathsf{QFT}_{n+1}$}&\multicolumn{2}{c|}{$\mathsf{PCQFT}_{n+1}$}&\multicolumn{2}{c}{\textbf{-}}\\\hline
    &&w/o MBU&with MBU&w/o MBU&with MBU&-&-\\\hline
    Draper~\cite{draper2000addition}&$2n+2$&$10$&$8$&$1$&$1$&&\\\hline
    Draper (Expect)&$2n+2$&$8$&$6$&$1$&$1$&&\\\hline
\end{tabular}
\end{adjustbox}
\caption{Cost of modular addition using different plain adder circuits in the VBE architecture of modular addition. We denote by $|p|$ the Hamming weight of the bit string of the number $p$ expressed in binary expansion. $\mathsf{PCQFT}$ stands for ``partially-classical'' quantum Fourier transform, where each controlled rotation in the $\mathsf{QFT}$ is replaced by a classically controlled rotation. Every row in this table corresponds to a theorem in our manuscript. Results with MBU are considered ``in expectation.''. Here, Draper (Expec) represents the repeated use of the Draper addition circuit. In expectation, the cost of the first $\mathsf{QFT}$ and last $\mathsf{IQFT}$ become negligible compared to the rest of the circuit.
} \label{tab:results-main}
\end{table}

Our work has the potential to improve quantum circuits for modular multiplication, modular exponentiation, and quantum circuits for cryptographic attacks~\cite{wang2024optimal,vedral1996quantum,wang2024comprehensive}. We leave this research for future work. The interested reader can access
\footnote{\url{https://github.com/AdithyaSireesh/mbu-arithmetic}} the code for symbolic computation of some of the statements presented in this work.

\subsection{Related works}
This section reviews recent progress in quantum circuits for modular arithmetic, with a focus on optimizing size and depth. Then, we turn our attention to previous applications of measurement-based uncomputation.
 Early proposals for quantum addition~\cite{cuccaro2004new, vedral1996quantum, Draper2004}, some of which are employed in fault-tolerant algorithms with state-of-the-art resource estimates~\cite{gidney2021factor, litinski2023compute, Gouzien2023}, were inspired by classical circuits for addition, with an emphasis on minimizing the $\mathsf{T}$ count and/or the $\mathsf{Tof}$ (Toffoli) count, and the number of ancilla qubits. Subsequent approaches have explored more quantum-specific methods of performing arithmetic~\cite{zalka2006shor, draper2000addition, beauregard2002circuit,yuan2023improved}. A nice quantum-specific way to perform a modular addition into a quantum register $\ket{x}_n$ was developed by Zalka~\cite{zalka2006shor}. It involves only a single non-modular addition (with modulus $p$) as long as $\ket{x}_n$ is prepared in a \emph{coset state}: $\sum_{c=0}^{2^k-1}\ket{x+c\cdot p}_{n+k}$. The overhead here is that the size of the padding (i.e. the value of $k$ of extra qubits required) for the coset state grows logarithmically in the number of additions that are to be performed (See ~\cite{gidney2019approximate} for more details). All these arithmetic circuits listed previously (and many others) serve as building blocks for more complex operations such as modular addition, multiplication, and exponentiation, to name a few. Another survey on quantum adders can be found in~\cite{takahashi2009quantum}. The interested reader is encouraged to read~\cite{wang2024comprehensive} for a comprehensive literature review.

\paragraph{Measurement-Based Uncomputation (MBU).} Measurement-based uncomputation (MBU) was developed to efficiently address the challenge of uncomputing boolean functions. MBU is a technique used to perform uncomputation in a probabilistic manner~\cite{gidney2018halving}. This technique involves measuring the qubits to be uncomputed on the $X$ basis. Depending on the outcome of the measurement, either the uncomputation circuit is executed, or it is determined that the uncomputation has already been completed, thereby reducing the overall quantum resource requirements.

MBU has been employed in several cases~\cite{gidney2018halving, kornerup2021tight, gidney2019approximate, gidney2019windowed} to improve the resource counts of uncomputation. In particular, Gidney applied this strategy to addition circuits in ~\cite{gidney2018halving}. Starting from a Ripple Carry Addition (RCA) algorithm (see section~\ref{sec:adders} for details), they manage to reduce the $\mathsf{T/Tof}$ gate cost by replacing every $\mathsf{Tof}$ gate with a temporary logical-$\mathsf{AND}$ gate (equivalent to a $\mathsf{Tof}$ gate). While a $\mathsf{Tof}$ gate performs the mapping $\ket{x}_1\ket{y}_1\ket{t}_1 \xmapsto{\mathsf{Tof}} \ket{x}_1\ket{y}_1\ket{x \cdot y \oplus t}_1$, the logical-$\mathsf{AND}$ gate uses an ancilla qubit to execute the $\mathsf{Tof}$ gate as follows:

\[
\ket{x}_1\ket{y}_1\ket{t}_1\ket{0}_1 \xmapsto{\mathsf{Tof}} \ket{x}_1\ket{y}_1\ket{t}_1\ket{x \cdot y}_1 \xmapsto{\mathsf{CNOT}} \ket{x}_1\ket{y}_1\ket{x \cdot y \oplus t}_1\ket{x \cdot y}_1
~.
\]

Since the $\mathsf{Tof}$ gates come in pairs (the $\mathsf{Tof}$ gates used in the RCA computation must also be uncomputed), MBU is used to avoid performing a $\mathsf{Tof}$ gate for the uncomputation, replacing it with an $X$ basis measurement followed by a classically controlled $\mathsf{C\text{-}Z}$ gate. In \cite{gidney2019approximate}, MBU is used to construct the coset state more effectively. In the QROM lookup circuits introduced in~\cite{babbush2018encoding, gidney2019windowed}, MBU is used to construct a highly efficient unlookup (uncomputation of lookup) operation. For an address register $\ket{a}_n$, a lookup table performs the operation
\begin{equation}
\ket{a}_n \xmapsto{\mathsf{Lookup}} \ket{a}_n \ket{\ell_a},
\end{equation}
where $\ell_a$ is the lookup value associated with address $a$. An address register of size $n$ qubits corresponds to a lookup table of $L=2^n$ elements, making the cost of performing a lookup $L=2^n$ Toffolis. After $\ell_a$ has been used in the computation, it must be uncomputed. Naively performing the uncomputation would cost the same as the lookup. However, by combining unary encoding with MBU, it is possible to reduce the cost of the uncomputation to just $\sqrt{L}$~\cite{babbush2018encoding, gidney2019windowed} gates.

Other works study the more general problem of reversibly computing a chain of functions $x_0 \xmapsto{f_1} x_1=f_1(x_0) \xmapsto{f_2} \cdots x_{m-1}=f_{m-1}(x_{m-2}) \xmapsto{f_{m}} x_m=f_m(x_{m-1})$ through the use of pebbling games~\cite{bennett1989time}. Pebbling games offer a way to perform these computations reversibly and find effective ways of studying space-time tradeoffs of various strategies to uncompute intermediate results. In the case of chains of unitaries $\{\mathsf{U}_{f_i}\}_{i=1}^m$, spooky pebble games were introduced by Gidney~\cite{Gidney2019SpookyPebbles}, which allowed $X$ basis quantum measurements as an additional operation in pebbling games. The measurements may result in a phase applied to the states in the superposition, which is then tracked and corrected at a later time in the computation.

\subsection{Notation and convention}\label{subsec:Conventions}

We assume basic knowledge of quantum computing; we refer the interested reader to~\cite{nielsen}.
As in~\cite{cuccaro2004new},  we denote the quantum registers by capital letters $A, B, C$, etc. Each register is encoded with several qubits; we denote by $A_i$ the $i$-th qubit of register $A$.
Each qubit can be understood as a memory location; if not differently stated, we will assume that in qubit $A_i$ is initially stored the value $a_i$. We represent the state of a $n$ qubit register $A$ as $\ket{a}_n$.
In principle, applying a quantum gate to a given register affects the values stored in its qubit; we denote by $A_i \leftarrow x$ the storing of value $x \in \{0, 1\}$ in the $i$-th component of register $A$ as the result of a certain quantum circuit. Where necessary, we put the number of qubits in a register as a subscript following the ket representing the register's state, e.g.\ $\ket{x}_n$ for an $n$-qubit register.
For an $n$ bit string $\mathbf{x}=x_{n-1}x_{n-2}\dots x_{0}\in\{0,1\}^n$ we shall often equate it with the decimal number $x$ encoded in binary by $\mathbf{x}$. We denote with $a|_n$ the $n$  bit string encoding the number $a \in \{-2^n,\dots,0,1,\dots,2^n-1\}$ in the $2$'s complement notation (see remark~\ref{rem:2scomplement} in the appendix). For a number $a \in \mathbb{N}$, $|a|$ denotes the Hamming weight of the binary representation of $a$. The interested reader is referred to~\cite{von1993first} or ~\cite{harris2010digital} and appendix~\ref{Sec:binaryarithmetics} for further details on bit strings operations.

\paragraph{Quantum computing.}
We recall the definition of the following single-qubit gates.
\begin{center}
 $\mathsf{X}=\begin{bmatrix}
0 & 1 \\
1 & 0 
\end{bmatrix}$,~ $\mathsf{Z}=\begin{bmatrix}
1 & 0 \\
0 & -1 
\end{bmatrix}$,~ $\mathsf{S}=\begin{bmatrix}
1 & 0 \\
0 & \mathrm{i} 
\end{bmatrix}$,~ $\mathsf{T}=\begin{bmatrix}
1 & 0 \\
0 & e^{\mathrm{i}\pi/4}
\end{bmatrix}$~.
\end{center}
We denote by $\mathsf{C}_i\text{-}\mathsf{R}_j(\theta_k)$ the rotation by $\theta_k=2\pi/2^k$ of qubit $j$ controlled by qubit $i$ (See figure~\ref{fig:controlled-rotation-matrix}). We can eventually use register names indexed by its qubits: e.g. $\mathsf{C}_{X_j}\text{-}\mathsf{R}_{Y_i}(\theta_k)$ performs a controlled rotation between the $j$-th qubit in the register $X$ and targets the $i$-th qubits in the register $Y$. We define the function $\mathbf{1}[\mathsf{condition}]$ to be $1$ if $\mathsf{condition}$ is true, otherwise is $0$. In the case of a $\mathsf{QFT}$ based adder by a constant, we use $\mathsf{PCQFT_{a,n}}$ to denote a classically controlled set of rotation gates that need to be applied in order to perform the required addition in the phase. 
\begin{figure}[!ht]
\centering
\includegraphics{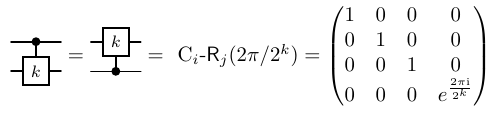}
\caption{Quantum Controlled Rotation Gate. We show both the circuit representation and the corresponding unitary matrix of the controlled rotation. The $i$ and $j$ correspond to the two qubits involved in the controlled rotation.}
\label{fig:controlled-rotation-matrix}
\end{figure}
\\
Let $y \in \{0,1\}^n$. 
The state obtained after applying the quantum Fourier transform (see e.g. ~\cite{nielsen}) $\mathsf{QFT}$ to the state $\ket{y}$ is given by
\begin{equation*}\label{fact:QFT on binary}
    \ket{\phi(y)}_n := \mathsf{QFT}_n \ket{y}_n = \frac{1}{2^{n/2}} \sum_x e^{2 \pi \mathrm{i} xy /2^n} \ket{x}_n.    
\end{equation*}
We denote by $\mathsf{QFT}_{n}$ the quantum Fourier transform acting on a $n$ qubits register and by $\mathsf{IQFT}_{n}$ its inverse.

Let $\mathsf{Q}$ be a quantum circuit composed of $m$ gates $\mathsf{Q}_j$ from a gateset $\mathcal{G} = \{\mathcal{G}_1, \mathcal{G}_2, \ldots, \mathcal{G}_K\}$. Define the gate count $c_i := |\{j \in [m] \mid \mathsf{Q}_j = \mathcal{G}_i\}|$ for $i \in [K]$. We represent the gate counts of $\mathsf{Q}$ as 
\[
 \mathsf{count}(\mathsf{Q}) = \{(\mathcal{G}_i, c_i) \mid i \in [K]\}.
\]

\begin{remark}[Circuit size of QFT~\cite{nielsen}]\label{remark:qftcost} A $\mathsf{QFT}_m$ has a circuit size of 
\[
 \mathsf{count}(\mathsf{QFT}_m) = \{(\mathsf{H}, m)\} \cup \{(\mathsf{C\text{-}R}(\theta_i), m+1-i) \mid i \in [1,m+1]\} . 
\]
\end{remark}

\paragraph{Boolean arithmetic.}
In this paragraph, we will review some basic definitions of bit string arithmetic. Denote by $\mathbf{x}= x_{n-1}\dots x_0 \in \{0,1\}^n$ a $n$-bit string i.e. a string of $n$ binary numbers. Further facts about the bit string representation of integers are reported in appendix~\ref{Sec:binaryarithmetics}. We consider the following four operations on the set of $n$-bits string.
    \begin{definition}[Bit string addition]\label{def:stringAddition}
        We call \emph{bit string addition} the binary operator 
        $$+:~\{0,1\}^n\times\{0,1\}^n\to \{0,1\}^{n+1}~, (\mathbf{x} , \mathbf{y}) \mapsto \mathbf{s} $$
        defined bit-wise by
        $s_i = x_i \oplus y_i \oplus c_i, $
        where the bit $c_i$, called $i$-th carry, is defined recursively by
        \begin{align}
            c_0 & = 0 \notag \\
            c_i & = \maj(x_{i-1}, y_{i-1}, c_{i-1}) = x_{i-1}y_{i-1} \oplus x_{i-1}c_{i-1} \oplus y_{i-1}c_{i-1} ~. \label{eq:majority}
        \end{align}
        The function $\maj$ is a logic gate called \emph{majority}, which gives $1$ whenever at least two of the three entries are different than $0$.
    \end{definition}

    \begin{definition}[Bit string $1$'s complement]\label{def:string1complement}
        We call \emph{$1$'s complement} the operator defined bit-wise by $ (\overline{\mathbf{x}})_i = x_i \oplus 1$ as
        $$\overline{~\cdot~}:~\{0,1\}^n\to \{0,1\}^{n}~, \mathbf{x} \mapsto \overline{\mathbf{x}}.$$
    \end{definition}

    \begin{definition}[Bit string $2$'s complement]\label{def:string2complement}
        We call \emph{$2$'s complement} the operator defined as $ \overline{\overline{\mathbf{x}}}=\overline{\mathbf{x}} + \mathbf{1}$, where $\mathbf{1}$ denotes the $n$ bits string $0\,\dots0\,1$ as
        $$\overline{\overline{~\cdot~}}:~\{0,1\}^n\to \{0,1\}^{n}~, \mathbf{x} \mapsto \overline{\overline{\mathbf{x}}}.$$
    \end{definition}

    \begin{definition}[Bit string subtraction]\label{def:stringSubtraction}
        We call \emph{bit string subtraction} the binary operator 
        $$-:~\{0,1\}^n\times\{0,1\}^n\to \{0,1\}^{n+1}~, (\mathbf{x} , \mathbf{y}) \mapsto \overline{\left(\overline{\mathbf{x}}+\mathbf{y}\right)} ~.$$
        The string $\mathbf{d}=\mathbf{x}-\mathbf{y}$ is given bit-wise by
        $ d_i = x_i \oplus y_i \oplus b_i$,         where the bit $b_i$, called $i$-th borrow, is defined recursively by
        \begin{align}
            b_0 & = 0 \notag \\
            b_{i+1} & = \maj(x_i\oplus 1, y_i, b_i) ~. \label{eq:borrow}
        \end{align}
    \end{definition}

\textbf{Naming convention for theorems and propositions.} As a guideline, we use ``proposition'' to refer to statements already present in the literature; in cases where we are unaware of a formal proof in the literature, we provide one. We use ``theorem'' to denote facts first stated and proven in this manuscript (to the best of our knowledge). We use the convention ``operation name - subroutine variant used (if any) - additional descriptors (if any)'' to name theorems and propositions. For example, ``Comparator - CDKPM - using half a subtractor'' represents a quantum comparator operation based on a subroutine variant by authors CDKPM, using a partial subtraction circuit.


\section{Quantum adders}\label{sec:adders}
%
\begin{definition}[Quantum addition]\label{def:abstractadder}
We define a \emph{quantum adder} as any unitary gate implementing the $n$-bit strings addition (see definition~\ref{def:stringAddition}) according to the following circuit
\begin{equation*}
    \centering
        \includegraphics[width=0.4\textwidth]{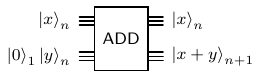}
\end{equation*}
\end{definition}
The initial extra qubit, prepared in $\ket{0}$, in the second register is needed to take into account the possible overflow. Note that by the theory of bit string additions (see appendix~\ref{Sec:binaryarithmetics} and section~\ref{subsec:Conventions}), such a circuit implements both the addition of unsigned integers, interpreting the bit string $x$ with the corresponding decimal number, and the addition of signed integers, interpreting the two entries in the 2's complement convention. Several quantum implementations of bit string addition have been discussed in the literature.
Due to their significant role as the fundamental components of quantum arithmetic, we recollect four of them for the sake of completeness.


\begin{table}[!htbp]
\centering
\small
\begin{tabular}{cccccc}
\toprule
\textbf{Procedure} & \textbf{Toff. Count} & \textbf{Ancillas} & \textbf{CNOT} & \textbf{Ref.} \\
\midrule
VBE plain adder~\cite{vedral1996quantum} & $4n$ & $n$ & $4n+4$ & prop.~\ref{prp:vedraladder} \\
CDKPM plain adder~\cite{cuccaro2004new} & $2n$ & $1$ & $4n+1$ & prop.~\ref{prp: CDKPM plain adder} \\
Gidney plain adder~\cite{gidney2018halving} & $n$ & $n$ & $6n-1$ & prop.~\ref{prp:gidneyAdder} \\
\toprule
 & $\mathsf{QFT}_{n+1}$ & \textbf{Ancillas} &  & \textbf{Ref.} \\
\midrule
Draper~\cite{draper2000addition} & $3$ & $0$ & $-$ & prop.~\ref{prp:draper-orig-qft-adder} \\
\midrule
\bottomrule
\end{tabular}
\caption{Gate count and logical qubits count for plain adders.}
\end{table}

\paragraph{VBE implementation}
One of the first implementations of a quantum adder had been proposed by Vedral, Barenko and Ekert (VBE) in \cite{vedral1996quantum}.
Such a circuit is built out of the auxiliary unitary gates $\mathsf{CARRY}$ and $\mathsf{SUM}$ (see figure~\ref{fig:carrySumGates})
computing the carry and the sum on a single bit.
\begin{figure}[h!]
    \centering
    \includegraphics[width=0.75\textwidth]{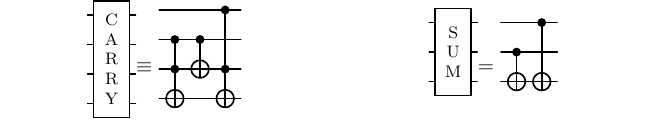}
    \caption{CARRY and SUM gate decomposition for the VBE adder \cite{vedral1996quantum}. The CARRY gate performs the mapping $\ket{c_k, x_k, y_k, c_{k+1}} \mapsto \ket{c_k, x_k, y_k \oplus x_k, c_{k+1} \oplus \maj(x_k, y_k, c_k)}$ where $\maj$ is defined in equation~\eqref{eq:majority} of section~\ref{subsec:Conventions}. The SUM gate performs the mapping $\ket{c_k, x_k, y_k} \mapsto \ket{c_k, x_k, y_k \oplus c_k \oplus x_k}$ }
    \label{fig:carrySumGates}
\end{figure}
The VBE implementation is subsumed by the following proposition.
\begin{proposition}[VBE plain adder~\cite{vedral1996quantum}]\label{prp:vedraladder}
There is a circuit implementing an $n$-bit quantum addition as per definition~\ref{def:abstractadder} using $n$ ancilla qubits and $4n$ $\mathsf{Tof}$ gates.
\end{proposition}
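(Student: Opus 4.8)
The plan is to construct the VBE adder explicitly from the $\mathsf{CARRY}$ and $\mathsf{SUM}$ gates described in figure~\ref{fig:carrySumGates} and then simply count the resources. The adder operates on the two $n$-bit registers $X$ and $Y$ (the latter padded with one overflow qubit, so $n+1$ qubits) together with a register $C$ of $n$ ancilla qubits initialized to $\ket{0}$ which will hold the carries $c_0,\dots,c_{n-1}$. The computation mirrors schoolbook ripple-carry addition: first I would ripple the carries forward, computing each $c_{i+1}$ into the ancilla register by applying $\mathsf{CARRY}$ gates bitwise from the least significant bit up to bit $n-1$. After the final $\mathsf{CARRY}$ writes the top carry bit directly into the overflow qubit $Y_n$, I would compute the most significant sum bit (a single $\mathsf{CNOT}$ between $X_{n-1}$ and $Y_{n-1}$ suffices there since its carry is already in place), and then walk back down the register, alternating an \emph{inverse} $\mathsf{CARRY}$ to uncompute each ancilla carry with a $\mathsf{SUM}$ gate that writes the correct sum bit $s_i = x_i \oplus y_i \oplus c_i$ into $Y_i$. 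This forward-then-backward sweep is exactly what restores the $C$ register to $\ket{0}$ while leaving $\ket{x}_n\ket{x+y}_{n+1}$ in the data registers.

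With the circuit in hand, correctness follows by induction on the bit index: assuming the carries $c_0,\dots,c_i$ have been correctly computed into the ancillas, the definition of $\mathsf{CARRY}$ and the recursion $c_{i+1}=\maj(x_i,y_i,c_i)$ from equation~\eqref{eq:majority} give $c_{i+1}$ correctly, and the $\mathsf{SUM}$ gate on the backward sweep produces $s_i = x_i \oplus y_i \oplus c_i$, matching definition~\ref{def:stringAddition}. I would verify that the inverse $\mathsf{CARRY}$ gates on the return sweep exactly undo the forward carries, so the ancilla register ends in $\ket{0}$ and the implementation is a genuine in-place map on $Y$ (out-of-place relative to the sum), consistent with definition~\ref{def:abstractadder}.

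For the resource count, I would tally the gates using the decompositions in figure~\ref{fig:carrySumGates}. Each $\mathsf{CARRY}$ gate contains a single $\mathsf{Tof}$ gate (for the majority contribution $c_{i+1}\oplus x_iy_i\oplus\dots$); there are $n$ forward $\mathsf{CARRY}$ gates and $n-1$ inverse $\mathsf{CARRY}$ gates on the return, but careful bookkeeping of the $\mathsf{CARRY}$/$\mathsf{SUM}$ structure yields the stated total of $4n$ $\mathsf{Tof}$ gates after accounting for each $\mathsf{CARRY}$ contributing a pair (one on the way up, one on the way down) across the full register. The ancilla count is immediate: exactly the $n$ carry qubits in register $C$, matching the claim.

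The main obstacle I anticipate is getting the boundary bookkeeping exactly right rather than off by a constant. Two subtleties deserve care: the treatment of the top bit, where the final carry feeds the overflow qubit and the most significant sum bit is computed by a lone $\mathsf{CNOT}$ rather than a full $\mathsf{SUM}$, and the precise pairing of forward and inverse $\mathsf{CARRY}$ gates that makes the $\mathsf{Tof}$ count land on $4n$ rather than $2n$ or $2n-1$. Since each $\mathsf{CARRY}$ gate in the VBE decomposition actually uses two Toffoli gates and both the forward and uncomputation passes invoke it, the multiplicities must be tracked at the endpoints to confirm the clean figure of $4n$; this endpoint accounting, not the inductive correctness argument, is where an error would most likely creep in.
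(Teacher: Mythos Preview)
Your approach is essentially the same as the paper's: build the VBE adder from a forward sweep of $n$ $\mathsf{CARRY}$ gates, a central $\mathsf{CNOT}$ at the top bit, and a backward sweep of $n-1$ $\mathsf{CARRY}^\dagger$ gates interleaved with $\mathsf{SUM}$ gates to uncompute the ancillas, then count resources. The correctness argument and the ancilla count of $n$ are right.

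One point of confusion worth cleaning up: you first assert that each $\mathsf{CARRY}$ contains a single $\mathsf{Tof}$ gate, then later correct yourself to two. The latter is correct --- the VBE $\mathsf{CARRY}$ decomposes as $\mathsf{Tof}$, $\mathsf{CNOT}$, $\mathsf{Tof}$ --- and you should commit to that from the start rather than hand-wave about ``careful bookkeeping.'' With two $\mathsf{Tof}$ per $\mathsf{CARRY}$ and none in $\mathsf{SUM}$, the $n$ forward and $n-1$ backward $\mathsf{CARRY}$ blocks give $2(2n-1)=4n-2$ Toffolis; the paper's stated $4n$ is the leading-order figure. Your instinct that the endpoint bookkeeping is where constants wobble is exactly right, but the structure of your argument is sound and matches the paper.
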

\begin{proof}
The sought circuit is given by figure~\ref{fig:vedralPlainAdder} together with the definition of the gates $\mathsf{CARRY}$ and $\mathsf{SUM}$ in figure~\ref{fig:carrySumGates}.
\begin{figure}[!ht]
    \centering
    \includegraphics[width=0.75\textwidth]{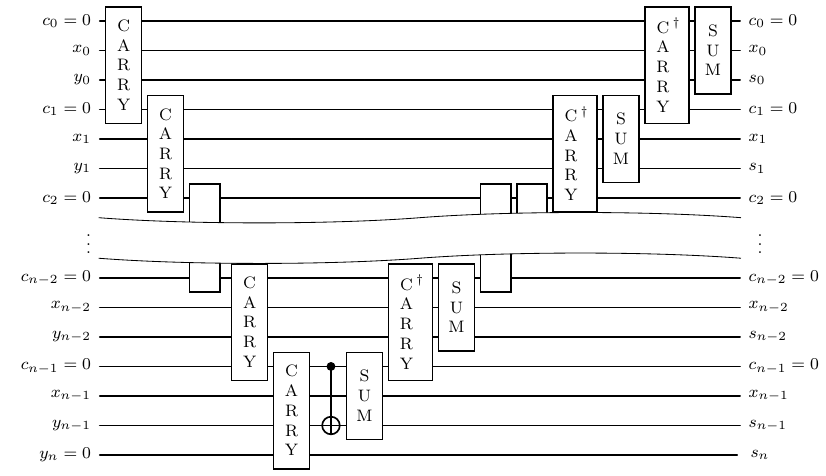}
    \caption{Adder circuit based on Vedral et al. \cite{vedral1996quantum}}
    \label{fig:vedralPlainAdder}
\end{figure}
\begin{enumerate}
    \item The circuit is initiated in the state 
    \begin{align*}
    \ket{c}_n\ket{x}_n\ket{y}_{n+1}=\ket{c_0, x_0, y_0, \dots c_i, x_i, y_i, c_{i+1}, \dots c_{n-1}, x_{n-1}, y_{n-1}, y_{n}}
    \end{align*}
    where $c_i = 0$ for any $0 \leq i \leq n $.

    \item The first part of the circuit consists of a chain of $n$ carry gates. The $i$-th gate computes the carry $c_{i+1}$ through the recursive equation~\eqref{eq:majority} storing its value in register $C_{i+1}$.
    At last, it computes the most significant bit of the result $x+y$, expressed by the carry $c_{n}=y_{n}$, storing it in the qubit $Y_{n}$, added to take into account the possible overflow.
    More precisely, the state after the chain of the $n$ different CARRY gates is:
    \begin{align*}
        C_0 & \leftarrow 0, \nonumber \\
        \vdots \nonumber \\ 
        C_i & \leftarrow  \maj(x_{i-1}, y_{i-1}, c_{i-1}) \nonumber \\
        X_i &\leftarrow x_i \nonumber \\ 
        Y_i & \leftarrow x_i \oplus y_i \nonumber \\
        \vdots \nonumber \\
        Y_{n} & \leftarrow y_{n} \oplus \maj(x_{n-1}, y_{n-1}, c_{n-1}) ~.
    \end{align*}

\item The central $\mathsf{CNOT}$ computes the second most significant bit of the sum, namely
    \begin{align*}
        Y_{n-1} & \leftarrow    s_{n-1}=x_{n-1}\oplus y_{n-1} \oplus c_{n-1} ~.
    \end{align*}
\item 
    The last part of the circuit consists of a chain of $n-1$ $\mathsf{CARRY}^\dagger$ and $n$ $\mathsf{SUM}$ gates.
    Inverses of the $\mathsf{CARRY}$ gate appear to restore every qubit of the temporary register $C$ to its initial state $|0\rangle$. In particular, $C_i\leftarrow 0$ only after $B_i\leftarrow a_i\oplus b_i \oplus c_i$.
\end{enumerate}
\end{proof}

\paragraph{CDKPM implementation }

A subsequent advancement over the VBE implementation was proposed by Cuccaro, Draper, Kutin, and Petrie Moulton (CDKPM) in \cite{cuccaro2004new}.
Their strategy greatly improves the use of ancillary qubits, requiring only one ancilla in contrast to the linearly many qubits required by the VBE adder.  
Moreover, this circuit enjoys a smaller depth, uses fewer gates and requires less space.
The CDKPM implementation is based on the two auxiliary gates $\mathsf{MAJ}$ (majority) and $\mathsf{UMA}$(unmajority and add).
The $\mathsf{MAJ}$ gate is given in figure~\ref{fig:MajGate}; it computes the majority function of three bits in place (See equation~\eqref{eq:majority} in section~\ref{subsec:Conventions}), while the $\mathsf{UMA}$, given in figure~\ref{fig:UmaGate} gate, restores the original values and writes the sum to the output as shown in figure~\ref{fig:MajUmaGate}.

\begin{figure}[h!]
\centering
    \includegraphics[width=0.4\textwidth]{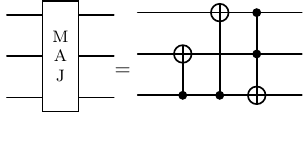}
\caption{Majority gate ($\mathsf{MAJ}$) as described in Cuccaro et al. \cite{cuccaro2004new}
        The $\mathsf{MAJ}$ gate performs the mapping $\ket{c_k, y_k, x_k} \mapsto \ket{c_k\oplus x_k, y_k\oplus x_k, \maj(x_k, y_k, c_k)}$ where the function $\maj$ is defined in equation~\eqref{eq:majority}.
}\label{fig:MajGate}
\end{figure}

\begin{figure}[h!]
    \centering
    \includegraphics[width=\textwidth]{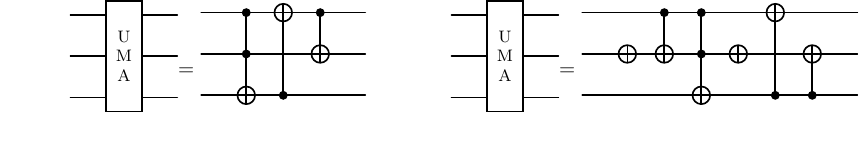}
    \caption{
        Unmajority gate (2 $\mathsf{CNOT}$ version and 3 $\mathsf{CNOT}$ version) as described in Cuccaro et al.~\cite{cuccaro2004new}. 
        The $\mathsf{UMA}$ gate performs the mapping
        $\ket{x_i, y_i, c_i} \mapsto \ket{a_i\oplus c_i \oplus x_i y_i, x_i\oplus y_i \oplus c_i \oplus x_i y_i, c_i \oplus x_i y_i}$.
        The operator is not self-adjoint, $(\mathsf{UMA})^\dagger$ performs the mapping $\ket{c_k, y_k, x_k} \mapsto \ket{c_k\oplus x_k, c_k\oplus y_k, \maj(x_k, y_k\oplus 1, c_k)}$.
    }\label{fig:UmaGate}
\end{figure}

\begin{figure}[h!]
    \centering
    \includegraphics[width=0.75\textwidth]{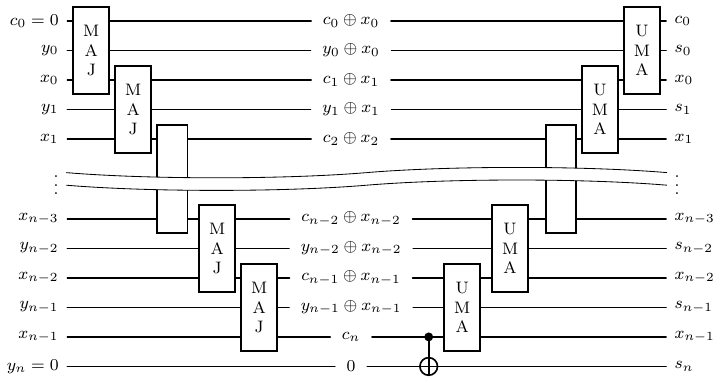}
    \caption{CDKPM ripple-carry adder of \cite{cuccaro2004new}. The corresponding significant bits of $a=a_0a_1\cdots a_{n-1}$ and $b=b_0b_1\cdots b_{n-1}$ are intertwined with each other. The register $b$ is extended by a single qubit to account for the last carry bit in the addition}\label{fig:cuccaro-ripple-carry-adder}
\end{figure}

\begin{figure}[h!]
\centering
    \includegraphics[width=0.8\textwidth]{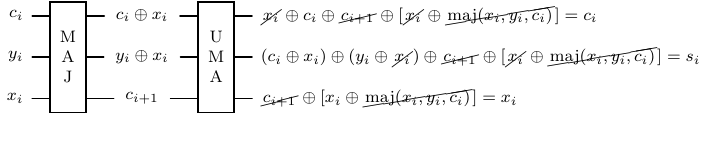}
    \caption{The combination of $\MAJ$ (majority) and $\UMA$ (unmajority) gates. Observe that $[(c_i\oplus x_i)(y_i\oplus x_i)]= x_i \oplus \maj(x_i,y_i,c_i)$ where the majority function $\maj$ is given in  equation~\eqref{eq:majority} of section~\ref{subsec:Conventions}.}\label{fig:MajUmaGate}
\end{figure}

\begin{proposition}[CDKPM plain adder~\cite{cuccaro2004new}]\label{prp: CDKPM plain adder}
There is a circuit implementing an $n$-bit quantum addition as per definition~\ref{def:abstractadder} using $1$ ancilla qubit and $2n$ $\mathsf{Tof}$ gates.
\end{proposition}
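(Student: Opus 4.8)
The plan is to exhibit the circuit of figure~\ref{fig:cuccaro-ripple-carry-adder}, built from the $\MAJ$ and $\UMA$ gates of figures~\ref{fig:MajGate} and~\ref{fig:UmaGate}, and to verify both its correctness and its gate count. Unlike the VBE adder, here the two addends are interleaved bit-by-bit, with a single ancilla qubit holding the initial carry $c_0 = 0$, while the register $B$ carries one extra qubit $B_n$ to record the overflow. The key structural fact, read off from figure~\ref{fig:MajGate}, is that the $\MAJ$ gate acting on the triple $(c_k, y_k, x_k)$ writes the next carry $c_{k+1} = \maj(x_k, y_k, c_k)$ into the slot previously occupied by $x_k$, so that it is positioned exactly as the carry input of the following $\MAJ$ gate. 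This in-place threading of the carry is precisely what lets the whole computation proceed with a single ancilla.

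First I would run the chain of $n$ $\MAJ$ gates from the least significant to the most significant bit, and verify by induction on $k$ that after the $k$-th gate the carry $c_{k+1}$ has been computed correctly via the recursion of equation~\eqref{eq:majority}, while the remaining qubits hold the partial values $c_k \oplus x_k$ and $y_k \oplus x_k$ recorded by the gate. After the last $\MAJ$ gate, a single $\mathsf{CNOT}$ copies the top carry $c_n$ into $B_n$, yielding the most significant (overflow) bit of the sum. I would then run the chain of $n$ $\UMA$ gates in the reverse order; using the mapping in figure~\ref{fig:UmaGate} together with the identity $[(c_i\oplus x_i)(y_i\oplus x_i)] = x_i \oplus \maj(x_i,y_i,c_i)$ noted in figure~\ref{fig:MajUmaGate}, each $\UMA$ gate simultaneously restores $x_k$, resets the carry qubit to its input value $c_k$, and writes the sum bit $s_k = x_k \oplus y_k \oplus c_k$ into $B_k$. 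Checking that these two passes compose to the claimed map $\ket{x}_n\ket{y}_{n+1}\mapsto \ket{x}_n\ket{x+y}_{n+1}$, with the ancilla returned to $\ket{0}$, is the bulk of the argument.

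For the resource count I would simply tally the non-Clifford gates: each $\MAJ$ gate contributes exactly one $\mathsf{Tof}$ and each $\UMA$ gate contributes exactly one $\mathsf{Tof}$ (the remaining gates in their decompositions being $\mathsf{CNOT}$s), so the $n$ $\MAJ$ gates and $n$ $\UMA$ gates together use $2n$ $\mathsf{Tof}$ gates, and only the single carry ancilla $c_0$ is needed. The main obstacle is not the counting but the correctness bookkeeping: one must track, through both passes, exactly which value sits in each interleaved qubit slot, and confirm that the $\UMA$ pass exactly inverts the carry-generating part of the $\MAJ$ pass while depositing the correct sum bit. The cleanest way to control this is the inductive invariant on the carries together with the two pointwise identities supplied by figures~\ref{fig:MajGate}--\ref{fig:MajUmaGate}, after which correctness for a general basis input $\ket{x}\ket{y}$ --- and hence, by linearity, for arbitrary superpositions --- follows immediately.
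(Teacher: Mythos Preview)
Your proposal is correct and follows essentially the same approach as the paper: exhibit the circuit of figure~\ref{fig:cuccaro-ripple-carry-adder}, describe the forward pass of $n$ $\mathsf{MAJ}$ gates, the central $\mathsf{CNOT}$, and the backward pass of $n$ $\mathsf{UMA}$ gates, and appeal to figures~\ref{fig:MajGate}--\ref{fig:MajUmaGate} for the per-block correctness. If anything, you are more thorough than the paper's own proof, which is quite terse and largely delegates the bookkeeping to the figures; your explicit inductive invariant on the carries and the per-gate Toffoli tally are exactly the right way to fill in those details.
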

\begin{proof}
    The sought circuit is given in figure~\ref{fig:cuccaro-ripple-carry-adder}. The first qubit $C_0$ is an ancilla prepared in the state $c_0=0$, and the last qubit $Y_{n}$ is the extra qubit meant to take into account the possible sum overflow.
    The circuit consists of a sequence of $n$ $\mathsf{MAJ}$ gates, a central $\mathsf{CNOT}$ copying the most significant carry in the register $Y_{n}$ and a final sequence of $\mathsf{UMA}$ gates.
    The combination of $\mathsf{MAJ}$ and $\mathsf{UMA}$ gates is shown in figure~\ref{fig:MajUmaGate}.
\end{proof}

\paragraph{Gidney implementation }
A variant of the CDKPM adder was proposed by Gidney~\cite{gidney2018halving} (shown in figure~\ref{fig:gidney-plain-adder}). The implementation involves replacing $\mathsf{Tof}$ gates in the $\mathsf{MAJ}$ blocks by a temporary logical-$\mathsf{AND}$ stored in an ancilla qubit. This increases the number of ancillas by $n$: one for each of the $n$ $\mathsf{MAJ}$ blocks. In the $\mathsf{UMA}$ blocks, the uncomputation of the $\mathsf{Tof}$ gate is performed by a measurement and classically controlled $\mathsf{C}$-$\mathsf{Z}$ gate (using MBU). We show a block of Gidney's original contruction below.
\begin{figure}[!ht]
    \centering
    \includegraphics{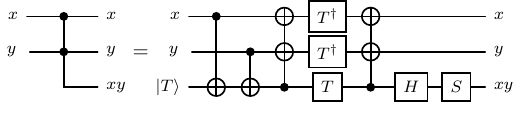}
    \caption{
        The temporary logical-$\mathsf{AND}$ construction of Gidney~\cite{gidney2018halving}. Here, the decomposition is using $\mathsf{T}$ gates, however, we consider just consider each temporary logical-$\mathsf{AND}$ gate implemented using a $\mathsf{Tof}$ gate in our paper.
    }
    \label{fig:gidney-temp-logical-and}
\end{figure}
\begin{figure}[!ht]
    \centering
    \includegraphics{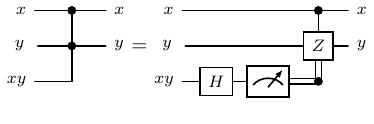}
    \caption{
        Uncomputation of the temp logical-$\mathsf{AND}$ by Gidney~\cite{gidney2018halving}. Here the circuit uses no $\mathsf{T}$ gates. It instead includes an $\mathsf{H}$ gate, a computation basis measurement and a classically controlled $\mathsf{C\text{-}Z}$ gate.
        }
    \label{fig:gidney-temp-logical-and-uncomputation}
\end{figure}

\begin{figure}[!ht]
    \centering
    \includegraphics[width=0.5\textwidth]{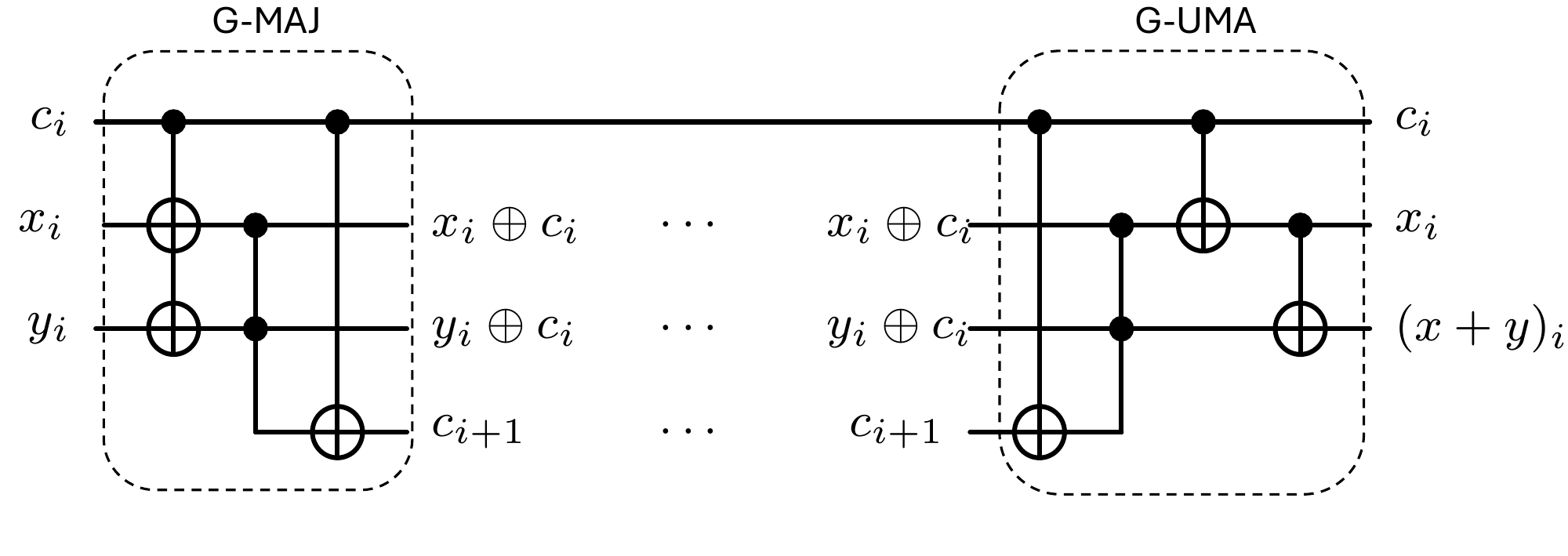}
    \caption{Adder block in Gidney's adder~\cite{gidney2018halving}}
    \label{fig:gidney-adder-block}
\end{figure}

\begin{figure}[!ht]
    \centering
    \includegraphics[width=0.75\textwidth]{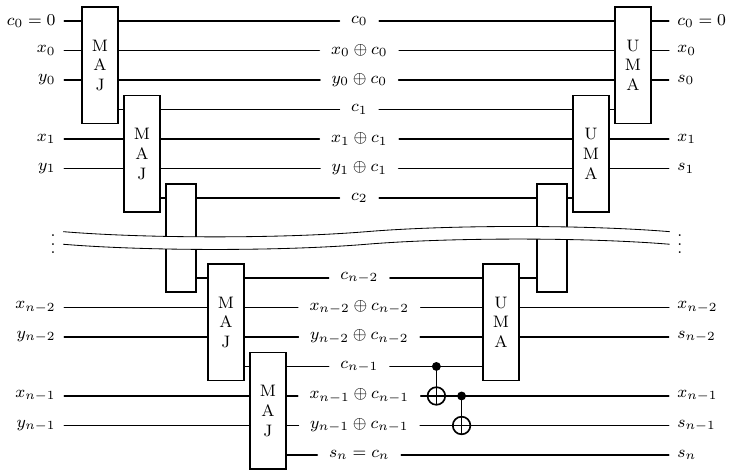}
    \caption{Gidney's logical-$\mathsf{AND}$ adder. The qubit $C_0$ starts in the state $\ket{0}$ and doesn't change value throughout the computation, therefore it can be omitted to reduce the  ancilla count by $1$ ~\cite{gidney2018halving}.}
    \label{fig:gidney-plain-adder}
\end{figure}

\begin{proposition}[Gidney adder~\cite{gidney2018halving}]\label{prp:gidneyAdder}
There is a circuit implementing an $n$-bit quantum adder as per definition~\ref{def:abstractadder}
using $n$ ancilla qubits and $n$ $\mathsf{Tof}$ gates.
\end{proposition}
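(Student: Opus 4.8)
The plan is to build directly on the CDKPM adder of Proposition~\ref{prp: CDKPM plain adder}, whose circuit (figure~\ref{fig:cuccaro-ripple-carry-adder}) is a chain of $n$ $\mathsf{MAJ}$ gates, a central $\mathsf{CNOT}$, and a chain of $n$ $\mathsf{UMA}$ gates, for a total of $2n$ $\mathsf{Tof}$ gates and a single carry-in ancilla $C_0$. Each $\mathsf{MAJ}$ gate and each $\mathsf{UMA}$ gate contributes exactly one of these Toffolis. The entire gain comes from treating the $n$ Toffolis of the $\mathsf{MAJ}$ half and the $n$ Toffolis of the $\mathsf{UMA}$ half asymmetrically: keep the former, but eliminate the latter by uncomputing with a measurement instead of a gate.

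First I would rewrite each $\mathsf{MAJ}$ gate's Toffoli as a \emph{temporary logical-$\mathsf{AND}$} (figure~\ref{fig:gidney-temp-logical-and}). Instead of XOR-ing the product $(c_i\oplus x_i)(y_i\oplus x_i)$ in place onto the carry wire, the logical-$\mathsf{AND}$ writes this product into a \emph{fresh} ancilla using one $\mathsf{Tof}$, after which a $\mathsf{CNOT}$ reproduces the in-place majority action on the remaining wires. Doing this for all $n$ $\mathsf{MAJ}$ blocks introduces $n$ fresh ancillas (one per block, holding the $i$-th carry) while keeping exactly one $\mathsf{Tof}$ per block, so the first half uses $n$ Toffolis and $n$ ancillas. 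Since the carry-in $C_0$ now only ever holds the constant $0$ and is never overwritten, it can be dropped (as noted in figure~\ref{fig:gidney-plain-adder}), leaving $n$ ancillas in total.

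Next I would replace the Toffoli in each $\mathsf{UMA}$ block---whose sole remaining purpose, once the logical-$\mathsf{AND}$ ancilla has been consumed, is to restore that ancilla to $\ket{0}$---with the measurement-based uncomputation of figure~\ref{fig:gidney-temp-logical-and-uncomputation}. At that point the ancilla stores $(c_i\oplus x_i)(y_i\oplus x_i)$, a fixed Boolean function of the computational-basis values of the two neighbouring control wires; this is precisely the hypothesis of the measurement-based uncomputation lemma with $g$ the two-bit $\mathsf{AND}$. Applying a single $\mathsf{H}$, measuring in the computational basis, and correcting the induced phase $(-1)^{x_i y_i}$ by a classically controlled $\mathsf{C\text{-}Z}$ on the two control wires restores the ancilla to $\ket{0}$ using \emph{no} Toffoli gate. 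Summing over the two halves gives $n+0=n$ $\mathsf{Tof}$ gates and $n$ ancillas, matching the claim; the full circuit is figure~\ref{fig:gidney-plain-adder}.

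The main obstacle I anticipate is verifying correctness rather than counting gates: I must check that at each $\mathsf{UMA}$ step the logical-$\mathsf{AND}$ ancilla is genuinely in the form required by the lemma---that the surrounding wires have already been returned to computational-basis values from which the $\mathsf{AND}$ can be recomputed---and that the phase $(-1)^{x_i y_i}$ left by the $X$-basis measurement is cancelled \emph{exactly} by the classically controlled $\mathsf{C\text{-}Z}$, with no residual phase leaking into the superposition over the input registers. Once the per-block action is shown to coincide on the computational basis with the in-place $\mathsf{UMA}$ action of CDKPM, linearity extends the equality to arbitrary superpositions, and the overall adder transformation of Definition~\ref{def:abstractadder} then follows from Proposition~\ref{prp: CDKPM plain adder}.
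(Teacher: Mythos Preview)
Your proposal is correct and follows essentially the same approach as the paper: replace each $\mathsf{MAJ}$ Toffoli by a temporary logical-$\mathsf{AND}$ into a fresh ancilla, uncompute that ancilla in the $\mathsf{UMA}$ half via the measurement-and-$\mathsf{C\text{-}Z}$ trick, and drop $C_0$. The only cosmetic difference is that the paper notes the last $\mathsf{MAJ}$ block already produces $c_n=s_n$ directly, so only $n-1$ $\mathsf{UMA}$ blocks (plus two final $\mathsf{CNOT}$s) are needed; since the $\mathsf{UMA}$ blocks contribute no Toffolis anyway, this does not affect your gate or ancilla counts.
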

\begin{proof}
An adder block in the temp logical-$\mathsf{AND}$ adder is constructed as shown in figure~\ref{fig:gidney-adder-block}.
The full circuit is shown in figure~\ref{fig:gidney-plain-adder}, it uses $n$ $\mathsf{MAJ}$ blocks, $(n-1)$ $\mathsf{UMA}$ blocks, and $2$ additional $\mathsf{CNOT}$s. We require $1$ less $\mathsf{UMA}$ block with respect to the CDKPM implementation because the final $\mathsf{MAJ}$ block computes the most significant carry bit $c_n$ which also happens to the be most significant bit of the addition $s_n=(x+y)_n$. We only use the two additional $\mathsf{CNOT}$'s for restoring the values of $x_{n-1}$ and computing the second most significant bit of the sum i.e. $s_{n-1}=(x+y)_{n-1}$.
\end{proof}

\paragraph{Draper implementation}
While the implementations described in the previous sections use essentially classical algorithms for addition and convert them to reversible quantum circuits, Draper~\cite{draper2000addition} came up with a quantum-specific way of addition using $\mathsf{QFT}$ gates.
\begin{figure}[!ht]
    \centering
    \includegraphics[width=0.8\textwidth]{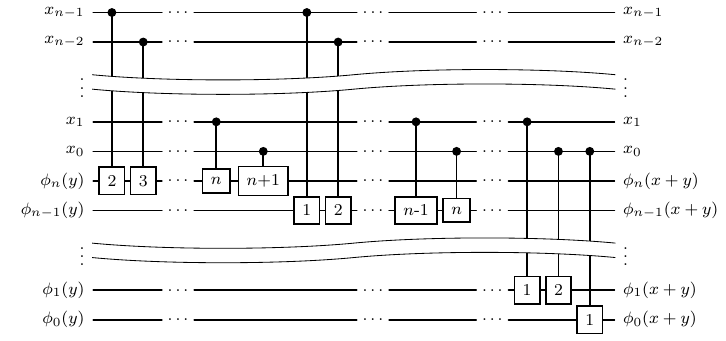}
    \caption{Draper's plain adder~\cite{draper2000addition}. It performs the operation $\ket{x}_n\ket{\phi(y)}_{n+1} \xmapsto{\mathsf{\Phi_{ADD}}} \ket{x}_n\ket{\phi(x + y)}_{n+1}$. See figure~\ref{fig:controlled-rotation-matrix} for the definition of the gates in the figure. Initially, $y_n=\ket{0}$  to account for the overflow of the addition
    }
    \label{fig:draper-adder}
\end{figure}

\begin{proposition}[Draper's plain adder~\cite{draper2000addition}]\label{prp:draper-orig-qft-adder}
Let $x\in \{0,1\}^{n}, y\in \{0,1\}^{n}$ be two bit strings stored in register $X$ and $Y$ of size $n$ and $n+1$ respectively (the most significant bit of register $Y$ is set to 0). Also, let $\ket{\phi(y)}_{n+1}:= \mathsf{QFT}_{n+1}\ket{y}_{n+1}$. There is a circuit $\mathsf{\Phi_{ADD}}$ performing the operation 
$$\ket{x}_n\ket{\phi(y)}_{n+1} \xmapsto{\mathsf{\Phi_{ADD}}} \ket{x}_n\ket{\phi(x+y)}_{n+1}$$
using zero ancilla qubits, and $\mathsf{count}(\mathsf{\Phi_{ADD}}) = \{(\mathsf{C\text{-}R}(\theta_1), n)\} \cup \{(\mathsf{C\text{-}R}(\theta_i), n+2-i) \mid i \in [2,n+1]\}~.$
\end{proposition}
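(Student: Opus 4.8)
The plan is to verify that in the Fourier basis adding $x$ is implemented by a diagonal phase operator, to recognise that operator as the product of the controlled rotations appearing in figure~\ref{fig:draper-adder}, and finally to tally those rotations by angle. First I would write the two relevant states explicitly. Using the amplitude form of the transform, $\ket{\phi(y)}_{n+1}=2^{-(n+1)/2}\sum_{z}e^{2\pi\mathrm{i}zy/2^{n+1}}\ket{z}_{n+1}$, and likewise $\ket{\phi(x+y)}_{n+1}=2^{-(n+1)/2}\sum_{z}e^{2\pi\mathrm{i}z(x+y)/2^{n+1}}\ket{z}_{n+1}$. Comparing the two, the operator $\mathsf{\Phi_{ADD}}$ must act as the diagonal unitary $D_x$ defined by $D_x\ket{z}_{n+1}=e^{2\pi\mathrm{i}zx/2^{n+1}}\ket{z}_{n+1}$ on register $Y$, controlled by the value $x$ stored in $X$. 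Here I would note that since $x,y<2^n$ we have $x+y<2^{n+1}$, so the addition performed modulo $2^{n+1}$ by the Fourier adder never wraps around, and the extra qubit of $Y$ correctly absorbs the overflow.

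Next I would factor this phase over the individual qubits. Writing $z=\sum_{j=0}^{n}z_j 2^{j}$ and $x=\sum_{k=0}^{n-1}x_k 2^{k}$, the exponent splits as $zx/2^{n+1}=\sum_{j,k}z_j x_k\, 2^{j+k}/2^{n+1}$, so $D_x$ is the product over all pairs $(j,k)$ of the two-qubit diagonal gate that multiplies the amplitude by $e^{2\pi\mathrm{i}\,2^{j+k}/2^{n+1}}=e^{2\pi\mathrm{i}/2^{\,n+1-j-k}}$ exactly when $z_j=x_k=1$. By the definition of the controlled rotation in figure~\ref{fig:controlled-rotation-matrix}, this is precisely $\mathsf{C}_{X_k}\text{-}\mathsf{R}_{Y_j}(\theta_{n+1-j-k})$. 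Because all these gates are diagonal in the computational basis they mutually commute, so their order is immaterial and the construction uses no ancilla; applying their product to $\ket{\phi(y)}$ yields $\ket{\phi(x+y)}$ while leaving $X$ unchanged.

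The step I expect to demand the most care is pruning the trivial rotations and then matching the stated count. Whenever $j+k\ge n+1$ the phase $e^{2\pi\mathrm{i}\,2^{j+k}/2^{n+1}}$ equals $1$, so the corresponding gate is the identity and is dropped; the surviving gates are exactly those with $0\le k\le n-1$, $0\le j\le n$, and $j+k\le n$. To group them by angle I would set $i:=n+1-j-k$, so that $\theta_i$ appears once for each admissible pair with $j+k=n+1-i$. Counting the lattice points $(j,k)$ in the allowed ranges with fixed sum $s=j+k$ gives $\min(s,n-1)+1$ solutions; for $i=1$ (so $s=n$) this equals $n$, and for $i\in[2,n+1]$ (so $s=n+1-i\le n-1$) this equals $n+2-i$. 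This reproduces exactly $\mathsf{count}(\mathsf{\Phi_{ADD}})=\{(\mathsf{C\text{-}R}(\theta_1),n)\}\cup\{(\mathsf{C\text{-}R}(\theta_i),n+2-i)\mid i\in[2,n+1]\}$, completing the proof.
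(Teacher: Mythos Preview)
Your argument is correct. Both the correctness of the circuit and the gate count are established cleanly.

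Your route differs from the paper's in a useful way. The paper takes the circuit of figure~\ref{fig:draper-adder} as given and verifies, qubit by qubit, that the rotations acting on the $i$-th Fourier qubit $\ket{\phi_i(y)}=\frac{1}{\sqrt2}(\ket0+e^{2\pi\mathrm{i}y/2^{i+1}}\ket1)$ carry it to $\ket{\phi_i(x+y)}$, using the tensor-product form of the QFT state. You instead work globally: you identify $\mathsf{\Phi_{ADD}}$ with the diagonal operator $D_x\ket z=e^{2\pi\mathrm{i}zx/2^{n+1}}\ket z$, factor it over the bit pairs $(z_j,x_k)$, and thereby synthesise the circuit rather than verify it. This is slightly more conceptual and makes the commutation of all gates and the absence of ancillae immediate. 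For the gate count, the paper simply appeals to the structural similarity with a $\mathsf{QFT}_{n+1}$ circuit (minus the Hadamards and one $\mathsf{C\text{-}R}(\theta_1)$), whereas you do an explicit lattice-point enumeration after the change of index $i=n+1-j-k$; your derivation is self-contained and arguably more transparent, at the cost of a little extra bookkeeping.
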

\begin{proof}
    We denote as $\phi_i(y)$ the $i$-th qubit of $\phi(y)$. For $\phi_i(y)$, the following operations are performed in Draper's circuit (see figure~\ref{fig:draper-adder}). Consider $\widetilde{U}_i=\mathsf{C}_{X_0}\text{-}\mathsf{R}_{Y_i}(\theta_1)\mathsf{C}_{X_1}\text{-}\mathsf{R}_{Y_i}(\theta_2)\cdots\mathsf{C}_{X_0}\text{-}\mathsf{R}_{Y_i}(\theta_{i+1})$, the unitary acting on $\ket{\phi_i(y)}_1$. The transformation on $\ket{\phi_i(y)}_1$ is as follows:

\begin{align*}
\ket{x}_n\ket{\phi_i(y)}_1 &\xmapsto{\widetilde{U}_i} \mathsf{C}_{X_0}\text{-}\mathsf{R}_{Y_i}(\theta_1)\mathsf{C}_{X_1}\text{-}\mathsf{R}_{Y_i}(\theta_2)\cdots\mathsf{C}_{X_0}\text{-}\mathsf{R}_{Y_i}(\theta_{i+1})\ket{x}_n\left[\ket{0} + \exp\left(\frac{2\pi \mathrm{i} y}{2^{i+1}}\right)\ket{1}\right] \\
&=\ket{x}_n\left[\ket{0} + \left(\prod_{j=0}^i \exp\left(\frac{2 \pi \mathrm{i} x_j}{2^{i-j+1}}\right) \right)\exp\left(\frac{2\pi \mathrm{i} y}{2^{i+1}}\right)\ket{1}\right] \\
&=\ket{x}_n\left[\ket{0} + \exp\left(2 \pi \mathrm{i} \sum_{j=0}^i \frac{x_j}{2^{i-j+1}}\right)\exp\left(\frac{2\pi \mathrm{i} y}{2^{i+1}}\right)\ket{1}\right] \\
&=\ket{x}_n\left[\ket{0} + \exp\left(\frac{2 \pi \mathrm{i}}{2^{i+1}}\sum_{j=0}^i 2^j x_j\right)\exp\left(\frac{2\pi \mathrm{i} y}{2^{i+1}}\right)\ket{1}\right] \\
&=\ket{x}_n\left[\ket{0} + \exp\left(\frac{2 \pi \mathrm{i}}{2^{i+1}}\sum_{j=0}^{n-1} 2^j x_j\right)\exp\left(\frac{2\pi \mathrm{i} y}{2^{i+1}}\right)\ket{1}\right] \\
&=\ket{x}_n\left[\ket{0} + \exp\left(\frac{2 \pi \mathrm{i} x}{2^{i+1}}\right)\exp\left(\frac{2\pi \mathrm{i} y}{2^{i+1}}\right)\ket{1}\right] \\
&=\ket{x}_n\left[\ket{0} + \exp\left(\frac{2\pi \mathrm{i} (x+y)}{2^{i+1}}\right)\ket{1}\right] \\
&=\ket{x}_n\ket{\phi_i(x+y)}_1
~.
\end{align*}
We see that the $i^{th}$ qubit of the target register $\ket{\phi(y)}_{n+1}$, that is $\ket{\phi_{i}(y)}_1$, now holds the $i^{th}$ qubit of the Fourier transformed sum i.e. $\ket{\phi_{i}(x+y)}_1$. 
Hence, generalising the above result on the entire state of the target register, we conclude that the Draper adder performs the operation 
$$\ket{x}_{n}\ket{\phi(y)}_{n+1}=\ket{x}_{n}\bigotimes_{j=0}^{n} \ket{\phi_j(y)}_{1} \mapsto \ket{x}_{n}\bigotimes_{j=0}^{n} \ket{\phi_j(x+y)}_1 = \ket{x}_n\ket{\phi(x+y)}_{n+1}~.$$

The $\mathsf{\Phi_{ADD}}$ circuit on $n$ qubits has the same structure as a $\mathsf{QFT}$ circuit, excluding a $\mathsf{C\text{-}R}(\theta_1)$ gate and all the $\mathsf{H}$ gates. Therefore, the gate count of the $\mathsf{\Phi_{ADD}}$ circuit is:
\begin{equation*}~\label{eq:phi-add-cost}
    \mathsf{count}(\mathsf{\Phi_{ADD}}) = \{(\mathsf{C\text{-}R}(\theta_1), n)\} \cup \{(\mathsf{C\text{-}R}(\theta_i), n+2-i) \mid i \in [2,n+1]\}~.
\end{equation*}
\end{proof}

\begin{remark}
Resource-wise, $\mathsf{\Phi_{ADD}}$ is bounded above by the cost of a $\mathsf{QFT}_{n+1}$ gate.
\end{remark}
In Draper's original circuit, he considers the register $\ket{\phi(y)}_{n+1}=\mathsf{QFT}_{n+1}\ket{y}$ to be prepared beforehand, and the final output to be in the state $\ket{\phi(x+y)}_{n+1}=\mathsf{QFT}_{n+1}\ket{x+y}_{n+1}$ after the addition. However, to stay consistent with definition~\ref{def:abstractadder}, we must take into account the cost of the $\mathsf{QFT}_{n+1}$ to prepare $\ket{\phi(y)}_{n+1}$, and the cost of the $\mathsf{QFT}_{n+1}$ to return back to the computation basis at the end of the addition. Hence, we derive the following corollary.
\begin{corollary}[Draper's QFT-adder]\label{cor:Draper-qft-adder}
Let $x\in \{0,1\}^{n}, y\in \{0,1\}^{n}$ be two bit strings stored in register $X$ and $Y$ of size $n$ and $n+1$ respectively (the most significant bit of register $Y$ is set to 0). There is a circuit implementing a quantum adder as per definition~\ref{def:abstractadder} using $n$ ancilla qubits. The circuit gate cost is bounded above by the cost of $3\mathsf{QFT}_{n+1}$ circuits.
\end{corollary}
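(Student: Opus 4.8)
The plan is to obtain an adder in the sense of definition~\ref{def:abstractadder}, which acts on computational-basis inputs $\ket{x}_n\ket{y}_{n+1}$, by conjugating Draper's phase adder $\mathsf{\Phi_{ADD}}$ of proposition~\ref{prp:draper-orig-qft-adder} with a Fourier transform on the target register. Explicitly, I would take the circuit
\[
\mathsf{IQFT}_{n+1}\,\circ\,\mathsf{\Phi_{ADD}}\,\circ\,\mathsf{QFT}_{n+1},
\]
where both transforms act on the $(n+1)$-qubit register $Y$ (whose most significant qubit is the overflow qubit, prepared in $\ket{0}$ since $y\in\{0,1\}^n$), and $\mathsf{\Phi_{ADD}}$ additionally reads the control register $X$. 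The two Fourier transforms merely move $Y$ into and out of the basis in which $\mathsf{\Phi_{ADD}}$ natively operates.

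Correctness follows by tracking a computational-basis state through the three stages:
\[
\begin{aligned}
\ket{x}_n\ket{y}_{n+1}
&\xmapsto{\mathsf{QFT}_{n+1}}\ket{x}_n\ket{\phi(y)}_{n+1}\\
&\xmapsto{\mathsf{\Phi_{ADD}}}\ket{x}_n\ket{\phi(x+y)}_{n+1}\\
&\xmapsto{\mathsf{IQFT}_{n+1}}\ket{x}_n\ket{x+y}_{n+1}.
\end{aligned}
\]
The first arrow is the definition of $\ket{\phi(y)}_{n+1}=\mathsf{QFT}_{n+1}\ket{y}_{n+1}$; the middle arrow is exactly proposition~\ref{prp:draper-orig-qft-adder}; and the last arrow uses $\mathsf{IQFT}_{n+1}=\mathsf{QFT}_{n+1}^{\dagger}$ to return $Y$ to the computational basis. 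This is the mapping demanded by definition~\ref{def:abstractadder}, with the overflow bit of $x+y$ landing in the top qubit of $Y$, so no working memory beyond the register $Y$ itself is introduced by the construction.

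For the resource count I would argue additively over the three blocks. The outer $\mathsf{QFT}_{n+1}$ contributes one copy; the inner $\mathsf{IQFT}_{n+1}$ is the gate-reversed, conjugate-rotation version of $\mathsf{QFT}_{n+1}$ and hence has the identical gate count given by remark~\ref{remark:qftcost}, contributing a second copy; and by the remark following proposition~\ref{prp:draper-orig-qft-adder}, $\mathsf{\Phi_{ADD}}$ is a $\mathsf{QFT}_{n+1}$ circuit with all $\mathsf{H}$ gates and one controlled rotation deleted, so its cost is bounded above by that of a single $\mathsf{QFT}_{n+1}$. Summing these three upper bounds gives the claimed $3\mathsf{QFT}_{n+1}$. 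There is no genuine obstacle here, since this is a direct corollary; the only points needing care are confirming that proposition~\ref{prp:draper-orig-qft-adder} was already stated on the full $(n+1)$-qubit register (so the overflow position is handled by $\mathsf{\Phi_{ADD}}$ itself), and verifying that the inverse transform does not cost more than the forward one, both of which are immediate from remark~\ref{remark:qftcost}.
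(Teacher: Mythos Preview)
Your proposal is correct and mirrors the paper's approach exactly: conjugate $\mathsf{\Phi_{ADD}}$ by $\mathsf{QFT}_{n+1}$ and $\mathsf{IQFT}_{n+1}$ on the $Y$ register, invoke proposition~\ref{prp:draper-orig-qft-adder} for the middle block, and bound the total cost by three $\mathsf{QFT}_{n+1}$ circuits using the remark that $\mathsf{\Phi_{ADD}}$ costs at most one $\mathsf{QFT}_{n+1}$. The paper's justification is the short paragraph immediately preceding the corollary, and your write-up is a more detailed version of the same argument.
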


\subsection{Controlled addition}
\begin{definition}[Controlled addition]\label{def:abstractcontrolledadder}
    We define a \emph{controlled quantum adder} as any unitary gate implementing the $n$ strings addition controlled on a qubit $\ket{c}$ according to the following circuit
\begin{equation*}
        \includegraphics[width=0.4\textwidth]{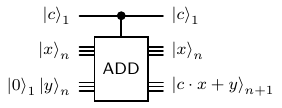}
\end{equation*}
\end{definition}

\begin{table}[!htbp]
\centering
\small
\begin{tabular}{cccccc}
\toprule
\textbf{Procedure} & \textbf{Tof Count} & \textbf{Ancillas} & \textbf{CNOT} & \textbf{Ref.} \\
\midrule
CDKPM & $3n$ & $1$ & $4n+1$ & thm.~\ref{thm:cuccaro-controlled-adder-1-extra-ancilla} \\
Gidney & $2n$ & $n+1$ & $7n-1$ & prop.~\ref{prp:gidney-controlled-adder-1-extra-ancilla} \\
\toprule
 & \textbf{Tof Count} & \textbf{Ancillas} & $\mathsf{QFT}_{n+1}$ & \textbf{Ref.} \\
\midrule
Draper & $n$ & $1$ & $3$ & thm.~\ref{thm:controlled-draper-single-ancilla} \\
\midrule
\bottomrule
\end{tabular}
\caption{Controlled addition by a constant}\label{table:controlled addition}
\end{table}

The theorem below gives a general recipe to perform a controlled addition given any adder.
\begin{theorem}[Controlled adder - with $n$ extra ancillas and $2n$ extra $\mathsf{Tof}$ gate]\label{thm:controlled-adder-n-extra-ancillas}
 Let $\mathsf{Q_{ADD}}$ be a quantum circuit that performs an $n$-bit quantum addition as per definition~\ref{def:abstractadder} using $s$ ancillas, and $r$ $\mathsf{Tof}$ gates.
Then, there is a circuit implementing a controlled quantum adder as per definition~\ref{def:abstractcontrolledadder} using $n+s$ ancilla qubits and $r+2n$ $\mathsf{Tof}$ gates.
\end{theorem}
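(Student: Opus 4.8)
The plan is to realize the control by routing a \emph{masked copy} of the first addend through the given unconditional adder, so that the adder itself is never conditioned. Concretely, I would allocate a fresh $n$-qubit ancilla register $X'$ initialized to $\ket{0}_n$, and compute into it the bitwise product of the control qubit $C$ with the first addend: apply the layer of $n$ Toffoli gates $\mathsf{Tof}(C, X_i, X'_i)$ for $i = 0, \dots, n-1$, which sets $X'_i \leftarrow c \cdot x_i$. After this step the joint state is $\ket{c}_1\ket{x}_n\ket{c\cdot x}_n\ket{y}_{n+1}$, with the $s$ adder ancillas still in $\ket{0}_s$.

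Next, I would apply the given circuit $\mathsf{Q_{ADD}}$ to the pair $(X', Y)$, which by definition~\ref{def:abstractadder} performs $\ket{x'}_n\ket{y}_{n+1} \mapsto \ket{x'}_n\ket{x'+y}_{n+1}$ while leaving $X'$ and the $s$ internal ancillas undisturbed. Since $x' = c \cdot x$ bitwise, the register $Y$ now holds $y + c\cdot x$; casework on the control value confirms correctness, as $c=1$ yields $y+x$ and $c=0$ yields $y$, which is exactly the mapping required by definition~\ref{def:abstractcontrolledadder}.

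Finally, I would uncompute $X'$ by repeating the same layer of $n$ Toffoli gates. This resets the register uniformly across the whole superposition because $\mathsf{Q_{ADD}}$ preserves the value in $X'$: re-applying $\mathsf{Tof}(C, X_i, X'_i)$ sends $X'_i = c\cdot x_i$ to $c\cdot x_i \oplus c\cdot x_i = 0$, restoring $X'$ to $\ket{0}_n$ and disentangling it from the rest of the state. Tallying the resources, the two copy layers contribute $2n$ Toffoli gates and $\mathsf{Q_{ADD}}$ contributes its $r$ gates, giving $r + 2n$ in total; the ancilla budget is the $n$ qubits of $X'$ together with the $s$ qubits used internally by $\mathsf{Q_{ADD}}$, giving $n + s$.

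I expect no serious obstacle here; the construction is a direct instance of the standard ``compute--use--uncompute'' pattern. The only point demanding care is checking that the final Toffoli layer clears $X'$ \emph{coherently} rather than merely within each computational-basis branch, and this follows precisely from the fact that the adder acts as the identity on its first register, so the value stored in $X'$ at the time of uncomputation equals the value written by the first copy layer.
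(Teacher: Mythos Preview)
Your proposal is correct and follows essentially the same approach as the paper: load $c\cdot x$ into a fresh $n$-qubit ancilla via $n$ Toffoli gates, apply $\mathsf{Q_{ADD}}$ between this ancilla and $Y$, then unload with another $n$ Toffoli gates. Your added remark about the coherence of the uncomputation step is a nice clarification that the paper leaves implicit.
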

\begin{proof}
    We first initialize a circuit in the state 
    \begin{align*}
    \ket{c}_1\ket{x}_{n}\ket{y}_{n}\ket{0}_{s+n+1}&~.\\ \intertext{Now, we use a sequence of $n$ $\mathsf{Tof}$ gates to load a controlled version of the addend $x$ i.e.}
    \ket{c}_1\ket{x}_{n}\ket{y}_{n}\ket{0}_{s+n+1}&\xmapsto{\mathsf{Load}}\ket{c}_1\ket{x}_{n}\ket{y}_{n}\ket{c\cdot x}_{n}\ket{0}_{s+1}.\\ \intertext{We can now perform the required addition using $c\cdot x$ as our addend}
    \ket{c}_1\ket{x}_{n}\ket{y}_{n}\ket{c\cdot x}_{n}\ket{s+1} &\xmapsto{\mathsf{Q_{ADD}}} \ket{c}_1\ket{x}_{n}\ket{y + c\cdot x}_{n+1}\ket{c\cdot x}_{n}\ket{0}_{s}~.\\ \intertext{We finally unload $c \cdot x$ using $n$ $\mathsf{Tof}$ gates}
    \ket{c}_1\ket{x}_{n}\ket{y + c\cdot x}_{n+1}\ket{c\cdot x}_{n}\ket{0}_{s+1}&\xmapsto{\mathsf{Unload}}\ket{c}_1\ket{x}_{n}\ket{y + c\cdot x}_{n}\ket{0}_{n+s}
    ~.
    \end{align*}
\end{proof}
The loading of $c\cdot x$ is performed in a register initially in the $\ket{0}_n$ state. This can be executed with just a temporary logical-$\mathsf{AND}$, and therefore can be uncomputed using measurements as in the case of the Gidney plain adder (proposition~\ref{prp:gidneyAdder}). This leads to the following corollary.

\begin{corollary}[Controlled adder - $n$ extra ancillas and $n$ extra $\mathsf{Tof}$ gates]\label{cor:controlled-addition-half-tofolli}
Let $\mathsf{Q_{ADD}}$ be a quantum circuit that performs an $n$-bit quantum addition as per definition~\ref{def:abstractadder} using $s$ ancillas, and $r$ $\mathsf{Tof}$ gates. Then, there is a circuit implementing a controlled quantum adder as per definition~\ref{def:abstractcontrolledadder} using $n+s$ ancilla qubits and $r+n$ $\mathsf{Tof}$ gates.
\end{corollary}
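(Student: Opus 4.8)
The plan is to reuse the construction of Theorem~\ref{thm:controlled-adder-n-extra-ancillas} verbatim and change only how the scratch register holding $c\cdot x$ is disposed of at the end. Recall that in that construction one first loads $c\cdot x$ into a fresh $\ket{0}_n$ register with $n$ $\mathsf{Tof}$ gates, runs $\mathsf{Q_{ADD}}$ to obtain $y+c\cdot x$ (costing $r$ $\mathsf{Tof}$ gates and $s$ ancillas), and then spends a second batch of $n$ $\mathsf{Tof}$ gates to unload $c\cdot x$. My claim is that this second batch can be removed entirely and replaced by measurement-based uncomputation, so that only the $n$ loading gates and the $r$ gates of $\mathsf{Q_{ADD}}$ remain, giving $r+n$ in total while the ancilla count $n+s$ is untouched.

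First I would observe that the loading step $\ket{c}_1\ket{x_i}_1\ket{0}_1\mapsto\ket{c}_1\ket{x_i}_1\ket{c\cdot x_i}_1$ is, qubit by qubit, exactly the temporary logical-$\mathsf{AND}$ of figure~\ref{fig:gidney-temp-logical-and}: the target is guaranteed to start in $\ket{0}$, and the stored value $c\cdot x_i$ is the $\mathsf{AND}$ of the control $c$ and the addend bit $x_i$. This is precisely the situation exploited inside the Gidney adder (proposition~\ref{prp:gidneyAdder}). Consequently, each of these $n$ ancillas may be uncomputed using the logical-$\mathsf{AND}$ uncomputation of figure~\ref{fig:gidney-temp-logical-and-uncomputation}, namely an $\mathsf{H}$ gate, a computational-basis measurement, and a classically controlled $\mathsf{C\text{-}Z}$ between $c$ and $x_i$ conditioned on the outcome. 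None of these are $\mathsf{Tof}$ gates, so the entire unload contributes zero to the $\mathsf{Tof}$ count.

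The step that requires care is checking that this uncomputation is actually licensed at the moment we invoke it, i.e. that immediately before the unload the $i$-th ancilla still holds the clean value $c\cdot x_i$ as a function of the (still computational-basis) registers $C$ and $X_i$. This is where the hypothesis that $\mathsf{Q_{ADD}}$ is a quantum adder in the sense of definition~\ref{def:abstractadder} enters: such an adder maps $\ket{c\cdot x}_n\ket{y}_{n+1}\mapsto\ket{c\cdot x}_n\ket{y+c\cdot x}_{n+1}$, leaving the addend register $\ket{c\cdot x}_n$ unchanged. Hence after the addition each ancilla indeed still carries $c\cdot x_i$, and since $c$ and $x_i$ are never themselves disturbed by the surrounding circuit, the logical-$\mathsf{AND}$ uncomputation returns that ancilla to $\ket{0}$ while correcting the measurement-induced phase via the classically controlled $\mathsf{C\text{-}Z}$, exactly as in the measurement-based uncomputation lemma specialized to $g(c,x_i)=c\cdot x_i$. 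Collecting the costs — $n$ loading $\mathsf{Tof}$ gates, $r$ gates from $\mathsf{Q_{ADD}}$, and $0$ from the measurement-based unload — yields the claimed $r+n$ $\mathsf{Tof}$ gates on $n+s$ ancilla qubits, which is the assertion of the corollary.
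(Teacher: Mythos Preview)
Your proposal is correct and matches the paper's approach exactly: the paper derives the corollary from Theorem~\ref{thm:controlled-adder-n-extra-ancillas} by noting that the load of $c\cdot x$ into a $\ket{0}_n$ register is a temporary logical-$\mathsf{AND}$ and can therefore be uncomputed by measurement as in Gidney's adder, eliminating the second batch of $n$ $\mathsf{Tof}$ gates. Your additional check that $\mathsf{Q_{ADD}}$ leaves the $\ket{c\cdot x}_n$ register intact (so the measurement-based uncomputation is legitimate) is a nice point that the paper leaves implicit.
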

Although theorem~\ref{thm:controlled-adder-n-extra-ancillas} and corollary~\ref{cor:controlled-addition-half-tofolli} can be used with any general adder, Gindey~\cite{gidney2018halving} adapted his temporary logical-$\mathsf{AND}$ construction to give an adder that uses only an additional $n$ $\mathsf{Tof}$ gates, and a single extra ancilla. We present his construction below.
\begin{proposition}[Controlled adder - Gidney -  with 1 extra ancilla~\cite{gidney2018halving}]\label{prp:gidney-controlled-adder-1-extra-ancilla}
There is a circuit implementing an $n$-bit controlled quantum addition as per definition~\ref{def:abstractcontrolledadder} using $n+1$ ancilla qubits and $2n$ $\mathsf{Tof}$ gates. 
\end{proposition}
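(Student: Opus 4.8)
The plan is to start from the plain Gidney adder of proposition~\ref{prp:gidneyAdder} and thread the control qubit $\ket{c}_1$ through its carry-generating gates. The guiding observation is that a controlled addition as per definition~\ref{def:abstractcontrolledadder} is exactly the plain addition of the bitwise-ANDed string $c\cdot x$ to $y$: when $c=1$ we recover $y+x$, and when $c=0$ every carry vanishes (since $\maj(0,y_i,0)=0$) and $y$ is left untouched. Feeding this observation into the generic recipe of corollary~\ref{cor:controlled-addition-half-tofolli} with $r=n$ and $s=n$ already yields a controlled adder with $2n$ $\mathsf{Tof}$ gates, but at the cost of $2n$ ancillas, because the whole register $\ket{c\cdot x}_n$ is loaded into $n$ fresh qubits before the addition. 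The entire content of the proposition is therefore to keep the $2n$ Toffoli count while collapsing the ancilla overhead down to a single extra qubit.

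First I would recall that, in the $\MAJ$ phase of the plain adder (figure~\ref{fig:gidney-adder-block}), each of the $n$ blocks computes one carry bit into a fresh ancilla by means of a single temporary logical-$\mathsf{AND}$, i.e.\ a single $\mathsf{Tof}$ gate, and that these $n$ carry ancillas must coexist until they are released during the $\UMA$ phase. The key step is to replace each such two-input logical-$\mathsf{AND}$ by a three-input, control-aware AND that additionally conjoins the global control $c$. Concretely, writing $\tilde c_i := c\wedge c_i$ for the only carry consistent with both branches of $c$, the controlled carry satisfies $\tilde c_{i+1}=\tilde c_i\oplus c\cdot(x_i\oplus\tilde c_i)(y_i\oplus\tilde c_i)$, so after the in-place $\mathsf{CNOT}$s present in the $\MAJ$ block one only needs the three-way product $c\cdot(x_i\oplus\tilde c_i)(y_i\oplus\tilde c_i)$. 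A three-input AND is realised with one extra $\mathsf{Tof}$ gate and one scratch qubit: the first $\mathsf{Tof}$ writes the two-input AND into the scratch, and the second conjoins $c$ into the carry ancilla. The scratch qubit is then uncomputed immediately by the measurement-based uncomputation lemma (an $X$-basis measurement and a classically controlled Clifford correction, with no $\mathsf{Tof}$), so that it can be reused by the next block; the carry ancilla itself is released in the $\UMA$ phase by the same MBU mechanism as in the plain adder.

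Counting then gives the claim: the $n$ blocks contribute $n$ original plus $n$ additional $\mathsf{Tof}$ gates, for $2n$ in total, while the ancilla budget is the $n$ coexisting carry qubits together with the single reused scratch qubit, i.e.\ $n+1$. The main obstacle I anticipate is bookkeeping rather than a new idea: one must check that the control can be carried consistently through the $y$-register updates and the sum-writing steps of the $\UMA$ phase without spending any $\mathsf{Tof}$ gate beyond the $2n$ already allotted, and in particular verify that each shared scratch qubit is genuinely restored to $\ket{0}$ by MBU before it is recycled, so that the blocks remain independent and the final state is the clean controlled sum of definition~\ref{def:abstractcontrolledadder}. Drawing the explicit block diagram analogous to figure~\ref{fig:gidney-adder-block} and reading off the Clifford corrections dictated by the MBU lemma would complete the verification.
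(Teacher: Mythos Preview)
Your approach diverges from Gidney's (and the paper's) in a way that creates a real gap, not just bookkeeping. You spend both $\mathsf{Tof}$ gates per block in the $\mathsf{MAJ}$ phase, upgrading each two-input logical-$\mathsf{AND}$ to a three-input one that folds in the global control $c$ and yields controlled carries $\tilde c_i$. But this does not eliminate the $\mathsf{Tof}$ needed in the $\mathsf{UMA}$ phase. The controlled sum bit is $s_i = y_i \oplus (c\cdot x_i) \oplus \tilde c_i$, and after your $\mathsf{MAJ}$ block (with the same in-place $\mathsf{CNOT}$s you invoke) the available wires carry $x_i\oplus\tilde c_i$, $y_i\oplus\tilde c_i$ and $\tilde c_i$. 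No $\mathsf{CNOT}$ combination of these produces the bilinear term $c\cdot x_i$; one more $\mathsf{Tof}$ per bit is unavoidable, and your count becomes $3n$ rather than $2n$. The reusable scratch qubit does not help, because you have already measured it out at the end of each $\mathsf{MAJ}$ block so that it can be recycled; by the time the corresponding $\mathsf{UMA}$ block runs, nothing holding $c\cdot x_i$ (or any product involving $c$) survives.

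The paper follows Gidney and takes the opposite route: leave the $\mathsf{MAJ}$ blocks \emph{completely uncontrolled}, computing the ordinary carries $c_i$ with one $\mathsf{Tof}$ each exactly as in proposition~\ref{prp:gidneyAdder}, and spend the second $\mathsf{Tof}$ per block inside $\mathsf{UMA}$ to write the sum bit conditionally on $c$ (concretely, first undo the unconditional $\mathsf{CNOT}$ of $c_i$ into $Y_i$, then apply a $\mathsf{Tof}$ with controls $c$ and $X_i=x_i\oplus c_i$ targeting $Y_i$). The uncontrolled carries are harmless because they are uncomputed by MBU in either branch of $c$; only the bits that persist in the output register $Y$ ever need to see the control. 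The principle you are missing is that intermediate values which are uncomputed anyway never need to be controlled, and controlling them wastes the $\mathsf{Tof}$ budget that should have gone to the output.
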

\begin{proof}
Figure~\ref{fig:ctrl-adder-building-block} is taken from Gidney's construction of a controlled adder building block. Here, we can isolate what happens in the case of a single $\mathsf{MAJ}-\mathsf{UMA}$ pair, and show what is the resultant state. Apart from the $\mathsf{Tof}$ gates of the $n$ $\mathsf{MAJ}$ blocks, we also require a $\mathsf{Tof}$ gate per $\mathsf{UMA}$ block to perform the controlled addition, leading to $2n$ $\mathsf{Tof}$ gates in total to execute a controlled adder in this case.
\end{proof}
\begin{figure}[!ht]
    \centering
    \includegraphics[width=0.75\textwidth]{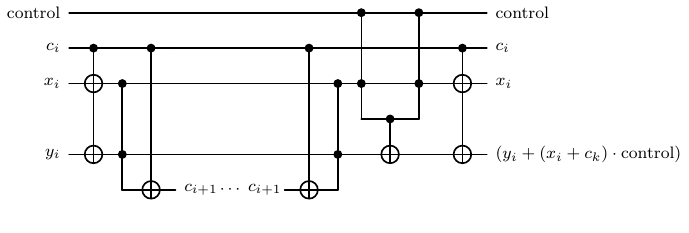}
    \caption{The controlled adder building block by Gidney~\cite{gidney2018halving}
    }
    \label{fig:ctrl-adder-building-block}
\end{figure}
We now show a version of the CDKPM adder that can perform controlled addition with a single ancilla qubit.
\begin{theorem}[Controlled adder - CDKPM - with $1$ ancilla]\label{thm:cuccaro-controlled-adder-1-extra-ancilla}
There is a circuit implementing an $n$-bit controlled quantum addition as per definition~\ref{def:abstractcontrolledadder} using $1$ ancilla qubit and $3n$ $\mathsf{Tof}$ gates.
\end{theorem}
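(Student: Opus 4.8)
The plan is to retrofit the CDKPM plain adder of proposition~\ref{prp: CDKPM plain adder} into a controlled adder while preserving its single carry ancilla, rather than paying the $n$ extra ancillas that the generic load--add--unload recipe of corollary~\ref{cor:controlled-addition-half-tofolli} would incur. The crucial structural fact is that $c$ is an \emph{input} qubit, so it may be reused as a control on arbitrarily many gates at no space cost. Accordingly, I would leave the entire $\mathsf{MAJ}$ sweep of figure~\ref{fig:MajGate} untouched, so that it computes the ordinary (uncontrolled) carries $c_k$ of $x+y$ defined in equation~\eqref{eq:majority}, and inject the dependence on $c$ only at the moment each sum bit is committed to the $Y$ register.

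The correctness rests on a single identity. For a controlled addition the final contents of $Y_k$ must be $y_k \oplus c\cdot(x_k \oplus c_k)$: this equals the genuine sum bit $x_k \oplus y_k \oplus c_k$ when $c=1$, and equals $y_k$ when $c=0$. The point is that the \emph{uncontrolled} carry $c_k$ produced by the plain $\mathsf{MAJ}$ sweep coincides with the ``controlled carry'' exactly when $c=1$, and is irrelevant to the output when $c=0$; hence it is legitimate to compute the carries plainly and to gate only the final write. I would first establish this identity bit by bit, and check that the overflow qubit obeys the analogous rule $Y_n \mapsto c\cdot c_n$.

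For the gate-level realization I would replace each $\mathsf{UMA}$ block (figure~\ref{fig:UmaGate}) by a controlled variant acting on the triple $(c_k \oplus x_k,\; y_k \oplus x_k,\; c_{k+1})$ produced by $\mathsf{MAJ}$: a Toffoli restoring the third qubit to $x_k$, a $\mathsf{CNOT}$ from it un-flipping the middle qubit to $y_k$, then a Toffoli controlled on $c$ that uses the first qubit $c_k \oplus x_k$ to re-inject the sum $y_k \oplus (c_k \oplus x_k) = s_k$ precisely when $c=1$, and finally a $\mathsf{CNOT}$ restoring the first qubit to $c_k$. This adds exactly one Toffoli to each $\mathsf{UMA}$ relative to proposition~\ref{prp: CDKPM plain adder}, so the tally becomes $n$ Toffoli gates from the $\mathsf{MAJ}$ sweep plus $2n$ from the controlled $\mathsf{UMA}$ sweep, i.e.\ $3n$, still with the lone carry ancilla required by definition~\ref{def:abstractcontrolledadder}.

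I expect the main obstacle to be the end-to-end bookkeeping of the $c=0$ branch together with the tight handling of the most-significant bit. Even at $c=0$ the $\mathsf{MAJ}$ sweep genuinely writes nonzero carries and sends $Y_k \mapsto y_k \oplus x_k$, so I must verify that the unconditional parts of the controlled $\mathsf{UMA}$ blocks restore \emph{every} register — the carry ancilla, the $X$ register, and the overflow qubit $Y_n$ — to the un-added configuration, making the circuit act as the identity on the $c=0$ subspace. The subtle accounting point is the overflow commit that replaces the central $\mathsf{CNOT}$: naively controlling it costs one Toffoli too many, so I would fold it into the top controlled $\mathsf{UMA}$ block so that the total stays at exactly $3n$.
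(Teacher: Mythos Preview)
Your approach is essentially the same as the paper's: leave the $\mathsf{MAJ}$ sweep unchanged and replace each $\mathsf{UMA}$ by a controlled variant $\mathsf{C\text{-}UMA}$ that commits the sum bit to $Y_k$ only when $c=1$, yielding $n+2n=3n$ Toffoli gates with the single CDKPM carry ancilla. You supply more gate-level detail for the $\mathsf{C\text{-}UMA}$ block than the paper's proof (which simply refers to a figure) and you correctly flag the overflow-commit accounting at $Y_n$, a point the paper's terse argument leaves implicit.
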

\begin{proof}
    Controlled addition can be achieved by slightly modifying the CDKPM circuit (see figure~\ref{fig:cuccaro-ripple-carry-adder}) substituting the $\UMA$ gates with their controlled version given by figure~\ref{fig:cUmaGate}.
    \begin{figure}[h!]
        \centering
        \includegraphics[width=0.4\textwidth]{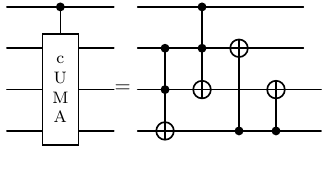}
        \caption{
            Controlled Unmajority gate.
        }\label{fig:cUmaGate}
    \end{figure}
    Overall, the consecutive application of $\mathsf{MAJ}$ and $\mathsf{C\text{-}UMA}$ gates acts as figure~\ref{fig:MajUmaGateControl}, thus determining the sought implementation of a controlled addition.
    \begin{figure}[h!]
        \centering
        \includegraphics[width=0.6\textwidth]{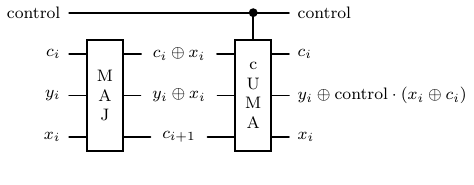}
        \caption{The combination of $\MAJ$ (majority) and $c-\UMA$ (controlled unmajority) gates.}\label{fig:MajUmaGateControl}
    \end{figure}
\end{proof}

While the Draper adder (as defined in corollary~\ref{cor:Draper-qft-adder} employing three $\mathsf{QFT}$'s) fits into the general framework of corollary~\ref{cor:controlled-addition-half-tofolli}, 
an alternate construction can be derived using only a single ancilla. The symmetries in the circuit can be exploited to avoid controls on every gate in the controlled Draper. Only the central $\mathsf{\Phi_{ADD}}$ gate would need to be controlled, while the the quantum Fourier transforms at the beginning and the end of the circuit remain uncontrolled.

\begin{theorem}
[Controlled adder - Draper]\label{thm:controlled-draper-simplified}
Let $x\in \{0,1\}^{n}, y\in \{0,1\}^{n}$ be two bit strings, and $c\in \{0,1\}$ be a control. Also,
let $\mathsf{Q_{QFT\text{-}ADD}}$ be the Draper adder circuit described in corollary~\ref{cor:Draper-qft-adder}, performing the operation 
$$\ket{x}_n\ket{y}_n\xmapsto{\mathsf{Q_{QFT\text{-}ADD}}}\ket{x}_n\ket{x+y}_{n+1}.$$
We can perform a controlled Draper adder $\mathsf{C\text{-}QFT_{ADD}}$ using a $\mathsf{QFT}_{n+1}$, a single controlled $\mathsf{\mathsf{\Phi_{ADD}}}$ (performing $\ket{c}_1\ket{x}_n\ket{\phi(y)}_{n+1}\xmapsto{\mathsf{C\text{-}\Phi_{ADD}}}\ket{c}_1\ket{x}_n\ket{\phi(y+c\cdot x)}_{n+1}$), and an $\mathsf{IQFT}_{n+1}$
\end{theorem}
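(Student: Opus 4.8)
The plan is to start from the three-layer factorization of the Draper adder recorded in corollary~\ref{cor:Draper-qft-adder}, namely $\mathsf{Q_{QFT\text{-}ADD}} = \mathsf{IQFT}_{n+1}\circ\mathsf{\Phi_{ADD}}\circ\mathsf{QFT}_{n+1}$, where the outer Fourier transforms act on register $Y$ and $\mathsf{\Phi_{ADD}}$ additionally reads the addend register $X$. The idea is to obtain a controlled adder by injecting the control into the \emph{middle} layer only: I would define $\mathsf{C\text{-}QFT_{ADD}}$ as the composition that first applies an uncontrolled $\mathsf{QFT}_{n+1}$ to $Y$, then applies the controlled phase addition $\mathsf{C\text{-}\Phi_{ADD}}$ (which writes $y+c\cdot x$ into the phase), and finally applies an uncontrolled $\mathsf{IQFT}_{n+1}$ to $Y$. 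Since every layer is linear in the control register, by linearity it suffices to verify the action on the two computational-basis values $c\in\{0,1\}$.

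The correctness then follows from a short case analysis. If $c=0$, the middle gate acts trivially because $y+0\cdot x = y$, so the whole circuit collapses to $\mathsf{IQFT}_{n+1}\circ\mathsf{QFT}_{n+1}=\mathsf{I}$ and register $Y$ — including its leading overflow qubit prepared in $\ket{0}$ — is returned unchanged, which is exactly the required behaviour of a controlled adder with the control off. If $c=1$, the outer transforms send $\ket{y}_{n+1}\mapsto\ket{\phi(y)}_{n+1}$ and back while the middle gate performs $\ket{\phi(y)}_{n+1}\mapsto\ket{\phi(y+x)}_{n+1}$, so composing the three layers yields $\ket{y}_{n+1}\mapsto\ket{x+y}_{n+1}$; since $x,y<2^n$ we have $x+y<2^{n+1}$ and the $(n+1)$-qubit register stores the sum without truncation, in agreement with definition~\ref{def:abstractadder}.

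The substantive point — and the only step requiring genuine justification rather than bookkeeping — is explaining why the two outer Fourier transforms may be left uncontrolled. This is legitimate precisely because $\mathsf{QFT}_{n+1}$ and $\mathsf{IQFT}_{n+1}$ are mutual inverses: in the $c=0$ branch no phase is inserted between them and they annihilate independently of the control, while in the $c=1$ branch they are exactly the basis change demanded by $\mathsf{\Phi_{ADD}}$. I would close by recalling that $\mathsf{C\text{-}\Phi_{ADD}}$ itself is obtained from the computation in the proof of proposition~\ref{prp:draper-orig-qft-adder} by attaching the control qubit to each rotation, so that a phase is accumulated only when $c=1$ and the relevant bit of $x$ is set; this confirms that a single controlled middle layer, rather than a fully controlled Draper circuit, suffices.
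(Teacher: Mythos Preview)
Your proposal is correct and takes essentially the same approach as the paper: both arguments build $\mathsf{C\text{-}QFT_{ADD}}$ as the composition $\mathsf{IQFT}_{n+1}\circ\mathsf{C\text{-}\Phi_{ADD}}\circ\mathsf{QFT}_{n+1}$ with only the middle layer controlled. The paper's proof simply writes out the three-step composition using the unified form $\ket{\phi(c\cdot x+y)}$, whereas you make the same verification more explicit by splitting into the cases $c=0$ and $c=1$ and spelling out why the uncontrolled outer transforms cancel when $c=0$; this extra detail is sound but not a different method.
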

\begin{proof}
Start with a quantum Fourier transform
\begin{align*}
    \ket{0}_{s+1}\ket{c}_1\ket{x}_n\ket{y}_n &\xmapsto{\mathsf{QFT}_{n+1}}\ket{0}_{s}\ket{c}_1\ket{x}_n\ket{\phi(y)}_{n+1}~.\\ \intertext{Then, one performs a $\mathsf{C\text{-}\Phi_{ADD}}$}
    \ket{0}_{s}\ket{c}_1\ket{x}_n\ket{\phi(y)}_{n+1}&\xmapsto{\mathsf{C\text{-}\Phi_{ADD}}}\ket{0}_{s}\ket{c}_1\ket{x}_n\ket{\phi(c\cdot x+y)}_{n+1}~,\\ \intertext{followed by a $\mathsf{IQFT}_{n+1}$ gate to return the required state back to the computation basis}
    \ket{0}_{s}\ket{c}_1\ket{x}_n\ket{\phi(c\cdot x+y)}_{n+1}&\xmapsto{\mathsf{IQFT}_{n+1}}\ket{0}_{s}\ket{c}_1\ket{x}_n\ket{c\cdot x+y}_{n+1}~.
\end{align*}
\end{proof}
Below, we show a way to execute the $\mathsf{C\text{-}\Phi_{ADD}}$ (as introduced in theorem~\ref{thm:controlled-draper-simplified}) using a single ancilla and $n$ extra $\mathsf{Tof}$ gates.
\begin{theorem}[Controlled adder - Draper - with 1 ancilla]~\label{thm:controlled-draper-single-ancilla}
Let $x\in \{0,1\}^{n}, y\in \{0,1\}^{n}$ be two bit
strings, and $c\in \{0,1\}$ be a control. Also,
let $$\mathsf{Q_{QFT\text{-}ADD}}=(\mathbf{I_n}\otimes\mathsf{IQFT_{n+1}})\mathsf{\Phi_{ADD}}(\mathbf{I_n}\otimes\mathsf{QFT_{n+1}})$$  be the circuit described in corollary~\ref{cor:Draper-qft-adder} (Draper's adder), performing a quantum addition as per definition~\ref{def:abstractadder}.
We can perform a controlled Draper adder $\mathsf{C\text{-}Q_{QFT\text{-}ADD}}$ using a circuit of cost bounded by $2\mathsf{QFT}_{n+1}$, and a $\mathsf{C\text{-}\Phi_{ADD}}$ gate. The circuit requires only a single ancilla.
\end{theorem}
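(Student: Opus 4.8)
The plan is to build directly on Theorem~\ref{thm:controlled-draper-simplified}, which already reduces a controlled Draper addition to an uncontrolled $\mathsf{QFT}_{n+1}$, a single controlled phase-addition $\mathsf{C\text{-}\Phi_{ADD}}$, and an uncontrolled $\mathsf{IQFT}_{n+1}$. The two outer transforms account for the $2\mathsf{QFT}_{n+1}$ term and, being uncontrolled, introduce no ancilla; the entire remaining task is therefore to exhibit an implementation of $\mathsf{C\text{-}\Phi_{ADD}}$ that uses only a single ancilla (and, as asserted in the text preceding the statement, only $n$ extra $\mathsf{Tof}$ gates). First I would recall from Proposition~\ref{prp:draper-orig-qft-adder} the internal structure of $\mathsf{\Phi_{ADD}}$: every addend qubit $X_j$ acts as the control of a block of phase rotations $\mathsf{C}_{X_j}\text{-}\mathsf{R}_{Y_i}(\theta_{i-j+1})$ targeting the qubits $Y_i$ with $i\geq j$, and since these are all diagonal they commute and may be freely regrouped by control qubit.

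The core idea is to route the extra control $c$ through the single ancilla rather than doubly-controlling each rotation. Concretely, for each $j=0,\dots,n-1$ in turn I would (i) compute the logical-$\mathsf{AND}$ $c\cdot x_j$ into the ancilla with one $\mathsf{Tof}$ gate, (ii) apply the whole rotation block belonging to $X_j$ with its control redirected onto the ancilla, so that the rotations fire exactly when $c\cdot x_j=1$, and (iii) uncompute the ancilla. Because the ancilla is returned to $\ket{0}$ after each $j$, it can be reused, giving a single ancilla overall, while the $n$ compute steps supply the $n$ $\mathsf{Tof}$ gates. The uncomputation is carried out by measurement-based uncomputation, namely measuring the ancilla in the $X$ basis and applying a classically controlled $\mathsf{C\text{-}Z}$ between $C$ and $X_j$; this contributes no further $\mathsf{Tof}$ gates, mirroring the $\mathsf{MAJ}/\mathsf{UMA}$ bookkeeping of the Gidney adder (proposition~\ref{prp:gidneyAdder}).

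The step I expect to require the most care is verifying that the ancilla can be cleanly uncomputed even though, between its computation and uncomputation, it has become entangled with the Fourier register through the controlled rotations. I would check this by splitting the state into the branches $c\cdot x_j=0$ and $c\cdot x_j=1$: after the rotation block the $c\cdot x_j=1$ branch carries the added phase while the ancilla still records $c\cdot x_j$, and applying $\mathsf{H}$ followed by a computational-basis measurement either leaves the desired state (outcome $0$) or imprints a spurious $(-1)$ on the $c\cdot x_j=1$ branch (outcome $1$), which is precisely the phase removed by the classically controlled $\mathsf{C\text{-}Z}$ on $C,X_j$. This is exactly the MBU lemma specialized to $g(x)=c\cdot x_j$, so correctness follows from it, and since the correction is Clifford the uncomputation costs no $\mathsf{Tof}$ gate. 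Assembling the three pieces and invoking Theorem~\ref{thm:controlled-draper-simplified} for the $c=0$ versus $c=1$ case analysis (where $\mathsf{IQFT}_{n+1}\,\mathsf{QFT}_{n+1}=\mathbf{I}$ handles the $c=0$ branch) then yields the stated controlled adder with a single ancilla and cost bounded by $2\mathsf{QFT}_{n+1}$ plus one $\mathsf{C\text{-}\Phi_{ADD}}$ gate.
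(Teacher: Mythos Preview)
Your proposal is correct and follows essentially the same approach as the paper's proof: both invoke Theorem~\ref{thm:controlled-draper-simplified} to reduce to implementing $\mathsf{C\text{-}\Phi_{ADD}}$, then regroup the commuting rotations by addend qubit $X_j$, compute $c\cdot x_j$ into a single reused ancilla via a $\mathsf{Tof}$, redirect the rotation controls onto the ancilla, and uncompute the logical-$\mathsf{AND}$ via measurement-based uncomputation. The only difference is that the paper first presents the naive one-$\mathsf{Tof}$-per-$\mathsf{CC\text{-}R}$ decomposition (costing $n(n+3)/2$ $\mathsf{Tof}$'s) before arriving at the grouped version you describe, and it is less explicit than you are about why the MBU correction still works after the intervening controlled rotations.
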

\begin{proof}
Following up from the above theorem~\ref{thm:controlled-draper-simplified}, we find a construction for $\mathsf{C\text{-}\Phi_{ADD}}$, requiring only a single ancilla. We first notice that all gates in $\mathsf{\Phi_{ADD}}$ commute. Therefore, the order in which they are performed is irrelevant. Secondly, the main difference between $\mathsf{\Phi_{ADD}}$ and $\mathsf{C\text{-}\Phi_{ADD}}$ is that any controlled rotation $\mathsf{C\text{-}R}(\theta_i)$ (for any $i \in [n]$), now becomes a double controlled rotation $\mathsf{CC\text{-}R}(\theta_i)$ gate:
\begin{align*}
    &\ket{c_1}_1\ket{c_2}_1\ket{t}_1 \xmapsto{\mathsf{CC\text{-}R}(\theta_i)} \exp(2\pi \mathrm{i}(c_1 \cdot c_2 \cdot t)/2^i) \ket{c_1}_1 \ket{c_2}_1 \ket{t}_1
\end{align*}

We can decompose any $\mathsf{CC\text{-}R}$ gate it into a temporary logical-$\mathsf{AND}$, a single controlled rotation, followed by an uncomputation of the logical-$\mathsf{AND}$:
\begin{align*}
    \left[
    \begin{aligned}
        &\ket{c_1}_1 \ket{c_2}_1 \ket{0}_1 \ket{t}_1 \\
        &\xmapsto{\mathsf{Tof}} \ket{c_1}_1 \ket{c_2}_1 \ket{c_1 \cdot c_2}_1 \ket{t}_1 \\
        &\xmapsto{\mathsf{C\text{-}R}(\theta_i)} \exp(2\pi \mathrm{i}(c_1 \cdot c_2 \cdot t)/2^i) \ket{c_1}_1 \ket{c_2}_1 \ket{c_1 \cdot c_2}_1 \ket{t}_1 \\
        &\xmapsto{\mathsf{Uncompute\ Tof}} \exp(2\pi \mathrm{i}(c_1 \cdot c_2 \cdot t)/2^i) \ket{c_1}_1 \ket{c_2}_1 \ket{0}_1 \ket{t}_1
    \end{aligned}
    \right]
\end{align*}

Using the above decomposition, we can convert every $\mathsf{CC\text{-}R}$ into a temporary logical-$\mathsf{AND}$ + $\mathsf{C\text{-}R}$ (and finally an uncomputation of the temporary logical-$\mathsf{AND}$).
\begin{align*}
\ket{0}_1\ket{c}_1\ket{x}_n\ket{\phi(y)}_{n+1}&\xmapsto{\mathsf{Tof}}  \ket{c\cdot x_i}_1\ket{c}_1\ket{x}_n\ket{\phi(y)}_{n+1}~.
\end{align*}
We can now perform the required rotation gates that initially had $x_i$ as the control, by using $c\cdot x_i$ as our control instead. Compared to $\mathsf{\Phi_{ADD}}$, we require an extra $n(n+3)/2$ $\mathsf{Tof}$ for the temp logical-$\mathsf{AND}$'s, $n(n+3)/2$ $\mathsf{H}$ gates and $n(n+3)/4\text{ }\mathsf{C\text{-}R}(\theta_1)$ gates in expectation (cost of uncomputation of logical $\mathsf{AND}$s) in total for $\mathsf{C\text{-}\Phi_{ADD}}$. We only require a single ancilla as the ancilla is reset after every gate. The $\mathsf{Tof}$ cost can be made cheaper by noticing that we can rearrange the order of the controlled rotation gates in $\mathsf{C\text{-}\Phi_{ADD}}$ and group all gates that have a common control $x_i$.  This construction requires only an additional $n$ $\mathsf{Tof}$ gates (cost of logical $\mathsf{AND}$s), $n$ $\mathsf{H}$ gates and $n/2 \text{ }\mathsf{C\text{-}R}(\theta_1)$ gates in expectation (cost of uncomputation of the logical $\mathsf{AND}$s).
\end{proof}
\subsection{Addition by a constant}

\begin{definition}[Addition by a constant]\label{def:adderwithconstant}
    We define a \emph{quantum adder by a constant} as any unitary gate implementing the $n$ strings addition by a classically known constant $a\in \{0,1\}^n$ according to the following circuit
    \begin{equation*}
        \includegraphics[width=0.4\textwidth]{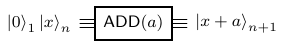}
    \end{equation*}
\end{definition}

Such a gate can be obtained from any given quantum adder implementation $\mathsf{Q_{ADD}}$ in several ways. In this section, we present a way to perform addition by a constant using any circuit as per definition~\ref{def:abstractadder}. We also dive into more specific implementations. 

\begin{table}[!htbp]
\centering
\small
\begin{tabular}{cccccc}
\toprule
\textbf{Procedure} & \textbf{Tof Count} & \textbf{Ancillas} & \textbf{CNOT} & \textbf{Ref.} \\
\midrule
CDKPM& $2n$ & $n+1$ & $4n+1$ & prop.~\ref{prp: addition by a constant - any} \\
Gidney& $n$ & $2n$ & $6n-1$ & prop.~\ref{prp: addition by a constant - any} \\
\toprule
 & $\mathsf{QFT}_{n+1}$ & \textbf{Ancillas} & $\mathsf{\Phi_{ADD}}(a)$  & \textbf{Ref.} \\
\midrule
Draper & $2$ & $0$ & $1$ & prop.~\ref{prp: addition by a constant Draper} \\
\midrule
\bottomrule
\end{tabular}
\caption{Additon by a constant}\label{table:addition by a constant}
\end{table}

\begin{proposition}
[Adder by a constant~\cite{cuccaro2004new}]\label{prp: addition by a constant - any}
Let $\mathsf{Q_{ADD}}$ be a quantum circuit that performs an $n$-bit quantum addition as per definition~\ref{def:abstractadder} using $s$ ancillas, and $r$ $\mathsf{Tof}$ gates. Then, there is a circuit implementing $$\ket{0}_1\ket{x}_n\mapsto \ket{x + a}_{n+1}$$ 
as per definition~\ref{def:abstractadder} using $n+s$ ancilla qubits and $r$ $\mathsf{Tof}$ gates
\end{proposition}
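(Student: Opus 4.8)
The plan is to exploit the fact that $a$ is a classically known constant, so it can be written into a fresh register using only $\mathsf{X}$ (NOT) gates, which contribute nothing to the Toffoli count. Concretely, I would first allocate an $n$-qubit ancilla register $A$ initialized to $\ket{0}_n$ and apply an $\mathsf{X}$ gate to each qubit $A_i$ for which the $i$-th bit of $a$ equals $1$. This realizes the mapping $\ket{0}_n\mapsto\ket{a}_n$ using exactly $|a|$ NOT gates and zero $\mathsf{Tof}$ gates. The whole construction is then a ``load constant, add, unload constant'' sandwich around the given adder.

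Next I would invoke the supplied adder $\mathsf{Q_{ADD}}$, placing the constant register $A$ (now holding $a$) in the role of the \emph{preserved} input register and the register originally holding $x$, extended by the overflow qubit $\ket{0}_1$, in the role of the \emph{sum} register. By definition~\ref{def:abstractadder} this performs
$$\ket{a}_n\ket{x}_n \xmapsto{\mathsf{Q_{ADD}}} \ket{a}_n\ket{a+x}_{n+1},$$
consuming $r$ $\mathsf{Tof}$ gates and the adder's $s$ internal ancillas, which are returned to $\ket{0}_s$. Finally I would uncompute the constant register by repeating the same $|a|$ NOT gates, restoring $A$ to $\ket{0}_n$ and leaving the target register in the desired state $\ket{x+a}_{n+1}$.

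For the resource accounting, the constant register contributes $n$ ancillas and the adder contributes $s$, for a total of $n+s$ ancilla qubits, while the only $\mathsf{Tof}$ gates anywhere in the procedure are the $r$ gates inside $\mathsf{Q_{ADD}}$, since both the loading and the unloading of $a$ are purely $\mathsf{X}$ gates. I do not expect a genuine obstacle here: the argument is essentially a bookkeeping exercise built on the single observation that writing a classical constant into a register is Toffoli-free. The one point that requires care is the register assignment, namely placing $a$ in the preserved slot and $x$ in the sum slot, so that after uncomputing $a$ the output $x+a$ lands in the register that initially held $\ket{0}_1\ket{x}_n$, exactly matching definition~\ref{def:abstractadder}.
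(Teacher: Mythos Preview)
Your proposal is correct and follows essentially the same approach as the paper: load the constant $a$ into a fresh $n$-qubit ancilla register using $|a|$ $\mathsf{X}$ gates, apply the given adder $\mathsf{Q_{ADD}}$ with $a$ in the preserved slot and $x$ in the sum slot, then unload $a$ with the same $\mathsf{X}$ gates. Your resource accounting and your remark about the register assignment match the paper exactly.
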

\begin{proof}
    This circuit can be obtained first by performing the mapping
$$\ket{0}_{s+n+1}\ket{x}_n\xmapsto{\mathsf{Load}} \ket{a}_{n}\ket{0}_{s+1}\ket{x}_n,$$ 
using $|a|$ NOT gates (one on each qubits $i$ such that $a_i\neq 0$) and treating the first $n+1$ qubits input as an ancillary channel. 
Then, we use the adder circuit $\mathsf{Q_{ADD}}$, to perform the required addition
$$\ket{a}_{n}\ket{0}_{s+1}\ket{x}_n \xmapsto{\mathsf{Q_{ADD}}} \ket{a}_{n}\ket{0}_{s}\ket{x+a}_{n+1} ~.$$
Finally, we uncompute the ancilla register storing $a$ by applying the previous sequence of not gates again.
$$\ket{a}_{n}\ket{0}_{s}\ket{x+a}_{n+1} \xmapsto{\mathsf{Unload}}\ket{0}_{s+n}\ket{x+a}_{n+1} ~.$$
    The corresponding circuit is depicted in figure~\ref{fig:abstract-constant-adder}, keeping implicit the $s$ ancillary qubits required by $\mathsf{Q_{ADD}}$.
\begin{figure}[!ht]
    \centering
        \includegraphics[width=0.7\textwidth]{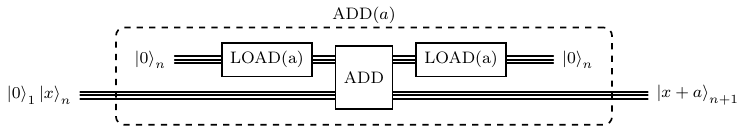}
    \caption{Addition by a constant value $a$. The unitary gate $\mathsf{LOAD(a)}$ is a self-adjoint implementation of the mapping which load and unload the value $a$.}
    \label{fig:abstract-constant-adder}    
\end{figure}

\end{proof}

While proposition~\ref{prp: addition by a constant - any} is applicable to any adder that can perform the mapping $\mathsf{Q_{ADD}}\ket{x}_n\ket{y}_n\mapsto\ket{x}_n\ket{x+y}_{n+1}$, in the case of Draper's $\mathsf{QFT}$ adder, there is a method to add constants without requiring ancillas. This method was used by Beauregard~\cite{beauregard2002circuit}.
\begin{proposition}[Adder by a constant - Draper~\cite{beauregard2002circuit}]\label{prp: addition by a constant Draper}
There is a circuit implementing an $n$-bit addition by a constant
as per definition~\ref{def:adderwithconstant} using $0$ ancillas.
\end{proposition}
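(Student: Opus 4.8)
The plan is to adapt Draper's $\mathsf{QFT}$-based adder (Proposition~\ref{prp:draper-orig-qft-adder} and Corollary~\ref{cor:Draper-qft-adder}) to the setting where one addend is a fixed classical string $a$. The single observation that eliminates all ancillas is that, because $a$ is known at circuit-synthesis time, every rotation that was \emph{quantum}-controlled on a bit of the addend in the plain adder can now be \emph{classically} controlled on the corresponding bit $a_j$: if $a_j=1$ we apply an uncontrolled single-qubit phase rotation, and if $a_j=0$ we simply omit it. Thus no quantum register is ever needed to hold $a$, and the whole circuit consists of an initial $\mathsf{QFT}_{n+1}$, a layer of classically selected phase rotations (the ``partially-classical'' phase-addition gate $\mathsf{\Phi_{ADD}}(a)$), and a final $\mathsf{IQFT}_{n+1}$.

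Concretely, I would first pad the input to $n+1$ qubits by prepending the overflow qubit $\ket{0}$, as in Proposition~\ref{prp: addition by a constant - any}, and apply $\mathsf{QFT}_{n+1}$ to obtain $\ket{\phi(x)}_{n+1}$, whose $i$-th qubit is $\tfrac{1}{\sqrt 2}\bigl[\ket{0}+\exp(2\pi \mathrm{i} x/2^{i+1})\ket{1}\bigr]$. To inject the constant, I want to multiply the $\ket{1}$ amplitude of $\ket{\phi_i(x)}_1$ by $\exp(2\pi \mathrm{i} a/2^{i+1})$, turning it into $\ket{\phi_i(x+a)}_1$. Writing $a=\sum_{j} 2^j a_j$ gives the factorisation
\[
\exp\!\left(\frac{2\pi \mathrm{i}\, a}{2^{i+1}}\right)=\prod_{j}\exp\!\left(\frac{2\pi \mathrm{i}\, a_j}{2^{\,i-j+1}}\right),
\]
so for each $j$ with $a_j=1$ I apply the single-qubit phase gate $\mathsf{R}_{Y_i}(\theta_{i-j+1})$ to qubit $Y_i$. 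This is exactly the phase analysis already carried out in the proof of Proposition~\ref{prp:draper-orig-qft-adder}, with the quantum control $x_j$ replaced by the classical bit $a_j$; the same line-by-line computation then shows that after this layer the register is in state $\ket{\phi(x+a)}_{n+1}$. Applying $\mathsf{IQFT}_{n+1}$ returns the register to the computational basis as $\ket{x+a}_{n+1}$, which matches Definition~\ref{def:adderwithconstant}.

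Since all of these phase gates are diagonal they commute, so $\mathsf{\Phi_{ADD}}(a)$ is well defined independently of the order in which the rotations are scheduled, and the two quantum Fourier transforms are the only non-classically-parameterised parts of the circuit. The resource count is therefore two $\mathsf{QFT}_{n+1}$ gates together with a single $\mathsf{\Phi_{ADD}}(a)$ layer, and crucially zero ancillas. I do not expect a genuine obstacle here: the content is entirely in the observation that a classical addend can be hard-coded into the rotation angles, removing both the addend register and the ancillas that Proposition~\ref{prp: addition by a constant - any} needed to load it; the remaining work is the bookkeeping verification that the hard-coded phases reproduce the computation of Proposition~\ref{prp:draper-orig-qft-adder}.
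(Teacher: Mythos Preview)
Your proposal is correct and follows essentially the same approach as the paper: replace the quantum-controlled rotations of Draper's adder by classically-determined single-qubit phase rotations on each qubit of the Fourier-transformed target, sandwiched between a $\mathsf{QFT}_{n+1}$ and an $\mathsf{IQFT}_{n+1}$. The only cosmetic difference is that the paper additionally collapses the product of rotations on each qubit $Y_i$ into a single phase gate $\mathsf{U}_{a,i}=\mathsf{R}\!\left(\sum_{k=0}^{i} a_k\,\theta_{i-k+1}\right)$, which is exactly the combined rotation your commutativity remark would yield.
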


\begin{proof}
\begin{figure}[!ht]
    \centering
    \includegraphics{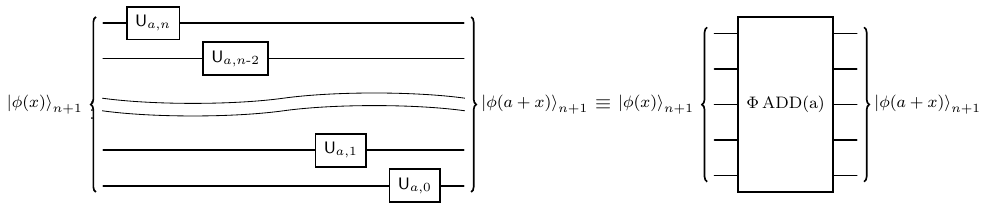}
    \caption{$\mathsf{QFT}$-based addition by a constant~\cite{beauregard2002circuit}
    }
    \label{fig:qft-constant-adder}
\end{figure}
The complete proof directly follows from corollary~\ref{cor:Draper-qft-adder}. 
figure~\ref{fig:qft-constant-adder} details the unitary gates that are required to perform the $\mathsf{QFT}$-based addition by a constant, where $\mathsf{U}_{a,i}$ is given by
\begin{align}
    \mathsf{U}_{a,i} &= \prod_{k=0}^i \mathsf{R}\left(a_i \theta_{i-k+1}\right)= \mathsf{R}\left(\sum_{k=0}^i a_k \theta_{i-k+1}\right) = \mathsf{R}\left(\frac{1}{2^{i+1}}\sum_{k=0}^i 2^ka_k\right)~.
\end{align}
We denote the $n$ qubit gate $\otimes_{i=0}^{n-1} \mathsf{U}_{a,i}$ as $\mathsf{\Phi_{ADD}}(a)$ (because our rotation gates are determined by the value of the bits of $a$).
\end{proof}

\subsection{Controlled addition by a constant} 

\begin{definition}[Controlled addition by a constant]\label{def:ctrl-const-adder}
    We define a \emph{controlled quantum adder} as any unitary gate implementing the $n$ strings addition, by a classically known constant $a\in \{0,1\}^n$ and controlled on a qubit $\ket{c}$, according to the following circuit
\begin{equation*}
        \includegraphics[width=0.4\textwidth]{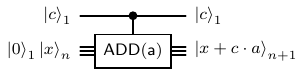}
\end{equation*}
\end{definition}

\begin{table}[!htbp]
\centering
\small
\begin{tabular}{cccccc}
\toprule
\textbf{Procedure} & \textbf{Tof Count} & \textbf{Ancillas} & \textbf{CNOT} & \textbf{Ref.} \\
\midrule
CDKPM& $2n$ & $n+1$ & $4n+1+2|a|$ & prop.~\ref{prp:controlled addition by a constant - any} \\
Gidney& $n$ & $2n$ & $6n-1+2|a|$ & prop.~\ref{prp:controlled addition by a constant - any} \\
\toprule
 & $\mathsf{QFT}_{n+1}$ & \textbf{Ancillas} & $\mathsf{C\text{-}\Phi_{ADD}}(a)$ & \textbf{Ref.} \\
\midrule
Draper & $2$ & $0$ & $1$ & prop.~\ref{prp:controlled addition by a constant - draper} \\
\midrule
\bottomrule
\end{tabular}
\caption{Controlled addition by a constant '$a$'}\label{table:controlled addition by a constant}
\end{table}
We now present a known construction that works with any adder to perform a controlled addition by a constant as per definition~\ref{def:ctrl-const-adder}
\begin{proposition}[Controlled adder by a constant~\cite{vedral1996quantum}]\label{prp:controlled addition by a constant - any}
Let $\mathsf{Q_{ADD}}$ be a quantum circuit that performs an $n$-bit quantum addition as per definition~\ref{def:abstractadder} using $s$ ancillas, and $r$ $\mathsf{Tof}$ gates. There is a circuit performing an $n$-bit controlled addition by a constant as per definition~\ref{def:ctrl-const-adder}, using $n+s$ ancillas, $r$ $\mathsf{Tof}$ gates and an additional $2|a|$ CNOTs.

\end{proposition}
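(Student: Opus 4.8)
The plan is to adapt the construction of proposition~\ref{prp: addition by a constant - any} by making the loading of the constant $a$ conditional on the control qubit $\ket{c}$. Recall that in the unconditional case the constant is written into a fresh $n$-qubit ancilla register by applying a $\mathsf{NOT}$ gate to each position $i$ with $a_i \neq 0$ (a total of $|a|$ $\mathsf{NOT}$ gates), after which $\mathsf{Q_{ADD}}$ performs the addition and the loading is reversed by the same sequence of $\mathsf{NOT}$ gates. To obtain the controlled version I would replace every such $\mathsf{NOT}$ gate with a $\mathsf{CNOT}$ gate controlled on $\ket{c}$, so that the loading register is prepared in the state $\ket{c\cdot a}_n$ rather than $\ket{a}_n$.

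Concretely, the circuit realises the sequence
\begin{align*}
\ket{c}_1\ket{0}_{s+n+1}\ket{x}_n &\xmapsto{\mathsf{C\text{-}Load}} \ket{c}_1\ket{c\cdot a}_n\ket{0}_{s+1}\ket{x}_n \\
&\xmapsto{\mathsf{Q_{ADD}}} \ket{c}_1\ket{c\cdot a}_n\ket{0}_s\ket{x + c\cdot a}_{n+1} \\
&\xmapsto{\mathsf{C\text{-}Unload}} \ket{c}_1\ket{0}_{s+n}\ket{x + c\cdot a}_{n+1}~.
\end{align*}
When $c=0$ the loaded value is $0$ and the output register holds $x$ (padded by the overflow qubit), while when $c=1$ it holds $x+a$; this is exactly the mapping demanded by definition~\ref{def:ctrl-const-adder}. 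Here the $n$-qubit loading register is returned to $\ket{0}_n$ by the unloading step, and one of the original $\ket{0}$ ancillas is absorbed into the $(n+1)$-qubit output register as the overflow bit, precisely as in proposition~\ref{prp: addition by a constant - any}.

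For the resource count, the $n$-qubit loading register together with the $s$ ancillas consumed internally by $\mathsf{Q_{ADD}}$ give the claimed $n+s$ ancilla qubits, and the only Toffoli gates present are those of $\mathsf{Q_{ADD}}$, for a total of $r$. The controlled loading and controlled unloading steps each contribute $|a|$ controlled-$\mathsf{NOT}$ gates, giving the additional $2|a|$ $\mathsf{CNOT}$s in the statement.

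I expect no substantial obstacle, since the argument is essentially a controlled wrapper around proposition~\ref{prp: addition by a constant - any}. The only point meriting a moment's care is checking that running $\mathsf{Q_{ADD}}$ on $\ket{c\cdot a}_n$ yields $\ket{x + c\cdot a}_{n+1}$ uniformly in $c$; this is immediate because $\mathsf{Q_{ADD}}$ acts as bit-string addition on whatever classical value occupies its addend register, and $c\cdot a$ is a legitimate $n$-bit string for either value of $c$ (equal to $0$ or to $a$).
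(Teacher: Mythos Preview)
Your proof is correct and essentially identical to the paper's own argument: both replace the $\mathsf{NOT}$ gates of proposition~\ref{prp: addition by a constant - any} by $\mathsf{CNOT}$s controlled on $\ket{c}$ to load $\ket{c\cdot a}_n$, apply $\mathsf{Q_{ADD}}$, and unload with the same $|a|$ $\mathsf{CNOT}$s, yielding the stated resource counts.
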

\begin{proof}
    This circuit can be obtained first by performing the mapping
$$\ket{c}_1\ket{0}_{n+s+1}\ket{x}_n\xmapsto{\mathsf{Load}}\ket{c}_1\ket{ca}_{n}\ket{0}_{s+1}\ket{x}_n,$$ 
using $|a|$ CNOT gates. Then, we use the adder circuit, leaving the registers in the state
$$\ket{c}_1\ket{ca}_{n}\ket{0}_{s+1}\ket{x}_n \xmapsto{\mathsf{Q_{ADD}}} \ket{c}_1\ket{ca}_{n}\ket{0}_{s}\ket{x+ca}_{n+1} ~.$$ 
Finally, one has to uncompute the register storing $a$, applying the previous sequence of CNOT gates again.
    The corresponding circuit is depicted in figure~\ref{fig:ctrl-const-adder_circuit} keeping implied the $s$ ancillary qubits required by $\mathsf{Q_{ADD}}$.
\begin{figure}[!ht]
    \centering
        \includegraphics[width=0.7\textwidth]{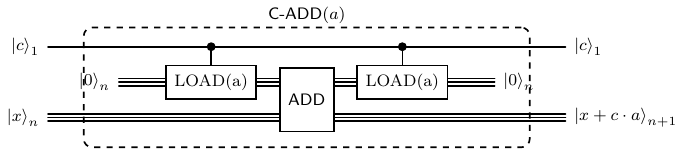}
    \caption{Controlled addition by a constant value $a$. The controlled unitary gate $\mathsf{Load(a)}$ is a self-adjoint implementation of the mapping which loads and unloads the value $c\cdot a$.}
    \label{fig:ctrl-const-adder_circuit}    
\end{figure}
\end{proof}

Similarly to the previous section, one can adapt Draper's implementation to suitably implement the gate of definition~\ref{def:ctrl-const-adder}.
\begin{proposition}[Controlled adder by a constant - Draper~\cite{beauregard2002circuit}]\label{prp:controlled addition by a constant - draper}
There is a circuit performing an $n$-bit controlled addition by a constant as per definition~\ref{def:ctrl-const-adder}
with zero ancilla qubits.
\end{proposition}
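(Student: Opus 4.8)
The plan is to reuse the construction of Proposition~\ref{prp: addition by a constant Draper} almost verbatim, sandwiching a phase-addition block between a forward and an inverse quantum Fourier transform, and to control only the central block. Recall from that proposition that adding the classical constant $a$ is realized, in the Fourier basis, by the gate $\mathsf{\Phi_{ADD}}(a) = \bigotimes_{i=0}^{n-1}\mathsf{U}_{a,i}$, a tensor product of \emph{single-qubit} rotations $\mathsf{U}_{a,i} = \mathsf{R}\!\left(\tfrac{1}{2^{i+1}}\sum_{k=0}^i 2^k a_k\right)$. The key observation is precisely the one already exploited in Theorem~\ref{thm:controlled-draper-simplified}: since $\mathsf{QFT}_{n+1}$ and $\mathsf{IQFT}_{n+1}$ do not depend on the control qubit, they need not be controlled, so only the phase-addition block must carry the control.

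Concretely, I would first apply $\mathsf{QFT}_{n+1}$ to the target register to obtain $\ket{c}_1\ket{\phi(y)}_{n+1}$. Next I would apply $\mathsf{C\text{-}\Phi_{ADD}}(a)$, defined by controlling each single-qubit rotation $\mathsf{U}_{a,i}$ on the control qubit $c$; because $a$ is classical, each such gate is merely a single-qubit rotation in the uncontrolled case and becomes a two-qubit controlled rotation $\mathsf{C\text{-}R}$ when controlled. This performs $\ket{c}_1\ket{\phi(y)}_{n+1}\mapsto\ket{c}_1\ket{\phi(y+c\cdot a)}_{n+1}$. Finally I would apply $\mathsf{IQFT}_{n+1}$ to return to the computational basis, yielding the desired state $\ket{c}_1\ket{x+c\cdot a}_{n+1}$, with the $(n+1)$-st qubit absorbing the possible overflow exactly as in the uncontrolled case. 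For the zero-ancilla claim I would observe that no stage introduces an auxiliary register: the QFTs act in place on the $n+1$ target qubits, and each $\mathsf{C\text{-}R}(\theta)$ is a genuine two-qubit gate on the control $c$ and a single target qubit requiring no scratch space. This contrasts with the generic construction of Proposition~\ref{prp:controlled addition by a constant - any}, which must load $c\cdot a$ into an ancilla register before invoking a black-box adder.

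I do not expect a serious obstacle, since the controlled rotations commute and the correctness of the phase arithmetic is inherited from Proposition~\ref{prp:draper-orig-qft-adder}. The only point requiring care is bookkeeping the phases: one should check that controlling each $\mathsf{U}_{a,i}$ on $c$ deposits precisely the phase $\exp\!\left(2\pi \mathrm{i}\, c\cdot a\,/\,2^{i+1}\right)$ on the relevant Fourier mode, so that the accumulated transformation is $\phi(y)\mapsto\phi(y+c\cdot a)$ and not some other linear combination. This follows directly by substituting $a\mapsto c\cdot a$ into the mode-by-mode computation of Proposition~\ref{prp:draper-orig-qft-adder}, so the verification is routine.
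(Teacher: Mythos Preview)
Your proposal is correct and follows essentially the same approach as the paper: reuse the Draper constant-adder of Proposition~\ref{prp: addition by a constant Draper} and add a control to each single-qubit rotation, so that the central block becomes $\mathsf{C\text{-}\Phi_{ADD}}(a)$ while the surrounding $\mathsf{QFT}_{n+1}$/$\mathsf{IQFT}_{n+1}$ remain uncontrolled. Your write-up is in fact more explicit than the paper's (which merely says ``add controls to every rotation gate''), and your justification that no ancillas are introduced is exactly the intended one.
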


\begin{proof}
We reuse the construction of proposition~\ref{prp: addition by a constant Draper} and add controls to every rotation gate. The cost of this circuit depends on the binary representation of $a$. We define as $\mathsf{C\text{-}\Phi_{ADD}}(a)$ the gate performing the operation
\begin{equation*} \ket{c}_1\ket{\phi(x)}_{n+1}\xmapsto{\mathsf{C\text{-}\Phi_{ADD}}(a)}\ket{c}_1\ket{\phi(c\cdot a+x)}_{n+1}.
\end{equation*}

\end{proof}


\subsection{Subtraction}
\begin{definition}[Quantum subtraction]\label{def:abstractSubtractor}
We define a \emph{quantum subtractor} as any unitary gate implementing the $n$ strings subtraction (see definition~\ref{def:stringSubtraction}) according to the following circuit
\begin{equation*}
        \includegraphics[width=0.4\textwidth]{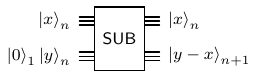}
\end{equation*}
\end{definition}
Due to the unitarity of the quantum adder gate and the uniqueness of the adjoint, it follows that $(\mathsf{ADD})^\dagger$ implements a quantum subtractor.
Alternatively, one can implement a quantum subtractor directly from the definition of bit-string subtractions.

\begin{theorem}\label{thm:otherquantumsubtractors}
    The following two circuits implement a quantum subtractor. 
    \begin{equation}
        \includegraphics[width=0.4\textwidth]{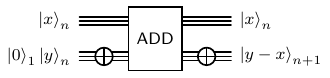}
        \label{eq:subtractor_1}
    \end{equation}
        
    \begin{equation}
         \includegraphics[width=0.6\textwidth]{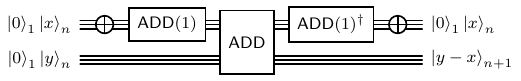}
        \label{eq:subtractor_2}
    \end{equation}
   
\end{theorem}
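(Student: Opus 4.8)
The plan is to interpret each circuit as an alternating composition of single-qubit NOT layers, which by definition~\ref{def:string1complement} realize the $1$'s complement $\overline{\,\cdot\,}$, and a single plain-adder block $\mathsf{ADD}$, which realizes bit-string addition $+$. Since definition~\ref{def:stringSubtraction} sets $\mathbf{x}-\mathbf{y}=\overline{(\overline{\mathbf{x}}+\mathbf{y})}$, it suffices to check that the net map of each circuit equals this nested expression (with the first register restored to $\mathbf{x}$), and that the auxiliary overflow qubit of the adder is used correctly as the sign/borrow bit.

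For circuit~\eqref{eq:subtractor_1}, I would track the computational-basis state step by step. Starting from $\ket{x}_n\ket{y}_{n+1}$ with the top qubit of the second register set to $0$, the first NOT layer on register $X$ produces $\ket{\overline{x}}_n$; the adder block then yields $\ket{\overline{x}}_n\ket{\overline{x}+y}_{n+1}$; a second NOT layer on $X$ restores $\ket{x}_n$; and a final NOT layer on the $(n+1)$-qubit result register produces $\ket{\overline{\overline{x}+y}}_{n+1}=\ket{x-y}_{n+1}$. Invoking definition~\ref{def:stringSubtraction} closes this case. Because every gate is a permutation of basis states, linearity extends the conclusion to arbitrary superpositions.

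For circuit~\eqref{eq:subtractor_2} I expect the same skeleton but with the NOT layers arranged differently (for instance complementing the other operand, or absorbing the final complement into a relabelling of the overflow/borrow qubit), so the plan is again to propagate the basis state through the NOT and adder blocks and simplify using $\overline{\overline{\mathbf{z}}}=\mathbf{z}$. To make the identification with subtraction fully rigorous --- and to confirm that both circuits compute the same signed difference --- I would additionally verify the correspondence at the level of carries and borrows: feeding the addend $\overline{\mathbf{x}}$ into the carry recursion~\eqref{eq:majority} gives $c_{i+1}=\maj(x_{i-1}\oplus 1,y_{i-1},c_{i-1})$, which is exactly the borrow recursion~\eqref{eq:borrow}, and flipping the sum bit $s_i=(x_i\oplus 1)\oplus y_i\oplus c_i$ returns $x_i\oplus y_i\oplus b_i=d_i$. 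This shows the output bits coincide with the borrow-based definition of $\mathbf{d}=\mathbf{x}-\mathbf{y}$.

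The main obstacle I anticipate is bookkeeping around the extra qubit: the inner complement acts on $n$ bits while the outer one acts on $n+1$ bits, so I must make sure the final NOT layer covers the overflow qubit and that the $0$-initialization of that qubit before the adder is consistent with the claimed two's-complement/sign interpretation of the $(n+1)$-bit output. Handling this correctly, together with checking that circuit~\eqref{eq:subtractor_2}'s alternative NOT placement does not silently swap the roles of $\mathbf{x}$ and $\mathbf{y}$ (i.e.\ compute $\mathbf{y}-\mathbf{x}$ instead of $\mathbf{x}-\mathbf{y}$), is where the care is needed; the arithmetic itself is routine once the register widths are pinned down.
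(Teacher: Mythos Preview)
Your treatment of circuit~\eqref{eq:subtractor_1} is correct and essentially matches the paper: NOT layers realize the $1$'s complement, the adder realizes $+$, and composing them yields $\overline{(\overline{x}+y)}=x-y$ by definition~\ref{def:stringSubtraction}. Your extra check that the carry recursion on $\overline{x}$ reproduces the borrow recursion~\eqref{eq:borrow} is a nice addition the paper does not spell out.

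For circuit~\eqref{eq:subtractor_2}, however, your guess about its structure is off, and this matters. It is \emph{not} merely a rearrangement of NOT layers around the same adder block; rather it implements the identity of proposition~\ref{prop:subtraction-2complement}, namely $\mathbf{x}-\mathbf{y}=\mathbf{x}+\overline{\overline{\mathbf{y}}}$, where $\overline{\overline{\,\cdot\,}}$ is the $2$'s complement of definition~\ref{def:string2complement}. Concretely, the circuit complements the second operand, \emph{increments it by one}, and then feeds it into the adder. Your plan to ``simplify using $\overline{\overline{\mathbf{z}}}=\mathbf{z}$'' reveals a notational slip: in this paper $\overline{\overline{\mathbf{z}}}$ denotes the $2$'s complement $\overline{\mathbf{z}}+\mathbf{1}$, not the $1$'s complement applied twice, so that identity is false here. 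The missing ingredient in your proposal is precisely this increment-by-one block (the paper even remarks, right after the theorem, on specialized implementations of the ``add $1$'' gate), and the correctness argument should invoke proposition~\ref{prop:subtraction-2complement} rather than a NOT-cancellation identity.
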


\begin{proof}
    By definition, the gate $\bigoplus$ flips all bits of a given register $\ket{r}$  namely producing the state $\ket{\overline{r}}$, i.e. the $1$'s complement of the string $r$ (see definition~\ref{def:string1complement}).
    Therefore, the circuit~\ref{eq:subtractor_1} implements bit string subtraction as given by definition~\ref{def:stringSubtraction}.
    On the other hand, the second circuit in theorem~\ref{thm:otherquantumsubtractors} implements subtraction using the $2$'s complement of the second addend as given by proposition~\ref{prop:subtraction-2complement} in the appendix.
\end{proof}
Interestingly, we recall that the gate performing addition by $1$, i.e. the ``increment by one'' operator, enjoys a more specialized implementation~\cite{Gidney2015LargeIncrementGates}.

\begin{remark}[Adjoint of Gidney Adder]~\label{rem:adjoint-gidney-adder} While a subtractor can be built using an adjoint of the adder, circuits involving a measurement (e.g. logical-$\mathsf{AND}$ adder) are generally not invertible. One way to invert the logical-$\mathsf{AND}$ adder is to swap the roles of the computation (logical-$\mathsf{AND}$) and uncomputation (MBU uncomputation of logical-$\mathsf{AND}$) of the logical-$\mathsf{AND}$'s in the circuit.
\end{remark}

\subsection{Comparison}\label{sec:comparison}
As expressed by proposition~\ref{prop:comparation-as-subtraction} in appendix~\ref{Sec:binaryarithmetics}, given a quantum subtractor, one can easily obtain a string comparator reading the sign (i.e. the most significant bit) of the difference.

\begin{definition}[Quantum comparison]\label{def:abstract_comparator}
    We define a \emph{quantum comparator } as any unitary gate implementing the $n$ strings comparison according to the following circuit
    \begin{equation*}
        \includegraphics[width=0.4\textwidth]{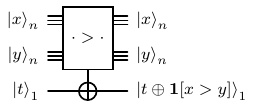}
    \end{equation*}
    where $\mathbf{1}[\mathsf{x>y}]$ coincides with the most significant bit of $y-x$, i.e. is $1$ when $x>y$ and $0$ otherwise.
\end{definition}

\begin{table}[htbp]
\centering
\small
\begin{tabular}{cccccc}
\toprule
\textbf{Procedure} & \textbf{Tof Count} & \textbf{Ancillas} & \textbf{CNOT} & \textbf{Ref.} \\
\midrule
CDKPM & $2n$ & $1$ & $4n+1$ & prop.~\ref{prp:comparator-cuccaro} \\
Gidney & $n$ & $n$ & $6n+1$ &  prop.~\ref{prp:comparator using half gidney subtractor} \\
\toprule
 & $\mathsf{QFT}_{n+1}$ & \textbf{Ancillas} & $-$ & \textbf{Ref.} \\
\midrule
Draper & $6$ & $1$ & $-$ & prop.~\ref{prp:quantum-comparator-beauregard-draper} \\
\midrule
\bottomrule
\end{tabular}
\caption{Comparators}\label{table:comparison with classical constant}
\end{table}

Due to the close link between the string subtraction and the string comparison algorithm, it is possible to implement a quantum comparator by adapting the implementations of quantum adder proposed in the previous section.

\begin{proposition}[Comparator - using an adder and a subtractor~\cite{vedral1996quantum}]~\label{prp:comp-using-two-adders}
Let $\mathsf{Q_{ADD}}$ be a quantum circuit that performs an $n$-bit quantum addition as per definition~\ref{def:abstractadder} using $s_{\mathsf{ADD}}$ ancillas, and $r_{\mathsf{ADD}}$ $\mathsf{Tof}$ gates. Additionally, let $\mathsf{Q_{SUB}}$ be a quantum circuit that performs an $n$-bit quantum subtraction as per definition~\ref{def:abstractSubtractor} using $s_{\mathsf{SUB}}$ ancillas, and $r_{\mathsf{SUB}}$ $\mathsf{Tof}$ gates.
Then, there is a circuit implementing an $n$-bit quantum comparator as per definition~\ref{def:abstract_comparator} using $s=1+\max(s_{\mathsf{ADD}}, s_{\mathsf{SUB}})$ ancilla qubits and $r=r_{\mathsf{ADD}} + r_{\mathsf{SUB}}$ $\mathsf{Tof}$ gates as well as an additional $\mathsf{CNOT}$
\end{proposition}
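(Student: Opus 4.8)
The plan is to exploit the identity, recorded in proposition~\ref{prop:comparation-as-subtraction}, that $\mathbf{1}[x>y]$ is exactly the most significant (sign) bit of the signed difference $y-x$ written in $2$'s complement. Thus a comparator reduces to computing $y-x$, reading off its sign bit into a fresh ancilla, and then restoring the input registers. The first step I would carry out is to apply $\mathsf{Q_{SUB}}$ (arranged so that it computes $y-x$), leaving the extended $(n+1)$-qubit $Y$ register in the state $\ket{y-x}_{n+1}$, whose top qubit carries the sign. I would additionally reserve one dedicated ancilla, initially $\ket{0}$, to hold the comparison outcome.

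Next I would apply a single $\mathsf{CNOT}$ with control on the most significant qubit of $Y$ and target on the output ancilla. Since the control is only read and the target is fresh, this copies the sign bit without disturbing the arithmetic registers, writing $\mathbf{1}[x>y]$ into the ancilla. Finally, rather than inverting the subtractor, I would apply the independent adder $\mathsf{Q_{ADD}}$ to add $x$ back, computing $(y-x)+x=y$; this restores $Y$ to $\ket{y}$ and, crucially, resets the extended high qubit to $\ket{0}$, returning all arithmetic and ancilla qubits to their initial values. The net effect is precisely the map of definition~\ref{def:abstract_comparator}.

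For the resource count, the subtractor and adder are applied sequentially and both return their private ancillas to $\ket{0}$, so their workspaces may be reused; the circuit therefore needs $\max(s_{\mathsf{SUB}}, s_{\mathsf{ADD}})$ such ancillas, plus the one dedicated to the output bit, giving $s = 1 + \max(s_{\mathsf{ADD}}, s_{\mathsf{SUB}})$. The Toffoli gates are exactly those of the two subroutines, $r = r_{\mathsf{SUB}} + r_{\mathsf{ADD}}$, and only the sign-copy contributes the one extra $\mathsf{CNOT}$. Using a fresh adder rather than $\mathsf{Q_{SUB}}^\dagger$ is what buys this flexibility: it lets us select the cheapest available adder, and it sidesteps the non-invertibility of measurement-based subtractors noted in remark~\ref{rem:adjoint-gidney-adder}.

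The main obstacle is the $2$'s-complement bookkeeping. I must verify that, under the sign convention of definition~\ref{def:abstractSubtractor}, the most significant qubit of the subtractor's output is genuinely $\mathbf{1}[x>y]$ (i.e.\ $1$ iff $y-x<0$), and that adding $x$ back exactly inverts the subtraction, so that the overflow/sign qubit is cleared to $\ket{0}$ rather than left in a garbage or entangled state. This hinges on the appendix facts linking comparison, the $2$'s complement, and the sign of the difference, together with checking that the intermediate $\mathsf{CNOT}$ commutes with the subsequent addition on the qubits it does not touch.
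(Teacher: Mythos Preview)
Your construction is correct and essentially identical to the paper's: subtract to expose the sign bit, copy it out with a single $\mathsf{CNOT}$, then add back to restore. One small bookkeeping slip: the ``$+1$'' in the ancilla count is not the qubit holding the comparison outcome (that is the target register $\ket{t}$, which is part of the comparator's interface in definition~\ref{def:abstract_comparator} and not counted as an ancilla) but rather the extra overflow qubit that extends $Y$ from $n$ to $n+1$ qubits during the subtraction; this does not change your construction or the final resource count.
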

\begin{proof}
    This circuit can be obtained first by performing the mapping
$$\ket{0}_{s}\ket{x}_n\ket{y}_n\ket{t}_1\xmapsto{\mathsf{Q_{SUB}}} \ket{0}_{s-1}\ket{x}_n\ket{y-x}_{n+1}\ket{t}_1$$ 
Here, the value most significant bit of $(y-x)$ i.e. $(y-x)_{n}$ gives us the required comparator output $(y-x)_{n}=\mathbf{1}[x > y]$. Therefore, applying a $\mathsf{CNOT}$ on the target register $T$ with the most significant qubit of $\ket{y-x}$ as the control, we get
$$\ket{0}_{s-1}\ket{x}_n\ket{y-x}_{n+1}\ket{t}_1 \xmapsto{\mathsf{CNOT}} \ket{0}_{s-1}\ket{x}_n\ket{y-x}_{n+1}\ket{t \oplus \mathbf{1}[x > y]}_1~.$$
We can now apply $\mathsf{Q_{ADD}}$ to return back to the required state
$$\ket{0}_{s-1}\ket{x}_n\ket{y-x}_{n+1}\ket{t \oplus \mathbf{1}[x > y]}_1 \xmapsto{\mathsf{Q_{ADD}}} \ket{0}_{s}\ket{x}_n\ket{y}_{n}\ket{t \oplus \mathbf{1}[x > y]}_1~.$$
\end{proof}

The circuit described below is a component of the $\mathsf{QFT}$ modular addition circuit described by Beauregard~\cite{beauregard2002circuit}, which is based on the Draper $\mathsf{QFT}$ adder circuit (proposition~\ref{prp:draper-orig-qft-adder}). 
\begin{proposition}[Comparator - Draper/Beauregard~\cite{beauregard2002circuit}]\label{prp:quantum-comparator-beauregard-draper}
Let $x\in \{0,1\}^{n}, y\in \{0,1\}^{n}$ be two bit strings stored in register $X$ and $Y$ of size $n$ and $n+1$ respectively (the most significant bit of register $Y$ is set to 0). There is a circuit $\mathsf{Q_{COMP}}$ performing a quantum comparison as per definition~\ref{def:abstract_comparator}
using $1$ $\mathsf{CNOT}$, $2$ $\mathsf{QFT}_{n+1}$, $2$ $\mathsf{IQFT}_{n+1}$, $1$ $\mathsf{\Phi_{SUB}}$ and $1$ $\mathsf{\Phi_{ADD}}$ gates, and a single ancilla qubit.
\end{proposition}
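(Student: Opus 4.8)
The plan is to transcribe the general comparator-from-subtractor-and-adder recipe of proposition~\ref{prp:comp-using-two-adders} into the Fourier-basis arithmetic of the Draper/Beauregard adder (proposition~\ref{prp:draper-orig-qft-adder} and corollary~\ref{cor:Draper-qft-adder}). Concretely, I would assemble the circuit
\[
\mathsf{Q_{COMP}} = \mathsf{IQFT}_{n+1}\,\mathsf{\Phi_{ADD}}\,\mathsf{QFT}_{n+1}\;\mathsf{CNOT}\;\mathsf{IQFT}_{n+1}\,\mathsf{\Phi_{SUB}}\,\mathsf{QFT}_{n+1},
\]
read right to left, where the transforms act on register $Y$, the gates $\mathsf{\Phi_{ADD}},\mathsf{\Phi_{SUB}}$ act on $X,Y$, and the single $\mathsf{CNOT}$ is controlled on the most significant qubit $Y_n$ and targets $T$. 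Here $\mathsf{\Phi_{SUB}} = \mathsf{\Phi_{ADD}}^\dagger$ realises $\ket{x}_n\ket{\phi(y)}_{n+1}\mapsto\ket{x}_n\ket{\phi(y-x)}_{n+1}$ in the phase, as justified by the unitarity argument preceding theorem~\ref{thm:otherquantumsubtractors}. Intuitively, the circuit first computes $y-x$ into register $Y$, copies its sign bit onto $T$, and then adds $x$ back to restore $y$.

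I would verify correctness by tracking the state. Starting from $\ket{x}_n\ket{y}_{n+1}\ket{t}_1$ with $Y_n=0$, the first $\mathsf{QFT}_{n+1}$ produces $\ket{x}_n\ket{\phi(y)}_{n+1}\ket{t}_1$; applying $\mathsf{\Phi_{SUB}}$ and the inverse transform yields $\ket{x}_n\ket{y-x}_{n+1}\ket{t}_1$, with the subtraction taken modulo $2^{n+1}$ in two's-complement. The $\mathsf{CNOT}$ is diagonal in the computational basis of $Y$, so it leaves register $Y$ intact while mapping the target to $\ket{t\oplus(y-x)_n}_1$; crucially this preserves each branch $\ket{y-x}$ so that the add-back acts correctly even in superposition. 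The closing $\mathsf{QFT}_{n+1}$, $\mathsf{\Phi_{ADD}}$, $\mathsf{IQFT}_{n+1}$ then restore $Y$ to $\ket{y}_{n+1}$ with $Y_n$ reset to $0$, giving the net map $\ket{x}_n\ket{y}_{n+1}\ket{t}_1\mapsto\ket{x}_n\ket{y}_{n+1}\ket{t\oplus(y-x)_n}_1$ demanded by definition~\ref{def:abstract_comparator}.

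The one substantive point — and the main obstacle — is confirming that the sign bit $(y-x)_n$ equals $\mathbf{1}[x>y]$. Since $x,y\in\{0,\dots,2^n-1\}$, the difference $y-x$ lies in $(-2^n,2^n)$ and hence fits exactly in the $(n+1)$-bit two's-complement range, so its most significant bit is precisely the sign bit, which is $1$ exactly when $y-x<0$, i.e.\ when $x>y$. This is the content of the comparison-as-subtraction statement of proposition~\ref{prop:comparation-as-subtraction} in appendix~\ref{Sec:binaryarithmetics}, applied here to the Fourier-implemented subtractor. I would also take care to identify the extra qubit $Y_n$, initialised to $0$, as the sole ancilla: it carries the sign during the middle of the computation and is returned deterministically to $0$ by the add-back, leaving no garbage.

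Finally I would tally resources against the claim: the construction uses two $\mathsf{QFT}_{n+1}$ and two $\mathsf{IQFT}_{n+1}$ (one pair sandwiching $\mathsf{\Phi_{SUB}}$, one sandwiching $\mathsf{\Phi_{ADD}}$), one $\mathsf{\Phi_{SUB}}$, one $\mathsf{\Phi_{ADD}}$, a single $\mathsf{CNOT}$, and the single ancilla qubit $Y_n$, matching the stated counts exactly.
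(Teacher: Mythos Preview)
Your proof is correct and follows essentially the same approach as the paper: both instantiate the subtract--copy--add recipe of proposition~\ref{prp:comp-using-two-adders} with the Draper Fourier-basis adder, tracking the state through $\mathsf{QFT}\to\mathsf{\Phi_{SUB}}\to\mathsf{IQFT}\to\mathsf{CNOT}\to\mathsf{QFT}\to\mathsf{\Phi_{ADD}}\to\mathsf{IQFT}$ and identifying the most significant bit of $y-x$ with $\mathbf{1}[x>y]$. If anything, your write-up is slightly more explicit about the two's-complement range check and about identifying $Y_n$ as the single ancilla.
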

\begin{proof}
We can directly apply proposition~\ref{prp:comp-using-two-adders} giving a comparator using an adder and a subtractor.
    Start preparing the register in the Fourier space.
\begin{align*}
    \ket{x}_n\ket{y}_{n+1}\ket{t}&\xmapsto{\mathsf{QFT}_{n+1}}\ket{x}_n\ket{\phi(y)}_{n+1}\ket{t}~.\\ \intertext{Then, perform a $\mathsf{\Phi_{SUB}}$ on the $Y$ register (subtracting by $x$), as the significant bit of this subtraction is 1 only when $x>y$.}
    \ket{x}_n\ket{\phi(y)}_{n+1}\ket{t}&\xmapsto{\mathsf{\Phi_{SUB}}}\ket{x}_n\ket{\phi(y-x)}_{n+1}\ket{t}~.\\ \intertext{We now return back to the computational basis using a $\mathsf{IQFT}_{n+1}$}
    \ket{x}_n\ket{\phi(y-x)}_{n+1}\ket{t}&\xmapsto{\mathsf{IQFT}_{n+1}}\ket{x}_n\ket{y-x}_{n+1}\ket{t} ~.\ \intertext{Controlling on and copying the most significant qubit of the difference i.e. $(y-x)_n$, we compute the required comparison}
    \ket{x}_n\ket{y-x}_{n+1}\ket{t}&\xmapsto{\mathsf{CNOT}}\ket{x}_n\ket{y-x}_{n+1}\ket{t\oplus \mathbf{1}[x > y]}_1 ~.\\ \intertext{The final set of operations described below, now return the value $y-x$ to its original state $y$}
    \ket{x}_n\ket{y-x}_{n+1}\ket{t\oplus \mathbf{1}[x > y]}_1&\xmapsto{\mathsf{QFT}}\ket{x}_n\ket{\phi(y-x)}_{n+1}\ket{t\oplus \mathbf{1}[x > y]}_1~ \xmapsto{}\\
    &\xmapsto{\mathsf{\Phi_{ADD}}} \ket{x}_n\ket{\phi(y)}_{n+1}\ket{t\oplus \mathbf{1}[x > y]}_1 ~\xmapsto{}\\
    &\xmapsto{\mathsf{IQFT}}\ket{0}_1\ket{x}_n\ket{y}_{n}\ket{t\oplus \mathbf{1}[x > y]}_1 ~.
\end{align*}
\end{proof}

Although the comparator described in proposition~\ref{prp:comp-using-two-adders} works for any adder and subtractor, based on the type of adder used, there are more specific constructions requiring fewer gates. 
\begin{proposition}[Comparator - CDKPM - using half a subtractor~\cite{cuccaro2004new}]\label{prp:comparator-cuccaro} 
There is a circuit implementing an $n$-bit quantum comparator as per definition~\ref{def:abstract_comparator} using $1$ ancilla qubit and $2n$ $\mathsf{Tof}$ gates.
\end{proposition}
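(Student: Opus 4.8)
The plan is to exploit the observation underlying proposition~\ref{prp:comp-using-two-adders}, namely that the comparison bit $\mathbf{1}[x>y]$ is exactly the most significant bit (the sign) of the difference $y-x$, but to avoid paying for a full subtraction. The crucial point is that, in a ripple-carry subtractor built from the CDKPM primitives, the sign of $y-x$ is nothing but the final borrow $b_n$ produced by the borrow recurrence~\eqref{eq:borrow}, and this borrow is computed entirely by the $\MAJ$ half of the circuit (the forward chain in figure~\ref{fig:cuccaro-ripple-carry-adder}), with no need for the $\UMA$ half that writes the difference bits. Since we only need to read $b_n$ and then restore the inputs, we can run the $\MAJ$ chain, copy out the top borrow, and then run $\MAJ^\dagger$ to uncompute; this is precisely what ``half a subtractor'' refers to.

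Concretely, I would proceed as follows. Start from $\ket{0}_1\ket{x}_n\ket{y}_n\ket{t}_1$, where the single ancilla plays the role of $c_0=0$ in the CDKPM layout. First apply $\mathsf{NOT}$ gates to complement the register that realises the $x_i\oplus 1$ appearing in~\eqref{eq:borrow} (turning the addition carries of the $\MAJ$ chain into the borrows of $y-x$), together with the appropriate seeding of the ancilla dictated by the $2$'s-complement identity of definition~\ref{def:stringSubtraction}. Next, run the chain of $n$ $\MAJ$ gates; by~\eqref{eq:majority}--\eqref{eq:borrow} this deposits the successive borrows in place and, in particular, leaves the most significant borrow $b_n$ in the top qubit of the layout. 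A single $\mathsf{CNOT}$ then copies $b_n$ onto the target, giving $\ket{t\oplus\mathbf{1}[x>y]}_1$ by proposition~\ref{prop:comparation-as-subtraction}. Finally, apply the chain of $n$ $\MAJ^\dagger$ gates to erase all intermediate borrows and restore the data registers, and undo the initial $\mathsf{NOT}$ gates. The only $\mathsf{Tof}$ gates used are the $n$ inside the $\MAJ$ chain and the $n$ inside the $\MAJ^\dagger$ chain, for a total of $2n$, and the only scratch space is the one ancilla holding $c_0$; the copy-out $\mathsf{CNOT}$ and the complementing $\mathsf{NOT}$s contribute no Toffolis.

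The main obstacle is the bookkeeping of complements and carry initialisation: one must check that complementing the correct register, and seeding $c_0$ consistently with the $2$'s-complement form $\mathbf{y}-\mathbf{x}=\overline{\overline{\mathbf{y}}+\mathbf{x}}$ of definition~\ref{def:stringSubtraction}, turns the carry recurrence~\eqref{eq:majority} into the borrow recurrence~\eqref{eq:borrow}, so that the top qubit after the $\MAJ$ chain genuinely holds $b_n=\mathbf{1}[x>y]$ rather than its complement. The second thing to verify is that, because the middle $\mathsf{CNOT}$ only reads the borrow qubit and writes to the external target $T$, the subsequent $\MAJ^\dagger$ chain is free to restore $\ket{x}_n\ket{y}_n$ exactly; this is what guarantees that omitting the $\UMA$ blocks is harmless and that no garbage survives on the data or ancilla registers.
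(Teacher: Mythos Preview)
Your proposal is correct and follows the same core idea as the paper: compute only the borrow chain of $y-x$ (the ``half subtractor''), copy the top borrow $b_n=\mathbf{1}[x>y]$ to $T$, and uncompute. The one notable difference is the choice of CDKPM primitive. You use the forward $\MAJ$ chain together with explicit $\mathsf{NOT}$ gates to turn the carry recurrence~\eqref{eq:majority} into the borrow recurrence~\eqref{eq:borrow}, then undo with $\MAJ^\dagger$. The paper instead observes that $\UMA^\dagger$ already maps $\ket{c_k,y_k,x_k}\mapsto\ket{c_k\oplus x_k,\,c_k\oplus y_k,\,\maj(x_k,y_k\oplus 1,c_k)}$, so a chain of $\UMA^\dagger$ gates directly accumulates the borrows without any complementing $\mathsf{NOT}$s; the copy-out $\mathsf{CNOT}$ and the restoring $\UMA$ chain then finish the job (figure~\ref{fig:comparator}). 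Both realisations have exactly $2n$ Toffolis and a single ancilla; the paper's version simply saves the $2n$ $\mathsf{NOT}$ gates by absorbing the $\oplus 1$ into the primitive.
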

\begin{proof}
The sought circuit is given by figure~\ref{fig:comparator}.
Recall that the adjoint of a unitary quantum adder performs a subtraction.
Notice that the first part of the circuit is one-half the adjoint of the CDKPM implementation of the adder.
Comparing the definition of the $\UMA$ gate, see figure~\ref{fig:UmaGate}, with the definition of the borrowing $b_k$, given by equation~\eqref{eq:borrow} of definition~\ref{def:stringSubtraction}, one can see that the first sequence of $\UMA^\dagger$ accumulate in the register $X$ the value of the borrow, in particular $X_{n}\leftarrow (y-x)_{n}$.
By construction, see also proposition~\ref{prop:signeddifference}, $(y-x)_{n}$ computes the overflow of $y-x$ i.e. is $1$ whenever $x>y$.
The subsequent CNOT copies the latter value in the extra register $T$, and the following sequence of $\UMA$ gate uncomputes the previous operation.

\end{proof}

The above comparator can also be executed using the temporary logical-$\mathsf{AND}$ gate of Gidney, which consumes $n$ more ancillas (to store the temporary logical and) in comparison to proposition~\ref{prp:comparator-cuccaro}

\begin{proposition}[Comparator - Gidney - using half a subtractor]\label{prp:comparator using half gidney subtractor}
There is a circuit implementing an $n$-bit quantum comparator as per definition~\ref{def:abstract_comparator}
using $n$ ancilla qubits and $n$ $\mathsf{Tof}$ gates.
\end{proposition}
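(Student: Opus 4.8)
The plan is to reuse the CDKPM comparator of proposition~\ref{prp:comparator-cuccaro} and port it to Gidney's temporary logical-$\mathsf{AND}$ architecture, in exactly the way proposition~\ref{prp:gidneyAdder} ports the CDKPM adder. Recall that the CDKPM comparator runs ``half a subtractor'': a forward pass accumulates the borrow chain so that the overflow bit $(y-x)_n = \mathbf{1}[x>y]$ becomes available in a single qubit, a $\mathsf{CNOT}$ copies this bit into the target register $T$, and then the forward pass is undone to restore $X$ and $Y$. In the CDKPM version both the forward and the reverse pass spend $n$ $\mathsf{Tof}$ gates, giving $2n$ in total; the idea is to keep the forward pass cheap in $\mathsf{Tof}$ count but to replace the reverse pass by measurement-based uncomputation.

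First I would implement the forward borrow computation using one Gidney temporary logical-$\mathsf{AND}$ per block, each stored in a fresh ancilla. As in proposition~\ref{prp:gidneyAdder}, this costs exactly $n$ $\mathsf{Tof}$ gates and $n$ ancilla qubits, with the remaining glue being $\mathsf{CNOT}$s that do not enter the $\mathsf{Tof}$ count. Once the sign bit is present I copy it into $T$ with a single $\mathsf{CNOT}$. For the uncomputation, instead of running the reverse borrow pass with $n$ further $\mathsf{Tof}$ gates, I apply the MBU lemma to each temporary logical-$\mathsf{AND}$: every such ancilla holds a deterministic Boolean function of the register contents, so each $\mathsf{Tof}$ is replaced by an $\mathsf{H}$ gate, a computational-basis measurement, and a classically controlled $\mathsf{C\text{-}Z}$, contributing zero $\mathsf{Tof}$ gates, while the self-inverse $\mathsf{CNOT}$s of the forward pass are simply reapplied. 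The tally is then $n$ ancilla qubits and $n$ $\mathsf{Tof}$ gates.

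The step I expect to need the most care is checking that interposing the sign-extraction $\mathsf{CNOT}$ between the forward pass and the MBU uncomputation does not break the uncomputation. Since that $\mathsf{CNOT}$ uses the sign qubit only as a control and otherwise acts on $T$, it leaves every qubit slated for uncomputation holding precisely the deterministic value demanded by the hypotheses of the MBU lemma, so the measurement-based uncomputation still returns each ancilla to $\ket{0}$ and restores $X,Y$. As in the other MBU statements, the resulting $\mathsf{Tof}$ count is exact, whereas the corrective $\mathsf{C\text{-}Z}$ gates are incurred only in expectation.
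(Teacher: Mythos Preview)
Your proposal is correct and matches the paper's approach. The paper does not spell out a proof for this proposition; it simply remarks that the CDKPM half-subtractor comparator ``can also be executed using the temporary logical-$\mathsf{AND}$ gate of Gidney, which consumes $n$ more ancillas,'' and your write-up supplies exactly those details---forward borrow chain via $n$ logical-$\mathsf{AND}$s in fresh ancillas, a $\mathsf{CNOT}$ to extract $(y-x)_n$, and measurement-based uncomputation of the ancillas at zero $\mathsf{Tof}$ cost.
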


\begin{figure}[h!]
    \centering
    \includegraphics[width=0.8\textwidth]{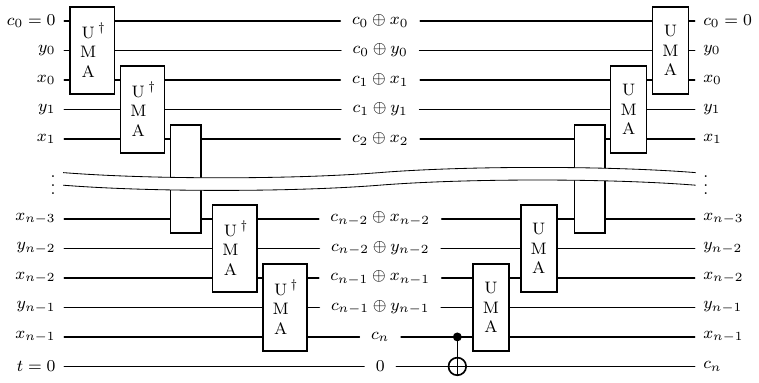}
    \caption{Comparator circuit based on the CDKPM adder.}
    \label{fig:comparator}
\end{figure}

One can also easily realise a controlled implementation of the comparator employing the previous addition circuits.
\begin{definition}[Controlled quantum comparison]\label{def:abstract_controlled_comparator}
    We define a \emph{controlled quantum comparator} as any unitary gate implementing the $n$ bits strings comparison according to the following circuit
    \begin{equation*}
        \includegraphics[width=0.4\textwidth]{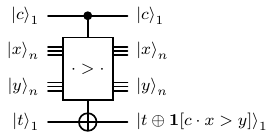}
    \end{equation*}
\end{definition}

\begin{proposition}[Controlled comparator - CDKPM - using half a subtractor~\cite{cuccaro2004new}]\label{prp:controlled comparator CDKMP}
There is a circuit implementing an $n$-bit controlled comparator as per definition~\ref{def:abstract_controlled_comparator} using $0$ ancilla qubits and $2n+1$ $\mathsf{Tof}$ gates.
\end{proposition}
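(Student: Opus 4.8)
The plan is to obtain the controlled comparator by minimally modifying the CDKPM half-subtractor comparator of proposition~\ref{prp:comparator-cuccaro}. Recall that that circuit has three stages: a forward chain of $\MAJ$-type gates implementing half a subtractor ($n$ $\mathsf{Tof}$ gates), which computes the borrow chain of $y-x$ and leaves its most significant bit $(y-x)_n=\mathbf{1}[x>y]$ on a single wire; a central $\mathsf{CNOT}$ that copies this bit into the target $T$; and a backward $\UMA$ chain ($n$ $\mathsf{Tof}$ gates) that undoes the first stage and restores $X$ and $Y$. First I would observe that the forward and backward stages are mutually inverse and, composed, act as the identity on the registers $X$ and $Y$; the target $T$ is touched only by the central copy. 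Consequently, to obtain a controlled comparator it suffices to control only that central gate, leaving the two half-subtractor passes untouched.

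Concretely, I would replace the central $\mathsf{CNOT}$ by a single $\mathsf{Tof}$ whose controls are $c$ and the wire carrying $(y-x)_n$, and whose target is $T$. Since the forward and backward passes do not depend on $c$, they still cancel and restore $X,Y$ regardless of the value of the control, so the net action is $\ket{c}\ket{x}\ket{y}\ket{t}\mapsto\ket{c}\ket{x}\ket{y}\ket{t\oplus c\cdot\mathbf{1}[x>y]}$, exactly the mapping of definition~\ref{def:abstract_controlled_comparator}. Counting gates is then immediate: $n$ $\mathsf{Tof}$ for the forward pass, $n$ for the backward pass, and one for the now-controlled copy, for a total of $2n+1$ $\mathsf{Tof}$ gates.

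The main obstacle is the ancilla bookkeeping needed to reach zero ancillas, one fewer than in proposition~\ref{prp:comparator-cuccaro}. The key observation I would exploit is that each $\MAJ$ gate (figure~\ref{fig:MajGate}) propagates the carry/borrow \emph{in place}, writing it onto one of its input data wires; hence after the forward pass the top borrow $(y-x)_n$ already sits on the most significant qubit of $X$, so no dedicated overflow qubit is required to stage the result before the controlled copy. What remains delicate, and is where I expect the real work to lie, is handling the bottom of the chain: the borrow-in must be accounted for without a dedicated $\ket{0}$ ancilla, and one must check that reading the sign bit off the data wire (rather than an auxiliary $X_n$) together with the uncontrolled backward pass leaves $X$, $Y$, and $c$ exactly restored while $T$ receives precisely $c\cdot\mathbf{1}[x>y]$. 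Verifying this wire-by-wire on the interleaved CDKPM layout, using the $\MAJ$ and $\UMA$ identities of figures~\ref{fig:MajGate} and~\ref{fig:UmaGate}, is the crux of the argument.
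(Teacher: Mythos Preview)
Your core strategy---leave the forward $\UMA^\dagger$ and backward $\UMA$ passes unchanged and promote only the central $\mathsf{CNOT}$ to a $\mathsf{Tof}$ with the extra control $c$---is exactly what the paper does; its entire proof is the two-sentence observation that the copy of $(y-x)_n$ into $T$ is the only gate that needs a control, yielding $2n+1$ $\mathsf{Tof}$ gates.

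On the ancilla count you are worrying about the right thing but chasing a phantom. You correctly note that the in-place borrow propagation of the $\MAJ/\UMA$ ladder means no overflow qubit is needed at the top; the one ancilla in proposition~\ref{prp:comparator-cuccaro} is the borrow-in wire $C_0$ at the bottom, exactly as you identify. The paper's proof, however, says nothing about how that qubit disappears in the controlled version---it only argues the extra $\mathsf{Tof}$ and is silent on ancillas. So the drop from $1$ to $0$ is not something you are expected to reconstruct; it appears to be an inconsistency in the statement rather than hidden content, and the ``wire-by-wire verification'' you flag as the crux is not actually carried out (or needed) in the paper.
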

\begin{proof}
    The comparator is controlled by a qubit $\ket{c}_1$. From proposition~\ref{prp:comparator-cuccaro}, we see that the output of the comparator is computed as the most significant bit $(y-x)_n$ of the subtraction of $y$ and $x$. The $\mathsf{CNOT}$ that copies the comparator value after the chain of $\mathsf{UMA}^{\dagger}$'s needs to now be controlled, thus leading to an extra $\mathsf{Tof}$ gate.
\end{proof}

The previous quantum controlled comparator can also be executed using the temporary logical-$\mathsf{AND}$ gate of Gidney, which consumes $n$ more ancillas (to store the temporary logical AND) in comparison to proposition~\ref{prp:controlled comparator CDKMP} but requires fewer $\mathsf{Tof}$ gates.

\begin{proposition}[Controlled comparator - Gidney - using half a subtractor]\label{prp:controlled comparator using half gidney subtractor}
There is a circuit implementing an $n$-bit controlled quantum comparator as per definition~\ref{def:abstract_controlled_comparator} using $n+1$ ancilla qubits and $n+1$ $\mathsf{Tof}$.
\end{proposition}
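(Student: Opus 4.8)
The plan is to obtain the controlled circuit from the uncontrolled Gidney comparator of proposition~\ref{prp:comparator using half gidney subtractor} by exactly the modification used to pass from proposition~\ref{prp:comparator-cuccaro} to proposition~\ref{prp:controlled comparator CDKMP}: leave the half-subtraction and its uncomputation untouched, and gate only the single operation that transfers the comparison bit into the target register $T$. Recall that the comparator output is the most significant bit $(y-x)_n$ of the difference, which the half-subtractor accumulates in $X_n$ so that $X_n \leftarrow (y-x)_n = \mathbf{1}[x>y]$. Since this bit is computed independently of the control $c$, only its copy into $T$ needs to be conditioned on $c$, and the rest of the subtractor (both its forward pass and its MBU uncomputation) is reused verbatim.

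Concretely, I would proceed in three stages. First, run the forward half of the Gidney subtractor, i.e.\ the chain of $n$ $\mathsf{MAJ}$ blocks in which every $\mathsf{Tof}$ is realised as a temporary logical-$\mathsf{AND}$ held in a fresh ancilla; this costs $n$ $\mathsf{Tof}$ gates and uses the $n$ ancillas already required by proposition~\ref{prp:comparator using half gidney subtractor}. Second, in place of the single $\mathsf{CNOT}$ that copied $X_n$ into $T$, form the gated bit $c\cdot\mathbf{1}[x>y]$ in Gidney's style: a temporary logical-$\mathsf{AND}$ of $c$ and $X_n$ stored in one further ancilla (one extra $\mathsf{Tof}$), a $\mathsf{CNOT}$ writing it into $T$ so that $T \leftarrow t \oplus c\cdot\mathbf{1}[x>y]$, and then an uncomputation of that logical-$\mathsf{AND}$ by the measurement-based uncomputation lemma (no $\mathsf{Tof}$). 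Third, uncompute the forward half-subtractor, again by MBU on its $\mathsf{MAJ}$ logical-$\mathsf{AND}$s, contributing no $\mathsf{Tof}$ gates. The total is therefore $n+1$ $\mathsf{Tof}$ gates, which is exactly the $n$ savings over the $2n+1$ of proposition~\ref{prp:controlled comparator CDKMP} that MBU is expected to provide.

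The point requiring the most care is the ancilla bookkeeping, and specifically the single extra ancilla relative to the uncontrolled case. The $n$ logical-$\mathsf{AND}$ workspace qubits are reused across the forward pass and released by MBU during uncomputation, precisely as in proposition~\ref{prp:comparator using half gidney subtractor}; the additional qubit is the one on which the gated comparison bit $c\cdot(y-x)_n$ is formed before being copied into $T$, giving the count $n+1$. I would then verify that conditioning only this copy on $c$ does not interfere with the MBU uncomputation of the $\mathsf{MAJ}$ logical-$\mathsf{AND}$s: those $X$-basis measurements and their classically controlled $\mathsf{C\text{-}Z}$ corrections act on the subtractor ancillas, not on $c$ or $T$, so they commute past the added control and correctness carries over unchanged. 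The main obstacle is thus not a new calculation but confirming that no further $\mathsf{Tof}$ gate is hidden in the gated copy and that exactly one extra ancilla suffices, so that the in-expectation counts transfer cleanly from proposition~\ref{prp:comparator using half gidney subtractor}.
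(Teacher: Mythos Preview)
Your proposal is correct and matches the paper's approach. The paper does not spell out a proof for this proposition; it only remarks that the preceding CDKPM controlled comparator ``can also be executed using the temporary logical-$\mathsf{AND}$ gate of Gidney, which consumes $n$ more ancillas,'' and your construction is exactly a detailed instantiation of that remark: leave the half-subtraction and its MBU uncomputation as in proposition~\ref{prp:comparator using half gidney subtractor}, and gate only the single copy into $T$. Your choice to realise the central controlled copy itself as a temporary logical-$\mathsf{AND}$ (rather than a bare $\mathsf{Tof}$) is a clean way to account for the extra ancilla and is fully in keeping with Gidney's style; the resource tally $n+1$ $\mathsf{Tof}$ and $n+1$ ancillas comes out identically.
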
 

When comparing registers, we note that they may not always be of equal size $n$. 
When registers $x \in \{0,1\}^{n}$ and $y \in \{0,1\}^{n'}$ with $n<n'$, one can circumvent this problem by padding the register $\ket{x}_n$ with an extra $n'-n$ qubits and perform the comparison on two $n'$ qubit registers. 
However, there is a slightly more efficient solution available for the CDKPM and Gidney adders.
\begin{remark} Assume that $n'=n+1$ without loss of generality. Let $y_{[0,1,\cdots,n-1]}$ represent the first $n$ bits of $y$, notice that $\mathbf{1}[x > y]\equiv \mathbf{1}[x > y_{[0,1,\cdots,n-1]}]\cdot (\mathsf{NOT}y_{n})$. Therefore, the $\mathsf{CNOT}$ computing the required comparison now becomes a $\mathsf{Tof}$ gate, with controls $\mathbf{1}[x > y_{[0,1,\cdots,n-1]}]$ and $(\mathsf{NOT}y_{n})$. Thus using a single extra $\mathsf{Tof}$ gate with no extra ancillas.
\end{remark}

We now explore constructions for comparing a quantum register with a classical value.
\begin{definition}[Quantum comparison with a classical constant]\label{def:abstract_constant_comparator}
    We define a \emph{quantum comparator with a classical constant} $a\in\{0,1\}^n$ as any unitary gate implementing the $n$ strings comparison with $a$ according to the following circuit
    \begin{equation*}
        \includegraphics[width=0.4\textwidth]{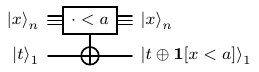}
    \end{equation*}
    where $\mathbf{1}[\mathsf{x<a}]$ coincides with the most significant bit of $x-a$ i.e is $1$ when $x<a$ and $0$ otherwise.
\end{definition}

\begin{proposition}[Comparator by a classical constant]\label{prp:comparator by a classical constant any}
Let $\mathsf{Q_{COMP}}$ be a circuit performing the $n$-bit comparator operation as per definition~\ref{def:abstract_comparator} using $r$ $\mathsf{Tof}$ gates and $s$ ancillas. Then, there is a circuit implementing an $n$-bit comparator by a classical constant as per definition~\ref{def:abstract_constant_comparator} using the $r$ $\mathsf{Tof}$ gates and $s+n$ ancillas, with an additional $2|a|$ NOT gates.
\end{proposition}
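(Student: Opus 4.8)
The plan is to mirror the load-and-unload strategy already used for the adder by a constant in proposition~\ref{prp: addition by a constant - any}: reserve an $n$-qubit ancilla register to hold the constant, write $a$ into it with NOT gates, invoke the generic comparator $\mathsf{Q_{COMP}}$ against this register, and then erase the constant by repeating the same NOT gates. Since $\mathsf{Q_{COMP}}$ leaves its two data registers intact and only toggles the target (as in definition~\ref{def:abstract_comparator} and proposition~\ref{prp:comp-using-two-adders}), the constant register is undisturbed and can be cleanly uncomputed.

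Concretely, I would initialise the circuit in the state $\ket{0}_{s}\ket{0}_n\ket{x}_n\ket{t}_1$, where $\ket{0}_s$ holds the ancillas required internally by $\mathsf{Q_{COMP}}$ and $\ket{0}_n$ is the register that will store the constant. Applying a NOT to each qubit $i$ with $a_i=1$ performs the load step
\[
\ket{0}_{s}\ket{0}_n\ket{x}_n\ket{t}_1 \xmapsto{\mathsf{Load}} \ket{0}_{s}\ket{a}_n\ket{x}_n\ket{t}_1,
\]
costing exactly $|a|$ NOT gates. I would then apply the comparator with the constant register playing the role of its \emph{first} input and $\ket{x}_n$ its \emph{second} input. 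Because definition~\ref{def:abstract_comparator} returns the most significant (sign) bit of $(\text{second})-(\text{first})$, feeding $a$ first and $x$ second yields the sign bit of $x-a$, which by definition~\ref{def:abstract_constant_comparator} is precisely $\mathbf{1}[x<a]$:
\[
\ket{0}_{s}\ket{a}_n\ket{x}_n\ket{t}_1 \xmapsto{\mathsf{Q_{COMP}}} \ket{0}_{s}\ket{a}_n\ket{x}_n\ket{t\oplus\mathbf{1}[x<a]}_1.
\]
This uses the $r$ $\mathsf{Tof}$ gates of $\mathsf{Q_{COMP}}$ and preserves both data registers. Finally I would unload the constant by repeating the same $|a|$ NOT gates,
\[
\ket{0}_{s}\ket{a}_n\ket{x}_n\ket{t\oplus\mathbf{1}[x<a]}_1 \xmapsto{\mathsf{Unload}} \ket{0}_{s}\ket{0}_n\ket{x}_n\ket{t\oplus\mathbf{1}[x<a]}_1,
\]
restoring the ancilla register to $\ket{0}_n$.

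Counting resources then gives the claim directly: the $\mathsf{Tof}$ count is inherited unchanged as $r$, the ancilla count is $s$ (internal) plus $n$ (constant register) for a total of $s+n$, and the NOT count is $|a|+|a|=2|a|$. The one genuinely delicate point — and the only place I expect to have to argue carefully — is the orientation of the comparison. Since the constant comparator is defined through $x-a$ whereas the generic comparator is defined through $y-x$, the constant must be loaded into the first argument slot rather than the second; loading it into the wrong slot would compute $\mathbf{1}[x>a]$ instead of the desired $\mathbf{1}[x<a]$. I would therefore make explicit (via the $2$'s-complement sign-bit argument, cf.\ proposition~\ref{prop:signeddifference}) that routing $a$ into register $X$ and $x$ into register $Y$ produces exactly the sign bit of $x-a$.
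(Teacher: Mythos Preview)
Your proof is correct and follows essentially the same load--compare--unload strategy as the paper's own argument. Your explicit discussion of the argument ordering (placing $a$ in the first slot so that the generic comparator's output $\mathbf{1}[a>x]$ coincides with the desired $\mathbf{1}[x<a]$) is a nice clarification that the paper leaves implicit.
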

\begin{proof}
    Similar to the idea in proposition~\ref{prp: addition by a constant - any}, we first use an $n$ qubit register to load our constant $a$ into memory by applying $|a|$ NOT gates (one for each bit $a_i\neq 0$), giving the state
    $$\ket{0}_{s+n}\ket{x}_n\ket{t}_1\xmapsto{\mathsf{Load}}\ket{0}_{s}\ket{a}_n\ket{x}_n\ket{t}~.$$
    Now, we apply the comparator circuit with inputs $\ket{a}\ket{x}$ on target $\ket{t}$
    $$\ket{0}_{s}\ket{a}_n\ket{x}_n\ket{t}_1\xmapsto{\mathsf{Q_{COMP}}}\ket{0}_{s}\ket{a}_n\ket{x}_n\ket{t\oplus\mathbf{1}[a > x]}$$
    The ancilla is then uncomputed by applying the previous sequence of NOT gates a second time
    $$\ket{0}_{s}\ket{a}_n\ket{x}_n\ket{t\oplus\mathbf{1}[x > y]}_1\xmapsto{\mathsf{Unload}}\ket{0}_{s+n}\ket{x}_n\ket{t\oplus\mathbf{1}[x < a]}_1 ~.$$
\end{proof}

Similar to the construction in proposition~\ref{prp:comp-using-two-adders}, a comparator by a classical constant can be constructed by using an adder and a subtractor, as shown below.
\begin{theorem}[Comparator with a classical constant - using an adder by a constant and a subtractor by a constant]\label{thm:comp-by-a-classical-constant-with-adder-and-subtractor-by-a-contant}
Let $\mathsf{Q_{ADD}}(a)$ be a circuit performing an $n$-bit addition by a constant as per definition~\ref{def:adderwithconstant}, using $r_{\mathsf{ADD}}$ $\mathsf{Tof}$ gates and $s_{\mathsf{ADD}}$ ancillas. Also, let $\mathsf{Q_{SUB}}(a)$ be a circuit performing an $n$-bit subtraction by a constant as per definitions~\ref{def:adderwithconstant} and \ref{def:abstractSubtractor}, using $r_{\mathsf{SUB}}$ $\mathsf{Tof}$ gates and $s_{\mathsf{SUB}}$ ancillas. Then, there is a circuit implementing an $n$-bit comparator by a classical constant as per definition~\ref{def:abstract_constant_comparator}
using $r=r_{\mathsf{ADD}} + r_{\mathsf{SUB}}$ $\mathsf{Tof}$ gates, an additional $\mathsf{CNOT}$ gate and $s=1+\max(s_{\mathsf{ADD}},s_{\mathsf{SUB}})$ ancilla qubits
\end{theorem}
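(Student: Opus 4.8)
The plan is to reuse, almost verbatim, the template of proposition~\ref{prp:comp-using-two-adders}, but with the quantum--quantum subtractor and adder replaced by their by-a-constant analogues $\mathsf{Q_{SUB}}(a)$ and $\mathsf{Q_{ADD}}(a)$. The underlying idea is that comparing a register $\ket{x}_n$ against the classical constant $a$ reduces to computing the signed difference $x-a$, copying its sign bit onto the target, and then uncomputing the difference by adding $a$ back. Since $\mathsf{Q_{SUB}}(a)$ and $\mathsf{Q_{ADD}}(a)$ are exactly inverse arithmetic operations, the second restores the register that the first modified, leaving only the comparison flag written to the target.

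Concretely, I would allocate $s = 1 + \max(s_{\mathsf{ADD}}, s_{\mathsf{SUB}})$ ancillas, where one of them serves as the overflow qubit extending $\ket{x}_n$ into the $(n+1)$-qubit register that holds the signed difference, and the remaining $\max(s_{\mathsf{ADD}}, s_{\mathsf{SUB}})$ qubits are shared as workspace by both the subtractor and the adder (each returns them clean, so they can be reused, which is why the count is a $\max$ and not a sum). The three steps are:
\begin{align*}
\ket{0}_s\ket{x}_n\ket{t}_1 &\xmapsto{\mathsf{Q_{SUB}}(a)} \ket{0}_{s-1}\ket{x-a}_{n+1}\ket{t}_1~, \\
\ket{0}_{s-1}\ket{x-a}_{n+1}\ket{t}_1 &\xmapsto{\mathsf{CNOT}} \ket{0}_{s-1}\ket{x-a}_{n+1}\ket{t\oplus\mathbf{1}[x<a]}_1~, \\
\ket{0}_{s-1}\ket{x-a}_{n+1}\ket{t\oplus\mathbf{1}[x<a]}_1 &\xmapsto{\mathsf{Q_{ADD}}(a)} \ket{0}_s\ket{x}_n\ket{t\oplus\mathbf{1}[x<a]}_1~,
\end{align*}
where the $\mathsf{CNOT}$ is controlled on the most significant qubit $(x-a)_n$ of the difference and targets $T$.

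For correctness I would argue that, by the $2$'s-complement subtraction convention (definition~\ref{def:stringSubtraction}, and see also proposition~\ref{prop:signeddifference}), the most significant bit $(x-a)_n$ is $1$ exactly when $x-a<0$, i.e.\ when $x<a$, which is precisely the value $\mathbf{1}[x<a]$ demanded by definition~\ref{def:abstract_constant_comparator}. The final $\mathsf{Q_{ADD}}(a)$ then adds $a$ back to the stored difference, recovering $x=(x-a)+a$ in the first $n$ qubits and resetting the overflow ancilla to $\ket{0}$; since neither arithmetic circuit touches the target $T$, the copied comparison flag survives unchanged. Tallying gates yields $r = r_{\mathsf{ADD}} + r_{\mathsf{SUB}}$ $\mathsf{Tof}$ gates plus the single $\mathsf{CNOT}$, matching the statement.

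The main obstacle I anticipate is bookkeeping rather than any genuine mathematical difficulty. I must check that the overflow qubit consumed by $\mathsf{Q_{SUB}}(a)$ is faithfully re-absorbed by $\mathsf{Q_{ADD}}(a)$ so that it truly returns to $\ket{0}$ and can be counted as a clean ancilla, and that the workspace ancillas of the two constant circuits may legitimately be shared (justifying the $\max$ in the ancilla count). I would also verify the sign direction carefully, so that the copied bit is $\mathbf{1}[x<a]$ and not its complement; this is the only place where getting the convention wrong would silently break the result.
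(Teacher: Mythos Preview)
Your proposal is correct and follows exactly the approach the paper intends: the paper's own proof consists of a single sentence pointing back to proposition~\ref{prp:comp-using-two-adders}, and you have faithfully instantiated that template with the by-a-constant primitives, including the correct sign convention $\mathbf{1}[x<a]$ from the most significant bit of $x-a$. Your additional bookkeeping remarks about sharing the ancilla workspace and reabsorbing the overflow qubit are more detailed than what the paper supplies, but they are accurate and do not diverge from the paper's argument.
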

\begin{proof}
The idea is similar to proposition~\ref{prp:comp-using-two-adders} giving a comparator using an adder and a subtractor.
\end{proof}
Now, theorem~\ref{thm:comp-by-a-classical-constant-with-adder-and-subtractor-by-a-contant} can be applied specifically to the case of the Draper adder to give the theorem below.
\begin{proposition}[Comparator by a classical constant - Draper/Beauregard~\cite{beauregard2002circuit}]\label{prp:draper-comp-by-a-classical-constant}
Let $x\in \{0,1\}^{n}, a\in \{0,1\}^{n}$ be two bit strings. There is a circuit implementing a comparator by a classical constant:
$$\ket{x}_n\ket{t}_1\mapsto \ket{x}_n\ket{t \oplus \mathbf{1}[x < a]}_1.$$
using a circuit with $1$ $\mathsf{CNOT}$, $2$ $\mathsf{QFT}_{n+1}$, $2$ $\mathsf{IQFT}_{n+1}$, $1$ $\mathsf{\Phi_{SUB}}(a)$ and $1$ $\mathsf{\Phi_{ADD}}(a)$ gates, and a single ancilla qubit.
\end{proposition}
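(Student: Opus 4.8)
The plan is to specialize theorem~\ref{thm:comp-by-a-classical-constant-with-adder-and-subtractor-by-a-contant} to the Draper/Beauregard constant adder of proposition~\ref{prp: addition by a constant Draper}, obtaining the exact classical-constant analogue of the quantum--quantum comparator in proposition~\ref{prp:quantum-comparator-beauregard-draper}. The single ancilla qubit will play the role of the overflow/sign qubit: I would regard the $n$-qubit input $\ket{x}_n$ together with this ancilla (initialized to $\ket{0}$) as an $(n+1)$-qubit register $\ket{x}_{n+1}$ whose most significant bit is $0$, leaving room for the sign of the difference $x-a$.

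First I would move into Fourier space with $\mathsf{QFT}_{n+1}$, yielding $\ket{\phi(x)}_{n+1}$, then apply $\mathsf{\Phi_{SUB}}(a)$ --- the inverse of the phase rotations that define $\mathsf{\Phi_{ADD}}(a)$ in proposition~\ref{prp: addition by a constant Draper} --- to obtain $\ket{\phi(x-a)}_{n+1}$, and return to the computational basis with $\mathsf{IQFT}_{n+1}$, giving $\ket{x-a}_{n+1}$ in $2$'s complement. A single $\mathsf{CNOT}$ controlled on the most significant qubit of this register and targeting $\ket{t}_1$ then writes the comparison bit. Finally I would uncompute the difference by re-entering Fourier space with $\mathsf{QFT}_{n+1}$, applying $\mathsf{\Phi_{ADD}}(a)$, and returning with $\mathsf{IQFT}_{n+1}$, thereby restoring the input to $\ket{x}_n\ket{0}_1$ and freeing the ancilla. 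Tallying the gates --- $2$ $\mathsf{QFT}_{n+1}$, $2$ $\mathsf{IQFT}_{n+1}$, one each of $\mathsf{\Phi_{SUB}}(a)$ and $\mathsf{\Phi_{ADD}}(a)$, and one $\mathsf{CNOT}$, with a single ancilla --- reproduces the claimed resource count.

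The step requiring the most care is the sign-bit read-out. Since $x, a \in \{0, \dots, 2^n-1\}$, the difference $x-a$ lies in $\{-(2^n-1), \dots, 2^n-1\}$ and is therefore represented faithfully in $(n+1)$-bit $2$'s complement: its most significant bit equals $1$ exactly when $x-a<0$, i.e.\ when $x<a$, and equals $0$ when $x \ge a$ (because then the nonnegative value $x-a \le 2^n-1$ fits in the low $n$ bits). Hence $(x-a)_n = \mathbf{1}[x<a]$, the output demanded by definition~\ref{def:abstract_constant_comparator}; this is the same signed-difference fact (proposition~\ref{prop:signeddifference}) used in proposition~\ref{prp:quantum-comparator-beauregard-draper}. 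I expect the main obstacle to be bookkeeping rather than computation: keeping the ancilla consistently in its dual role as both the control of the copying $\mathsf{CNOT}$ and the qubit that the final uncomputation must return to $\ket{0}$.
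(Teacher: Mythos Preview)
Your proposal is correct and follows essentially the same approach as the paper: the paper's proof simply invokes theorem~\ref{thm:comp-by-a-classical-constant-with-adder-and-subtractor-by-a-contant} with $\mathsf{Q_{ADD}}=\mathsf{IQFT}_{n+1}\mathsf{\Phi_{ADD}}(a)\mathsf{QFT}_{n+1}$ and $\mathsf{Q_{SUB}}=\mathsf{IQFT}_{n+1}\mathsf{\Phi_{SUB}}(a)\mathsf{QFT}_{n+1}$, which is exactly the specialization you carry out explicitly. Your additional discussion of the sign-bit read-out and the role of the single ancilla as the overflow qubit is more detailed than the paper's terse proof but entirely consistent with it.
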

\begin{proof}
Using $\mathsf{Q_{ADD}}=\mathsf{IQFT}_{n+1}\mathsf{\Phi_{ADD}}(a)\mathsf{QFT}_{n+1}$ and $\mathsf{Q_{SUB}}=\mathsf{IQFT}_{n+1}\mathsf{\Phi_{SUB}}(a)\mathsf{QFT}_{n+1}$ in theorem~\ref{thm:comp-by-a-classical-constant-with-adder-and-subtractor-by-a-contant}, we require only a single ancilla.
\end{proof}

\begin{definition}[Controlled comparison by a classical constant]\label{def:abstract_constant_controlled_comparator}
    We define a \emph{controlled quantum comparator with a classical constant} $a\in \{0,1\}^n$ as any unitary gate implementing the $n$ strings comparison with $a$ according to the following circuit
    \begin{equation*}
        \includegraphics[width=0.4\textwidth]{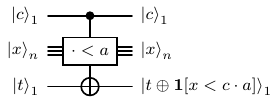}
    \end{equation*}
\end{definition}

\begin{theorem}[Controlled comparator by a classical constant - CDKPM - using half a subtractor]\label{thm:controlled comparator by a classical constant CDKMP}
There is a circuit implementing an $n$-bit controlled comparator by a classical constant as per definition~\ref{def:abstract_constant_controlled_comparator}
using $n+1$ ancilla qubits and $2n$ $\mathsf{Tof}$ gates 
\end{theorem}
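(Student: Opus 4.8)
The plan is to reduce the controlled comparison to a single \emph{uncontrolled} CDKPM comparator (proposition~\ref{prp:comparator-cuccaro}) by moving the control off the comparison circuit and onto the loading of the constant. The guiding observation is that in proposition~\ref{prp:comparator by a classical constant any} the constant $a$ is loaded into an $n$-qubit register and then fed to a plain comparator; if instead we load $c\cdot a$ (the constant gated by the control, exactly as in the loading step of proposition~\ref{prp:controlled addition by a constant - any}), then a plain comparison of $x$ against the loaded register already yields the controlled output. This is what keeps the Toffoli count at $2n$ rather than $2n+1$: we never have to promote the copying $\mathsf{CNOT}$ to a $\mathsf{Tof}$ as was done in proposition~\ref{prp:controlled comparator CDKMP}.

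Concretely, I would proceed in three steps. First, using $n$ fresh ancilla qubits, load $c\cdot a$ by applying one $\mathsf{CNOT}$ from the control for each index $i$ with $a_i\neq 0$, i.e.\ $|a|$ controlled-$\mathsf{NOT}$ gates in total:
\begin{equation*}
\ket{c}_1\ket{0}_n\ket{x}_n\ket{t}_1 \xmapsto{\mathsf{Load}} \ket{c}_1\ket{c\cdot a}_n\ket{x}_n\ket{t}_1~.
\end{equation*}
Second, apply the uncontrolled CDKPM comparator of proposition~\ref{prp:comparator-cuccaro} with first input the loaded register and second input $\ket{x}_n$, writing onto $\ket{t}_1$ the most significant (borrow) bit of $x-c\cdot a$, which by the convention of definition~\ref{def:abstract_comparator} equals $\mathbf{1}[x<c\cdot a]$:
\begin{equation*}
\ket{c}_1\ket{c\cdot a}_n\ket{x}_n\ket{t}_1 \xmapsto{\mathsf{Q_{COMP}}} \ket{c}_1\ket{c\cdot a}_n\ket{x}_n\ket{t\oplus\mathbf{1}[x<c\cdot a]}_1~.
\end{equation*}
Third, unload $c\cdot a$ by repeating the $|a|$ controlled-$\mathsf{NOT}$ gates, restoring the loading register to $\ket{0}_n$.

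The crux of the argument is to verify that this loaded comparator realises the intended controlled map, namely that $\mathbf{1}[x<c\cdot a]=c\cdot\mathbf{1}[x<a]$. When $c=1$ this is immediate since $c\cdot a=a$. When $c=0$ the loaded constant is $0$, and $\mathbf{1}[x<0]$ is computed as the most significant borrow bit of $x-0=x$; since $x$ is a nonnegative $n$-bit integer this subtraction produces no borrow, so the bit vanishes and $\ket{t}_1$ is left unchanged. Hence the target receives $t\oplus(c\wedge\mathbf{1}[x<a])$, exactly as required by definition~\ref{def:abstract_constant_controlled_comparator}. I expect the main obstacle to be the bookkeeping of the signed/unsigned interpretation of the registers together with the extra overflow qubit of the CDKPM comparator, since the whole correctness hinges on the borrow bit being $0$ in the $c=0$ branch; the remaining gate-level behaviour is a direct transcription of proposition~\ref{prp:comparator-cuccaro}.

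For the resource count, the comparator contributes its $2n$ $\mathsf{Tof}$ gates and its single carry ancilla, while the loading step consumes the $n$ ancilla qubits holding $c\cdot a$, for $n+1$ ancillas in total; the load and unload add $2|a|$ $\mathsf{CNOT}$ gates but no further $\mathsf{Tof}$ gates, giving precisely the claimed $2n$ $\mathsf{Tof}$ gates and $n+1$ ancilla qubits.
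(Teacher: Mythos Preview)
Your proposal is correct and follows essentially the same approach as the paper: load $c\cdot a$ into an $n$-qubit ancilla register using $|a|$ $\mathsf{CNOT}$s from the control (rather than $\mathsf{NOT}$s), run the uncontrolled CDKPM comparator of proposition~\ref{prp:comparator-cuccaro}, and unload. Your write-up is in fact more complete than the paper's terse proof, since you explicitly verify the identity $\mathbf{1}[x<c\cdot a]=c\cdot\mathbf{1}[x<a]$ (trivial for $c=1$, and for $c=0$ the borrow of $x-0$ vanishes), which is the only nontrivial point.
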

\begin{proof}
    The previous statement follows by using a comparator with classical constant (proposition~\ref{prp:comparator by a classical constant any}) and using an ancilla register for writing (controlled on the register $\ket{c}$) the value of $a$ using $|a|$ CNOTs, which gets uncomputed after using the comparator.
\end{proof}

\begin{remark}\label{rem:invertingcomparison}
    Given any circuit implementing the comparison $(x>y)$ as defined in definition~\ref{def:abstract_comparator} one can get the opposite comparison $(x\leq y)$ by simply postcomposing the first circuit with an $\mathsf{X}$ gate acting on register $T$.
\end{remark}

\section{Quantum modular addition} \label{sec:modular_adders}
In this section, we detail and build on the methods of Vedral et. al~\cite{vedral1996quantum} for constructing modular adders as a suitable combination of the circuits for addition, subtraction and comparison introduced in section \ref{sec:adders}.
\begin{definition}[Modular addition]\label{def:modadder}
Let $x\in \{0,1\}^{n}, y\in \{0,1\}^{n}$ be two bit strings, and $p\in \{0,1\}^n$ classically known with $0\leq x,y<p$.
We define a \emph{quantum modular adder} as any unitary gate implementing the $n$ strings \emph{addition modulo $p$} according to the following circuit
\begin{equation*}
        \includegraphics[width=0.4\textwidth]{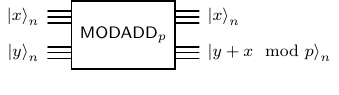}
\end{equation*}
\end{definition}

The following implementation is a slight modification of \cite{vedral1996quantum} and \cite{litinski2023compute}.
\begin{proposition}[Modular adder - Vedral's architecture~\cite{vedral1996quantum}]\label{prp:mod-add-first-vedral-architecture}
Let 
\begin{itemize}
    \item $\mathsf{Q_{ADD}}$ be a quantum circuit that performs a quantum addition using $s_{\mathsf{ADD}}$ ancillas, and $r_{\mathsf{ADD}}$ $\mathsf{Tof}$ gates,
    \item $\mathsf{Q_{COMP}}(p)$ be a quantum comparator with a classical constant $p$ using $s_{\mathsf{COMP}}$ ancillas, $r_{\mathsf{COMP}}$ $\mathsf{Tof}$ gates,
    \item $\mathsf{C\text{-}Q_{SUB}}(p)$ be a quantum circuit that performs a  (controlled) subtraction by a constant $p$ using $s_{\mathsf{C\text{-}SUB}}$ ancillas, and $r_{\mathsf{C\text{-}SUB}}$ $\mathsf{Tof}$ gates,
    \item $\mathsf{Q_{COMP}'}$: be a quantum comparator using ($s_{\mathsf{COMP}}'$ ancillas, $r_{\mathsf{COMP}}'$ $\mathsf{Tof}$ gates).
\end{itemize}
There is a circuit, $\mathsf{Q_{MODADD_\text{p}}}$, performing the $n$-bit modular addition operation as per definition~\ref{def:modadder} using $r=(r_{\mathsf{{ADD}}}+r_{\mathsf{{COMP}}}+r_{\mathsf{{C\text{-}SUB}}}+r_{\mathsf{{COMP}}}')$ $\mathsf{Tof}$ gates and $s=2+\max(s_{\mathsf{ADD}},s_{\mathsf{COMP}},s_\mathsf{C\text{-}SUB},s_{\mathsf{COMP}}')$ ancillas.
\end{proposition}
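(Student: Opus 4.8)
The plan is to realise the modular adder by concatenating the four supplied subroutines in the order \emph{add, compare against $p$, conditionally subtract $p$, compare again}, all acting on the input register $X$ (holding $x$ on $n$ qubits) and the register $Y$ (holding $y$, extended by one overflow qubit to $n+1$ qubits so that the un-reduced sum $x+y<2p$ fits), together with a single flag ancilla $F$ initialised to $\ket{0}$. The internal work ancillas demanded by each of the four subroutines are reused across the subroutines, which is exactly why only the maximum of the four counts $s_{\mathsf{ADD}},s_{\mathsf{COMP}},s_{\mathsf{C\text{-}SUB}},s_{\mathsf{COMP}}'$ enters the final tally rather than their sum.

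Concretely, I would first apply $\mathsf{Q_{ADD}}$ to obtain $\ket{x}_n\ket{x+y}_{n+1}$. Next, $\mathsf{Q_{COMP}}(p)$ writes into $F$ the bit $\mathbf{1}[x+y\ge p]$; if the chosen comparator natively outputs the complementary bit $\mathbf{1}[x+y<p]$, a single $\mathsf{X}$ gate on $F$ (remark~\ref{rem:invertingcomparison}) corrects it without affecting the Toffoli count. Controlled on $F$, the subtractor $\mathsf{C\text{-}Q_{SUB}}(p)$ then maps $Y$ to $(x+y)-p\cdot\mathbf{1}[x+y\ge p]=(x+y)\bmod p$, so that $Y$ holds the desired result $r:=(x+y)\bmod p$ and the overflow qubit returns to $\ket{0}$. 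The only remaining task is to reset $F$ to $\ket{0}$.

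The crux of the argument, and the step I expect to be the main obstacle, is recognising that $F$ can be uncomputed \emph{without} recomputing the comparison against $p$, by instead comparing $r$ with $x$. The key identity is
\[
\mathbf{1}[x+y\ge p]=\mathbf{1}[r<x].
\]
Indeed, if $x+y<p$ then $r=x+y\ge x$ (as $y\ge 0$), so $\mathbf{1}[r<x]=0$; and if $x+y\ge p$ then $r=x+y-p$ with $y<p$, whence $r=x+(y-p)<x$ and $\mathbf{1}[r<x]=1$. Thus applying the quantum comparator $\mathsf{Q_{COMP}}'$ to the registers $X$ and $Y$ (definition~\ref{def:abstract_comparator}) XORs $\mathbf{1}[x>r]=\mathbf{1}[r<x]$ into $F$, which by the identity coincides with the value already stored there, returning $F$ to $\ket{0}$. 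Care is needed only to note that $r<p\le 2^n$ occupies the lower $n$ qubits with the overflow qubit already restored to $\ket{0}$, so the comparison between $x$ and $r$ is well posed on $n$-qubit data. The net mapping is $\ket{x}_n\ket{y}_n\mapsto\ket{x}_n\ket{r}_n$ with every ancilla restored, matching definition~\ref{def:modadder}.

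Finally I would tally the resources. The Toffoli count is additive over the four subroutines, giving $r=r_{\mathsf{ADD}}+r_{\mathsf{COMP}}+r_{\mathsf{C\text{-}SUB}}+r_{\mathsf{COMP}}'$. Beyond the shared work ancillas, which contribute $\max(s_{\mathsf{ADD}},s_{\mathsf{COMP}},s_{\mathsf{C\text{-}SUB}},s_{\mathsf{COMP}}')$, the construction uses exactly two dedicated ancilla qubits, namely the flag $F$ and the overflow qubit of $Y$, both returned to $\ket{0}$. This yields $s=2+\max(s_{\mathsf{ADD}},s_{\mathsf{COMP}},s_{\mathsf{C\text{-}SUB}},s_{\mathsf{COMP}}')$, as claimed.
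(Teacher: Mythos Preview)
Your proposal is correct and follows essentially the same approach as the paper: compute $x+y$, compare with $p$ to set a flag, conditionally subtract $p$, and then uncompute the flag via the identity $\mathbf{1}[x+y\ge p]=\mathbf{1}[(x+y)\bmod p<x]$. Your justification of this identity and your accounting of the two dedicated ancillas (the flag and the overflow qubit) are in fact more explicit than the paper's own proof.
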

\begin{proof}
We start with a plain addition into our target
$$\ket{0}_{s+1}\ket{x}_n\ket{y}_n\xmapsto{\mathsf{Q_{ADD}}}\ket{0}_{s}\ket{x}_n\ket{x+y}_{n+1}~.$$
Using our comparator $\mathsf{Q_{COMP}}(p)$, we can perform a comparison of the previous sum with the given modulus $p$
$$\ket{0}_{s-1}\ket{x}_n\ket{x+y}_{n+1}\xmapsto{\mathsf{Q_{COMP}}(p)}\ket{0}_{s-1}\ket{x}_n\ket{x+y}_{n+1}\ket{\mathbf{1}[x+y <p]}~.$$
When the comparator's output is $0$, i.e. when $x+y\geq p$, we need to subtract the modulus $p$ to compute the correct sum $x+y \mod p$. Therefore, we perform a controlled subtraction of $p$ in the register currently storing $x+y$ (\textit{Note: We need to flip, using an $\mathsf{X}$ gate, the value of the comparison qubit before performing our controlled subtraction as pointed out in remark~\ref{rem:invertingcomparison}})
\begin{align*}
    \ket{0}_{s-1}\ket{x}_n\ket{x+y}_{n+1}\ket{1\oplus\mathbf{1}[x+y <p]}_1
= \ket{0}_{s-1}\ket{x}_n\ket{x+y}_{n+1}\ket{\mathbf{1}[x+y \geq p]}_1
\\
\xmapsto{\mathsf{C\text{-}Q_{SUB}(p)}}\ket{0}_{s-1}\ket{x}_n\ket{x+y \mod p}_{n+1}\ket{\mathbf{1}[x+y \geq p]}_1
~.
\end{align*}
Finally, we need to uncompute our previously computed output of the comparator. We notice that $\mathbf{1}[x+y \geq p] \equiv \mathbf{1}[x+y  \mod p < x]$ whenever $y<p$ strictly.
Namely, call $d=\mathbf{1}[x+y \geq p]$ then $x+y \mod p \equiv x+y-d\cdot p$ hence 
$$\mathbf{1}[x > x + y \mod p] = \mathbf{1}[d\cdot p >  y ] = d~.$$

Therefore, we can use $\mathsf{Q_{COMP}'}$ on inputs $x$ and $(x+y \mod p)$ to perform the final step 
\begin{align*}
    \ket{0}_{s-1}\ket{x}_n\ket{x+y\mod p}_{n+1}&\ket{\mathbf{1}[x+y \geq p]}_1\\
    \xmapsto{\mathsf{Q_{COMP}'}}&\ket{0}_{s-1}\ket{x}_n\ket{x+y\mod p}_{n+1}\ket{\mathbf{1}[x+y > p] \oplus \mathbf{1}[x+y\mod p < x]}_1\\
    =&\ket{0}_{s}\ket{x}_n\ket{x+y\mod p}_{n+1} ~.
\end{align*}
\end{proof}

\begin{figure}[h!]
        \centering
        \includegraphics[width=\textwidth]{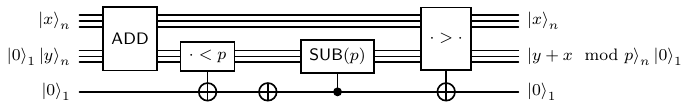}
        \caption{Implementation of a modular adder using previous quantum adder circuits.}\label{fig:modularadder}
\end{figure}

\begin{remark}
    In principle, one can implement a subroutine computing $x \mod p$ for any given number $0\leq x<2p$ as follows
    \begin{displaymath}
        \includegraphics[width=\textwidth]{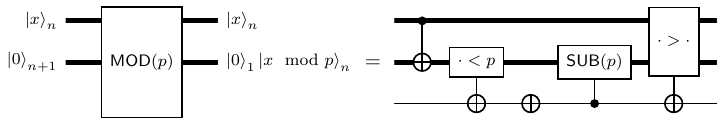}
    \end{displaymath}
    Therefore, a circuit computing $x+y \mod p$ can be alternatively given by
    \begin{displaymath}
        \includegraphics[width=.7\textwidth]{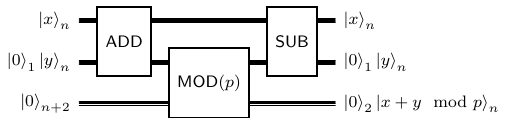}
    \end{displaymath}

\end{remark}

Proposition~\ref{prp:mod-add-first-vedral-architecture} gives an overview for constructing modular adders using any available combination of adders, (controlled) subtractor (by a constant), and comparators (by a constant). In the theorems below, we delve into specific adder circuits used for modular addition and calculate concrete costs more thoroughly.
\begin{proposition}[Modular adder - CDKPM~\cite{cuccaro2004new}]\label{prp:modular adder with CDKPM} 
There is a circuit, $\mathsf{Q_{MODADD_\text{p}}}$, performing the $n$-bit modular addition operation as per definition~\ref{def:modadder} using
  $n+3$ ancillas and $8n$ $\mathsf{Tof}$ gates.
\end{proposition}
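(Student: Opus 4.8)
The plan is to instantiate the general modular-adder recipe of Proposition~\ref{prp:mod-add-first-vedral-architecture} with CDKPM-based building blocks and simply tally the resulting resource counts. That proposition requires four subcircuits---a plain adder $\mathsf{Q_{ADD}}$, a comparator with the classical constant $p$, a (controlled) subtraction by the constant $p$, and a second plain comparator $\mathsf{Q_{COMP}'}$---and produces a modular adder with Toffoli count $r = r_{\mathsf{ADD}} + r_{\mathsf{COMP}} + r_{\mathsf{C\text{-}SUB}} + r_{\mathsf{COMP}}'$ and ancilla count $s = 2 + \max(s_{\mathsf{ADD}}, s_{\mathsf{COMP}}, s_{\mathsf{C\text{-}SUB}}, s_{\mathsf{COMP}}')$. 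So the whole argument reduces to reading off the four individual costs and plugging them in.

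For the two adder/subtractor pieces I would use the CDKPM plain adder of Proposition~\ref{prp: CDKPM plain adder}, which costs $2n$ $\mathsf{Tof}$ gates and a single ancilla. Its adjoint (Theorem~\ref{thm:otherquantumsubtractors}) is a plain subtractor at the same cost, and promoting it to a controlled subtraction by the constant $p$ via the constant-loading construction of Proposition~\ref{prp:controlled addition by a constant - any} keeps the Toffoli count at $2n$ while spending $n$ extra ancillas to hold $p$; thus $r_{\mathsf{ADD}} = r_{\mathsf{C\text{-}SUB}} = 2n$ with $s_{\mathsf{ADD}} = 1$ and $s_{\mathsf{C\text{-}SUB}} = n+1$. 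For the two comparators I would take the CDKPM half-subtractor comparator of Proposition~\ref{prp:comparator-cuccaro} ($2n$ $\mathsf{Tof}$ gates, $1$ ancilla) directly as $\mathsf{Q_{COMP}'}$, and wrap it in the classical-constant loading trick of Proposition~\ref{prp:comparator by a classical constant any} to obtain the comparator with constant $p$: this again preserves the $2n$ Toffoli count and adds $n$ ancillas for the loaded copy of $p$. Hence $r_{\mathsf{COMP}} = r_{\mathsf{COMP}}' = 2n$, with $s_{\mathsf{COMP}} = n+1$ and $s_{\mathsf{COMP}}' = 1$.

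Summing the four Toffoli counts gives $r = 2n + 2n + 2n + 2n = 8n$, and the ancilla count is $s = 2 + \max(1, n+1, n+1, 1) = n+3$, matching the claim. The main thing to get right is the bookkeeping of the ancilla $\max$: one must notice that only the two \emph{constant} subroutines (the comparator-by-$p$ and the controlled-subtractor-by-$p$) incur the extra $n$ qubits for loading $p$, and that the subroutines are applied sequentially so these loading registers are reused rather than accumulated, which is precisely why they enter the $\max$ once instead of additively. The only subtlety beyond the arithmetic is confirming that the controlled subtractor by a constant inherits the $2n$ Toffoli cost of the underlying CDKPM adder with no Toffoli overhead---the control and the constant $p$ are handled entirely by CNOT/NOT gates---which is exactly the content of Proposition~\ref{prp:controlled addition by a constant - any}.
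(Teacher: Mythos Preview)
Your proposal is correct and follows essentially the same approach as the paper: instantiate Proposition~\ref{prp:mod-add-first-vedral-architecture} with the CDKPM plain adder, the CDKPM half-subtractor comparator (wrapped in the constant-loading trick for the comparison with $p$), the controlled subtraction by $p$ via Proposition~\ref{prp:controlled addition by a constant - any}, and the bare CDKPM comparator for $\mathsf{Q_{COMP}'}$, then sum/$\max$ to obtain $8n$ Toffolis and $n+3$ ancillas. If anything, your write-up is slightly more explicit than the paper's about invoking Proposition~\ref{prp:comparator by a classical constant any} for the constant comparator and about why the $n$ loading qubits enter the $\max$ rather than accumulate.
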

    
\begin{proof}
    We employ proposition~\ref{prp:mod-add-first-vedral-architecture} with the following subroutines:
    \begin{itemize}
        \item $\mathsf{Q_{ADD}}$: CDKPM plain adder (proposition~\ref{prp: CDKPM plain adder});
        \item $\mathsf{Q_{COMP}}(p)$: CDKPM half subtractor based constant comparator proposition~\ref{prp:comparator-cuccaro};
        \item $\mathsf{C\text{-}Q_{SUB}}(p)$: CDKPM plain adder, more concretely, we use $\mathsf{Q_{ADD}^{\dagger}}$ in theorem~\ref{prp:controlled addition by a constant - any} for controlled constant addition/subtraction
        (proposition~\ref{prp:comparator-cuccaro});
        \item $\mathsf{Q_{COMP}'}$: CDKPM half subtractor based comparator proposition~\ref{prp:comparator-cuccaro}.
    \end{itemize}
From proposition~\ref{prp:mod-add-first-vedral-architecture}, we know that the total $\mathsf{Tof}$ cost of performing modular addition using the above selection of subroutines is $r=(r_{\mathsf{{ADD}}}+r_{\mathsf{{COMP}}}+r_{\mathsf{{C\text{-}SUB}}}+r_{\mathsf{{COMP}}}') = 8n$ $\mathsf{Tof}$ gates.
Similarly, the number of ancillas required is $s=2+\max(s_{\mathsf{ADD}},s_{\mathsf{COMP}},s_\mathsf{C\text{-}SUB},s_{\mathsf{COMP}}')=2+\max(1,n+1,n+1,1)=n+3$.

 \end{proof}

The version presented above has essentially the same structure of~\cite{vedral1996quantum} but enjoys some improvements. Namely, we use CDKPM's adder (in proposition~\ref{prp: CDKPM plain adder} instead of proposition~\ref{prp:vedraladder}), and we substitute the use of two adder circuits for the comparator $\mathsf{Q_{COMP}'}$ (proposition~\ref{prp:comp-using-two-adders}) with a single comparator circuit.  
These changes are already known in the literature (e.g.~\cite{litinski2023compute}), but it is not often mentioned the use of $n$ ancilla qubits for internal operations. If one has access to $2n$ ancillas, one can use Gidney's adder (proposition~\ref{prp:gidneyAdder} instead of the CDKPM adder (i.e. proposition~\ref{prp: CDKPM plain adder}). This allows for a decrease in the number of $\mathsf{Tof}$ gates and also the $\mathsf{Tof}$ depth.

\begin{proposition}[Modular adder - Gidney~\cite{gidney2018halving}]\label{prp: modular adder gidney}
There is a circuit, $\mathsf{Q_{MODADD_\text{p}}}$, performing the $n$-bit modular addition operation as per definition~\ref{def:modadder} using $2n+3$ ancillas and $4n$ $\mathsf{Tof}$ gates.
\end{proposition}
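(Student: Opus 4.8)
The plan is to specialize the general Vedral-architecture recipe of proposition~\ref{prp:mod-add-first-vedral-architecture} to Gidney's low-Toffoli primitives, exactly as proposition~\ref{prp:modular adder with CDKPM} does for the CDKPM primitives, and then to total the resulting resource counts. Once the four subroutines $\mathsf{Q_{ADD}}$, $\mathsf{Q_{COMP}}(p)$, $\mathsf{C\text{-}Q_{SUB}}(p)$, $\mathsf{Q_{COMP}}'$ are fixed, the entire content of the statement is bookkeeping; the only real decision is which Gidney-based circuit to plug into each slot.

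Concretely, I would instantiate the recipe as follows. For the plain addition $\mathsf{Q_{ADD}}$ I take the Gidney adder of proposition~\ref{prp:gidneyAdder}, costing $n$ $\mathsf{Tof}$ gates and $n$ ancillas. For the constant comparator $\mathsf{Q_{COMP}}(p)$ I combine the Gidney half-subtractor comparator of proposition~\ref{prp:comparator using half gidney subtractor} ($n$ $\mathsf{Tof}$, $n$ ancillas) with the loading construction of proposition~\ref{prp:comparator by a classical constant any}, which adds an $n$-qubit register to hold $p$ but no extra $\mathsf{Tof}$ gates. For the controlled subtraction by a constant $\mathsf{C\text{-}Q_{SUB}}(p)$ I use $\mathsf{Q_{ADD}}^\dagger$ inside the controlled-constant construction of proposition~\ref{prp:controlled addition by a constant - any}, again $n$ $\mathsf{Tof}$ gates on top of an $n$-qubit load register. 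For the final comparator $\mathsf{Q_{COMP}}'$ I reuse the Gidney comparator of proposition~\ref{prp:comparator using half gidney subtractor}, costing $n$ $\mathsf{Tof}$ and $n$ ancillas. Summing the four contributions gives $r = r_{\mathsf{ADD}} + r_{\mathsf{COMP}} + r_{\mathsf{C\text{-}SUB}} + r_{\mathsf{COMP}}' = 4n$ $\mathsf{Tof}$ gates, which is the Toffoli figure in the statement and the expected halving relative to the $8n$ of the CDKPM version.

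The ancilla count is where care is needed, and I expect it to be the only genuine obstacle. Since the four subroutines are applied strictly in sequence and each restores its workspace to $\ket{0}$ before the next begins, their internal ancillas can all be drawn from a single shared block, so the internal demand is the maximum rather than the sum of the four figures; on top of this block, proposition~\ref{prp:mod-add-first-vedral-architecture} charges two dedicated workspace qubits, one to absorb the overflow of $x+y$ and one to hold the comparison flag $\mathbf{1}[x+y<p]$. The delicate point is the exact size of the shared block: both the constant comparator and the controlled constant subtraction need roughly $2n$ ancillas (an $n$-qubit load register plus the $n$ ancillas of the underlying Gidney circuit), and one must track whether Gidney's carry ancilla $C_0$ can be omitted in each case, since this is what distinguishes a maximum of $2n$ from one of $2n+1$. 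Pinning this down is what fixes the total to $s = 2n+3$; I would verify it by writing out the register state after each of the four steps, as in the proof of proposition~\ref{prp:mod-add-first-vedral-architecture}, and confirming that no step ever demands more free $\ket{0}$ qubits than the claimed count provides.
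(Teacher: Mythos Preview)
Your proposal is correct and follows essentially the same route as the paper: instantiate proposition~\ref{prp:mod-add-first-vedral-architecture} with the four Gidney-based subroutines (proposition~\ref{prp:gidneyAdder} for $\mathsf{Q_{ADD}}$, proposition~\ref{prp:comparator using half gidney subtractor} via proposition~\ref{prp:comparator by a classical constant any} for $\mathsf{Q_{COMP}}(p)$, proposition~\ref{prp:controlled addition by a constant - any} with the Gidney adder for $\mathsf{C\text{-}Q_{SUB}}(p)$, and proposition~\ref{prp:comparator using half gidney subtractor} for $\mathsf{Q_{COMP}}'$), then sum the Toffoli counts to $4n$ and take the maximum ancilla demand plus two to get $2n+3$. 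The paper's proof is in fact terser than yours on the ancilla accounting, simply writing $s=2+\max(n+1,2n+1,2n+1,n+1)=2n+3$ without the careful discussion of reuse you outline.
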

\begin{proof}
The proof follows the same steps as proposition~\ref{prp:modular adder with CDKPM}. We leverage proposition~\ref{prp:mod-add-first-vedral-architecture}, with the following subroutines:
    \begin{itemize}
        \item $\mathsf{Q_{ADD}}$: Gidney plain adder (proposition~\ref{prp:gidneyAdder});
        \item $\mathsf{Q_{COMP}}(p)$: Gidney half subtractor based constant comparator proposition~\ref{prp:comparator using half gidney subtractor};
        \item $\mathsf{C\text{-}Q_{SUB}}(p)$: Gidney plain adder, more concretely, we use $\mathsf{Q_{ADD}^{\dagger}}$ in proposition~\ref{prp:controlled addition by a constant - any} for controlled constant addition/subtraction
        (proposition~\ref{prp:comparator-cuccaro});
        \item $\mathsf{Q_{COMP}'}$: Gidney half subtractor based comparator proposition~\ref{prp:comparator using half gidney subtractor}.
    \end{itemize}
From proposition~\ref{prp:mod-add-first-vedral-architecture}, we know that the total $\mathsf{Tof}$ cost of performing modular addition using the above selection of subroutines is $r=(r_{\mathsf{{ADD}}}+r_{\mathsf{{COMP}}}+r_{\mathsf{{C\text{-}SUB}}}+r_{\mathsf{{COMP}}}') = 4n$ $\mathsf{Tof}$ gates.
Similarly, the number of ancillas required is $s=2+\max(s_{\mathsf{ADD}},n+s_{\mathsf{COMP}},n+s_\mathsf{C\text{-}SUB},s_{\mathsf{COMP}}')=2+\max(n+1,2n+1,2n+1,n+1)=2n+3$.
\end{proof}

The above modular adder can be further modified to reduce the number of ancillas. We propose to use the Gidney adder only for the first addition and final comparison, and use the CDKPM adder for the constant subtraction and constant comparison, thus giving us the following result.

\begin{theorem}[Modular adder - Gidney + CDKPM]\label{thm: modular adder gidney + CDKPM}
There is a circuit, $\mathsf{Q_{MODADD_\text{p}}}$, performing the $n$-bit modular addition operation as per definition~\ref{def:modadder} using $n+3$ ancillas and $6n$ $\mathsf{Tof}$ gates.
\end{theorem}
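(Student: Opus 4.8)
The plan is to invoke the abstract modular-adder construction of proposition~\ref{prp:mod-add-first-vedral-architecture}, exactly as in the proofs of proposition~\ref{prp:modular adder with CDKPM} and proposition~\ref{prp: modular adder gidney}, but splitting the four subroutine slots between the two architectures. Concretely, I would realize the non-constant operations with Gidney's circuits and the by-a-constant operations with CDKPM's: take $\mathsf{Q_{ADD}}$ to be the Gidney plain adder (proposition~\ref{prp:gidneyAdder}, $n$ $\mathsf{Tof}$ gates and $n$ ancillas) and $\mathsf{Q_{COMP}'}$ to be the Gidney half-subtractor comparator (proposition~\ref{prp:comparator using half gidney subtractor}, $n$ $\mathsf{Tof}$ gates and $n$ ancillas); take $\mathsf{Q_{COMP}}(p)$ to be the CDKPM half-subtractor comparator by a constant (proposition~\ref{prp:comparator-cuccaro} combined with proposition~\ref{prp:comparator by a classical constant any}, $2n$ $\mathsf{Tof}$ gates) and $\mathsf{C\text{-}Q_{SUB}}(p)$ to be the CDKPM controlled subtraction by a constant (proposition~\ref{prp: CDKPM plain adder} fed through proposition~\ref{prp:controlled addition by a constant - any}, $2n$ $\mathsf{Tof}$ gates). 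Because proposition~\ref{prp:mod-add-first-vedral-architecture} is stated purely in terms of the resource counts of whatever subroutines are plugged into each slot, its correctness argument transfers verbatim and only the bookkeeping changes.

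Summing the Toffoli contributions gives $r = r_{\mathsf{ADD}} + r_{\mathsf{COMP}} + r_{\mathsf{C\text{-}SUB}} + r_{\mathsf{COMP}}' = n + 2n + 2n + n = 6n$, matching the claim. This sits between the $8n$ of the pure-CDKPM adder (proposition~\ref{prp:modular adder with CDKPM}) and the $4n$ of the pure-Gidney adder (proposition~\ref{prp: modular adder gidney}), as one expects from a two-Gidney/two-CDKPM split.

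The interesting step is the ancilla count $s = 2 + \max(s_{\mathsf{ADD}}, s_{\mathsf{COMP}}, s_{\mathsf{C\text{-}SUB}}, s_{\mathsf{COMP}}')$, and this is where the construction pays off. Converting a comparator or adder into its by-a-constant version costs an extra $n$ ancillas to load $p$ (propositions~\ref{prp:comparator by a classical constant any} and~\ref{prp:controlled addition by a constant - any}). For the two CDKPM-based constant operations this yields $s_{\mathsf{COMP}} = s_{\mathsf{C\text{-}SUB}} = 1 + n = n+1$, since a CDKPM comparator/adder needs only a single ancilla of its own. The two Gidney-based operations each carry $n$ ancillas for their logical-$\mathsf{AND}$ register, but---crucially---since I assigned Gidney only to the operations that are \emph{not} by a constant, they never pay the extra constant-loading register and remain at $s_{\mathsf{ADD}} = s_{\mathsf{COMP}}' = n$. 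Hence $\max(n, n+1, n+1, n) = n+1$ and $s = 2 + (n+1) = n+3$; note the $\pm 1$ ambiguity in the Gidney count coming from the $C_0$ qubit is immaterial here, since the $\max$ is attained by the CDKPM-constant slots regardless.

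The main point to argue carefully---and the conceptual heart of the theorem---is precisely this separation of roles. In the pure-Gidney adder the constant operations are forced to carry both the $n$ logical-$\mathsf{AND}$ ancillas \emph{and} the $n$ constant-loading ancillas, pushing their requirement to $2n + O(1)$ and thereby dominating the $\max$; the trick here is that no single subroutine ever incurs both penalties simultaneously. I would therefore make explicit (i) that the four slots of proposition~\ref{prp:mod-add-first-vedral-architecture} may be filled by different adder/comparator architectures without affecting correctness, since only the registers on which each acts are constrained, and (ii) that the $\max$ (rather than a sum) over the four ancilla requirements is exactly what permits the scratch space to be reused across slots, so that the larger of the Gidney requirement ($n$) and the CDKPM-constant requirement ($n+1$) governs the total instead of their sum.
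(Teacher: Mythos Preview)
Your proposal is correct and follows essentially the same approach as the paper: you invoke proposition~\ref{prp:mod-add-first-vedral-architecture} with the same split of subroutines (Gidney for $\mathsf{Q_{ADD}}$ and $\mathsf{Q_{COMP}'}$, CDKPM for the two by-a-constant operations), arriving at the identical $6n$ Toffoli count and $n+3$ ancillas. Your discussion of \emph{why} this split avoids the $2n+O(1)$ ancilla penalty is more explicit than the paper's own proof, which simply lists the subroutines and tallies the costs.
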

\begin{proof}
The proof follows the same steps as the proof of proposition~\ref{prp:modular adder with CDKPM}. In particular:
\begin{itemize}
        \item $\mathsf{Q_{ADD}}$: Gidney plain adder (proposition~\ref{prp:gidneyAdder});
        \item $\mathsf{Q_{COMP}}(p)$: CDKPM half subtractor based constant comparator proposition~\ref{prp:comparator-cuccaro};
        \item $\mathsf{C\text{-}Q_{SUB}}(p)$: CDKPM plain adder, more concretely, we use $\mathsf{Q_{ADD}^{\dagger}}$ in proposition~\ref{prp:controlled addition by a constant - any} for controlled constant addition/subtraction
        (proposition~\ref{prp:comparator-cuccaro});
        \item $\mathsf{Q_{COMP}'}$: Gidney half subtractor based comparator (proposition~\ref{prp:comparator using half gidney subtractor}).
    \end{itemize}
    From proposition~\ref{prp:mod-add-first-vedral-architecture}, we know that the total $\mathsf{Tof}$ cost of performing modular addition using the above selection of subroutines is $r=(r_{\mathsf{{ADD}}}+r_{\mathsf{{COMP}}}+r_{\mathsf{{C\text{-}SUB}}}+r_{\mathsf{{COMP}}}') = (n+2n+2n+n)=6n$ $\mathsf{Tof}$ gates.
Similarly, the number of ancillas required is $s=2+\max(s_{\mathsf{ADD}},n+s_{\mathsf{COMP}},n+s_\mathsf{C\text{-}SUB},s_{\mathsf{COMP}}')=2+\max(n+1,n+1,n+1,n+1)=n+3$ ancillas.
\end{proof}

\begin{proposition}[Modular adder - Draper/Beauregard~\cite{beauregard2002circuit}]\label{prp:draper-beauregard-modular-adder}
There is a circuit, $\mathsf{Q_{MODADD_\text{p}}}$, performing the $n$-bit modular addition operation as per definition~\ref{def:modadder} with a gate count of $3$ $\mathsf{QFT}$'s, $3$ $\mathsf{IQFT}$'s, $2$ $\mathsf{CNOT}$'s, $2$ $\mathsf{\Phi_{ADD}}$'s, $1$ $\mathsf{\Phi_{SUB}}$, $1$ $\mathsf{C\text{-}\Phi_{SUB}(p)}$, $1$ $\mathsf{\Phi_{ADD}(p)}$, $1$ $\mathsf{\Phi_{SUB}(p)}$ and
  $2$ ancillas.
\end{proposition}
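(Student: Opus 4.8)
The plan is to instantiate the general Vedral-architecture modular adder of proposition~\ref{prp:mod-add-first-vedral-architecture} using the QFT-based (Draper/Beauregard) subroutines throughout, and then to exploit the cancellation of adjacent $\mathsf{QFT}$/$\mathsf{IQFT}$ pairs to collapse the Fourier-transform count. Concretely, I would take $\mathsf{Q_{ADD}}$ to be the Draper plain adder of corollary~\ref{cor:Draper-qft-adder}, $\mathsf{Q_{COMP}}(p)$ to be the comparator by a classical constant of proposition~\ref{prp:draper-comp-by-a-classical-constant}, $\mathsf{C\text{-}Q_{SUB}}(p)$ to be a controlled subtraction by the constant $p$ realised as $\mathsf{IQFT}_{n+1}\,\mathsf{C\text{-}\Phi_{SUB}}(p)\,\mathsf{QFT}_{n+1}$, and $\mathsf{Q_{COMP}'}$ to be the quantum–quantum comparator of proposition~\ref{prp:quantum-comparator-beauregard-draper}. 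Since correctness of the composite is already guaranteed by proposition~\ref{prp:mod-add-first-vedral-architecture} (including the identity $\mathbf{1}[x > (x+y) \mod p] = \mathbf{1}[x+y \geq p]$ used to uncompute the flag when $y<p$), the only remaining work is resource bookkeeping.

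Next I would write out the composite circuit as a flat sequence of gates acting on the target register $B$ (the $(n+1)$-qubit register holding $y$, extended by the overflow qubit) together with the comparison-flag ancilla. The naive concatenation produces, in order: (i) $\mathsf{QFT}$, $\mathsf{\Phi_{ADD}}$, $\mathsf{IQFT}$ for the addition; (ii) $\mathsf{QFT}$, $\mathsf{\Phi_{SUB}}(p)$, $\mathsf{IQFT}$, $\mathsf{CNOT}$, $\mathsf{QFT}$, $\mathsf{\Phi_{ADD}}(p)$, $\mathsf{IQFT}$ for the constant comparator; (iii) $\mathsf{QFT}$, $\mathsf{C\text{-}\Phi_{SUB}}(p)$, $\mathsf{IQFT}$ for the controlled constant subtraction; and (iv) $\mathsf{QFT}$, $\mathsf{\Phi_{SUB}}$, $\mathsf{IQFT}$, $\mathsf{CNOT}$, $\mathsf{QFT}$, $\mathsf{\Phi_{ADD}}$, $\mathsf{IQFT}$ for the final comparator. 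This gives the naive tally of $6$ $\mathsf{QFT}$, $6$ $\mathsf{IQFT}$, $2$ $\mathsf{CNOT}$, $2$ $\mathsf{\Phi_{ADD}}$, $1$ $\mathsf{\Phi_{SUB}}$, $1$ $\mathsf{\Phi_{ADD}}(p)$, $1$ $\mathsf{\Phi_{SUB}}(p)$ and $1$ $\mathsf{C\text{-}\Phi_{SUB}}(p)$.

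The decisive step is to observe that each subroutine returns register $B$ to the computational basis with a terminal $\mathsf{IQFT}_{n+1}$ while the following subroutine re-enters the Fourier basis with an initial $\mathsf{QFT}_{n+1}$ on the same register; since $\mathsf{IQFT}_{n+1}\,\mathsf{QFT}_{n+1}=\mathbf{I}$, each such adjacent pair is the identity and is removed. There are exactly three such boundaries — between (i) and (ii), (ii) and (iii), and (iii) and (iv) — and at each one no $\mathsf{CNOT}$ or $\mathsf{\Phi}$-gate sits between the cancelling $\mathsf{IQFT}$ and $\mathsf{QFT}$, so the cancellation is clean. The two $\mathsf{CNOT}$s, which read the sign bit in the computational basis, each lie strictly inside a comparator between its own $\mathsf{IQFT}$ and $\mathsf{QFT}$ and are therefore unaffected. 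Deleting the three pairs drops the count to $3$ $\mathsf{QFT}$ and $3$ $\mathsf{IQFT}$ while leaving every other gate count intact, matching the claimed figures; the ancilla count of $2$ then comes from the overflow qubit extending the output register to $n+1$ qubits and from the comparison flag, both restored to $\ket{0}$ at the end.

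I expect the main obstacle to be purely this bookkeeping of the cancellation pattern — in particular, confirming that the flag-writing $\mathsf{CNOT}$s do not obstruct the three boundary cancellations, and that the $\mathsf{X}$-gate flips of remark~\ref{rem:invertingcomparison} (needed so that the subtraction is controlled on $\mathbf{1}[x+y\geq p]$ rather than on $\mathbf{1}[x+y<p]$) constitute the only uncounted Clifford overhead — rather than any conceptual difficulty, since correctness is inherited wholesale from proposition~\ref{prp:mod-add-first-vedral-architecture}.
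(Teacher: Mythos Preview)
Your proposal is correct and follows essentially the same approach as the paper: instantiate proposition~\ref{prp:mod-add-first-vedral-architecture} with the Draper/Beauregard subroutines (corollary~\ref{cor:Draper-qft-adder}, proposition~\ref{prp:draper-comp-by-a-classical-constant}, proposition~\ref{prp:controlled addition by a constant - draper}, proposition~\ref{prp:quantum-comparator-beauregard-draper}) and cancel the three adjacent $\mathsf{IQFT}/\mathsf{QFT}$ pairs at the subroutine boundaries. Your explicit flat-sequence bookkeeping is in fact more detailed than the paper's, which merely names the three cancellations without writing out the full chain or verifying that the intermediate $\mathsf{CNOT}$s do not obstruct them.
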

\begin{proof}
We leverage proposition~\ref{prp:mod-add-first-vedral-architecture}, with the following subroutines:
    \begin{itemize}
        \item $\mathsf{Q_{ADD}}$: Draper plain adder corollary~\ref{cor:Draper-qft-adder};
        \item $\mathsf{Q_{COMP}}(p)$: Draper's comparator by a classical constant proposition~\ref{prp:draper-comp-by-a-classical-constant};
        \item $\mathsf{C\text{-}Q_{SUB}}(p)$: Draper adder, more concretely, we use $\mathsf{Q_{ADD}^{\dagger}}$ in proposition~\ref{prp:controlled addition by a constant - draper} and ~\ref{thm:controlled-draper-simplified} for controlled constant addition/subtraction;
        \item $\mathsf{Q_{COMP}'}$: Draper/Beauregard comparator proposition~\ref{prp:quantum-comparator-beauregard-draper}.
    \end{itemize}
The $\mathsf{IQFT}$ of $\mathsf{Q_{ADD}}$ cancels with the $\mathsf{QFT}$ of $\mathsf{Q_{COMP}}(p)$. We also find similar reductions for pairs the  $\mathsf{IQFT}$'s/$\mathsf{QFT}$'s of $\mathsf{Q_{COMP}}(p)/\mathsf{C\text{-}Q_{SUB}}(p)$ and $\mathsf{C\text{-}Q_{SUB}}(p)/\mathsf{Q_{COMP}'}$. The total gate count is therefore, $3$ $\mathsf{QFT}$'s, $3$ $\mathsf{IQFT}$'s, $2$ $\mathsf{CNOT}$'s, $2$ $\mathsf{\Phi_{ADD}}$'s, $1$ $\mathsf{\Phi_{SUB}}$, $1$ $\mathsf{C\text{-}\Phi_{SUB}(p)}$, $1$ $\mathsf{\Phi_{ADD}(p)}$ and $1$ $\mathsf{\Phi_{SUB}(p)}$. The number of ancillas required is $s=2$
\end{proof}

\subsection{Controlled modular addition}
\begin{definition}[Controlled modular addition]\label{def:ctrlmodadder}
We will denote controlled modular addition (with modulus $p$) as the following circuit
    \begin{equation*}
        \includegraphics[width=0.4\textwidth]{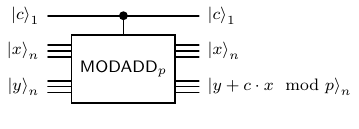}
    \end{equation*}
\end{definition}

A naive approach to creating a controlled modular adder is to add an extra control to each gate of the four subroutines composing the modular adder described in proposition~\ref{prp:mod-add-first-vedral-architecture}. However, there is a more efficient approach where only the first adder and final comparator need to be controlled. 
Such a trick has been mentioned in~\cite{litinski2023compute} without providing a proof.

\begin{proposition}[Controlled modular adder]~\label{prp:controlled-modular-addition}
Let 
\begin{itemize}
    \item $\mathsf{C\text{-}Q_{ADD}}$ be a quantum circuit that performs a quantum (controlled) addition using $s_{\mathsf{C\text{-}ADD}}$ ancillas, and $r_{\mathsf{C\text{-}ADD}}$ $\mathsf{Tof}$ gates;
    \item $\mathsf{Q_{COMP}}(p)$ be a quantum comparator by a constant $p$, using $s_{\mathsf{COMP}}$ ancillas, $r_{\mathsf{COMP}}$ $\mathsf{Tof}$ gates;
    \item $\mathsf{C\text{-}Q_{SUB}}(p)$ be a quantum circuit that performs a quantum subtraction by a constant $p$, using $s_{\mathsf{C\text{-}SUB}}$ ancillas, and $r_{\mathsf{C\text{-}SUB}}$ $\mathsf{Tof}$ gates;
    \item $\mathsf{C\text{-}Q_{COMP}'}$: be a controlled quantum comparator using ($s_{\mathsf{C\text{-}COMP}}'$ ancillas, $r_{\mathsf{C\text{-}COMP}}'$ $\mathsf{Tof}$ gates).
\end{itemize}
There is a circuit $\mathsf{C\text{-}Q_{MODADD_\text{p}}}$, performing the $n$-bit controlled modular addition operation as per definition~\ref{def:ctrlmodadder},
using $r=(r_{\mathsf{{C\text{-}ADD}}}+r_{\mathsf{{COMP}}}+r_{\mathsf{{C\text{-}SUB}}}+r_{\mathsf{{C\text{-}COMP}}}')$ $\mathsf{Tof}$ gates and $s=2+\max(s_{\mathsf{C\text{-}ADD}},n+s_{\mathsf{COMP}},n+s_\mathsf{C\text{-}SUB},s_{\mathsf{C\text{-}COMP}}')$ ancillas.
\end{proposition}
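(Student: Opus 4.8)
The plan is to reuse, almost verbatim, the four-step construction of proposition~\ref{prp:mod-add-first-vedral-architecture}, inserting the control qubit $\ket{c}_1$ into only two of its four subroutines. Concretely, I would replace the opening plain adder by the controlled adder $\mathsf{C\text{-}Q_{ADD}}$ (so the target becomes $y+c\cdot x$), replace the closing comparator $\mathsf{Q_{COMP}'}$ by its controlled version $\mathsf{C\text{-}Q_{COMP}'}$, and leave the constant comparator $\mathsf{Q_{COMP}}(p)$ and the comparison-bit-controlled subtraction $\mathsf{C\text{-}Q_{SUB}}(p)$ exactly as they appear in the uncontrolled circuit (including the $\mathsf{X}$ flip of remark~\ref{rem:invertingcomparison}). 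Correctness then reduces to a case analysis on $c\in\{0,1\}$.

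For $c=1$ the circuit is gate-for-gate identical to the uncontrolled modular adder acting on $\ket{x}_n\ket{y}_n$, so the analysis of proposition~\ref{prp:mod-add-first-vedral-architecture} applies directly: the target ends in $x+y \mod p$ and the comparison ancilla is restored to $\ket{0}$. Thus the only branch needing fresh justification is $c=0$, and this is where the subtlety lies. Here $\mathsf{C\text{-}Q_{ADD}}$ leaves the target at $y$, so the whole circuit must act as the identity on $\ket{x}_n\ket{y}_n$.

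The key invariant is the hypothesis $0\le y<p$. For $c=0$ the uncontrolled constant comparator computes $\mathbf{1}[y<p]=1$, which after the $\mathsf{X}$ flip becomes $\mathbf{1}[y\ge p]=0$; hence the comparison-bit-controlled subtraction does not fire and the target stays $y=y+c\cdot x \mod p$. It remains to restore the comparison ancilla, which currently holds $0$, and this is exactly what forces the asymmetric choice of gates to control: an uncontrolled $\mathsf{Q_{COMP}'}$ would XOR $\mathbf{1}[y<x]$ onto the ancilla, generically corrupting it, whereas controlling it on $c$ makes it inert when $c=0$ and leaves the ancilla at $\ket{0}$. I expect verifying this $c=0$ branch — in particular that the uncontrolled middle subroutines are self-neutralizing precisely because $y<p$, while the final comparator must be controlled — to be the main obstacle, since it is the only place where the two-gates-controlled design is actually justified.

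Finally I would collect resources. No Toffolis arise outside the four subroutines (only the single $\mathsf{X}$), so the count is additive, $r=r_{\mathsf{C\text{-}ADD}}+r_{\mathsf{COMP}}+r_{\mathsf{C\text{-}SUB}}+r_{\mathsf{C\text{-}COMP}}'$. For ancillas, reusing one scratch register across the sequential subroutines gives $s=2+\max(s_{\mathsf{C\text{-}ADD}},\,n+s_{\mathsf{COMP}},\,n+s_{\mathsf{C\text{-}SUB}},\,s_{\mathsf{C\text{-}COMP}}')$, where the leading $2$ counts the two single-qubit ancillas the architecture itself needs — the sum-overflow qubit and the comparison bit — and the $n$ offsets on the constant comparator and subtractor account for loading the classical modulus $p$ into a register, a cost the plain controlled adder and the final controlled comparator (acting on quantum inputs) do not incur. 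This matches the claimed counts in definition~\ref{def:ctrlmodadder}'s statement.
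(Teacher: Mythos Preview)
Your proposal is correct and follows essentially the same construction as the paper: control only the opening adder and the closing comparator, leave the constant comparator and comparison-bit-controlled subtraction unchanged, and count resources additively. The paper presents the correctness argument in a slightly more unified way---tracking the addend as $c\cdot x$ throughout and observing the identity $d=c\cdot d$ (since $c=0$ forces $y<p$ and hence $d=0$)---whereas you do an explicit case split on $c$; these are equivalent, and your $c=0$ branch in fact makes more transparent why the middle two subroutines need not be controlled.
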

\begin{proof}
In order to perform a controlled modular adder, start by first performing out controlled addition of the addend $x$ into the target register
\begin{align*}
\ket{0}_s\ket{c}_1\ket{x}_n\ket{y}_n \xmapsto{\mathsf{C\text{-}Q_{ADD}}}&\ket{0}_{s-1}\ket{c}_1\ket{x}_n\ket{c\cdot x+y}_{n+1}\\ \intertext{We can now follow the step similar to the modular addition protocol, where we perform a constant comparison}
\ket{0}_{s-1}\ket{c}_1\ket{x}_n\ket{c\cdot x+y}_{n+1}\xmapsto{\mathsf{Q_{COMP}}(p)}&\ket{0}_{s-2}\ket{c}_1\ket{x}_n\ket{c\cdot x+y}_{n+1}\ket{\mathbf{1}[c\cdot x + y < p]}_1\\ \intertext{Once again, similar to the modular addition protocol, we perform a controlled subtraction of the modulus $p$, controlled on our comparator's output because if $(c\cdot x + y) \geq p$, then the modulus $p$ needs to be subtract to reduce the sum $\mod p$. \textit{Note: We need to first flip the comparison qubit, and then perform the controlled subtraction (see remark~\ref{rem:invertingcomparison})}}
\ket{0}_{s-2}\ket{c}_1\ket{x}_n\ket{c\cdot x+y}_{n+1}\ket{\mathbf{1}[c\cdot x + y \geq p]}_1&\\
\xmapsto{\mathsf{C\text{-}Q_{SUB}}(p)}&\ket{0}_{s-1}\ket{c}_1\ket{x}_n\ket{c\cdot x + y\mod p}_{n}\ket{\mathbf{1}[c\cdot x + y  \geq p]}_1\\ \intertext{We need to now uncompute the value initally computed by $\mathsf{Q_{COMP}}$. As in the modular addition case, we need to find a comparison operation that is equivalent to  $d=\mathbf{1}[c\cdot x + y \geq p]$ (the qubit we wish to uncompute/reset). 
Since $(c\cdot x + y  \mod p) = (c\cdot x - c\cdot d \cdot p)$, one finds out that $c\cdot \mathbf{1}[c\cdot x + y \geq p]\equiv  \mathbf{1}[(c\cdot x + y)\mod p < c\cdot x]$ and in particular $d= c\cdot d$ since $ (c\cdot x + y \geq p)$ is false whenever $c=0$ from the condition $y<p$.
The only difference here compared to proposition~\ref{prp:mod-add-first-vedral-architecture} is that the comparator bit is only set for states with $c\cdot x+y \geq p$, $\mathsf{AND}$ with the control bit set. The required operation is nothing but a controlled comparison, with control $c$, inputs $x$ and $(c\cdot x + y)\mod p$}
\ket{0}_{s-1}\ket{c}_1\ket{x}_n\ket{c\cdot x + y\mod p}_{n}&\ket{\mathbf{1}[c\cdot x + y \geq p]}_1\\
\xmapsto{\mathsf{C\text{-}Q_{COMP}'}}&\ket{0}_{s}\ket{c}_1\ket{x}_n\ket{c\cdot x + y\mod p}_{n}~.
\end{align*} 

\end{proof}
We will now explore some costs of controlled modular adders for a few specific combinations of subroutines.

\begin{proposition}[Controlled modular adder - CDKPM]\label{prp:cdkpm-controlled-modular-adder}
There is a circuit $\mathsf{C\text{-}Q_{MODADD_\text{p}}}$, performing the $n$-bit controlled modular addition operation as per definition~\ref{def:ctrlmodadder},
using $n+3$ ancillas and  $9n+1$ $\mathsf{Tof}$ gates.
\end{proposition}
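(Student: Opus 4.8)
The plan is to invoke the general controlled modular adder construction of Proposition~\ref{prp:controlled-modular-addition} and substitute CDKPM-based subroutines, exactly mirroring how Proposition~\ref{prp:modular adder with CDKPM} specialized the (uncontrolled) Vedral architecture of Proposition~\ref{prp:mod-add-first-vedral-architecture}. The cost formula we must populate is
$r = r_{\mathsf{C\text{-}ADD}} + r_{\mathsf{COMP}} + r_{\mathsf{C\text{-}SUB}} + r_{\mathsf{C\text{-}COMP}}'$ and
$s = 2 + \max(s_{\mathsf{C\text{-}ADD}},\, n+s_{\mathsf{COMP}},\, n+s_{\mathsf{C\text{-}SUB}},\, s_{\mathsf{C\text{-}COMP}}')$.
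The key observation driving the construction is that, compared to the uncomputed modular adder, only the \emph{first} adder and the \emph{final} comparator need to carry the external control qubit $\ket{c}$; the intermediate constant comparison and constant subtraction remain uncontrolled, since they are applied to a register that already reflects whether $c=1$.

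First I would select the four subroutines. For $\mathsf{C\text{-}Q_{ADD}}$, I would use the controlled CDKPM adder of Theorem~\ref{thm:cuccaro-controlled-adder-1-extra-ancilla}, giving $r_{\mathsf{C\text{-}ADD}} = 3n$ and $s_{\mathsf{C\text{-}ADD}} = 1$. For the (uncontrolled) constant comparator $\mathsf{Q_{COMP}}(p)$, I would use the CDKPM half-subtractor comparator of Proposition~\ref{prp:comparator-cuccaro}, giving $r_{\mathsf{COMP}} = 2n$ and $s_{\mathsf{COMP}} = 1$. For the controlled constant subtraction $\mathsf{C\text{-}Q_{SUB}}(p)$, I would again use the CDKPM adder via Proposition~\ref{prp:controlled addition by a constant - any} (taking the adjoint to subtract), giving $r_{\mathsf{C\text{-}SUB}} = 2n$ and $s_{\mathsf{C\text{-}SUB}} = 1$. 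For the final controlled comparator $\mathsf{C\text{-}Q_{COMP}'}$, I would use the controlled CDKPM comparator of Proposition~\ref{prp:controlled comparator CDKMP}, giving $r_{\mathsf{C\text{-}COMP}}' = 2n+1$ and $s_{\mathsf{C\text{-}COMP}}' = 0$.

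Then I would simply add: $r = 3n + 2n + 2n + (2n+1) = 9n+1$ Toffoli gates, matching the claim. For the ancilla count, the constant comparator and constant subtraction each internally load $p$ into an $n$-qubit register, which is the source of the $n+s$ terms inside the maximum; thus $s = 2 + \max(1,\, n+1,\, n+1,\, 0) = n+3$, as stated. The main subtlety to verify carefully — rather than any computational obstacle — is the logical correctness of the uncomputation step, namely that $c\cdot\mathbf{1}[c\cdot x + y \geq p] \equiv \mathbf{1}[(c\cdot x + y)\bmod p < c\cdot x]$, so that a \emph{controlled} comparator (with control $c$, inputs $x$ and $(c\cdot x+y)\bmod p$) correctly resets the comparison qubit. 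This identity is precisely the one established in the proof of Proposition~\ref{prp:controlled-modular-addition}, using $y<p$ to guarantee the inequality fails whenever $c=0$, so I would cite it directly and note that the extra Toffoli in $r_{\mathsf{C\text{-}COMP}}'$ (relative to the uncontrolled comparator) accounts for promoting the final copying $\mathsf{CNOT}$ to a controlled one.
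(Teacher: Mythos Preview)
Your proposal is correct and follows essentially the same route as the paper: both instantiate Proposition~\ref{prp:controlled-modular-addition} with the CDKPM-based subroutines (controlled adder from Theorem~\ref{thm:cuccaro-controlled-adder-1-extra-ancilla}, constant comparator and constant subtractor built from the CDKPM half-subtractor/adder, and controlled comparator from Proposition~\ref{prp:controlled comparator CDKMP}), then sum to $9n+1$ Toffolis and $n+3$ ancillas. The only cosmetic difference is bookkeeping: the paper bakes the $n$ ancillas for loading $p$ into the subroutine counts (citing Theorem~\ref{thm:controlled comparator by a classical constant CDKMP} and Proposition~\ref{prp: addition by a constant - any} directly with $s=n+1$), whereas you cite the underlying quantum--quantum primitives and let the $n+s$ terms in the architectural formula absorb the loading overhead---both yield the same maximum.
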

\begin{proof}
We implement a controlled modular adder by using proposition~\ref{prp:controlled-modular-addition} in conjunction with the following subroutines:
    \begin{itemize}
    \item $\mathsf{C\text{-}Q_{ADD}}$ be a quantum circuit that performs a quantum (controlled) using a CDKPM adder with a single extra ancilla (theorem~\ref{thm:cuccaro-controlled-adder-1-extra-ancilla}) requiring $s_{\mathsf{C\text{-}ADD}}=2$ ancillas, and $r_{\mathsf{C\text{-}ADD}}=3n$ $\mathsf{Tof}$ gates;
    \item $\mathsf{Q_{COMP}}(p)$ be a quantum comparator by a constant $p$ using Controlled comparator - using half a CDKPM subtractor (theorem~\ref{thm:controlled comparator by a classical constant CDKMP}), requiring $s_{\mathsf{COMP}}=n+1$ ancillas, $r_{\mathsf{COMP}}=2n$ $\mathsf{Tof}$ gates;
    \item $\mathsf{C\text{-}Q_{SUB}}(p)$ be a quantum circuit that performs a quantum subtraction by a constant $p$ using the CDKPM adder in proposition~\ref{prp: addition by a constant - any}, requiring $s_{\mathsf{C\text{-}SUB}}=n+1$ ancillas, and $r_{\mathsf{C\text{-}SUB}}=2n$ $\mathsf{Tof}$ gates;
    \item $\mathsf{C\text{-}Q_{COMP}'}$ be the quantum comparator described in proposition~\ref{prp:controlled comparator CDKMP}: Controlled comparator - using half a CDKPM subtractor. This subroutine requires $s_{\mathsf{C\text{-}COMP}}'=1$ ancilla, $r_{\mathsf{C\text{-}COMP}}'=2n+1$ $\mathsf{Tof}$ gates.
\end{itemize}
There is a circuit $\mathsf{C\text{-}Q_{MODADD_\text{p}}}$, performing the controlled modular addition operation as per definition~\ref{def:ctrlmodadder},
using $r=(r_{\mathsf{{C\text{-}ADD}}}+r_{\mathsf{{COMP}}}+r_{\mathsf{{C\text{-}SUB}}}+r_{\mathsf{{C\text{-}COMP}}}') =(3n+2n+2n+(2n+1))=9n+1$ $\mathsf{Tof}$ gates and $s=2+\max(s_{\mathsf{C\text{-}ADD}},n+s_{\mathsf{COMP}},n+s_\mathsf{C\text{-}SUB},s_{\mathsf{C\text{-}COMP}}')=2+\max(2,n+1,n+1,1)=n+3$ ancillas.
\end{proof}

\begin{proposition}[Controlled modular adder - Gidney]\label{prp:gidney-controlled-modular-adder}
There is a circuit $\mathsf{C\text{-}Q_{MODADD_\text{p}}}$, performing the $n$-bit controlled modular addition operation as per definition~\ref{def:ctrlmodadder},
using $2n+3$ ancillas and $5n+1$ $\mathsf{Tof}$ gates.
\end{proposition}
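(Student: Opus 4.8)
The plan is to invoke the general controlled modular adder recipe of proposition~\ref{prp:controlled-modular-addition} as a black box, instantiating its four subroutines with Gidney-flavoured circuits throughout. This mirrors the choice made for the uncontrolled Gidney modular adder (proposition~\ref{prp: modular adder gidney}), with the single difference that the \emph{first} adder and the \emph{final} comparator must now carry the external control $c$, while the intermediate constant comparator and the flag-controlled constant subtraction need not be controlled on $c$ at all. Since proposition~\ref{prp:controlled-modular-addition} already establishes correctness (including the identity $c\cdot\mathbf{1}[c\cdot x + y \geq p]\equiv \mathbf{1}[(c\cdot x+y)\mod p < c\cdot x]$ used to uncompute the comparison flag), the entire proof reduces to selecting the subroutines and tallying their resources.

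Concretely, I would take $\mathsf{C\text{-}Q_{ADD}}$ to be the Gidney controlled adder with a single extra ancilla (proposition~\ref{prp:gidney-controlled-adder-1-extra-ancilla}), costing $s_{\mathsf{C\text{-}ADD}}=n+1$ ancillas and $r_{\mathsf{C\text{-}ADD}}=2n$ $\mathsf{Tof}$ gates; $\mathsf{Q_{COMP}}(p)$ to be the Gidney half-subtractor comparator (proposition~\ref{prp:comparator using half gidney subtractor}) loaded with the constant $p$ as in proposition~\ref{prp:comparator by a classical constant any}, costing $r_{\mathsf{COMP}}=n$ $\mathsf{Tof}$ gates; $\mathsf{C\text{-}Q_{SUB}}(p)$ to be the Gidney plain adder run as $\mathsf{Q_{ADD}}^{\dagger}$ for the constant subtraction via proposition~\ref{prp:controlled addition by a constant - any}, costing $r_{\mathsf{C\text{-}SUB}}=n$ $\mathsf{Tof}$ gates; and $\mathsf{C\text{-}Q_{COMP}'}$ to be the Gidney controlled comparator of proposition~\ref{prp:controlled comparator using half gidney subtractor}, costing $r_{\mathsf{C\text{-}COMP}}'=n+1$ $\mathsf{Tof}$ gates. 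The $\mathsf{Tof}$ count then follows immediately by summing, $r=r_{\mathsf{C\text{-}ADD}}+r_{\mathsf{COMP}}+r_{\mathsf{C\text{-}SUB}}+r_{\mathsf{C\text{-}COMP}}'=2n+n+n+(n+1)=5n+1$, matching the claim.

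For the ancilla count I would substitute into the $\max$ formula of proposition~\ref{prp:controlled-modular-addition}, namely $s=2+\max(s_{\mathsf{C\text{-}ADD}},\,n+s_{\mathsf{COMP}},\,n+s_{\mathsf{C\text{-}SUB}},\,s_{\mathsf{C\text{-}COMP}}')$, where the two $+n$ offsets account for the registers loading the constant $p$ in the comparator and subtractor. Exactly as in proposition~\ref{prp: modular adder gidney}, the two constant-carrying branches contribute $2n+1$ each, while the controlled adder and controlled comparator each contribute $n+1$, so the maximum is $2n+1$ and $s=2+(2n+1)=2n+3$. The only genuinely delicate point — and hence the main thing to get right rather than a deep obstacle — is the ancilla bookkeeping: one must verify that the $+n$ constant-loading terms in the comparator and subtractor branches dominate the $n+1$ coming from the externally controlled adder and final comparator, and that these auxiliary registers are freed before reuse so that they overlap rather than accumulate. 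Correctness of the flag uncomputation and of placing the external control only on the first adder and last comparator is inherited verbatim from proposition~\ref{prp:controlled-modular-addition}, so no further argument is required there.
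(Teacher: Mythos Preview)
Your proposal is correct and follows essentially the same route as the paper: you invoke proposition~\ref{prp:controlled-modular-addition} with Gidney-based subroutines throughout (controlled adder, constant comparator, controlled constant subtractor, controlled comparator) and read off the same $\mathsf{Tof}$ count $5n+1$ and ancilla count $2n+3$. The only cosmetic difference is that the paper folds the $+n$ constant-loading overhead directly into the reported subroutine ancilla counts rather than carrying it explicitly in the $\max$ formula as you do, but the arithmetic and the conclusion are identical.
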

\begin{proof}
We implement a controlled modular adder by using proposition~\ref{prp:controlled-modular-addition} in conjunction with the following subroutines:
    \begin{itemize}
    \item $\mathsf{C\text{-}Q_{ADD}}$ be a quantum circuit that performs a quantum (controlled) using a Gidney adder with a single extra ancilla (proposition~\ref{prp:gidney-controlled-adder-1-extra-ancilla}) requiring $s_{\mathsf{C\text{-}ADD}}=n+1$ ancillas, and $r_{\mathsf{C\text{-}ADD}}=2n$ $\mathsf{Tof}$ gates;
    \item $\mathsf{Q_{COMP}}(p)$ be a quantum comparator by a constant $p$ using Controlled comparator - using half a Gidney subtractor (proposition~\ref{prp:controlled comparator using half gidney subtractor}), requiring $s_{\mathsf{COMP}}=2n+1$ ancillas, $r_{\mathsf{COMP}}=n$ $\mathsf{Tof}$ gates;
    \item $\mathsf{C\text{-}Q_{SUB}}(p)$ be a quantum circuit that performs a controlled quantum subtraction by a constant $p$ using the Gidney adder in proposition~\ref{prp: addition by a constant - any}, requiring $s_{\mathsf{C\text{-}SUB}}=2n+1$ ancillas, and $r_{\mathsf{C\text{-}SUB}}=n$ $\mathsf{Tof}$ gates;
    \item $\mathsf{C\text{-}Q_{COMP}'}$ be the controlled quantum comparator described in proposition~\ref{prp:controlled comparator using half gidney subtractor}: controlled comparator - using half a Gidney subtractor. This subroutine requires $s_{\mathsf{C\text{-}COMP}}'=n+1$ ancillas, $r_{\mathsf{C\text{-}COMP}}'=n+1$ $\mathsf{Tof}$ gates.
\end{itemize}
According to proposition~\ref{prp:controlled-modular-addition}, there is a circuit performing the operation
$$\ket{c}_1\ket{x}_n\ket{y}_n\xmapsto{\mathsf{C\text{-}Q_{MODADD}}}\ket{c}_1\ket{x}_n\ket{c\cdot x+y\mod p}_n$$
using $r=(r_{\mathsf{{C\text{-}ADD}}}+r_{\mathsf{{COMP}}}+r_{\mathsf{{C\text{-}SUB}}}+r_{\mathsf{{C\text{-}COMP}}}') =(2n+n+n+(n+1))=5n+1$ $\mathsf{Tof}$ gates and $s=2+\max(s_{\mathsf{C\text{-}ADD}},s_{\mathsf{COMP}},s_\mathsf{C\text{-}SUB},s_{\mathsf{C\text{-}COMP}}')=2+\max(n+1,2n+1,2n+1,n+1)=2n+3$ ancillas.
\end{proof}

\subsection{Modular addition by a constant}
\begin{definition}[Modular addition by a constant]\label{def:abstract-modular-adder-by-a-const}
    Let $x \in \{0,1\}^n$. Assume to know classically $p,a \in \{0,1\}^n$ with $0\leq a,x<p$. We will denote modular addition by a constant $a$ (with modulus $p$) with the following circuit
\begin{equation*}
        \includegraphics[width=0.4\textwidth]{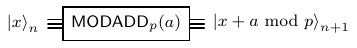}
\end{equation*}
\end{definition}
A simple way to perform a modular addition by a constant is by using a quantum modular adder as given by definition~\ref{def:modadder}.
\begin{proposition}[Modular adder by a constant]~\label{prp:constant-modular-adder}
Let $\mathsf{Q_{MODADD_{p}}}$ be a quantum circuit that performs an $n$-bit quantum modular addition as per definition~\ref{def:modadder}, using $s$ ancillas, and $r$ $\mathsf{Tof}$ gates. Then, there is a circuit $\mathsf{Q_{MODADD_{p}}}(a)$ implementing an $n$-bit quantum modular addition by a constant  
as per definition~\ref{def:abstract-modular-adder-by-a-const} using $n+s$ ancilla qubits and $r$ $\mathsf{Tof}$ gates, and an additional $2|a|$ $\mathsf{NOT}$ gates.
\end{proposition}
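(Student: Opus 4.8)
The plan is to mirror the \textbf{load--add--unload} strategy already used for plain addition by a constant in proposition~\ref{prp: addition by a constant - any}, simply replacing the plain adder with the modular adder $\mathsf{Q_{MODADD_{p}}}$. The guiding observation is that a modular adder by a constant is nothing but an ordinary modular adder in which one of its two \emph{quantum} addend registers is forced to encode the classically known value $a$. So I would create that encoding on a fresh ancilla register, run the modular adder as a black box, and then erase the encoding.

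First I would allocate an $n$-qubit ancilla register initialised to $\ket{0}_n$, in addition to the $s$ ancillas that $\mathsf{Q_{MODADD_{p}}}$ needs internally, and load the constant via
$$\ket{0}_{n}\ket{0}_{s}\ket{x}_n \xmapsto{\mathsf{Load}} \ket{a}_{n}\ket{0}_{s}\ket{x}_n,$$
applied by placing one $\mathsf{NOT}$ gate on each qubit $i$ with $a_i\neq 0$, hence $|a|$ gates. Since $0\le a<p$ by the hypotheses of definition~\ref{def:abstract-modular-adder-by-a-const}, the loaded register satisfies the precondition $0\le a,x<p$ required to invoke the modular adder of definition~\ref{def:modadder}. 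Next I would apply $\mathsf{Q_{MODADD_{p}}}$ with the register holding $a$ as the preserved addend and the register holding $x$ as the target, giving
$$\ket{a}_{n}\ket{0}_{s}\ket{x}_n \xmapsto{\mathsf{Q_{MODADD_{p}}}} \ket{a}_{n}\ket{0}_{s}\ket{x+a\mod p}_n,$$
where the first register is left unchanged and the $s$ internal ancillas are restored to $\ket{0}_{s}$. Because the value $a$ is untouched by the modular adder, I can unload it by repeating the same $|a|$ $\mathsf{NOT}$ gates,
$$\ket{a}_{n}\ket{0}_{s}\ket{x+a\mod p}_n \xmapsto{\mathsf{Unload}} \ket{0}_{n}\ket{0}_{s}\ket{x+a\mod p}_n,$$
returning the ancilla register to $\ket{0}_n$ and leaving the result in the register originally holding $x$.

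Counting resources then closes the argument: the two loading passes contribute $2|a|$ $\mathsf{NOT}$ gates, the modular adder contributes exactly its $r$ $\mathsf{Tof}$ gates (loading and unloading are $\mathsf{NOT}$-only), and the ancilla budget is the $n$ qubits used to store $a$ together with the $s$ internal ancillas of $\mathsf{Q_{MODADD_{p}}}$, i.e.\ $n+s$ in total. I do not expect any substantive obstacle here, as the construction is purely structural and treats the modular adder as a black box. The only points requiring a moment's care are confirming that $\mathsf{Q_{MODADD_{p}}}$ leaves its addend register invariant (so that the unload step is exact and deterministic) and that the assumption $0\le a<p$ guarantees the modular adder is always applied within its promised input regime; both follow directly from definition~\ref{def:modadder}.
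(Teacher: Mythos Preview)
Your proposal is correct and follows essentially the same load--add--unload argument as the paper's own proof, with the same resource accounting. The only difference is that you spell out the precondition check $0\le a<p$ and the invariance of the addend register explicitly, which the paper leaves implicit.
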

\begin{proof}
 Similar to the proposition~\ref{prp: addition by a constant - any}, We first take an unused $n$ qubit register, and load the value $a$ using $|a|$ $\mathsf{NOT}$ gates
\begin{align*}
    \ket{0}_{s+n}\ket{x}_n\xmapsto{\mathsf{Load}}& \ket{0}_{s}\ket{a}_n\ket{x}_n\\ \intertext{We now perform the required modular addition}
    \ket{0}_{s}\ket{a}_n\ket{x}_n \xmapsto{\mathsf{Q_{MODADD}}}& \ket{0}_{s}\ket{a}_n\ket{x+a\mod p}_n\\ \intertext{Finally, we unload the constant $a$}
    \ket{0}_{s}\ket{a}_n\ket{x+a\mod p}_n\xmapsto{\mathsf{Unload}}&\ket{0}_{s+n} \ket{x+a\mod p}_n~.
\end{align*}
\end{proof}
The above construction works for any modular adder construction that performs the operation $\ket{x}_n\ket{y}_n\xmapsto{\mathsf{Q_{MODADD}}_p} \ket{x}_n\ket{y+x \mod p}_n$. 
However, when examining the VBE architecture, we can modify certain subroutines to enable a different approach to performing constant modular addition.

\begin{theorem}[Modular adder by a constant - in VBE architecture]~\label{thm:constant-modular-adder-vbe-architecture}
Let
    \begin{itemize}
    \item $\mathsf{Q_{ADD}(a)}$ be a quantum circuit that performs a constant quantum addition using $s_{\mathsf{ADD}}$ ancillas, and $r_{\mathsf{ADD}}$ $\mathsf{Tof}$ gates;
    \item $\mathsf{Q_{COMP}}(p)$ be a quantum comparator using $s_{\mathsf{COMP}}$ ancillas, $r_{\mathsf{COMP}}$ $\mathsf{Tof}$ gates;
    \item $\mathsf{C\text{-}Q_{SUB}}(p)$ be a quantum circuit that performs a  (controlled) subtraction by a constant $p$ using $s_{\mathsf{C\text{-}SUB}}$ ancillas, and $r_{\mathsf{C\text{-}SUB}}$ $\mathsf{Tof}$ gates;
    \item $\mathsf{Q_{COMP}'}(a)$: be a constant quantum comparator using ($s_{\mathsf{COMP}}'$ ancillas, $r_{\mathsf{COMP}}'$ $\mathsf{Tof}$ gates).
\end{itemize}
there is a circuit $\mathsf{Q_{MODADD_{p}}}(a)$ implementing an $n$-bit quantum modular addition by a constant  
as per definition~\ref{def:abstract-modular-adder-by-a-const} using $r=(r_{\mathsf{{ADD}}}+r_{\mathsf{{COMP}}}+r_{\mathsf{{C\text{-}SUB}}}+r_{\mathsf{{COMP}}}')$ $\mathsf{Tof}$ gates and $s=2+\max(s_{\mathsf{ADD}},s_{\mathsf{COMP}},s_\mathsf{C\text{-}SUB},s_{\mathsf{COMP}}')$ ancillas.
\end{theorem}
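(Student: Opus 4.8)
The plan is to transcribe the proof of proposition~\ref{prp:mod-add-first-vedral-architecture} almost verbatim, making exactly two substitutions dictated by the constant setting: the opening plain adder becomes the constant adder $\mathsf{Q_{ADD}}(a)$, and the final register-versus-register comparator $\mathsf{Q_{COMP}'}$ becomes the constant comparator $\mathsf{Q_{COMP}'}(a)$. The middle two subroutines --- the comparison of the running sum against $p$ and the $p$-controlled constant subtraction --- are carried over unchanged.

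First I would initialise the target register in $\ket{x}_n$, borrow an overflow qubit from the ancilla pool, and apply $\mathsf{Q_{ADD}}(a)$ to obtain $\ket{x+a}_{n+1}$. Then I apply $\mathsf{Q_{COMP}}(p)$ to write $\mathbf{1}[x+a < p]$ into a flag ancilla, flip that flag with an $\mathsf{X}$ gate (remark~\ref{rem:invertingcomparison}), and apply the controlled subtraction $\mathsf{C\text{-}Q_{SUB}}(p)$ conditioned on $d:=\mathbf{1}[x+a\geq p]$, leaving $\ket{x+a \mod p}_{n+1}$ in the target register.

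The only genuinely new ingredient is the uncomputation of the flag $d$. In the plain case this used the identity $\mathbf{1}[x > x+y \mod p]=d$, which relies on the still-present register $\ket{x}$; here $x$ has been overwritten, so instead I would establish the constant-comparison identity
\[
\mathbf{1}\big[(x+a \mod p) < a\big] \;=\; \mathbf{1}[x+a\geq p] \;=\; d,
\]
valid under the hypotheses $0\le a,x<p$. A two-case check settles it: if $d=0$ then $x+a<p$, so the register holds $x+a\ge a$ (as $x\ge 0$), giving flag $0$; if $d=1$ then it holds $x+a-p$, which is $<a$ precisely because $x<p$, giving flag $1$. Hence applying $\mathsf{Q_{COMP}'}(a)$ XORs $d$ onto the flag qubit and resets it to $\ket{0}$, completing the in-place modular reduction.

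Finally I would collect costs: the four subroutines contribute $r=r_{\mathsf{ADD}}+r_{\mathsf{COMP}}+r_{\mathsf{C\text{-}SUB}}+r_{\mathsf{COMP}}'$ Toffoli gates, and since the overflow qubit and the shared flag qubit account for the additive $2$, the ancilla budget is $s=2+\max(s_{\mathsf{ADD}},s_{\mathsf{COMP}},s_{\mathsf{C\text{-}SUB}},s_{\mathsf{COMP}}')$, exactly matching proposition~\ref{prp:mod-add-first-vedral-architecture}. I expect the main obstacle to be pinning down this final equivalence, since it is the one place where the argument departs from the plain case and where the constraints $a,x<p$ are genuinely used; everything else is a direct copy of the established proof.
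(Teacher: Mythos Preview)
Your proposal is correct and follows essentially the same route as the paper's proof: apply $\mathsf{Q_{ADD}}(a)$, then $\mathsf{Q_{COMP}}(p)$, flip and apply $\mathsf{C\text{-}Q_{SUB}}(p)$, and finally uncompute the flag via the identity $\mathbf{1}[(x+a \bmod p)<a]=\mathbf{1}[x+a\ge p]$ using $\mathsf{Q_{COMP}'}(a)$. Your two-case verification of that identity is in fact more explicit than the paper's treatment, which simply asserts the equivalence.
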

\begin{proof}
The proof is very similar to the proof of the VBE modular adder architecture (proposition~\ref{prp:mod-add-first-vedral-architecture}).
We start with an addition by a constant $a$ into our target
$$\ket{0}_{s}\ket{x}_n\xmapsto{\mathsf{Q_{ADD}}(a)}\ket{0}_{s-1}\ket{x+a}_{n+1}~.$$
Using our comparator $\mathsf{Q_{COMP}}(p)$, we can perform a comparison of the previous sum with our modulus $p$
$$\ket{0}_{s}\ket{x+a}_{n+1}\xmapsto{\mathsf{Q_{COMP}}(p)}\ket{0}_{s-2}\ket{x+a}_{n+1}\ket{\mathbf{1}[x+a <p]}_1~.$$
When the comparator's output is $0$, i.e. when $x+a\geq p$, we need to subtract the modulus $p$ to compute the correct sum $x+a \mod p$. Therefore, we perform a controlled subtraction of $p$ in the register currently storing $x+a$ (note: we need to flip, using an $\mathsf{X}$ gate, the value of the comparison qubit before performing our controlled subtraction):
$$\ket{0}_{s-2}\ket{x+a}_{n+1}\ket{\mathbf{1}[x+a \geq p]}_1\xmapsto{\mathsf{C\text{-}Q_{SUB}(p)}}\ket{0}_{s-1}\ket{x}_n\ket{x+a \mod p}_{n}\ket{\mathbf{1}[x+a \geq p]}_1~.$$
We finally need to uncompute the comparator qubit. We notice that $\mathbf{1}[x+a \geq p] \equiv \mathbf{1}[x+a  \mod p < a]$, therefore, we can use $\mathsf{Q_{COMP}'}(a)$ on inputs $a$ and $(x+a \mod p)$ to perform the final step 
\begin{align*}
    \ket{0}_{s-1}\ket{x}_n\ket{x+a\mod p}_{n}&\ket{\mathbf{1}[x+a \geq p]}_1\\
    \xmapsto{\mathsf{Q_{COMP}'}(a)}&\ket{0}_{s-1}\ket{x+a\mod p}_{n}\ket{\mathbf{1}[x+a < p] \oplus \mathbf{1}[x+a\mod p \geq a]}_1\\
    =&\ket{0}_{s}\ket{x+a\mod p}_{n}~.
\end{align*}
\end{proof}

Given that we are adding by a constant, various ways exist to create more cost-effective modular addition circuits by modifying the plain addition implemented in the modular adder. Takahashi~\cite{takahashi2009quantum} in 2009 was the first to observe that the initial two operations in the VBE modular architecture could be merged into a single operation when performing a constant modular addition.

\begin{proposition}[Modular adder by a constant - in Takahashi architecture~\cite{Takahashi2009}]~\label{prp:takahashi-cheap-const-mod-adder}
Let
    \begin{itemize}
    \item $\mathsf{Q_{SUB}(a)}$ be a quantum circuit that performs a constant quantum subtraction using $s_{\mathsf{SUB}}$ ancillas, and $r_{\mathsf{SUB}}$ $\mathsf{Tof}$ gates;
    \item $\mathsf{C\text{-}Q_{ADD}}(p)$ be a quantum circuit that performs a  (controlled) addition by a constant $p$ using $s_{\mathsf{C\text{-}ADD}}$ ancillas, and $r_{\mathsf{C\text{-}ADD}}$ $\mathsf{Tof}$ gates;
    \item $\mathsf{Q_{COMP}}(a)$ be a classical comparator using $s_{\mathsf{COMP}}$ ancillas, $r_{\mathsf{COMP}}$ $\mathsf{Tof}$ gates.
\end{itemize}
there is a circuit $\mathsf{Q_{MODADD_{p}}}(a)$ implementing an $n$-bit quantum modular addition by a constant  
as per definition~\ref{def:abstract-modular-adder-by-a-const} using $r=(r_{\mathsf{{SUB}}}+r_{\mathsf{{C\text{-}ADD}}}+r_{\mathsf{{COMP}}})$ $\mathsf{Tof}$ gates and $s=1+\max(s_{\mathsf{SUB}},s_\mathsf{C\text{-}ADD},s_{\mathsf{COMP}})$ ancillas.
\end{proposition}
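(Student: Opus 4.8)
The plan is to mirror the proof of the VBE-style constant modular adder (theorem~\ref{thm:constant-modular-adder-vbe-architecture}), but to exploit the fact that, when the addend is a classical constant, the plain addition of $a$ and the subsequent comparison against $p$ need not be carried out as two separate arithmetic operations. The key observation (due to Takahashi) is that deciding whether $x+a<p$ is the same as reading the sign of $x+a-p = x-(p-a)$. Hence a single subtraction by the constant $p-a$ simultaneously performs the addition and the comparison, which is exactly what saves one adder-cost subroutine (and one ancilla) relative to theorem~\ref{thm:constant-modular-adder-vbe-architecture}.

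Concretely, I would proceed in three steps, reusing a single $(n{+}1)$-st qubit as a combined overflow/sign bit throughout. First, apply $\mathsf{Q_{SUB}}$ instantiated with the constant $p-a$ to realize $\ket{0}_{s}\ket{x}_n \mapsto \ket{0}_{s-1}\ket{x+a-p}_{n+1}$, the result being written in $(n{+}1)$-bit two's complement. Since $0\le x,a<p$ and $p<2^n$, the value $x+a-p$ lies in $[-p,\,p-2]\subseteq[-2^n,2^n-1]$, so its most significant qubit (bit $n$) equals $c:=\mathbf{1}[x+a<p]$; this bit plays the double role of overflow bit and comparison flag. Second, apply $\mathsf{C\text{-}Q_{ADD}}(p)$ controlled on bit $n$, adding $p$ into the low $n$ qubits modulo $2^n$ (discarding the carry and leaving the control bit $n$ itself untouched): in the underflow branch $c=1$ this sends the low qubits from $x+a-p+2^n$ back to $x+a$, while in the branch $c=0$ nothing fires and they already hold $x+a-p$; in both cases the low $n$ qubits now store $x+a \mod p$ and bit $n$ still equals $c$. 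Third, uncompute the flag using the identity $c=\mathbf{1}[(x+a \mod p)\ge a]$: a single comparator $\mathsf{Q_{COMP}}(a)$ comparing the low register against the constant $a$ and XORing its outcome into bit $n$ resets it to $\ket{0}$ (composing with an $\mathsf{X}$ if the comparator outputs the opposite inequality, cf.\ remark~\ref{rem:invertingcomparison}).

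The resource count then follows by inspection: the three subroutines contribute $r_{\mathsf{SUB}}+r_{\mathsf{C\text{-}ADD}}+r_{\mathsf{COMP}}$ Toffoli gates, with no extra Toffolis (the control of $\mathsf{C\text{-}Q_{ADD}}(p)$ is a single qubit and the flag uncomputation is absorbed into the comparator output). Because the three subroutines act sequentially, their private ancillas can be reused, yielding the $\max(s_{\mathsf{SUB}},s_{\mathsf{C\text{-}ADD}},s_{\mathsf{COMP}})$ term, and the lone shared sign/flag qubit accounts for the remaining $+1$, giving $s=1+\max(s_{\mathsf{SUB}},s_{\mathsf{C\text{-}ADD}},s_{\mathsf{COMP}})$. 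This is exactly one fewer than the $+2$ of theorem~\ref{thm:constant-modular-adder-vbe-architecture}, precisely because a single qubit here serves as both overflow bit and comparison flag.

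The main obstacle is the careful justification of the two arithmetic identities under the hypotheses $0\le x,a<p$. For the first, one must verify that $x+a-p$ fits in $(n{+}1)$-bit two's complement and that its sign bit is exactly $\mathbf{1}[x+a<p]$ (this is where $p<2^n$ enters, cf.\ remark~\ref{rem:2scomplement}), and that applying $\mathsf{C\text{-}Q_{ADD}}(p)$ to the low $n$ qubits with bit $n$ as an inert control leaves the flag intact. For the second, a short case analysis establishes $\mathbf{1}[(x+a \mod p)\ge a]=\mathbf{1}[x+a<p]$: when $x+a<p$ the stored result equals $x+a\ge a$, and when $x+a\ge p$ it equals $x+a-p<a$ (here $x<p$ strictly is essential). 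This equivalence is what licenses reusing the single comparator $\mathsf{Q_{COMP}}(a)$ to clean up the flag, and is therefore the crux of the argument.
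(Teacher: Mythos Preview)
Your proposal is correct and follows essentially the same three-step approach as the paper: subtract the constant $p-a$, use the resulting sign bit as the control for a conditional addition of $p$, and then uncompute the flag via the identity $\mathbf{1}[x+a<p]=\mathbf{1}[(x+a \bmod p)\ge a]$ using $\mathsf{Q_{COMP}}(a)$ (together with an $\mathsf{X}$). Your write-up is in fact more explicit than the paper's in justifying the range of $x+a-p$ and the case analysis behind the flag-uncomputation identity, and in explaining why the ancilla overhead drops from $+2$ to $+1$.
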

\begin{proof}
 To perform the unitary operation $\ket{0}_s\ket{x}_n \xmapsto{\mathsf{Q_{MODADD}}_p(a)} \ket{x+ a \pmod p}_n$, we start subtracting $p-a$ (which is different from the first step of proposition~\ref{prp:mod-add-first-vedral-architecture}, where we add $a$)
\begin{align*}
    \ket{0}_{s-1}\ket{x}_n \xmapsto{\mathsf{Q_{SUB}(p-a)}}& \ket{0}_{s-1}\ket{x - (p-a)}_{n+1} = \ket{0}_{s-1}\ket{x+a-p}_{n+1}\\ \intertext{Now, we need to add back $p$ to those states whenever $x+a<p$. Notice that $(x+a-p)_n$, i.e. the most significant bit of $x+a-p$ is $1$ only when $x < (p-a)$. Controlled on $(x+a-p)_n$, we add back $p$}
    \ket{0}_{s-1}\ket{x+a-p}_{n+1}\xmapsto{\mathsf{C\text{-}Q_{ADD}(p)}}&\ket{0}_{s-1}\ket{x+a\mod p}_{n}\ket{\mathbf{1}[x+a<p]}_1~.\\ \intertext{Now, as we have seen many times before, we have to find an operation that uncomputes $d=\mathbf{1}[x+a<p]$. 
    We notice that $\mathbf{1}[x+a<p]\equiv\mathbf{1}[x+a\mod p \geq a]$ since $x+a \mod p = x+a +(d-1)\cdot p$. Therefore, we can use a constant comparator $(\cdot < a)$ followed by a NOT to perform the required reset/uncomputation }
    \ket{0}_{s-1}\ket{x+a\mod p}_{n}\ket{\mathbf{1}[x+a<p]}&\xmapsto{\mathsf{Q_{COMP}(a)}}_1\\
    &\ket{0}_{s-1}\ket{x+a\mod p}_{n}\ket{\mathbf{1}[x+a<p] \oplus \mathbf{1}[x+a\mod p<a]}_1\\
    =&\ket{0}_{s-1}\ket{x+a\mod p}_{n}\ket{1} \\
    &\xmapsto{\mathsf{Q_{NOT}}}_1\ket{0}_{s}\ket{x+a\mod p}_{n}~.
\end{align*}   
\end{proof}

\subsection{Controlled modular addition by a constant}
\begin{definition}[Controlled modular addition by a constant]\label{def:abstract-controlled-modular-adder-by-a-const}
    Given $x,a,p \in \{0,1\}^n$ (with $a,p$ classically known), and a control $c\in \{0,1\}$ with $0\leq a,x<p $.
    We will denote controlled modular addition by a constant $a$ (with modulus $p$) with the following circuit
\begin{equation*}
        \includegraphics[width=0.4\textwidth]{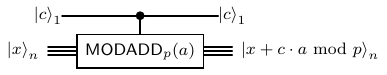}
\end{equation*}
\end{definition}
A simple way to perform a controlled modular addition by a constant is by using a quantum modular adder as per definition~\ref{def:modadder}.
\begin{theorem}[Controlled modular adder by a constant]~\label{thm:controlled-constant-modular-adder}
Let $\mathsf{Q_{MODADD_{p}}}$ be a quantum circuit that performs an $n$-bit quantum modular addition as per definition~\ref{def:modadder} using $s$ ancillas, and $r$ $\mathsf{Tof}$ gates. Then, there is a circuit $\mathsf{C\text{-}Q_{MODADD_{p}}}(a)$ implementing an $n$-bit controlled quantum modular addition by a constant  
as per definition~\ref{def:abstract-controlled-modular-adder-by-a-const} using $n+s$ ancilla qubits and $r$ $\mathsf{Tof}$ gates, and an additional $2|a|$ $\mathsf{CNOT}$ gates.
\end{theorem}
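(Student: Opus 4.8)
The plan is to reduce the controlled constant-modular addition to a single call of the black-box modular adder $\mathsf{Q_{MODADD_p}}$, by combining the constant-loading idea of proposition~\ref{prp:constant-modular-adder} with the control-on-$c$ loading of proposition~\ref{prp:controlled addition by a constant - any}. The key observation is that if we write the classically-controlled value $c\cdot a$ into a fresh register and feed it as one addend to $\mathsf{Q_{MODADD_p}}$, the modular adder itself handles both branches of the control $c$ at once, so no gate ever needs to be individually controlled.

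First I would take a fresh $n$-qubit register initialized to $\ket{0}_n$ alongside the $s$ ancillas internal to $\mathsf{Q_{MODADD_p}}$, and perform a $\mathsf{Load}$ consisting of one $\mathsf{CNOT}$ per index $i$ with $a_i\neq 0$, each controlled on $c$ and targeting qubit $i$ of the fresh register:
\begin{align*}
    \ket{c}_1\ket{0}_{s+n}\ket{x}_n &\xmapsto{\mathsf{Load}} \ket{c}_1\ket{0}_{s}\ket{c\cdot a}_n\ket{x}_n.
\end{align*}
Next I would apply $\mathsf{Q_{MODADD_p}}$ to the registers storing $c\cdot a$ and $x$, and finally repeat the same $|a|$ controlled-$\mathsf{NOT}$ gates to clear the fresh register:
\begin{align*}
    \ket{c}_1\ket{0}_{s}\ket{c\cdot a}_n\ket{x}_n &\xmapsto{\mathsf{Q_{MODADD_p}}} \ket{c}_1\ket{0}_{s}\ket{c\cdot a}_n\ket{x + c\cdot a \mod p}_n \\
    &\xmapsto{\mathsf{Unload}} \ket{c}_1\ket{0}_{s+n}\ket{x + c\cdot a \mod p}_n.
\end{align*}
The loading and unloading together use $2|a|$ $\mathsf{CNOT}$s and no Toffolis, so the only Toffolis are the $r$ inside $\mathsf{Q_{MODADD_p}}$, and the ancilla count is the $n$ qubits of the loading register plus the $s$ internal ancillas, i.e.\ $n+s$, matching the claim.

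The hard part --- really the only thing to check --- is that both settings of the control yield legal inputs for $\mathsf{Q_{MODADD_p}}$, whose specification (definition~\ref{def:modadder}) requires both addends to lie in $[0,p)$. When $c=1$ the loaded addend is $a$, which is in $[0,p)$ by hypothesis, and the adder returns $x+a \mod p$; when $c=0$ the loaded addend is $0\in[0,p)$, so the adder returns $x+0 \mod p = x$ and the target is left unchanged, exactly as the controlled semantics of definition~\ref{def:abstract-controlled-modular-adder-by-a-const} demand. Hence the circuit implements $\ket{c}_1\ket{x}_n \mapsto \ket{c}_1\ket{x + c\cdot a \mod p}_n$, and it is precisely the assumption $0\le a<p$ that makes the $c=0$ branch act as the identity on the target register.
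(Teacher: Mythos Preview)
Your proof is correct and follows essentially the same construction as the paper: load $c\cdot a$ into a fresh $n$-qubit register with $|a|$ \textsf{CNOT}s controlled on $c$, call $\mathsf{Q_{MODADD_p}}$, then unload. Your explicit verification that both branches $c=0$ and $c=1$ yield addends in $[0,p)$ --- and hence valid inputs to the black-box modular adder --- is a nice touch that the paper's own proof leaves implicit.
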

\begin{proof}
 The technique is similar to what we have used in many previous cases when an operation by a constant is involved.
 We first take an unused $n$ qubit register, and load the value $a$ (controlled on $c$) using $|a|$ $\mathsf{CNOT}$ gates
\begin{align*}
    \ket{0}_{s+n}\ket{x}_n\xmapsto{\mathsf{Load}}& \ket{0}_{s}\ket{c\cdot a}_n\ket{x}_n~.\\ \intertext{We now perform the required modular addition}
    \ket{0}_{s}\ket{c\cdot a}_n\ket{x}_n \xmapsto{\mathsf{Q_{MODADD}}_p(a)}& \ket{0}_{s}\ket{c\cdot a}_n\ket{x+c\cdot a\mod p}_n~.\\ \intertext{We have to finally unload the constant $a$}
    \ket{0}_{s}\ket{c\cdot a}_n\ket{x+c\cdot a\mod p}_n\xmapsto{\mathsf{Unload}}&\ket{0}_{s+n} \ket{x+c\cdot a\mod p}_n~.
\end{align*}
\end{proof}

The above construction works for any modular adder construction that performs the operation $\ket{x}_n\ket{y}_n\xmapsto{\mathsf{Q_{MODADD}}_p} \ket{x}_n\ket{y+x}_n$. However, if we look at the VBE architecture specifically, we can modify some of the subroutines to provide a different method for performing a controlled constant modular addition.
\begin{proposition}[Controlled modular adder by a constant - in VBE architecture~\cite{vedral1996quantum}]~\label{prp:controlled-constant-modular-adder-vbe-architecture}
Let
    \begin{itemize}
    \item $\mathsf{C\text{-}Q_{ADD}(a)}$ be a quantum circuit that performs a controlled constant addition using $s_{\mathsf{C\text{-}ADD}}$ ancillas, and $r_{\mathsf{C\text{-}ADD}}$ $\mathsf{Tof}$ gates;
    \item $\mathsf{Q_{COMP}}(p)$ be a quantum comparator using $s_{\mathsf{COMP}}$ ancillas, $r_{\mathsf{COMP}}$ $\mathsf{Tof}$ gates;
    \item $\mathsf{C\text{-}Q_{SUB}}(p)$ be a quantum circuit that performs a  (controlled) subtraction by a constant $p$ using $s_{\mathsf{C\text{-}SUB}}$ ancillas, and $r_{\mathsf{C\text{-}SUB}}$ $\mathsf{Tof}$ gates;
    \item $\mathsf{C\text{-}Q_{COMP}'}(a)$: be a controlled constant quantum comparator using ($s_{\mathsf{C\text{-}COMP}}'$ ancillas, $r_{\mathsf{C\text{-}COMP}}'$ $\mathsf{Tof}$ gates).
\end{itemize}
There is a circuit $\mathsf{C\text{-}Q_{MODADD_{p}}}(a)$ implementing an $n$-bit controlled quantum modular addition by a constant  
as per definition~\ref{def:abstract-controlled-modular-adder-by-a-const} using $r=(r_{\mathsf{{C\text{-}ADD}}}+r_{\mathsf{{COMP}}}+r_{\mathsf{{C\text{-}SUB}}}+r_{\mathsf{{C\text{-}COMP}}}')$ $\mathsf{Tof}$ gates and $s=2+\max(s_{\mathsf{C\text{-}ADD}},s_{\mathsf{COMP}},s_\mathsf{C\text{-}SUB},s_{\mathsf{C\text{-}COMP}}')$ ancillas.
\end{proposition}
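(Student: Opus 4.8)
The plan is to run the same four-stage skeleton used for the controlled modular adder (proposition~\ref{prp:controlled-modular-addition}) while substituting the constant-specific first and last subroutines, exactly the way theorem~\ref{thm:constant-modular-adder-vbe-architecture} adapts the uncontrolled construction of proposition~\ref{prp:mod-add-first-vedral-architecture}. Concretely, I would apply the subroutines in the order $\mathsf{C\text{-}Q_{ADD}}(a)$, $\mathsf{Q_{COMP}}(p)$, $\mathsf{C\text{-}Q_{SUB}}(p)$, $\mathsf{C\text{-}Q_{COMP}'}(a)$, tracking the state of the (in-place) target register, padded by one overflow qubit, together with a single comparison-flag qubit.

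First I would apply the controlled constant addition to send $\ket{c}_1\ket{x}_n$ to $\ket{c}_1\ket{x+c\cdot a}_{n+1}$. Then $\mathsf{Q_{COMP}}(p)$ writes the flag $\mathbf{1}[x+c\cdot a<p]$; after flipping it with an $\mathsf{X}$ gate to obtain $\mathbf{1}[x+c\cdot a\ge p]$ (remark~\ref{rem:invertingcomparison}), a subtraction by $p$ controlled on the flag reduces the register to $\ket{x+c\cdot a\mod p}_n$ and empties the overflow qubit. The crux is then uncomputing the flag $d:=\mathbf{1}[x+c\cdot a\ge p]$, for which I would prove the combined identity
\[
\mathbf{1}[x + c\cdot a \ge p] \;=\; c \cdot \mathbf{1}\big[(x + c\cdot a \mod p) < a\big].
\]
For $c=1$ this is precisely the uncomputation relation $\mathbf{1}[x+a\ge p]\equiv\mathbf{1}[x+a\mod p<a]$ already established in theorem~\ref{thm:constant-modular-adder-vbe-architecture} (using $0\le x<p$), while for $c=0$ both sides vanish since $x<p$ forces $x+c\cdot a=x\mod p$ with no reduction and an unset flag. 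Hence a single controlled comparator $\mathsf{C\text{-}Q_{COMP}'}(a)$, controlled on $c$ and comparing the current register value against the constant $a$, outputs exactly $d$; XOR-ing this into the flag resets it to $\ket{0}$ and leaves $\ket{c}_1\ket{x+c\cdot a\mod p}_n$ with clean ancillas, matching definition~\ref{def:abstract-controlled-modular-adder-by-a-const}.

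For the resource count, no stage produces Toffolis beyond the four subroutines (the flag flip is an $\mathsf{X}$ gate and any loading of $a$ into a scratch register for $\mathsf{C\text{-}Q_{ADD}}(a)$/$\mathsf{C\text{-}Q_{COMP}'}(a)$ uses $\mathsf{CNOT}$s), so the Toffoli total is the stated sum $r_{\mathsf{C\text{-}ADD}}+r_{\mathsf{COMP}}+r_{\mathsf{C\text{-}SUB}}+r_{\mathsf{C\text{-}COMP}}'$. The ancillas split as one flag qubit and one overflow qubit, plus a workspace reused sequentially across the four subroutines, yielding $2+\max(s_{\mathsf{C\text{-}ADD}},s_{\mathsf{COMP}},s_\mathsf{C\text{-}SUB},s_{\mathsf{C\text{-}COMP}}')$.

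I expect the main obstacle to be pinning down the combined identity cleanly in superposition: one must verify that the $c=0$ branch genuinely leaves both the flag and the overflow qubit in $\ket{0}$, so that a \emph{single} controlled comparator uncomputes the flag coherently across all branches. A secondary subtlety is the ancilla bookkeeping relative to proposition~\ref{prp:controlled-modular-addition}: here the classical parameters $p$ and $a$ do not incur the explicit $+n$ loading terms that appear there, because the in-place constant architecture keeps no separate quantum addend register and each constant subroutine's own $s$ already accounts for handling its classical operand; I would make sure this convention is stated so the $\max$ in the ancilla formula is unambiguous.
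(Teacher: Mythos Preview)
Your proposal is correct and follows essentially the same approach as the paper: apply the four subroutines in order, flip the flag, reduce modulo $p$, and uncompute the flag via the identity $\mathbf{1}[x+c\cdot a\ge p]=c\cdot\mathbf{1}[(x+c\cdot a\bmod p)<a]$, with the resource totals read off directly. If anything, your treatment of the $c=0$ branch and of why no extra $+n$ loading term appears in the ancilla formula is more explicit than the paper's own proof.
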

\begin{proof}
The proof is very similar to the proof of the VBE modular adder architecture (proposition~\ref{prp:mod-add-first-vedral-architecture}).
We start with a controlled addition by a constant $a$ into our target
$$\ket{0}_{s}\ket{c}_1\ket{x}_n\xmapsto{\mathsf{C\text{-}Q_{ADD}}(a)}\ket{0}_{s-1}\ket{x+c\cdot a}_{n+1}~.$$
Using our comparator $\mathsf{Q_{COMP}}(p)$, we can perform a comparison of the previous sum with our modulus $p$
$$\ket{0}_{s-1}\ket{c}_1\ket{x+c\cdot a}_{n+1}\xmapsto{\mathsf{Q_{COMP}}(p)}\ket{0}_{s-2}\ket{c}_1\ket{x+c\cdot a}_{n+1}\ket{\mathbf{1}[x+c\cdot a <p]}_1~.$$
When the comparator's output is 0, i.e. when $x+c\cdot a\geq p$, we need to subtract the modulus $p$ to compute the correct sum $x+c\cdot a \mod p$. Therefore, we perform a controlled subtraction of $p$ in the register currently storing $x+c\cdot a$ (\textit{Note: We need to flip, using an $\mathsf{X}$ gate, the value of the comparison qubit before performing our controlled subtraction })
$$\ket{0}_{s-2}\ket{c}_1\ket{x+c\cdot a}_{n+1}\ket{\mathbf{1}[x+c\cdot a \geq p]}_1\xmapsto{\mathsf{C\text{-}Q_{SUB}(p)}}\ket{0}_{s-1}\ket{c}_1\ket{x}_n\ket{x+c\cdot a \mod p}_{n}\ket{\mathbf{1}[x+c\cdot a \geq p]}_1$$
We finally need to uncompute our previously computed output of the comparator. We notice that $x\cdot \mathbf{1}[x+c\cdot a \geq p] \equiv \cdot \mathbf{1}[x+c\cdot a  \mod p < c\cdot a]$ (similar to the proof of proposition~\ref{prp:controlled-modular-addition}), therefore, we can use $\mathsf{Q_{C\text{-}COMP}'}(a)$ on inputs $a$ and $(x+c\cdot a \mod p)$ to perform the final step 
\begin{align*}
    \ket{0}_{s-1}\ket{c}_1\ket{x}_n\ket{x+c\cdot a\mod p}_{n}\ket{\mathbf{1}[x+c\cdot a \geq p]}_1\\
    \xmapsto{\mathsf{Q_{C\text{-}COMP}'}(a)}\ket{0}_{s-1}\ket{c}_1\ket{x+c\cdot a\mod p}_{n}&\ket{\mathbf{1}[x+c\cdot a \geq p] \oplus c\cdot\mathbf{1}[x+c\cdot a\mod p < c\cdot a]}_1\\
    =\ket{0}_{s}\ket{c}_1\ket{x+c\cdot a\mod p}_{n}&~.
\end{align*}
\end{proof}
A $\mathsf{QFT}$-based controlled constant modular adder was proposed by Beauregard~\cite{beauregard2002circuit}, and works by combining the VBE architecture with Draper's $\mathsf{QFT}$ adder~\cite{draper2000addition} (proposition~\ref{prp:draper-orig-qft-adder}). We propose a slightly modified proof below.

\begin{proposition}[Controlled modular adder by a constant - Beauregard~\cite{beauregard2002circuit}]\label{prp:bouregard controlled modular addition by a constant}
There is a circuit $\mathsf{C\text{-}Q_{MODADD_{p}}}(a)$ implementing an $n$-bit controlled quantum modular addition by a constant as per definition~\ref{def:abstract-controlled-modular-adder-by-a-const} with a gate count of $3$ $\mathsf{QFT}$'s, $3$ $\mathsf{IQFT}$'s, $2$ $\mathsf{CNOT}$'s, $1$ $\mathsf{C\text{-}\Phi_{ADD}}(a)$, $1$ $\mathsf{\Phi_{ADD}}(a)$, $1$ $\mathsf{\Phi_{SUB}}(a)$, $1$ $\mathsf{C\text{-}\Phi_{SUB}(p)}$, $1$ $\mathsf{\Phi_{ADD}(p)}$, $1$ $\mathsf{\Phi_{SUB}(p)}$ and $2$ ancillas.
\end{proposition}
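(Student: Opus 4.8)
The plan is to instantiate the VBE-style architecture for controlled modular addition by a constant (proposition~\ref{prp:controlled-constant-modular-adder-vbe-architecture}) with Draper's $\mathsf{QFT}$-based subroutines, in direct analogy with the non-controlled, non-constant Draper/Beauregard modular adder of proposition~\ref{prp:draper-beauregard-modular-adder}. The whole circuit is the composition of four blocks — a controlled constant adder $\mathsf{C\text{-}Q_{ADD}}(a)$, a constant comparator $\mathsf{Q_{COMP}}(p)$, a controlled constant subtractor $\mathsf{C\text{-}Q_{SUB}}(p)$, and a controlled constant comparator $\mathsf{C\text{-}Q_{COMP}'}(a)$ — so it suffices to give a $\mathsf{QFT}$ realization of each block, sum the gate counts, and then collapse the $\mathsf{QFT}$/$\mathsf{IQFT}$ pairs that meet at the three internal interfaces.

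First I would fix the four blocks. For $\mathsf{C\text{-}Q_{ADD}}(a)$ I use the controlled Draper constant adder (proposition~\ref{prp:controlled addition by a constant - draper} together with the single-central-control idea of theorem~\ref{thm:controlled-draper-simplified}), which is $\mathsf{QFT}_{n+1}$, one $\mathsf{C\text{-}\Phi_{ADD}}(a)$, $\mathsf{IQFT}_{n+1}$. For $\mathsf{Q_{COMP}}(p)$ I use the Draper/Beauregard constant comparator (proposition~\ref{prp:draper-comp-by-a-classical-constant}), contributing one $\mathsf{CNOT}$, one $\mathsf{\Phi_{SUB}}(p)$, one $\mathsf{\Phi_{ADD}}(p)$, and two $\mathsf{QFT}$/$\mathsf{IQFT}$ pairs; since the register already holds $x+c\cdot a$, this comparison against the fixed modulus $p$ is uncontrolled. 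For $\mathsf{C\text{-}Q_{SUB}}(p)$ I again use a controlled Draper constant adder run backwards, i.e. $\mathsf{QFT}_{n+1}$, one $\mathsf{C\text{-}\Phi_{SUB}}(p)$, $\mathsf{IQFT}_{n+1}$, controlled on the (flipped, per remark~\ref{rem:invertingcomparison}) comparison qubit. The delicate block is $\mathsf{C\text{-}Q_{COMP}'}(a)$: following proposition~\ref{prp:controlled comparator CDKMP}, I keep the bracketing $\mathsf{\Phi_{SUB}}(a)$ and $\mathsf{\Phi_{ADD}}(a)$ \emph{uncontrolled} and place the control $c$ only on the single copy step, so this block costs one $\mathsf{\Phi_{SUB}}(a)$, one $\mathsf{\Phi_{ADD}}(a)$, two $\mathsf{QFT}$/$\mathsf{IQFT}$ pairs, and one $c$-controlled copy.

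The arithmetic then follows the pattern of proposition~\ref{prp:draper-beauregard-modular-adder}. Summing the blocks gives $1+2+1+2=6$ copies of $\mathsf{QFT}_{n+1}$ and likewise $6$ of $\mathsf{IQFT}_{n+1}$; at each of the three junctions (ADD/COMP, COMP/SUB, SUB/COMP$'$) the trailing $\mathsf{IQFT}_{n+1}$ of one block cancels the leading $\mathsf{QFT}_{n+1}$ of the next, removing three of each and leaving $3$ $\mathsf{QFT}$'s and $3$ $\mathsf{IQFT}$'s. The phase operations do not cancel and assemble into exactly the advertised list: one each of $\mathsf{C\text{-}\Phi_{ADD}}(a)$, $\mathsf{\Phi_{ADD}}(a)$, $\mathsf{\Phi_{SUB}}(a)$, $\mathsf{C\text{-}\Phi_{SUB}}(p)$, $\mathsf{\Phi_{ADD}}(p)$, $\mathsf{\Phi_{SUB}}(p)$. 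The two comparators contribute the two copy gates, and the ancilla count of $2$ follows from the same accounting as proposition~\ref{prp:draper-beauregard-modular-adder}, since the Draper subroutines reuse the two work qubits (the comparison bit and the comparator's scratch qubit).

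The main obstacle I expect is justifying the final block rather than the bookkeeping. Concretely I must verify the uncomputation identity underlying proposition~\ref{prp:controlled-constant-modular-adder-vbe-architecture}, namely $c\cdot\mathbf{1}[x+c\cdot a\geq p]\equiv\mathbf{1}[(x+c\cdot a \mod p)<c\cdot a]$ using $0\le x,a<p$, and then argue that leaving $\mathsf{\Phi_{SUB}}(a)$ and $\mathsf{\Phi_{ADD}}(a)$ uncontrolled is harmless: the unconditional subtract-then-add-back acts as the identity on the data register, so it cannot corrupt $x+c\cdot a \mod p$, while the $c$-controlled copy writes the sign bit only on the $c=1$ branch, which is exactly where the bit to be reset is nonzero (on the $c=0$ branch the bit is already $0$). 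I would also confirm that the three $\mathsf{QFT}$/$\mathsf{IQFT}$ cancellations survive the presence of controls, which they do because each control acts only on the phase rotations \emph{inside} a block and therefore commutes through the Fourier transforms bracketing it.
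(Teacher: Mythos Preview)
Your proposal is correct and follows essentially the same approach as the paper: instantiate proposition~\ref{prp:controlled-constant-modular-adder-vbe-architecture} with the four Draper/Beauregard subroutines, then cancel the three $\mathsf{IQFT}/\mathsf{QFT}$ pairs at the internal interfaces to obtain the stated counts. Your justification of the final block is in fact more explicit than the paper's: the paper simply cites a controlled-comparator result for $\mathsf{C\text{-}Q_{COMP}'}(a)$, whereas you spell out why the bracketing $\mathsf{\Phi_{SUB}}(a)/\mathsf{\Phi_{ADD}}(a)$ may remain uncontrolled and why the $c$-controlled copy suffices via the identity $c\cdot\mathbf{1}[x+c\cdot a\geq p]=\mathbf{1}[(x+c\cdot a\bmod p)<c\cdot a]$. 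One small point to tidy: that $c$-controlled copy is a $\mathsf{Tof}$, not a $\mathsf{CNOT}$, so when you say ``the two comparators contribute the two copy gates'' you should flag that one of them is doubly controlled; the paper's stated ``$2$ $\mathsf{CNOT}$'s'' glosses over this as well.
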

\begin{proof}

We use the following subroutines. Let
\begin{itemize}
    \item $\mathsf{C\text{-}Q_{ADD}(a)}$ be Draper's controlled adder by a constant (proposition~\ref{prp:controlled addition by a constant - draper}), using $0$ ancillas;
    \item $\mathsf{Q_{COMP}}(p)$ be a quantum comparator based on proposition~\ref{prp:draper-comp-by-a-classical-constant}, using $1$ ancilla;
    \item $\mathsf{C\text{-}Q_{SUB}}(p)$ be a quantum circuit that performs a  controlled subtraction by a constant based on proposition~\ref{prp:controlled addition by a constant - draper} using $0$ ancillas;
    \item $\mathsf{C\text{-}Q_{COMP}'}(a)$: be a controlled quantum comparator by a constant as derived in theorem~\ref{thm:controlled comparator by a classical constant CDKMP} using $1$ ancilla.
\end{itemize}
Note that adjacent $\mathsf{QFT}$s and $\mathsf{IQFT}$s can cancel out. The total gate count is therefore, $3$ $\mathsf{QFT}$'s, $3$ $\mathsf{IQFT}$'s, $2$ $\mathsf{CNOT}$'s, $1$ $\mathsf{C\text{-}\Phi_{ADD}}(a)$, $1$ $\mathsf{\Phi_{ADD}}(a)$, $1$ $\mathsf{\Phi_{SUB}}(a)$, $1$ $\mathsf{C\text{-}\Phi_{SUB}(p)}$, $1$ $\mathsf{\Phi_{ADD}(p)}$ and $1$ $\mathsf{\Phi_{SUB}(p)}$. The number of ancillas required is $s=2$
\end{proof}

In proposition~\ref{prp:bouregard controlled modular addition by a constant}, the modular addition circuit is based on a single controlled $\mathsf{QFT}$ adder. For completeness, we also share Beauregard's~\cite{beauregard2002circuit} original circuit in figure~\ref{fig:qft-modular-adder} which uses two controls, and also uses a slightly different set of gates..
\begin{figure}[!ht]
    \centering
    \includegraphics[width=1.0\textwidth]{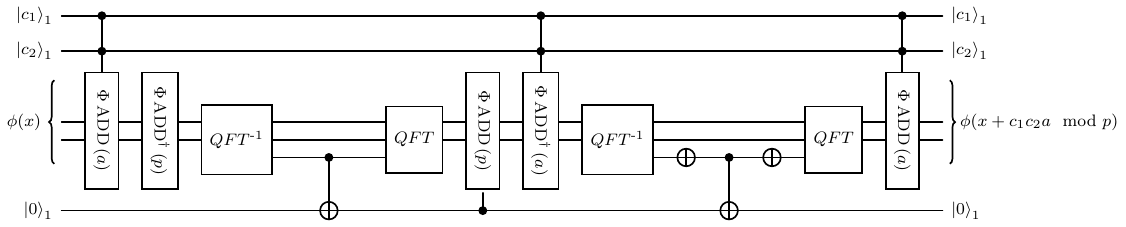}
    \caption{Quantum Fourier Transform modular adder by a constant. Here $\ket{\phi(x)}$ is made of $n+1$ qubits instead of $n$ qubits, in order to account for the initial overflow from adding $a$. }
    \label{fig:qft-modular-adder}
\end{figure}

\section{Measurement-based uncomputation speed-ups}\label{sec:MBU}

We are now prepared to present the proof of the MBU lemma and its applications to modular arithmetic.
The MBU algorithm for uncomputing a single qubit is the circuit that can be found in figure~\ref{fig:MBU}. Recall that MBU has already been introduced in~\cite{gidney2018halving} (for the special case of $\mathsf{Tof}$ gates) but here we formalize this method for the single-qubit uncomputation of any unitary. 
Attempts at generalization of MBU to perform multi-qubit computation has been already treated in the literature, see e.g. ~\cite{gidney2019windowed}. 

\begin{figure}[h!]
    \centering
        \includegraphics[width=0.7\textwidth]{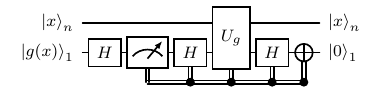}
\caption{A circuit for measurement-based uncomputing the unitary $U_g$.}\label{fig:MBU}
\end{figure}

\begin{lemma}[Measurement-based uncomputation lemma]
    Let $g:\,\{0,1\}^n\to\{0,1\}$ be a function, and let $U_g$ be a self-adjoint $(n+1)$-qubit unitary that implements $g$ as follows:
$$    U_g\ket{x}_A\ket{b}_G=\ket{x}_A\ket{b\oplus g(x)}_G.$$
There is an algorithm performing the mapping 
$$\sum_x\alpha_x\ket{x}_A\ket{g(x)}_G \mapsto \sum_x\alpha_x\ket{x}_A\ket{0}_G $$
        using with $1/2$ probability $U_g$, $2$ $\mathsf{H}$ gates, and $1$ $\mathsf{NOT}$ gate.
    It also involves one single-qubit computational basis measurement  and a single $\mathsf{H}$ gate.
\end{lemma}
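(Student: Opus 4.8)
The plan is to realize the stated map by performing an $X$-basis measurement on the garbage register $G$ and then applying a classically-conditioned correction only in the ``bad'' branch. First I would apply a single $\mathsf{H}$ gate to $G$ and measure it in the computational basis; this $\mathsf{H}$-then-measure is the $X$-basis measurement that is always executed, accounting for the ``single $\mathsf{H}$ gate'' and the ``one single-qubit computational basis measurement'' in the statement. Starting from $\sum_x\alpha_x\ket{x}_A\ket{g(x)}_G$, the $\mathsf{H}$ on $G$ yields $\sum_x\alpha_x\ket{x}_A\tfrac{1}{\sqrt2}\bigl(\ket{0}+(-1)^{g(x)}\ket{1}\bigr)_G$, after which I case-split on the measurement outcome.

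If the outcome is $0$, which occurs with probability $\tfrac12$ since the weight on $\ket{0}$ is $\tfrac12$ independently of $g(x)$, the renormalized post-measurement state is exactly $\sum_x\alpha_x\ket{x}_A\ket{0}_G$, so no further operations are needed. This is the ``free'' branch, and its complementary probability $\tfrac12$ is precisely the probability with which the correction gates $U_g$, two $\mathsf{H}$'s, and one $\mathsf{NOT}$ must be applied.

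The substantive work is the outcome-$1$ branch, where the renormalized state is $\sum_x\alpha_x(-1)^{g(x)}\ket{x}_A\ket{1}_G$, carrying a data-dependent phase $(-1)^{g(x)}$ that must be erased while $G$ is also reset to $\ket{0}$. The key fact I would exploit is phase kickback: from the defining action $U_g\ket{x}_A\ket{b}_G=\ket{x}_A\ket{b\oplus g(x)}_G$ one checks directly that $U_g\ket{x}_A\ket{-}_G=(-1)^{g(x)}\ket{x}_A\ket{-}_G$. The correction then runs as follows: (i) an $\mathsf{H}$ sends $\ket{1}_G\mapsto\ket{-}_G$; (ii) applying $U_g$ multiplies by $(-1)^{g(x)}$, so the accumulated sign becomes $(-1)^{2g(x)}=1$ and the phase is gone; (iii) a second $\mathsf{H}$ sends $\ket{-}_G\mapsto\ket{1}_G$; and (iv) a $\mathsf{NOT}$ sends $\ket{1}_G\mapsto\ket{0}_G$. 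This uses exactly one $U_g$, two $\mathsf{H}$ gates, and one $\mathsf{NOT}$, matching the resources claimed for the probability-$\tfrac12$ branch, and leaves $\sum_x\alpha_x\ket{x}_A\ket{0}_G$.

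I expect the only delicate point to be this correction branch: verifying the kickback identity $U_g\ket{x}\ket{-}=(-1)^{g(x)}\ket{x}\ket{-}$ from the defining action, and confirming that the $\mathsf{H}$–$U_g$–$\mathsf{H}$–$\mathsf{NOT}$ sequence simultaneously cancels the phase and restores $G$ to $\ket{0}$. The self-adjointness hypothesis is consistent with this action (indeed it forces $U_g^2=I$, since $b\oplus g(x)\oplus g(x)=b$), and I would invoke it only to confirm that re-applying $U_g$ is legitimate. The remaining steps are routine normalization bookkeeping that I would not grind through in detail.
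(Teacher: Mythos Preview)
Your proposal is correct and follows essentially the same approach as the paper: apply $\mathsf{H}$ to $G$, measure, and in the outcome-$1$ branch perform the $\mathsf{H}$--$U_g$--$\mathsf{H}$--$\mathsf{NOT}$ phase-kickback correction. The paper spells out the kickback step by writing $U_g\ket{x}\ket{-}$ as $\frac{1}{\sqrt2}(\ket{0\oplus g(x)}-\ket{1\oplus g(x)})$ rather than invoking the eigenvector identity directly, but this is the same computation you describe.
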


\begin{proof}
The algorithm is described in figure~\ref{fig:MBU}. We first apply the Hadamard gate on the register $G$ and measure it in the computational basis. Since $g(x) \in \{0,1\}$, the resulting outcomes is $0$ or $1$ with equal probabilities. If we obtain $0$, we successfully uncomputed the register, leaving the registers in the state $\sum_x\alpha_x\ket{x}_A\ket{0}_G$. On the other hand, if we get the outcome $1$, the state at this point is $\sum_x\alpha_x(-1)^{g(x)}\ket{x}_A\ket{1}_G.$ We then apply another Hadamard to $G$, followed by $U_g$. This performs a phase kickback, leaving the quantum computer in the state
$$\sum_x\alpha_x(-1)^{g(x)}\ket{x}_A
\frac{\ket{0\oplus g(x)}_G-\ket{1\oplus g(x)}_G}{\sqrt{2}}
=
\sum_x\alpha_x(-1)^{g(x)\oplus g(x)}\ket{x}_A\ket{-}_G.$$
Applying on $G$ another Hadamard and a NOT gate results in the final state 
$\sum_x\alpha_x\ket{x}_A\ket{0}_G$.
\end{proof}

\subsection{Modular addition}\label{sec:MBUmodularaddition}

\begin{figure}[!ht]
    \centering
    \includegraphics[width=\textwidth]{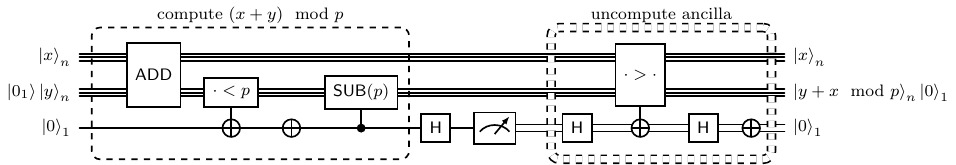}
    \caption{Implementation of a MBU modular addition. The double dotted lines represents the uncomputation of the ancilla, controlled on the measurement value.}
\end{figure}

\begin{theorem}[MBU modular adder - VBE architecture]\label{thm:mbu-modular-adder-vedral}
Let 
\begin{itemize}
    \item $\mathsf{Q_{ADD}}$ be a quantum circuit that performs a quantum addition using $s_{\mathsf{ADD}}$ ancillas, and $r_{\mathsf{ADD}}$ $\mathsf{Tof}$ gates;
    \item $\mathsf{Q_{COMP}}(p)$ be a quantum comparator using $s_{\mathsf{COMP}}$ ancillas, $r_{\mathsf{COMP}}$ $\mathsf{Tof}$ gates;
    \item $\mathsf{C\text{-}Q_{SUB}}(p)$ be a quantum circuit that performs a  (controlled) subtraction by a constant $p$ using $s_{\mathsf{C\text{-}SUB}}$ ancillas, and $r_{\mathsf{C\text{-}SUB}}$ $\mathsf{Tof}$ gates;
    \item $\mathsf{Q_{COMP}'}$: be a quantum comparator using ($s_{\mathsf{COMP}}'$ ancillas, $r_{\mathsf{COMP}}'$ $\mathsf{Tof}$ gates).
\end{itemize}
There is a circuit, $\mathsf{Q_{MODADD_\text{p}}}$, performing the $n$-bit modular addition operation as per definition~\ref{def:modadder} using $r=\left(r_{\mathsf{{ADD}}}+r_{\mathsf{{COMP}}}+r_{\mathsf{{C\text{-}SUB}}}+\left(r_{\mathsf{{COMP}}}'/2\right)\right)$ $\mathsf{Tof}$ gates in expectation and \[s=2+\max(s_{\mathsf{ADD}},s_{\mathsf{COMP}},s_\mathsf{C\text{-}SUB},s_{\mathsf{COMP}}')\] ancillas.
\end{theorem}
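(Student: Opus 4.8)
The plan is to reuse the VBE modular-adder construction of proposition~\ref{prp:mod-add-first-vedral-architecture} verbatim for its first three subroutines and to replace only the final comparator call $\mathsf{Q_{COMP}'}$ — whose sole purpose is to reset the comparison qubit to $\ket{0}$ — by an invocation of the MBU lemma. First I would run $\mathsf{Q_{ADD}}$, then $\mathsf{Q_{COMP}}(p)$, then the controlled subtraction $\mathsf{C\text{-}Q_{SUB}}(p)$ exactly as in proposition~\ref{prp:mod-add-first-vedral-architecture}, bringing the registers to
$$\sum_{x,y}\alpha_{x,y}\ket{x}_n\ket{x+y \mod p}_{n+1}\ket{d}_1,\qquad d=\mathbf{1}[x+y\geq p].$$
These three steps contribute their full Toffoli counts $r_{\mathsf{ADD}}+r_{\mathsf{COMP}}+r_{\mathsf{C\text{-}SUB}}$, and they leave the ancilla register restored exactly as in the non-MBU construction.

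The crux is to recognise this state as precisely the input required by the MBU lemma. As observed in proposition~\ref{prp:mod-add-first-vedral-architecture}, since $y<p$ we have $d=\mathbf{1}[(x+y\mod p)<x]$; that is, the comparison qubit stores exactly the Boolean value $g$ that the comparator $\mathsf{Q_{COMP}'}$ (acting on the data registers $\ket{x}_n$ and $\ket{x+y\mod p}_{n+1}$) computes. Setting $U_g:=\mathsf{Q_{COMP}'}$ and letting $G$ denote the comparison qubit and $A$ the remaining data, the displayed state is of the form $\sum\alpha\ket{\mathrm{data}}_A\ket{g(\mathrm{data})}_G$, so the MBU lemma applies and maps it to $\sum\alpha\ket{\mathrm{data}}_A\ket{0}_G$, completing the modular addition. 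Two hypotheses must be checked here. The MBU lemma requires $U_g$ to be self-adjoint: this holds because a clean comparator implements the involution $\ket{\cdot}\ket{b}\mapsto\ket{\cdot}\ket{b\oplus g}$ on computational-basis states — its half-subtract/copy/unsubtract structure gives it the form $W\,\mathsf{CNOT}\,W^{\dagger}$ — so $\mathsf{Q_{COMP}'}$ squares to the identity and equals its own inverse. For the cost accounting, the MBU lemma invokes $U_g=\mathsf{Q_{COMP}'}$ only on measurement outcome $1$, which occurs with probability $1/2$, while the other branch uses no Toffoli gates; hence the expected Toffoli contribution of the final step is $r_{\mathsf{COMP}}'/2$, and summing yields $r=r_{\mathsf{ADD}}+r_{\mathsf{COMP}}+r_{\mathsf{C\text{-}SUB}}+r_{\mathsf{COMP}}'/2$ in expectation. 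No new ancillas are introduced, since the MBU routine reuses the comparison qubit and, on the outcome-$1$ branch, whatever ancillas $\mathsf{Q_{COMP}'}$ itself needs, giving the same $s=2+\max(s_{\mathsf{ADD}},s_{\mathsf{COMP}},s_{\mathsf{C\text{-}SUB}},s_{\mathsf{COMP}}')$ as proposition~\ref{prp:mod-add-first-vedral-architecture}.

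The main obstacle I anticipate is the clean verification that the comparison qubit carries \emph{exactly} $g$ of the post-subtraction data — this identity $d=\mathbf{1}[(x+y\mod p)<x]$ is what licenses replacing the deterministic comparator by MBU — together with confirming that $\mathsf{Q_{COMP}'}$ genuinely meets the self-adjointness hypothesis even when it allocates internal ancillas that are returned to $\ket{0}$. Both reduce to the involutivity of clean XOR-comparators, but they must be stated carefully so that the phase-kickback step inside the MBU lemma is correctly justified for the whole register, not merely on the computational subspace.
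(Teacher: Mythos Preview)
Your proposal is correct and follows essentially the same approach as the paper: run the first three subroutines of proposition~\ref{prp:mod-add-first-vedral-architecture} unchanged, then apply the MBU lemma with $U_g=\mathsf{Q_{COMP}'}$ to uncompute the comparison qubit, halving the expected Toffoli cost of that final step. Your write-up is in fact more careful than the paper's own proof, which does not explicitly verify the self-adjointness hypothesis or restate the identity $d=\mathbf{1}[(x+y\bmod p)<x]$; both are good additions.
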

\begin{proof}
As seen in the modular adder - VBE architecture (proposition~\ref{prp:mod-add-first-vedral-architecture}), the first three subroutines perform the following mapping
\begin{align*}
\ket{0}_{s}\ket{x}_n\ket{y}_{n}\xmapsto{\mathsf{Q_{ADD}},\mathsf{Q_{COMP}}(p),\mathsf{C\text{-}Q_{SUB}(p)}}\ket{0}_{s-1}\ket{x}_n\ket{x+y \mod p}_{n}\ket{\mathbf{1}[x+y \geq p]}_1. 
\end{align*}
The comparator qubit storing $\mathbf{1}[x+y \geq p]$ is finally supposed to be uncomputed using $\mathsf{Q_{COMP}'}$. We can apply the MBU lemma on the comparator qubit and the uncomputation subroutine $\mathsf{Q_{COMP}'}$, thus halving the cost of $\mathsf{Q_{COMP}'}$ in expectation. Therefore, we finally get a cost of $r=\left(r_{\mathsf{{ADD}}}+r_{\mathsf{{COMP}}}+r_{\mathsf{{C\text{-}SUB}}}+\left(r_{\mathsf{{COMP}}}'/2\right)\right)$ $\mathsf{Tof}$ gates in expectation, and the same ancilla count at proposition~\ref{prp:mod-add-first-vedral-architecture}.
\end{proof}

\begin{theorem}[MBU modular adder - CDKPM]\label{thm:mbumodular-adder}
There is a circuit, $\mathsf{Q_{MODADD_\text{p}}}$, performing the $n$-bit modular addition operation as per definition~\ref{def:modadder} using
  $n+3$ ancillas and  $7n$ $\mathsf{Tof}$ gates in expectation.
\end{theorem}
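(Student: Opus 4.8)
The plan is to invoke the MBU variant of the VBE modular-adder architecture (Theorem~\ref{thm:mbu-modular-adder-vedral}) with exactly the same CDKPM subroutines used in the non-MBU version (Proposition~\ref{prp:modular adder with CDKPM}), and then read off the resulting $\mathsf{Tof}$ count. Concretely, I would set $\mathsf{Q_{ADD}}$ to the CDKPM plain adder (Proposition~\ref{prp: CDKPM plain adder}) with $r_{\mathsf{ADD}}=2n$; take $\mathsf{Q_{COMP}}(p)$ and $\mathsf{Q_{COMP}}'$ to be the CDKPM half-subtractor comparator (Proposition~\ref{prp:comparator-cuccaro}) with $r_{\mathsf{COMP}}=r_{\mathsf{COMP}}'=2n$; and let $\mathsf{C\text{-}Q_{SUB}}(p)$ be the controlled constant subtraction built from the CDKPM adder (Propositions~\ref{prp: addition by a constant - any} and~\ref{prp:comparator-cuccaro}) with $r_{\mathsf{C\text{-}SUB}}=2n$. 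These are precisely the choices of Proposition~\ref{prp:modular adder with CDKPM}, so the only thing that changes is how the last comparator is accounted for.

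The key observation is that the final comparator $\mathsf{Q_{COMP}}'$ serves solely to uncompute the single comparison qubit holding $\mathbf{1}[x+y \geq p]$. Since this qubit stores the value of a Boolean function of the remaining registers, and since a comparator as in Definition~\ref{def:abstract_comparator} is self-adjoint and acts by XORing the comparison bit into the target, this is exactly the setting of the MBU lemma with the garbage function $g(\cdot)=\mathbf{1}[x+y \geq p]$ (equivalently $\mathbf{1}[x+y \mod p < x]$). Applying the MBU lemma therefore replaces the deterministic invocation of $\mathsf{Q_{COMP}}'$ by an $X$-basis measurement together with a classically controlled re-invocation that fires with probability $1/2$, dropping its expected $\mathsf{Tof}$ cost from $r_{\mathsf{COMP}}'$ to $r_{\mathsf{COMP}}'/2=n$.

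Summing via Theorem~\ref{thm:mbu-modular-adder-vedral} then gives $r=r_{\mathsf{ADD}}+r_{\mathsf{COMP}}+r_{\mathsf{C\text{-}SUB}}+r_{\mathsf{COMP}}'/2 = 2n+2n+2n+n = 7n$ $\mathsf{Tof}$ gates in expectation, while the ancilla count is left untouched by MBU and remains $s=2+\max(1,n+1,n+1,1)=n+3$, exactly as in Proposition~\ref{prp:modular adder with CDKPM}. The only point requiring care — and it is a point to verify rather than a genuine obstacle — is that the MBU lemma really does apply to $\mathsf{Q_{COMP}}'$: one must confirm both that $\mathsf{Q_{COMP}}'$ is self-adjoint on the relevant registers and that the comparison bit it would write coincides with the bit to be erased, which is precisely the identity $\mathbf{1}[x > x+y \mod p] = \mathbf{1}[x+y \geq p]$ established in the proof of Proposition~\ref{prp:mod-add-first-vedral-architecture} (valid under the hypothesis $y<p$). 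Given that identity, the statement follows immediately from Theorem~\ref{thm:mbu-modular-adder-vedral} and the CDKPM subroutine costs.
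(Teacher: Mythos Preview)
Your proposal is correct and follows essentially the same approach as the paper's own proof: instantiate Theorem~\ref{thm:mbu-modular-adder-vedral} with the CDKPM subroutines of Proposition~\ref{prp:modular adder with CDKPM} and sum $2n+2n+2n+2n/2=7n$ with ancilla count $n+3$. Your added remarks on why the MBU lemma applies to $\mathsf{Q_{COMP}'}$ (self-adjointness and the identity $\mathbf{1}[x > x+y \mod p]=\mathbf{1}[x+y\ge p]$) are more explicit than the paper's version but do not change the argument.
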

\begin{proof}
For the CDKPM modular adder described in proposition~\ref{prp:modular adder with CDKPM}, the following subroutines are used:
\begin{itemize}
        \item $\mathsf{Q_{ADD}}$: CDKPM plain adder proposition~\ref{prp: CDKPM plain adder};
        \item $\mathsf{Q_{COMP}}(p)$: CDKPM half subtractor based constant comparator proposition~\ref{prp:comparator-cuccaro};
        \item $\mathsf{C\text{-}Q_{SUB}}(p)$: CDKPM plain adder, more concretely, we use $\mathsf{Q_{ADD}^{\dagger}}$ in proposition~\ref{prp:controlled addition by a constant - any}.
        \item $\mathsf{Q_{COMP}'}$: CDKPM half subtractor based comparator proposition~\ref{prp:comparator-cuccaro}.
    \end{itemize}
Now, applying theorem~\ref{thm:mbu-modular-adder-vedral}, we find that our expected $\mathsf{Tof}$ gate cost is $$r=\left(r_{\mathsf{{ADD}}}+r_{\mathsf{{COMP}}}+r_{\mathsf{{C\text{-}SUB}}}+\left(r_{\mathsf{{COMP}}}'/2\right)\right)=(2n+2n+2n+(2n/2))=7n$$ and the ancilla count is $s=n+3$.

\end{proof}

\begin{theorem}[MBU modular adder - Gidney]\label{thm: MBU modular adder gidney} 
There is a circuit, $\mathsf{Q_{MODADD_\text{p}}}$, performing the $n$-bit modular addition operation as per definition~\ref{def:modadder} using $2n+3$ ancillas,  $3.5n$ $\mathsf{Tof}$ gates in expectation.
\end{theorem}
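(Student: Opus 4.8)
The plan is to mirror the proof of the MBU CDKPM adder (Theorem~\ref{thm:mbumodular-adder}), replacing every CDKPM subroutine by its Gidney counterpart and then invoking the generic MBU modular adder result (Theorem~\ref{thm:mbu-modular-adder-vedral}). Concretely, I would start from the non-MBU Gidney modular adder of Proposition~\ref{prp: modular adder gidney}, which instantiates the VBE architecture of Proposition~\ref{prp:mod-add-first-vedral-architecture} with $\mathsf{Q_{ADD}}$ the Gidney plain adder (Proposition~\ref{prp:gidneyAdder}), with $\mathsf{Q_{COMP}}(p)$ and $\mathsf{Q_{COMP}}'$ the Gidney half-subtractor comparators (Proposition~\ref{prp:comparator using half gidney subtractor}), and with $\mathsf{C\text{-}Q_{SUB}}(p)$ the controlled constant subtraction built from the Gidney adder (Proposition~\ref{prp:controlled addition by a constant - any}). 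These choices give $r_{\mathsf{ADD}}=r_{\mathsf{COMP}}=r_{\mathsf{C\text{-}SUB}}=r_{\mathsf{COMP}}'=n$.

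The key step is to observe, exactly as in the proof of Theorem~\ref{thm:mbu-modular-adder-vedral}, that after the first three subroutines the registers are left in the state $\ket{x}_n\ket{x+y \bmod p}_{n+1}\ket{\mathbf{1}[x+y\geq p]}_1$, and that the role of the final comparator $\mathsf{Q_{COMP}}'$ is solely to erase the flag qubit through the equivalence $\mathbf{1}[x+y\geq p]\equiv\mathbf{1}[x+y\bmod p<x]$. Hence the MBU lemma applies to the flag qubit with uncomputing unitary $\mathsf{Q_{COMP}}'$, so that $\mathsf{Q_{COMP}}'$ is executed only with probability $1/2$. Substituting into the MBU cost formula yields $r = r_{\mathsf{ADD}}+r_{\mathsf{COMP}}+r_{\mathsf{C\text{-}SUB}}+(r_{\mathsf{COMP}}'/2) = n+n+n+n/2 = 3.5n$ $\mathsf{Tof}$ gates in expectation. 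Since the MBU procedure introduces no new ancillas (the probability-$1/2$ branch reuses the workspace of $\mathsf{Q_{COMP}}'$), the ancilla count is inherited unchanged from Proposition~\ref{prp: modular adder gidney}, namely $s = 2+\max(s_{\mathsf{ADD}},\,n+s_{\mathsf{COMP}},\,n+s_{\mathsf{C\text{-}SUB}},\,s_{\mathsf{COMP}}') = 2n+3$.

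The main obstacle is justifying the MBU lemma application cleanly, because the Gidney comparator is itself built from temporary logical-$\mathsf{AND}$s and already contains internal measurements (cf.\ Remark~\ref{rem:adjoint-gidney-adder}), so it is not literally the self-adjoint unitary $U_g$ demanded by the lemma. I would address this by noting that, on the relevant computational subspace and up to tracked phase corrections, $\mathsf{Q_{COMP}}'$ still realises the involutive map $\ket{x}\ket{z}\ket{t}\mapsto\ket{x}\ket{z}\ket{t\oplus\mathbf{1}[z<x]}$ with clean ancillas, which is precisely the hypothesis the MBU lemma needs; the phase-kickback argument in the proof of the lemma then goes through verbatim on the flag qubit. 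The only remaining care is bookkeeping: confirming that the two layers of measurement-based reasoning---the internal MBU inside the Gidney comparator, which fixes its deterministic $\mathsf{Tof}$ count at $n$, and the outer MBU, which contributes the independent factor $1/2$---compose without double counting, so that the net expected saving over Proposition~\ref{prp: modular adder gidney} is exactly $r_{\mathsf{COMP}}'/2 = n/2$, giving $4n - n/2 = 3.5n$.
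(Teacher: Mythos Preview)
Your proposal is correct and follows essentially the same route as the paper: instantiate the VBE architecture with the Gidney subroutines from Proposition~\ref{prp: modular adder gidney} and apply Theorem~\ref{thm:mbu-modular-adder-vedral} to halve the expected cost of the final comparator, yielding $n+n+n+n/2=3.5n$ with $2n+3$ ancillas. Your additional discussion of the self-adjointness subtlety for the measurement-based Gidney comparator is a valid concern but is not treated in the paper's proof, which simply invokes the MBU lemma directly.
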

\begin{proof}
Assume to access the following subroutines are used:
\begin{itemize}
        \item $\mathsf{Q_{ADD}}$: Gidney plain adder proposition~\ref{prp:gidneyAdder};
        \item $\mathsf{Q_{COMP}}(p)$: Gidney half subtractor based constant comparator proposition~\ref{prp:comparator using half gidney subtractor};
        \item $\mathsf{C\text{-}Q_{SUB}}(p)$: Gidney plain adder, more concretely, we use $\mathsf{Q_{ADD}^{\dagger}}$ in proposition~\ref{prp:controlled addition by a constant - any}.
        \item $\mathsf{Q_{COMP}'}$: Gidney half subtractor based comparator proposition~\ref{prp:comparator using half gidney subtractor}.
    \end{itemize}
Now, applying theorem~\ref{thm:mbu-modular-adder-vedral}, we find that our expected $\mathsf{Tof}$ gate cost is $$r=\left(r_{\mathsf{{ADD}}}+r_{\mathsf{{COMP}}}+r_{\mathsf{{C\text{-}SUB}}}+\left(r_{\mathsf{{COMP}}}'/2\right)\right)=(n+n+n+(n/2))=3.5n$$ and the ancilla count is $s=2n+3$    .

\end{proof}

\begin{theorem}[MBU modular adder - Gidney + CDKPM]\label{thm: modular adder gidney and cdkpm}
There is a circuit, $\mathsf{Q_{MODADD_\text{p}}}$, performing the $n$-bit modular addition operation as per definition~\ref{def:modadder} using $n+3$ ancillas,  $5.5n$ $\mathsf{Tof}$ gates in expectation.
\end{theorem}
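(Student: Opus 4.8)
The plan is to treat this statement as the measurement-based counterpart of Theorem~\ref{thm: modular adder gidney + CDKPM} and to derive it by feeding the general MBU recipe of Theorem~\ref{thm:mbu-modular-adder-vedral} exactly the four subroutines used there. Concretely, I would instantiate the VBE modular-addition architecture with $\mathsf{Q_{ADD}}$ the Gidney plain adder (proposition~\ref{prp:gidneyAdder}, $r_{\mathsf{ADD}}=n$), $\mathsf{Q_{COMP}}(p)$ the CDKPM half-subtractor constant comparator (proposition~\ref{prp:comparator-cuccaro}, $r_{\mathsf{COMP}}=2n$), $\mathsf{C\text{-}Q_{SUB}}(p)$ the CDKPM controlled constant subtractor obtained through proposition~\ref{prp:controlled addition by a constant - any} ($r_{\mathsf{C\text{-}SUB}}=2n$), and $\mathsf{Q_{COMP}'}$ the Gidney half-subtractor comparator (proposition~\ref{prp:comparator using half gidney subtractor}, $r_{\mathsf{COMP}}'=n$). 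These are precisely the choices made in the proof of Theorem~\ref{thm: modular adder gidney + CDKPM}.

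The only new ingredient beyond that theorem is the step already abstracted inside Theorem~\ref{thm:mbu-modular-adder-vedral}: the final comparator $\mathsf{Q_{COMP}'}$ is invoked solely to uncompute the single comparison bit $\mathbf{1}[x+y\geq p]$, and it plays the role of the self-adjoint unitary $U_g$ in the MBU lemma. I would justify its self-adjointness by noting that the half-subtractor comparator has the form $W^\dagger\,\mathsf{CNOT}\,W$ — a borrow-computing circuit $W$ conjugating a single $\mathsf{CNOT}$ that copies the sign bit — so it XORs one Boolean value into the target register and is its own inverse. The MBU lemma then replaces the full invocation of $\mathsf{Q_{COMP}'}$ by an $X$-basis measurement that already completes the uncomputation with probability $1/2$, running $\mathsf{Q_{COMP}'}$ only in the complementary case; this halves its Toffoli contribution in expectation.

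Reading the cost off Theorem~\ref{thm:mbu-modular-adder-vedral} then gives
\[
r=r_{\mathsf{ADD}}+r_{\mathsf{COMP}}+r_{\mathsf{C\text{-}SUB}}+\tfrac{1}{2}r_{\mathsf{COMP}}'=n+2n+2n+\tfrac{n}{2}=5.5n
\]
Toffoli gates in expectation, a saving of $n/2$ over the $6n$ of Theorem~\ref{thm: modular adder gidney + CDKPM}. The ancilla count is untouched by MBU, since the extra measurement, $\mathsf{H}$, and $\mathsf{NOT}$ operations introduce no new ancillas, so it remains $s=n+3$, identical to Theorem~\ref{thm: modular adder gidney + CDKPM}.

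I expect the argument to be essentially bookkeeping, because Theorem~\ref{thm:mbu-modular-adder-vedral} already carries out the delicate part. The one place that genuinely needs care — and the only spot where the final number could come out wrong — is confirming that it is the \emph{Gidney} comparator (with $n$ Toffolis), not the CDKPM comparator (with $2n$), that occupies the $\mathsf{Q_{COMP}'}$ slot, since it is exactly this identification that pins the expected saving at $n/2$ and yields $5.5n$.
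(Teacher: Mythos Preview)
Your proposal is correct and follows essentially the same route as the paper's own proof: you instantiate Theorem~\ref{thm:mbu-modular-adder-vedral} with exactly the same four subroutines used in Theorem~\ref{thm: modular adder gidney + CDKPM} and read off $r=n+2n+2n+n/2=5.5n$ and $s=n+3$. Your extra remarks on the self-adjointness of $\mathsf{Q_{COMP}'}$ and on why the Gidney (not CDKPM) comparator must sit in that slot are correct and more explicit than the paper's version, but the underlying argument is identical.
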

\begin{proof}
For the Gidney+CDKPM modular adder described in theorem~\ref{thm: modular adder gidney and cdkpm}, the following subroutines are used:
\begin{itemize}
        \item $\mathsf{Q_{ADD}}$: Gidney plain adder proposition~\ref{prp:gidneyAdder};
        \item $\mathsf{Q_{COMP}}(p)$: CDKPM half subtractor based constant comparator proposition~\ref{prp:comparator-cuccaro};
        \item $\mathsf{C\text{-}Q_{SUB}}(p)$: CDKPM plain adder, more concretely, we use $\mathsf{Q_{ADD}^{\dagger}}$ in proposition~\ref{prp:controlled addition by a constant - any}.
        \item $\mathsf{Q_{COMP}'}$: Gidney half subtractor based comparator proposition~\ref{prp:comparator using half gidney subtractor}.
    \end{itemize}
Now, applying theorem~\ref{thm:mbu-modular-adder-vedral}, we find that our expected $\mathsf{Tof}$ gate cost is $$r=\left(r_{\mathsf{{ADD}}}+r_{\mathsf{{COMP}}}+r_{\mathsf{{C\text{-}SUB}}}+\left(r_{\mathsf{{COMP}}}'/2\right)\right)=(n+2n+2n+(n/2))=5.5n$$ and the ancilla count is $s=n+3$ .
\end{proof}
\begin{theorem}[MBU modular adder - Draper/Beauregard]\label{thm:QFTmodular-adder}
Let $x\in \{0,1\}^{n}, y\in \{0,1\}^{n}$ be two bit strings, and $p\in \{0,1\}^n$ classically known with $0\leq x,y < p$.
There is a circuit, $\mathsf{Q_{MODADD_\text{p}}}$, performing the modular addition operation as per definition~\ref{def:modadder} with an expected gate count of $2.5$ $\mathsf{QFT}$'s, $2.5$ $\mathsf{IQFT}$'s, $1.5$ $\mathsf{CNOT}$'s, $1.5$ $\mathsf{\Phi_{ADD}}$'s, $0.5$ $\mathsf{\Phi_{SUB}}$'s, $1$ $\mathsf{C\text{-}\Phi_{SUB}(p)}$, $1$ $\mathsf{\Phi_{ADD}(p)}$, $1$ $\mathsf{\Phi_{SUB}(p)}$ and
  $2$ ancillas.
\end{theorem}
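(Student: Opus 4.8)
The plan is to start from the non-MBU Draper/Beauregard modular adder of proposition~\ref{prp:draper-beauregard-modular-adder} and apply the measurement-based uncomputation lemma to its final comparator, exactly as the general MBU template of theorem~\ref{thm:mbu-modular-adder-vedral} prescribes, but tracking the $\mathsf{QFT}$-based gate set rather than $\mathsf{Tof}$ gates. Recall that that adder instantiates the four subroutines $\mathsf{Q_{ADD}}$, $\mathsf{Q_{COMP}}(p)$, $\mathsf{C\text{-}Q_{SUB}}(p)$ and $\mathsf{Q_{COMP}'}$ with Draper/Beauregard circuits, and that after collapsing the three adjacent $\mathsf{QFT}$/$\mathsf{IQFT}$ pairs its cost is $3$ $\mathsf{QFT}$, $3$ $\mathsf{IQFT}$, $2$ $\mathsf{CNOT}$, $2$ $\mathsf{\Phi_{ADD}}$, $1$ $\mathsf{\Phi_{SUB}}$, $1$ $\mathsf{C\text{-}\Phi_{SUB}(p)}$, $1$ $\mathsf{\Phi_{ADD}(p)}$ and $1$ $\mathsf{\Phi_{SUB}(p)}$. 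Since $\mathsf{Q_{COMP}'}$ serves only to uncompute the single comparator qubit holding $\mathbf{1}[x+y\geq p]$, it is precisely the self-adjoint $U_g$ targeted by the MBU lemma, with $g(\cdot)=\mathbf{1}[x>(x+y)\bmod p]$.

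First I would apply the MBU lemma to that comparator qubit: after one $\mathsf{H}$ and a computational-basis measurement, we execute $U_g=\mathsf{Q_{COMP}'}$ (the Draper/Beauregard comparator of proposition~\ref{prp:quantum-comparator-beauregard-draper}, whose standalone cost is $2$ $\mathsf{QFT}$, $2$ $\mathsf{IQFT}$, $1$ $\mathsf{CNOT}$, $1$ $\mathsf{\Phi_{SUB}}$, $1$ $\mathsf{\Phi_{ADD}}$) only on measurement outcome $1$, which occurs with probability $1/2$. Naively this halves every gate of $\mathsf{Q_{COMP}'}$ in expectation; the care needed is that in proposition~\ref{prp:draper-beauregard-modular-adder} the leading $\mathsf{QFT}$ of $\mathsf{Q_{COMP}'}$ was cancelled against the trailing $\mathsf{IQFT}$ of $\mathsf{C\text{-}Q_{SUB}}(p)$, and that cancellation must be re-examined once $\mathsf{Q_{COMP}'}$ becomes conditional.

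The key step is to resolve this cancellation in the probabilistic setting. The MBU measurement acts on the comparator qubit while the trailing $\mathsf{IQFT}$ of $\mathsf{C\text{-}Q_{SUB}}(p)$ acts on the disjoint target register, so the two commute; I would therefore defer that $\mathsf{IQFT}$ to just after the measurement. On outcome $1$ we run $\mathsf{Q_{COMP}'}$, whose leading $\mathsf{QFT}$ annihilates the deferred $\mathsf{IQFT}$ so that neither appears, leaving only $\mathsf{\Phi_{SUB}}$, $\mathsf{IQFT}$, $\mathsf{CNOT}$, $\mathsf{QFT}$, $\mathsf{\Phi_{ADD}}$, $\mathsf{IQFT}$; on outcome $0$ only the deferred $\mathsf{IQFT}$ is applied. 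Thus the combined expected cost of ``the deferred $\mathsf{IQFT}$ plus the MBU step'' is $\tfrac12(1)+\tfrac12(2)=1.5$ $\mathsf{IQFT}$, $\tfrac12$ $\mathsf{QFT}$, $\tfrac12$ $\mathsf{CNOT}$, $\tfrac12$ $\mathsf{\Phi_{SUB}}$ and $\tfrac12$ $\mathsf{\Phi_{ADD}}$, the cheap $\mathsf{H}$ and $\mathsf{NOT}$ gates of the lemma being negligible.

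Finally I would tally. The first three subroutines, after the two upstream cancellations and with the trailing $\mathsf{IQFT}$ of $\mathsf{C\text{-}Q_{SUB}}(p)$ moved into the MBU block, contribute $2$ $\mathsf{QFT}$, $1$ $\mathsf{IQFT}$, $1$ $\mathsf{CNOT}$, $1$ $\mathsf{\Phi_{ADD}}$, $1$ $\mathsf{\Phi_{ADD}(p)}$, $1$ $\mathsf{\Phi_{SUB}(p)}$ and $1$ $\mathsf{C\text{-}\Phi_{SUB}(p)}$; adding the MBU block yields the claimed $2.5$ $\mathsf{QFT}$, $2.5$ $\mathsf{IQFT}$, $1.5$ $\mathsf{CNOT}$, $1.5$ $\mathsf{\Phi_{ADD}}$, $0.5$ $\mathsf{\Phi_{SUB}}$, together with the unchanged single copies of $\mathsf{C\text{-}\Phi_{SUB}(p)}$, $\mathsf{\Phi_{ADD}(p)}$ and $\mathsf{\Phi_{SUB}(p)}$, the ancilla count being inherited as $2$. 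The main obstacle is precisely the bookkeeping of this \emph{conditional} $\mathsf{QFT}$ cancellation: omitting the deferral would leave $3$ rather than $2.5$ $\mathsf{QFT}$ and $\mathsf{IQFT}$ in expectation, so the entire improvement rests on justifying that the single-qubit measurement commutes past the target-register $\mathsf{IQFT}$ and that the pairwise cancellation survives on the half of the branches where $\mathsf{Q_{COMP}'}$ actually fires.
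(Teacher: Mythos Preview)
Your proposal is correct and follows essentially the same route as the paper: instantiate the four subroutines with the Draper/Beauregard circuits and apply the MBU lemma to the final comparator $\mathsf{Q_{COMP}'}$. In fact your treatment is more careful than the paper's own proof, which simply asserts that the three $\mathsf{QFT}/\mathsf{IQFT}$ cancellations of proposition~\ref{prp:draper-beauregard-modular-adder} persist and quotes the halved counts; you explicitly justify, via commutation of the single-qubit measurement with the target-register $\mathsf{IQFT}$, why the $\mathsf{C\text{-}Q_{SUB}}(p)/\mathsf{Q_{COMP}'}$ cancellation still yields the $2.5$-vs-$3$ saving in expectation rather than being lost once $\mathsf{Q_{COMP}'}$ becomes conditional.
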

\begin{proof}
We leverage proposition~\ref{prp:mod-add-first-vedral-architecture}, with the following subroutines:
    \begin{itemize}
        \item $\mathsf{Q_{ADD}}$: Draper plain adder corollary~\ref{cor:Draper-qft-adder};
        \item $\mathsf{Q_{COMP}}(p)$: Draper's comparator by a classical constant proposition~\ref{prp:draper-comp-by-a-classical-constant};
        \item $\mathsf{C\text{-}Q_{SUB}}(p)$: Draper adder, more concretely, we use $\mathsf{Q_{ADD}^{\dagger}}$ in proposition~\ref{prp:controlled addition by a constant - draper} and ~\ref{thm:controlled-draper-simplified} for controlled constant addition/subtraction;
        \item $\mathsf{Q_{COMP}'}$: Draper/Beauregard comparator proposition~\ref{prp:quantum-comparator-beauregard-draper}.
    \end{itemize}
The $\mathsf{IQFT}$ of $\mathsf{Q_{ADD}}$ cancels with the $\mathsf{QFT}$ of $\mathsf{Q_{COMP}}(p)$. We also find similar reductions for pairs the  $\mathsf{IQFT}$'s/$\mathsf{QFT}$'s of $\mathsf{Q_{COMP}}(p)/\mathsf{C\text{-}Q_{SUB}}(p)$ and $\mathsf{C\text{-}Q_{SUB}}(p)/\mathsf{Q_{COMP}'}$. The expected gate count is therefore, $2.5$ $\mathsf{QFT}$'s, $2.5$ $\mathsf{IQFT}$'s, $1.5$ $\mathsf{CNOT}$'s, $1.5$ $\mathsf{\Phi_{ADD}}$'s, $0.5$ $\mathsf{\Phi_{SUB}}$'s, $1$ $\mathsf{C\text{-}\Phi_{SUB}(p)}$, $1$ $\mathsf{\Phi_{ADD}(p)}$ and $1$ $\mathsf{\Phi_{SUB}(p)}$. The number of ancillas required is $s=2$.
\end{proof}

\subsection{Controlled modular addition}\label{sec:MBUcontrolledmodular}

\begin{theorem}[MBU controlled modular adder - VBE architecture]\label{thm:ctrMBUmodular-adder}
Let
\begin{itemize}
    \item $\mathsf{C\text{-}Q_{ADD}}$ be a quantum circuit that performs a quantum (controlled) addition using $s_{\mathsf{C\text{-}ADD}}$ ancillas, and $r_{\mathsf{C\text{-}ADD}}$ $\mathsf{Tof}$ gates;
    \item $\mathsf{Q_{COMP}}(p)$ be a quantum comparator by a constant $p$, using $s_{\mathsf{COMP}}$ ancillas, $r_{\mathsf{COMP}}$ $\mathsf{Tof}$ gates;
    \item $\mathsf{C\text{-}Q_{SUB}}(p)$ be a quantum circuit that performs a quantum subtraction by a constant $p$, using $s_{\mathsf{C\text{-}SUB}}$ ancillas, and $r_{\mathsf{C\text{-}SUB}}$ $\mathsf{Tof}$ gates;
    \item $\mathsf{C\text{-}Q_{COMP}'}$: be a controlled quantum comparator using ($s_{\mathsf{C\text{-}COMP}}'$ ancillas and $r_{\mathsf{C\text{-}COMP}}'$ $\mathsf{Tof}$ gates).
\end{itemize}
There is a circuit $\mathsf{C\text{-}Q_{MODADD_{p}}}$ implementing an $n$-bit controlled quantum modular addition as per definition~\ref{def:ctrlmodadder} using $r=\left(r_{\mathsf{{C\text{-}ADD}}}+r_{\mathsf{{COMP}}}+r_{\mathsf{{C\text{-}SUB}}}+\left(r_{\mathsf{{C\text{-}COMP}}}'\right/2)\right)$ $\mathsf{Tof}$ gates in expectation and $s=2+\max(s_{\mathsf{C\text{-}ADD}},s_{\mathsf{COMP}},s_\mathsf{C\text{-}SUB},s_{\mathsf{C\text{-}COMP}}')$ ancillas.
\end{theorem}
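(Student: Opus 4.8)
The plan is to mirror the proof of Theorem~\ref{thm:mbu-modular-adder-vedral}, since the controlled construction of Proposition~\ref{prp:controlled-modular-addition} has exactly the same four-stage skeleton, with all of the single-qubit uncomputation concentrated in the final subroutine. First I would recall from the proof of Proposition~\ref{prp:controlled-modular-addition} that the first three subroutines jointly perform the mapping
$$\ket{0}_{s}\ket{c}_1\ket{x}_n\ket{y}_n \;\xmapsto{\;\mathsf{C\text{-}Q_{ADD}},\,\mathsf{Q_{COMP}}(p),\,\mathsf{C\text{-}Q_{SUB}}(p)\;}\; \ket{0}_{s-1}\ket{c}_1\ket{x}_n\ket{c\cdot x + y\mod p}_{n}\ket{\mathbf{1}[c\cdot x + y \geq p]}_1,$$
leaving the comparator qubit holding the Boolean value $d=\mathbf{1}[c\cdot x + y \geq p]$. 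These three stages contribute $r_{\mathsf{C\text{-}ADD}}+r_{\mathsf{COMP}}+r_{\mathsf{C\text{-}SUB}}$ $\mathsf{Tof}$ gates, unchanged from Proposition~\ref{prp:controlled-modular-addition}, and they are left untouched by the optimization.

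The crux is the final step, in which $\mathsf{C\text{-}Q_{COMP}'}$ is invoked \emph{solely} to reset the comparator qubit to $\ket{0}$. The key observation is that this uncomputation is precisely the setting of the MBU lemma: the comparator qubit carries the value of a Boolean function of the surviving registers, namely $g=\mathbf{1}[c\cdot x + y \geq p]$, which by the identity $c\cdot\mathbf{1}[c\cdot x + y \geq p]\equiv \mathbf{1}[(c\cdot x+y)\mod p < c\cdot x]$ from the proof of Proposition~\ref{prp:controlled-modular-addition} is exactly what $\mathsf{C\text{-}Q_{COMP}'}$ recomputes. Moreover $\mathsf{C\text{-}Q_{COMP}'}$ is a unitary of the XOR-in oracle form $\ket{z}\ket{b}\mapsto\ket{z}\ket{b\oplus g(z)}$, and any such oracle squares to the identity and is therefore automatically self-adjoint, so the self-adjointness hypothesis of the MBU lemma holds without further work. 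Applying the lemma replaces the deterministic invocation of $\mathsf{C\text{-}Q_{COMP}'}$ by an $X$-basis measurement that succeeds with probability $1/2$ and reapplies $\mathsf{C\text{-}Q_{COMP}'}$ only on failure, for an expected cost of $r_{\mathsf{C\text{-}COMP}}'/2$ $\mathsf{Tof}$ gates. This yields the claimed total $r=\left(r_{\mathsf{C\text{-}ADD}}+r_{\mathsf{COMP}}+r_{\mathsf{C\text{-}SUB}}+\left(r_{\mathsf{C\text{-}COMP}}'/2\right)\right)$ in expectation.

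Finally I would observe that the MBU lemma introduces only Hadamard, measurement, and classically controlled single-qubit gates, and that it measures the comparator qubit itself rather than a fresh register, so no additional ancillas are required; the ancilla count therefore remains $s=2+\max(s_{\mathsf{C\text{-}ADD}},s_{\mathsf{COMP}},s_\mathsf{C\text{-}SUB},s_{\mathsf{C\text{-}COMP}}')$, identical to Proposition~\ref{prp:controlled-modular-addition}. The main point worth spelling out — essentially the only non-bookkeeping step — is the verification that the final controlled comparator genuinely fits the single-qubit XOR-in form demanded by the MBU lemma, i.e.\ that the value being uncomputed is a deterministic function of the registers that survive the measurement; this is guaranteed by the identity above, which ensures the comparator qubit is correctly reset to $\ket{0}$ on the success branch. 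Everything else reduces to substituting the halved expected cost into the count of Proposition~\ref{prp:controlled-modular-addition}.
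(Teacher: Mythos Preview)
Your proposal is correct and follows essentially the same approach as the paper's own proof: recall from Proposition~\ref{prp:controlled-modular-addition} that the first three subroutines leave the comparator qubit holding $\mathbf{1}[c\cdot x+y\geq p]$, then apply the MBU lemma to halve the expected cost of the final $\mathsf{C\text{-}Q_{COMP}'}$ uncomputation, with ancilla count unchanged. If anything, your treatment is slightly more thorough in explicitly verifying the self-adjointness hypothesis of the MBU lemma, which the paper leaves implicit.
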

\begin{proof}
    As seen in the controlled modular adder - VBE architecture (proposition~\ref{prp:controlled-modular-addition}), the first three subroutines perform the following mapping
\begin{align*}
\ket{0}_{s}\ket{c}_1\ket{x}_n\ket{y}_{n}\xmapsto{\mathsf{C\text{-}Q_{ADD}},\mathsf{Q_{COMP}}(p),\mathsf{C\text{-}Q_{SUB}(p)}}\ket{0}_{s-1}\ket{c}_1\ket{x}_n\ket{c\cdot x+y \mod p}_{n}\ket{\mathbf{1}[c\cdot x+y \geq p]}_1~.
\end{align*}
The comparator qubit storing $\mathbf{1}[c\cdot x+y \geq p]$ is finally supposed to be uncomputed using $\mathsf{C\text{-}Q_{COMP}'}$. We can apply the MBU lemma here on the comparator qubit and the uncomputation subroutine $\mathsf{C\text{-}Q_{COMP}'}$, thus halving the cost of $\mathsf{C\text{-}Q_{COMP}'}$ in expectation.
Therefore, we finally get a cost of $r=\left(r_{\mathsf{{C\text{-}ADD}}}+r_{\mathsf{{COMP}}}+r_{\mathsf{{C\text{-}SUB}}}+\left(r_{\mathsf{{C\text{-}COMP}}}'/2\right)\right)$ $\mathsf{Tof}$ gates in expectation, and the same ancilla count as proposition~\ref{prp:controlled-modular-addition}.
\end{proof}

\begin{theorem}[MBU controlled modular adder - CDKPM]\label{thm:controlled-mbumodular-adder}
There is a circuit $\mathsf{C\text{-}Q_{MODADD_{p}}}$ implementing an $n$-bit controlled quantum modular addition as per definition~\ref{def:ctrlmodadder} using $n+3$ ancillas and $8n+0.5$ $\mathsf{Tof}$ gates in expectation.
\end{theorem}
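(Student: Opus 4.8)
The plan is to instantiate the general MBU controlled modular adder template of Theorem~\ref{thm:ctrMBUmodular-adder} with exactly the CDKPM-based subroutines already chosen for its non-MBU counterpart, Proposition~\ref{prp:cdkpm-controlled-modular-adder}. No new circuit construction is required; the entire task is to track the four Toffoli counts, confirm that the final comparator is the uncomputation step eligible for MBU, and arithmetic-check that halving its cost produces the claimed $8n+0.5$ figure while leaving the ancilla count intact.

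First I would recall the four subroutines and their resource costs from Proposition~\ref{prp:cdkpm-controlled-modular-adder}: the controlled adder $\mathsf{C\text{-}Q_{ADD}}$ is the single-ancilla CDKPM controlled adder of Theorem~\ref{thm:cuccaro-controlled-adder-1-extra-ancilla} with $r_{\mathsf{C\text{-}ADD}}=3n$ and $s_{\mathsf{C\text{-}ADD}}=2$; the constant comparator $\mathsf{Q_{COMP}}(p)$ of Theorem~\ref{thm:controlled comparator by a classical constant CDKMP} has $r_{\mathsf{COMP}}=2n$ and $s_{\mathsf{COMP}}=n+1$; the controlled constant subtraction $\mathsf{C\text{-}Q_{SUB}}(p)$ built from the CDKPM adder via Proposition~\ref{prp: addition by a constant - any} has $r_{\mathsf{C\text{-}SUB}}=2n$ and $s_{\mathsf{C\text{-}SUB}}=n+1$; and the final controlled comparator $\mathsf{C\text{-}Q_{COMP}'}$ of Proposition~\ref{prp:controlled comparator CDKMP} has $r_{\mathsf{C\text{-}COMP}}'=2n+1$ and $s_{\mathsf{C\text{-}COMP}}'=1$. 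I would then substitute these into the expected-cost formula of Theorem~\ref{thm:ctrMBUmodular-adder}, namely $r=r_{\mathsf{C\text{-}ADD}}+r_{\mathsf{COMP}}+r_{\mathsf{C\text{-}SUB}}+(r_{\mathsf{C\text{-}COMP}}'/2)$, obtaining $r=3n+2n+2n+(2n+1)/2=8n+0.5$ Toffoli gates in expectation. Since MBU does not change the space footprint, the ancilla count is $s=2+\max(2,\,n+1,\,n+1,\,1)=n+3$, as claimed.

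The only point that genuinely needs justification — rather than a real obstacle — is that the final comparator $\mathsf{C\text{-}Q_{COMP}'}$ is exactly the single-qubit uncomputation step to which the MBU lemma applies. This holds because, as established in the proof of Proposition~\ref{prp:controlled-modular-addition}, $\mathsf{C\text{-}Q_{COMP}'}$ serves solely to reset the comparator qubit carrying $\mathbf{1}[c\cdot x+y\ge p]$, and any gate realizing $\ket{x}\ket{b}\mapsto\ket{x}\ket{b\oplus g(x)}$ (leaving the input register fixed and XOR-ing the comparison bit into the target) is automatically an involution, hence self-adjoint, so the measurement-and-phase-kickback argument of the MBU lemma genuinely halves its cost in expectation. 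With this hypothesis verified, the computation above is routine and the theorem follows directly from Theorem~\ref{thm:ctrMBUmodular-adder}.
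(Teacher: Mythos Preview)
Your proposal is correct and follows essentially the same route as the paper: you instantiate the general MBU controlled modular adder template (Theorem~\ref{thm:ctrMBUmodular-adder}) with the identical CDKPM-based subroutines from Proposition~\ref{prp:cdkpm-controlled-modular-adder}, halve the final controlled comparator's Toffoli count, and arrive at $8n+0.5$ Toffolis and $n+3$ ancillas. Your added remark on why $\mathsf{C\text{-}Q_{COMP}'}$ satisfies the self-adjointness hypothesis of the MBU lemma is a welcome clarification that the paper leaves implicit.
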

\begin{proof}
For the controlled CDKPM modular adder described in proposition~\ref{prp:cdkpm-controlled-modular-adder}, the following subroutines are used:
    \begin{itemize}
    \item $\mathsf{C\text{-}Q_{ADD}}$ be a quantum circuit that performs a quantum (controlled) using a CDKPM adder with a single extra ancilla (theorem~\ref{thm:cuccaro-controlled-adder-1-extra-ancilla}) requiring $s_{\mathsf{C\text{-}ADD}}=2$ ancillas, and $r_{\mathsf{C\text{-}ADD}}=3n$ $\mathsf{Tof}$ gates;
    \item $\mathsf{Q_{COMP}}(p)$ be a quantum comparator by a constant $p$ using Controlled comparator - using half a CDKPM subtractor (theorem~\ref{thm:controlled comparator by a classical constant CDKMP}), requiring $s_{\mathsf{COMP}}=n+1$ ancillas, $r_{\mathsf{COMP}}=2n$ $\mathsf{Tof}$ gates;
    \item $\mathsf{C\text{-}Q_{SUB}}(p)$ be a quantum circuit that performs a quantum subtraction by a constant $p$ using the CDKPM adder in proposition~\ref{prp: addition by a constant - any}, requiring $s_{\mathsf{C\text{-}SUB}}=n+1$ ancillas, and $r_{\mathsf{C\text{-}SUB}}=2n$ $\mathsf{Tof}$ gates;
    \item $\mathsf{C\text{-}Q_{COMP}'}$ be the controlled quantum comparator described in proposition~\ref{prp:controlled comparator CDKMP}: Controlled comparator - using half a CDKPM subtractor. This subroutine requires $s_{\mathsf{C\text{-}COMP}}'=1$ ancilla, $r_{\mathsf{C\text{-}COMP}}'=2n+1$ $\mathsf{Tof}$ gates.
\end{itemize}

Now, applying theorem~\ref{thm:mbu-modular-adder-vedral}, we find that our expected $\mathsf{Tof}$ gate cost is $$r=\left(r_{\mathsf{{C\text{-}ADD}}}+r_{\mathsf{{COMP}}}+r_{\mathsf{{C\text{-}SUB}}}+\left(r_{\mathsf{{C\text{-}COMP}}}'/2\right)\right)=(3n+2n+2n+(2n+1)/2)=8n+0.5$$ and the ancilla count is $s=n+3$.
\end{proof}

\begin{theorem}[MBU controlled modular adder - Gidney]\label{thm:mbu-gidney-controlled-modular-adder}
There is a circuit $\mathsf{C\text{-}Q_{MODADD_{p}}}$ implementing an $n$-bit controlled quantum modular addition as per definition~\ref{def:ctrlmodadder} using $2n+3$ ancillas,  $4.5n+0.5$ $\mathsf{Tof}$ gates in expectation.
\end{theorem}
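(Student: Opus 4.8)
The plan is to recognize this statement as the measurement-based-uncomputation counterpart of the non-MBU Gidney controlled modular adder of Proposition~\ref{prp:gidney-controlled-modular-adder}, and to obtain it by feeding exactly the same four Gidney-based subroutines into the MBU framework already established in Theorem~\ref{thm:ctrMBUmodular-adder}. In other words, no new circuit construction is needed: the work reduces to an instantiation plus a short arithmetic calculation.

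First I would instantiate Theorem~\ref{thm:ctrMBUmodular-adder} with the subroutines from Proposition~\ref{prp:gidney-controlled-modular-adder}: for $\mathsf{C\text{-}Q_{ADD}}$ the Gidney controlled adder with one extra ancilla (Proposition~\ref{prp:gidney-controlled-adder-1-extra-ancilla}), with $r_{\mathsf{C\text{-}ADD}}=2n$ and $s_{\mathsf{C\text{-}ADD}}=n+1$; for $\mathsf{Q_{COMP}}(p)$ the controlled Gidney half-subtractor comparator (Proposition~\ref{prp:controlled comparator using half gidney subtractor}), with $r_{\mathsf{COMP}}=n$ and $s_{\mathsf{COMP}}=2n+1$; for $\mathsf{C\text{-}Q_{SUB}}(p)$ the Gidney adder used for controlled constant subtraction via Proposition~\ref{prp: addition by a constant - any}, with $r_{\mathsf{C\text{-}SUB}}=n$ and $s_{\mathsf{C\text{-}SUB}}=2n+1$; and for $\mathsf{C\text{-}Q_{COMP}'}$ again the controlled Gidney half-subtractor comparator, with $r_{\mathsf{C\text{-}COMP}}'=n+1$ and $s_{\mathsf{C\text{-}COMP}}'=n+1$.

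The essential observation carried over from Theorem~\ref{thm:ctrMBUmodular-adder} is that the last subroutine $\mathsf{C\text{-}Q_{COMP}'}$ serves only to uncompute the single comparison qubit holding $\mathbf{1}[c\cdot x + y \geq p]$; hence the MBU lemma applies to this final uncomputation and replaces its full Toffoli cost by half in expectation. The remaining arithmetic is then immediate: the expected Toffoli count is $r = r_{\mathsf{C\text{-}ADD}} + r_{\mathsf{COMP}} + r_{\mathsf{C\text{-}SUB}} + (r_{\mathsf{C\text{-}COMP}}'/2) = 2n + n + n + (n+1)/2 = 4.5n + 0.5$, while the ancilla count is unchanged from the non-MBU version, $s = 2 + \max(n+1,\,2n+1,\,2n+1,\,n+1) = 2n+3$.

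The only point requiring any care---and it is a verification rather than a genuine obstacle---is confirming that the register targeted by $\mathsf{C\text{-}Q_{COMP}'}$ is a single-qubit register whose uncomputing unitary satisfies the self-adjointness hypothesis of the MBU lemma, so that the halving in expectation is justified. Since the comparator writes a Boolean flag into one qubit through a (controlled) XOR of the most significant bit of the relevant difference, this hypothesis holds, and the proof collapses to the routine substitution above.
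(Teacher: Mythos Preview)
Your proposal is correct and follows essentially the same approach as the paper: instantiate the MBU controlled-modular-adder framework (Theorem~\ref{thm:ctrMBUmodular-adder}) with the four Gidney-based subroutines used in Proposition~\ref{prp:gidney-controlled-modular-adder}, then compute $r=2n+n+n+(n+1)/2=4.5n+0.5$ and $s=2+\max(n+1,2n+1,2n+1,n+1)=2n+3$. Your added remark verifying that the final comparator writes a single Boolean flag (so the MBU lemma applies) is a nice touch the paper leaves implicit.
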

\begin{proof}
For the Gidney controlled modular adder described in proposition~\ref{prp:gidney-controlled-modular-adder}, the following subroutines are used:
    \begin{itemize}
    \item $\mathsf{C\text{-}Q_{ADD}}$ be a quantum circuit that performs a quantum (controlled) using a Gidney adder with a single extra ancilla (proposition~\ref{prp:gidney-controlled-adder-1-extra-ancilla}) requiring $s_{\mathsf{C\text{-}ADD}}=n+1$ ancillas, and $r_{\mathsf{C\text{-}ADD}}=2n$ $\mathsf{Tof}$ gates;
    \item $\mathsf{Q_{COMP}}(p)$ be a quantum comparator by a constant $p$ using Controlled comparator - using half a Gidney subtractor (proposition~\ref{prp:controlled comparator using half gidney subtractor}), requiring $s_{\mathsf{COMP}}=2n+1$ ancillas, $r_{\mathsf{COMP}}=n$ $\mathsf{Tof}$ gates;
    \item $\mathsf{C\text{-}Q_{SUB}}(p)$ be a quantum circuit that performs a controlled quantum subtraction by a constant $p$ using the Gidney adder in proposition~\ref{prp: addition by a constant - any}, requiring $s_{\mathsf{C\text{-}SUB}}=2n+1$ ancillas, and $r_{\mathsf{C\text{-}SUB}}=n$ $\mathsf{Tof}$ gates;
    \item $\mathsf{C\text{-}Q_{COMP}'}$ be the controlled quantum comparator described in proposition~\ref{prp:controlled comparator using half gidney subtractor}: controlled comparator - using half a Gidney subtractor. This subroutine requires $s_{\mathsf{C\text{-}COMP}}'=n+1$ ancillas, $r_{\mathsf{C\text{-}COMP}}'=n+1$ $\mathsf{Tof}$ gates.
\end{itemize}
Now, applying theorem~\ref{thm:mbu-modular-adder-vedral}, we find that our expected $\mathsf{Tof}$ gate cost is $$r=\left(r_{\mathsf{{C\text{-}ADD}}}+r_{\mathsf{{COMP}}}+r_{\mathsf{{C\text{-}SUB}}}+\left(r_{\mathsf{{C\text{-}COMP}}}'/2\right)\right)=(2n+n+n+(n+1)/2)=4.5n+0.5$$ and the ancilla count is $s=2n+3$.
\end{proof}

\subsection{Modular addition by a constant}\label{sec:MBUmodularbyaconstant}

\begin{theorem}[MBU modular addition by a constant - VBE architecture]\label{thm:mbu-modular-addition-by-a-constant}
Let
    \begin{itemize}
    \item $\mathsf{Q_{ADD}(a)}$ be a quantum circuit that performs a constant quantum addition using $s_{\mathsf{ADD}}$ ancillas, and $r_{\mathsf{ADD}}$ $\mathsf{Tof}$ gates;
    \item $\mathsf{Q_{COMP}}(p)$ be a quantum comparator using $s_{\mathsf{COMP}}$ ancillas, $r_{\mathsf{COMP}}$ $\mathsf{Tof}$ gates;
    \item $\mathsf{C\text{-}Q_{SUB}}(p)$ be a quantum circuit that performs a  (controlled) subtraction by a constant $p$ using $s_{\mathsf{C\text{-}SUB}}$ ancillas, and $r_{\mathsf{C\text{-}SUB}}$ $\mathsf{Tof}$ gates;
    \item $\mathsf{Q_{COMP}'}(a)$: be a constant quantum comparator using ($s_{\mathsf{COMP}}'$ ancillas, $r_{\mathsf{COMP}}'$ $\mathsf{Tof}$ gates).
\end{itemize}
there is a circuit $\mathsf{Q_{MODADD_{p}}}(a)$ implementing an $n$-bit quantum modular addition by a constant as per definition~\ref{def:abstract-modular-adder-by-a-const} using $r=\left(r_{\mathsf{{ADD}}}+r_{\mathsf{{COMP}}}+r_{\mathsf{{C\text{-}SUB}}}+\left(r_{\mathsf{{COMP}}}'/2\right)\right)$ $\mathsf{Tof}$ gates and $s=2+\max(s_{\mathsf{ADD}},s_{\mathsf{COMP}},s_\mathsf{C\text{-}SUB},s_{\mathsf{COMP}}')$ ancillas.
\end{theorem}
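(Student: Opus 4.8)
The plan is to import the constant-addition VBE construction of Theorem~\ref{thm:constant-modular-adder-vbe-architecture} unchanged through its first three subroutines, and then replace its final deterministic uncomputation step with the measurement-based uncomputation (MBU) lemma, exactly mirroring how Theorem~\ref{thm:mbu-modular-adder-vedral} upgrades the quantum--quantum VBE modular adder. First I would run $\mathsf{Q_{ADD}}(a)$, then $\mathsf{Q_{COMP}}(p)$, and then $\mathsf{C\text{-}Q_{SUB}}(p)$ in sequence (inserting the $\mathsf{X}$ gate on the flag before the controlled subtraction, per Remark~\ref{rem:invertingcomparison}) to reach the intermediate state
\[
\ket{0}_{s-1}\ket{x+a \mod p}_n\ket{\mathbf{1}[x+a \geq p]}_1,
\]
whose correctness is already established inside the proof of Theorem~\ref{thm:constant-modular-adder-vbe-architecture}. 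At this point the accumulated cost is $r_{\mathsf{ADD}}+r_{\mathsf{COMP}}+r_{\mathsf{C\text{-}SUB}}$ Toffoli gates, and the ancilla budget $s=2+\max(s_{\mathsf{ADD}},s_{\mathsf{COMP}},s_{\mathsf{C\text{-}SUB}},s_{\mathsf{COMP}}')$ is fixed by the most demanding subroutine.

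The crucial observation is that the surviving flag is a Boolean function of the output register alone: as verified in the non-MBU proof, $\mathbf{1}[x+a \geq p] \equiv \mathbf{1}[(x+a \mod p) < a]$, which is precisely $g(z)$ for $z = x+a \mod p$ and $g$ the function computed by the constant comparator $\mathsf{Q_{COMP}'}(a)$ of Definition~\ref{def:abstract_constant_comparator}. This matches the hypothesis of the MBU lemma verbatim: the qubit to be reset holds $g(z)$, and $\mathsf{Q_{COMP}'}(a)$ is a self-adjoint unitary realizing $\ket{z}\ket{b}\mapsto\ket{z}\ket{b\oplus g(z)}$. I would therefore apply the MBU lemma with $U_g = \mathsf{Q_{COMP}'}(a)$ to reset the flag to $\ket{0}$, leaving the registers in $\ket{0}_{s}\ket{x+a \mod p}_n$ as required by Definition~\ref{def:abstract-modular-adder-by-a-const}. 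Since the lemma invokes $U_g$ only on the probability-$1/2$ measurement branch, the expected contribution of $\mathsf{Q_{COMP}'}(a)$ to the Toffoli count is $r_{\mathsf{COMP}}'/2$, yielding the claimed total $r = r_{\mathsf{ADD}}+r_{\mathsf{COMP}}+r_{\mathsf{C\text{-}SUB}}+r_{\mathsf{COMP}}'/2$ in expectation; the additional $\mathsf{H}$ and $\mathsf{NOT}$ gates supplied by MBU change neither the Toffoli count nor the ancilla count.

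The main obstacle is not conceptual but a matter of careful justification: I must confirm that $\mathsf{Q_{COMP}'}(a)$ genuinely satisfies the self-adjointness and clean XOR-into-target form demanded by the MBU lemma. A comparator that leaves its input register intact and merely XORs $\mathbf{1}[z<a]$ into the flag is an involution, since applying it twice cancels the flag on every basis state; the concrete constant comparators available here (built from the self-adjoint plain comparator of Proposition~\ref{prp:comparator-cuccaro} by symmetric load/unload of $a$) are of exactly this form because they internally uncompute all scratch space. The only remaining subtlety is the ancilla bookkeeping: the MBU lemma touches only the single flag qubit together with the already-clean output register and introduces no new ancilla, so the count $s=2+\max(s_{\mathsf{ADD}},s_{\mathsf{COMP}},s_{\mathsf{C\text{-}SUB}},s_{\mathsf{COMP}}')$ inherited from Theorem~\ref{thm:constant-modular-adder-vbe-architecture} carries over unchanged.
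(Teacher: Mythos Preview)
Your proposal is correct and follows essentially the same approach as the paper's own proof: run the first three subroutines of Theorem~\ref{thm:constant-modular-adder-vbe-architecture} to reach the state with the flag $\mathbf{1}[x+a\geq p]$, then invoke the MBU lemma with $U_g=\mathsf{Q_{COMP}'}(a)$ to halve the expected cost of the final uncomputation. If anything, your version is more thorough than the paper's, since you explicitly verify the self-adjointness hypothesis of the MBU lemma and the identity $\mathbf{1}[x+a\geq p]\equiv\mathbf{1}[(x+a\bmod p)<a]$, both of which the paper's proof leaves implicit.
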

\begin{proof}
    As seen in the constant modular adder in VBE architecture (theorem~\ref{thm:constant-modular-adder-vbe-architecture}), the first three subroutines perform the following mapping
\begin{align*}
\ket{0}_{s}\ket{x}_n\xmapsto{\mathsf{Q_{ADD}}(a),\mathsf{Q_{COMP}}(p),\mathsf{C\text{-}Q_{SUB}}(p)}\ket{0}_{s-1}\ket{x}_n\ket{x+a \mod p}_{n}\ket{\mathbf{1}[x+a \geq p]}_1~.
\end{align*}
The comparator qubit storing $\ket{\mathbf{1}[x+a \geq p]}_1$ is finally supposed to be uncomputed using $\mathsf{Q_{COMP}'}(a)$. We can apply the MBU lemma here on the comparator qubit and the uncomputation subroutine $\mathsf{Q_{COMP}'}(a)$, thus halving the cost of $\mathsf{Q_{COMP}'}(a)$ in expectation.
Therefore, we finally get a cost of $r=\left(r_{\mathsf{{ADD}}}+r_{\mathsf{{COMP}}}+r_{\mathsf{{C\text{-}SUB}}}+\left(r_{\mathsf{{COMP}}}'/2\right)\right)$ $\mathsf{Tof}$ gates in expectation, and the same ancilla count at theorem~\ref{thm:constant-modular-adder-vbe-architecture}.
\end{proof}

\begin{theorem}[MBU modular adder by a constant - in Takahashi Architecture]~\label{thm:mbu-takahashi-cheap-const-mod-adder}
Let $x \in \{0,1\}^n$. Assume to know classically $p,a \in \{0,1\}^n$ with $0\leq a,x<p$. Also, assume we have access to the following subroutines:
    \begin{itemize}
    \item $\mathsf{Q_{SUB}(a)}$ be a quantum circuit that performs a constant quantum subtraction using $s_{\mathsf{SUB}}$ ancillas, and $r_{\mathsf{SUB}}$ $\mathsf{Tof}$ gates;
    \item $\mathsf{C\text{-}Q_{ADD}}(p)$ be a quantum circuit that performs a  (controlled) addition by a constant $p$ using $s_{\mathsf{C\text{-}ADD}}$ ancillas, and $r_{\mathsf{C\text{-}ADD}}$ $\mathsf{Tof}$ gates;
    \item $\mathsf{Q_{COMP}}(a)$ be a classical comparator using $s_{\mathsf{COMP}}$ ancillas, $r_{\mathsf{COMP}}$ $\mathsf{Tof}$ gates.
\end{itemize}
Then, there is a circuit $\mathsf{Q_{MODADD_{p}}}(a)$ implementing quantum modular addition by a constant  as per definition~\ref{def:abstract-modular-adder-by-a-const} using $r=\left(r_{\mathsf{{SUB}}}+r_{\mathsf{{C\text{-}ADD}}}+\left(r_{\mathsf{{COMP}}}/2\right)\right)$ $\mathsf{Tof}$ gates in expectation and $s=1+\max(s_{\mathsf{SUB}},s_\mathsf{C\text{-}ADD},s_{\mathsf{COMP}})$ ancillas.
\end{theorem}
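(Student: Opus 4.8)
The plan is to mirror the structure of the MBU modular-adder theorems already established (e.g.\ theorem~\ref{thm:mbu-modular-addition-by-a-constant}), applying the MBU lemma to the final uncomputation step of the Takahashi architecture. First I would invoke proposition~\ref{prp:takahashi-cheap-const-mod-adder} to carry out the first two subroutines, $\mathsf{Q_{SUB}}(a)$ and $\mathsf{C\text{-}Q_{ADD}}(p)$, which produce the mapping
\begin{align*}
\ket{0}_{s}\ket{x}_n \xmapsto{\mathsf{Q_{SUB}}(a),\,\mathsf{C\text{-}Q_{ADD}}(p)} \ket{0}_{s-1}\ket{x+a \mod p}_n \ket{\mathbf{1}[x+a<p]}_1,
\end{align*}
leaving a comparator/garbage qubit holding $d = \mathbf{1}[x+a<p]$ that must be reset to $\ket{0}$.

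Next, recall from proposition~\ref{prp:takahashi-cheap-const-mod-adder} the identity $\mathbf{1}[x+a<p] \equiv \mathbf{1}[x+a\mod p \geq a]$, so that $d$ is a function solely of the value $z := x+a \mod p$ now stored in the target register, namely $g(z) = \mathbf{1}[z \geq a]$. The non-MBU circuit resets this qubit by applying the constant comparator $\mathsf{Q_{COMP}}(a)$ (which XORs $\mathbf{1}[z<a]$ into the qubit) followed by a $\mathsf{NOT}$. The key step is to identify the composite operation $U_g := \mathsf{NOT}_G \circ \mathsf{Q_{COMP}}(a)$ as a self-adjoint unitary satisfying $U_g\ket{z}_A\ket{b}_G = \ket{z}_A\ket{b \oplus g(z)}_G$: both factors act on the garbage qubit $G$ purely by XOR (of the constant $1$ and of $\mathbf{1}[z<a]$ respectively), hence they commute and each is an involution, making $U_g$ self-adjoint and of the exact form demanded by the MBU lemma.

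Having cast the uncomputation as an application of $U_g$ to a qubit storing $g(z)$, I would then invoke the MBU lemma directly. This replaces the deterministic call to $\mathsf{Q_{COMP}}(a)$ (plus the $\mathsf{NOT}$) by a single $\mathsf{H}$ gate, a computational-basis measurement, and an execution of $U_g$ only with probability $1/2$. Since the $\mathsf{Tof}$ cost of $U_g$ is exactly that of $\mathsf{Q_{COMP}}(a)$ (the appended $\mathsf{NOT}$ and the Hadamards contribute no Toffolis), the expected Toffoli cost of the reset drops from $r_{\mathsf{COMP}}$ to $r_{\mathsf{COMP}}/2$, yielding the claimed total $r = r_{\mathsf{SUB}} + r_{\mathsf{C\text{-}ADD}} + r_{\mathsf{COMP}}/2$ in expectation. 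The measurement acts on the garbage qubit itself rather than on a fresh register, so the ancilla count is inherited unchanged from proposition~\ref{prp:takahashi-cheap-const-mod-adder}, giving $s = 1 + \max(s_{\mathsf{SUB}}, s_{\mathsf{C\text{-}ADD}}, s_{\mathsf{COMP}})$.

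The main obstacle is the verification in the second paragraph: one must confirm that the comparator-plus-$\mathsf{NOT}$ uncomputation genuinely matches the self-adjoint XOR-oracle form required by the MBU lemma, in particular that the garbage bit $d$ really is a deterministic function of the post-computation register value $z = x+a \mod p$ (so that the phase-kickback branch of MBU is well-defined) rather than of the original input $x$ alone. Everything else is a direct substitution of gate counts into the lemma.
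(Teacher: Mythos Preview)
Your proposal is correct and follows essentially the same approach as the paper: invoke the Takahashi architecture of proposition~\ref{prp:takahashi-cheap-const-mod-adder} for the first two steps, then apply the MBU lemma to the final comparator-plus-$\mathsf{NOT}$ uncomputation of the garbage bit $\mathbf{1}[x+a<p]$. Your treatment is in fact more careful than the paper's, which simply asserts that the MBU lemma applies without spelling out, as you do, that the composite $U_g=\mathsf{NOT}\circ\mathsf{Q_{COMP}}(a)$ is self-adjoint and that the garbage bit is a deterministic function of the post-computation register value $z=x+a\bmod p$.
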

\begin{proof}
 As seen in the Takashi constant modular adder (proposition~\ref{prp:takahashi-cheap-const-mod-adder}), the first two subroutines perform the following mapping
\begin{align*}
    \ket{0}_{s}\ket{x}_n\xmapsto{\mathsf{Q_{SUB}}(a),\mathsf{C\text{-}Q_{ADD}}(p)}&\ket{0}_{s-1}\ket{x}_n\ket{x+a \mod p}_{n}\ket{\mathbf{1}[x+a < p]}_1~.
\end{align*}
The comparator qubit storing $\ket{\mathbf{1}[x+a < p]}_1$ is finally supposed to be uncomputed using $\mathsf{Q_{COMP}}(a)$ and a NOT gate. We can apply the MBU lemma here on the comparator qubit and the uncomputation subroutine $\mathsf{Q_{COMP}}(a)$, thus halving the cost of $\mathsf{Q_{COMP}}(a)$ in expectation.
Therefore, we finally get a cost of $r=\left(r_{\mathsf{{SUB}}}+r_{\mathsf{{C\text{-}ADD}}}+\left(r_{\mathsf{{COMP}}}/2\right)\right)$ $\mathsf{Tof}$ gates in expectation, and the same ancilla count at proposition~\ref{prp:takahashi-cheap-const-mod-adder}.
\end{proof}

\subsection{Controlled modular addition by a constant}\label{sec:MBUcontrolledmodularbyaconstant}

\begin{theorem}[MBU controlled modular adder by a constant - VBE architecture]\label{thm:mbu-controlled-constant-modular-adder-vbe-architecture}
Let
    \begin{itemize}
    \item $\mathsf{C\text{-}Q_{ADD}(a)}$ be a quantum circuit that performs a controlled addition by a constant using $s_{\mathsf{C\text{-}ADD}}$ ancillas, and $r_{\mathsf{C\text{-}ADD}}$ $\mathsf{Tof}$ gates;
    \item $\mathsf{Q_{COMP}}(p)$ be a quantum comparator using $s_{\mathsf{COMP}}$ ancillas, $r_{\mathsf{COMP}}$ $\mathsf{Tof}$ gates;
    \item $\mathsf{C\text{-}Q_{SUB}}(p)$ be a quantum circuit that performs a  (controlled) subtraction by a constant $p$ using $s_{\mathsf{C\text{-}SUB}}$ ancillas, and $r_{\mathsf{C\text{-}SUB}}$ $\mathsf{Tof}$ gates;
    \item $\mathsf{C\text{-}Q_{COMP}'}(a)$: be a controlled constant quantum comparator using ($s_{\mathsf{C\text{-}COMP}}'$ ancillas, $r_{\mathsf{C\text{-}COMP}}'$ $\mathsf{Tof}$ gates).
\end{itemize}
There is a circuit $\mathsf{C\text{-}Q_{MODADD_{p}}}(a)$ implementing an $n$-bit controlled quantum modular addition by a constant as per definition~\ref{def:abstract-controlled-modular-adder-by-a-const} using $r=\left(r_{\mathsf{{C\text{-}ADD}}}+r_{\mathsf{{COMP}}}+r_{\mathsf{{C\text{-}SUB}}}+\left(r_{\mathsf{{C\text{-}COMP}}}'/2\right)\right)$ $\mathsf{Tof}$ gates and $s=2+\max(s_{\mathsf{C\text{-}ADD}},s_{\mathsf{COMP}},n+s_\mathsf{C\text{-}SUB},s_{\mathsf{C\text{-}COMP}}')$ ancillas.   
\end{theorem}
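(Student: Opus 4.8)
The plan is to reuse the non-MBU construction of Proposition~\ref{prp:controlled-constant-modular-adder-vbe-architecture} essentially verbatim for everything up to the last step, and then to replace the final uncomputation by an invocation of the MBU lemma. First I would recall that the three subroutines $\mathsf{C\text{-}Q_{ADD}}(a)$, $\mathsf{Q_{COMP}}(p)$, and $\mathsf{C\text{-}Q_{SUB}}(p)$ realise the mapping
\begin{align*}
\ket{0}_{s}\ket{c}_1\ket{x}_n\xmapsto{\mathsf{C\text{-}Q_{ADD}}(a),\,\mathsf{Q_{COMP}}(p),\,\mathsf{C\text{-}Q_{SUB}}(p)}\ket{0}_{s-1}\ket{c}_1\ket{x+c\cdot a \mod p}_{n}\ket{\mathbf{1}[x+c\cdot a \geq p]}_1~,
\end{align*}
exactly as in the proof of Proposition~\ref{prp:controlled-constant-modular-adder-vbe-architecture}. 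The only remaining task is to reset the comparator qubit carrying $\mathbf{1}[x+c\cdot a \geq p]$, which in the non-MBU version is handled by the controlled constant comparator $\mathsf{C\text{-}Q_{COMP}'}(a)$, relying on the identity $c\cdot\mathbf{1}[x+c\cdot a \geq p]\equiv\mathbf{1}[(x+c\cdot a)\mod p < c\cdot a]$ established there.

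The key observation is that this last register holds a single-qubit Boolean-valued function of the remaining registers, so it is precisely the kind of garbage that the MBU lemma is designed to erase. I would therefore apply the MBU lemma taking $U_g := \mathsf{C\text{-}Q_{COMP}'}(a)$, whose action $\ket{\cdot}\ket{b}\mapsto\ket{\cdot}\ket{b\oplus g}$ implements exactly the comparison used for the reset. Since the lemma invokes $U_g$ only on the measurement outcome $1$, i.e.\ with probability $1/2$, the $\mathsf{Tof}$ cost $r_{\mathsf{C\text{-}COMP}}'$ is incurred only half the time in expectation, while the first three subroutines retain their deterministic costs. Summing these yields the claimed expected count $r=\left(r_{\mathsf{{C\text{-}ADD}}}+r_{\mathsf{{COMP}}}+r_{\mathsf{{C\text{-}SUB}}}+\left(r_{\mathsf{{C\text{-}COMP}}}'/2\right)\right)$, and the ancilla count is unchanged from Proposition~\ref{prp:controlled-constant-modular-adder-vbe-architecture} (the measurement reuses the comparator's target register rather than requiring a fresh one).

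The only place where any care is needed — and the main thing to verify — is that the hypotheses of the MBU lemma genuinely hold for $\mathsf{C\text{-}Q_{COMP}'}(a)$: namely that it is self-adjoint and acts on the comparator qubit as $\ket{\cdot}\ket{b}\mapsto\ket{\cdot}\ket{b\oplus g}$ for a Boolean $g$. This holds because a comparator writes its output into the target through a single (controlled) copy onto a qubit that is otherwise untouched, making that operation its own inverse. I expect no substantive obstacle beyond this bookkeeping; all the real content is inherited from Proposition~\ref{prp:controlled-constant-modular-adder-vbe-architecture} and the MBU lemma, and the proof reduces to matching the cost expressions and checking the self-adjointness hypothesis, exactly mirroring the sibling results Theorems~\ref{thm:ctrMBUmodular-adder} and~\ref{thm:mbu-modular-addition-by-a-constant}.
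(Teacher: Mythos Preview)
Your proposal is correct and follows essentially the same approach as the paper: reuse the first three subroutines of Proposition~\ref{prp:controlled-constant-modular-adder-vbe-architecture} to reach the state with the dangling comparator qubit, then apply the MBU lemma with $U_g=\mathsf{C\text{-}Q_{COMP}'}(a)$ to halve the expected cost of the final uncomputation. Your added remark about verifying the self-adjointness hypothesis of the MBU lemma is a nice touch that the paper does not make explicit.
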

\begin{proof}
As seen in the controlled modular adder by a constant in VBE architecture (proposition~\ref{prp:controlled-constant-modular-adder-vbe-architecture}), the first three subroutines perform the following mapping
\begin{align*}
\ket{0}_{s}\ket{c}_1\ket{x}_n\xmapsto{\mathsf{C\text{-}Q_{ADD}}(a),\mathsf{Q_{COMP}}(p),\mathsf{C\text{-}Q_{SUB}}(p)}\ket{0}_{s-1}\ket{c}_1\ket{x}_n\ket{x+c\cdot a \mod p}_{n}\ket{\mathbf{1}[x+c\cdot a \geq p]}_1~.
\end{align*}
The comparator qubit storing $\ket{\mathbf{1}[x+c\cdot a \geq p]}_1$ is finally supposed to be uncomputed using $\mathsf{C\text{-}Q_{COMP}'}(a)$. We can apply the MBU lemma here on the comparator qubit and the uncomputation subroutine $\mathsf{C\text{-}Q_{COMP}'}(a)$, thus halving the cost of $\mathsf{C\text{-}Q_{COMP}'}(a)$ in expectation.
Therefore, we finally get a cost of $r=\left(r_{\mathsf{{C\text{-}ADD}}}+r_{\mathsf{{COMP}}}+r_{\mathsf{{C\text{-}SUB}}}+\left(r_{\mathsf{{C\text{-}COMP}}}'/2\right)\right)$ $\mathsf{Tof}$ gates in expectation, and the same ancilla count at proposition~\ref{prp:controlled-constant-modular-adder-vbe-architecture}.
\end{proof}

\subsection{Two-sided comparison}
The following result (stating directly the complexity with and without using MBU) checks if a register $\ket{x}$ has a value in the range specified by two other quantum registers $\ket{y}$ and $\ket{z}$. Compared to a non-MBU implementation, we save $25\%$ for the $\mathsf{Tof}$ gate cost.

\begin{theorem}[Two-sided comparator]~\label{thm:two-sided-comp}
Let $x,y,z \in \{0,1\}^n$, and $t \in \{0,1\}$ Let us assume we have access to $\mathsf{Q_{COMP}}$, with
$$\ket{x}_n\ket{y}_n\ket{t}_1 \xmapsto{\mathsf{Q_{COMP}}}\ket{x}_n\ket{y}_n\ket{t \oplus \mathbf{1}[x<y]}_1$$ using $s_{\mathsf{COMP}}$ ancillas and $r_{\mathsf{COMP}}$ $\mathsf{Tof}$ gates.
Let us assume we also have access to $\mathsf{C\text{-}Q'_{COMP}}$, with
$$\ket{c}_1\ket{x}_n\ket{y}_n\ket{t}_1 \xmapsto{\mathsf{C\text{-}Q'_{COMP}}}\ket{c}_1\ket{x}_n\ket{y}_n\ket{t \oplus c\cdot \mathbf{1}[x < y]}_1$$ using $s'_{\mathsf{C\text{-}COMP}}$ ancillas and $r'_{\mathsf{C\text{-}COMP}}$ $\mathsf{Tof}$ gates. Then, there is a circuit that performs the operation $\mathsf{Q_{IN\_RANGE}}$
$$\ket{x}_n\ket{y}_n\ket{z}_n\ket{t}_1\xmapsto{\mathsf{Q_{IN\_RANGE}}}\ket{x}_n\ket{y}_n\ket{z}_n\ket{t \oplus \mathbf{1}[x \in (y,z)]}_1$$ using $s=1+s_{\mathsf{COMP}}$ ancillas and $r=1.5r_\mathsf{COMP} + r'_\mathsf{C\text{-}COMP}$ $\mathsf{Tof}$ gates.
\end{theorem}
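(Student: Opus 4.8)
The plan is to reduce the two-sided (open-interval) membership test to a product of two ordinary comparisons, to realize that product with a single controlled comparator, and to clear the one scratch qubit by measurement-based uncomputation rather than by a second comparator. First I would note that, since $\mathsf{Q_{COMP}}$ computes the predicate $\mathbf{1}[x<y]$, the membership indicator factorizes as $\mathbf{1}[x\in(y,z)] = \mathbf{1}[y<x]\cdot\mathbf{1}[x<z]$; the first factor is obtained from $\mathsf{Q_{COMP}}$ by supplying the registers in the order $(y,x)$, and the second directly from the order $(x,z)$. (If $z\le y$ the interval is empty and the product is automatically $0$, so no special case is needed.)

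Concretely, I would allocate one fresh ancilla $\ket{0}_1$ — this is the single extra ancilla accounting for $s = 1 + s_{\mathsf{COMP}}$, the rest of the workspace being the $s_{\mathsf{COMP}}$ qubits reused across the comparator calls — and proceed in three stages. Stage one applies $\mathsf{Q_{COMP}}$ to $(X,Z)$ with the ancilla as target, performing $\ket{x}\ket{z}\ket{0} \mapsto \ket{x}\ket{z}\ket{\mathbf{1}[x<z]}$ at a cost of $r_{\mathsf{COMP}}$ $\mathsf{Tof}$ gates. Stage two applies $\mathsf{C\text{-}Q'_{COMP}}$ with the ancilla as control qubit, inputs $(Y,X)$, and $T$ as target; this writes $t \oplus \mathbf{1}[x<z]\cdot\mathbf{1}[y<x] = t \oplus \mathbf{1}[x\in(y,z)]$ into $T$, costing $r'_{\mathsf{C\text{-}COMP}}$ $\mathsf{Tof}$ gates. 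Stage three must clear the ancilla, which still holds $\mathbf{1}[x<z]$.

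For the clearing step I would invoke the MBU lemma instead of a second full comparator. Viewing the preserved registers $(X,Z)$ as the register $A$ and the ancilla as $G$, the circuit $\mathsf{Q_{COMP}}$ restricted to $(X,Z,\text{ancilla})$ is a self-adjoint unitary $U_g\ket{x,z}_A\ket{b}_G = \ket{x,z}_A\ket{b\oplus g(x,z)}_G$ with $g(x,z)=\mathbf{1}[x<z]$, so the state entering stage three has exactly the form $\sum_{x,z}\alpha_{x,z}\ket{x,z}_A\ket{g(x,z)}_G$ demanded by the lemma. MBU then uncomputes the ancilla using $\mathsf{Q_{COMP}}$ only with probability $1/2$, costing $r_{\mathsf{COMP}}/2$ $\mathsf{Tof}$ gates in expectation. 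Summing the three stages gives $r = r_{\mathsf{COMP}} + r'_{\mathsf{C\text{-}COMP}} + r_{\mathsf{COMP}}/2 = 1.5\,r_{\mathsf{COMP}} + r'_{\mathsf{C\text{-}COMP}}$ in expectation, as claimed.

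The main obstacle is justifying the MBU step: I must confirm that $\mathsf{Q_{COMP}}$ acts as a genuine self-adjoint $U_g$ on $(X,Z,\text{ancilla})$, i.e. that its internal workspace is returned to $\ket{0}$ and that it leaves $X,Z$ untouched while XOR-ing a pure function of them into the ancilla, so that the ancilla ends up unentangled from the workspace and carries precisely $g(x,z)$. A secondary point requiring care is the orientation of each comparison — which argument plays the role of the smaller element — so that the two factors genuinely combine into the open-interval indicator $y<x<z$; reversing either direction would compute the complementary or an empty region rather than the intended one.
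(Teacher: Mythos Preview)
Your proposal is correct and follows essentially the same approach as the paper: factor $\mathbf{1}[x\in(y,z)]$ as a product of two comparisons, write one factor into a fresh ancilla with $\mathsf{Q_{COMP}}$, use the ancilla as the control of $\mathsf{C\text{-}Q'_{COMP}}$ to deposit the product into $T$, and then clear the ancilla via MBU to halve the cost of that final $\mathsf{Q_{COMP}}$ in expectation. The only cosmetic difference is that the paper stores $\mathbf{1}[y<x]$ in the ancilla and controls the computation of $\mathbf{1}[x<z]$, whereas you swap these roles; since the product is symmetric this changes nothing, and your discussion of why the MBU lemma applies is in fact more careful than the paper's one-line invocation.
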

\begin{proof}
    We begin by first checking if $y < x$
    \begin{align*}
       \ket{0}_s\ket{x}_n\ket{y}_n\ket{z}_n\ket{t}_1\xmapsto{\mathsf{Q_{COMP}}}& \ket{0}_{s-1}\ket{x}_n\ket{y}_n\ket{z}_n\ket{\mathbf{1}[y < x]}\ket{t}_1\\ \intertext{Next we compute whether $x<z$ controlled on $\mathsf{1}[y<x]$, and store the output in our target. We know that $\mathbf{1}[x \in (y,z)]\equiv(\mathbf{1}[y< x]\cdot \mathbf{1}[x< z])$, and since we have both the comparator values stored in memory}
       \ket{0}_{s-1}\ket{x}_n\ket{y}_n\ket{z}_n\ket{\mathbf{1}[y < x]}\ket{t}_1\xmapsto{\mathsf{Q'_{C\text{-}COMP}}}\ket{0}_{s-1}\ket{x}_n&\ket{y}_n\ket{z}_n\ket{\mathbf{1}[y < x]}_1\ket{t\oplus \mathbf{1}[x \in (y,z)]}_1~.\\ \intertext{We finally uncompute the intermediate value $\mathsf{1}[y<x]$ using $\mathsf{Q_{COMP}}$}
       \ket{0}_{s-1}\ket{x}_n\ket{y}_n\ket{z}_n\ket{\mathbf{1}[y < x]}_1\ket{t\oplus \mathbf{1}[x \in (y,z)]}_1\xmapsto{\mathsf{Q_{COMP}}}&\ket{0}_{s}\ket{x}_n\ket{y}_n\ket{z}_n\ket{t\oplus (\mathbf{1}[y < x]\cdot \mathbf{1}[x < z])}_1 ~.
    \end{align*}
    Therefore, we get a cost of $r=2r_\mathsf{COMP} + r'_\mathsf{C\text{-}COMP}$. This cost can be decreased by applying the MBU lemma on the $\mathsf{Q_{COMP}}$ uncomputing $\mathbf{1}[y<x]$, thus leading to a cost of $r=1.5r_\mathsf{COMP} + r'_\mathsf{C\text{-}COMP}$.
\end{proof}

\subsection*{Acknowledgement}
This work started when AS was working at Inveriant Pte. Ltd. AS is supported by Innovate UK under grant \emph{10004359}.
AMM has received funding from the European Union’s \emph{Horizon 2020 research and innovation programme} under the grant agreement \emph{N°101034324} and has been partially supported by the \emph{Italian Group for Algebraic and Geometric Structures and their Application} (GNSAGA–INdAM).
This work is supported by the National Research Foundation, Singapore, and A*STAR under its CQT Bridging Grant. We also acknowledge funding from the Quantum Engineering Programme (QEP 2.0) under grants \emph{NRF2021-QEP2-02-P05} and \emph{NRF2021-QEP2-02-P01}. We thank Anupam Chattopadhyay and Zeyu Fan for useful discussions. AMM and AL thank Niels Benedikter for the hospitality at Università degli Studi di Milano.

\printbibliography

\appendix
\section{Background on binary arithmetics}\label{Sec:binaryarithmetics}
In this appendix, we briefly survey some standard results on bit string arithmetics. In particular, we recall how bit strings can be interpreted as unsigned or signed integers.
We refer to the section \ref{subsec:Conventions} for our notation on bit strings arithmetics.

    \begin{proposition}\label{prop:subtraction-2complement}
        The bit string subtraction can be expressed in terms of $2$'s complement as
        $$ \mathbf{x}-\mathbf{y} = \mathbf{x} + \overline{\overline{\mathbf{y}}}~.$$
    \end{proposition}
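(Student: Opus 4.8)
The plan is to pass from bit strings to their integer values and verify the identity there, since every operator involved has a clean arithmetic description. The only preliminary fact I need is that for any $m$-bit string $\mathbf{z}$ with integer value $z$, its $1$'s complement $\overline{\mathbf{z}}$ (Definition~\ref{def:string1complement}) has integer value $(2^m-1)-z$. This is immediate from the bit-wise formula $(\overline{\mathbf{z}})_i = z_i\oplus 1 = 1-z_i$ weighted against the powers $2^i$, since $\sum_i (1-z_i)2^i = (2^m-1)-z$. I would record this as a one-line observation and then apply it twice.

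First I would compute the value of the left-hand side from the definition $\mathbf{x}-\mathbf{y}=\overline{(\overline{\mathbf{x}}+\mathbf{y})}$ (Definition~\ref{def:stringSubtraction}). By the observation, $\overline{\mathbf{x}}$ has value $(2^n-1)-x$, so the $(n+1)$-bit string $\overline{\mathbf{x}}+\mathbf{y}$ has value $(2^n-1)-x+y$, which lies in $[0,2^{n+1}-1]$ and hence needs no reduction. Taking the $1$'s complement of this $(n+1)$-bit string gives
\[
(2^{n+1}-1)-\bigl[(2^n-1)-x+y\bigr] = 2^{n}+x-y .
\]

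Next I would compute the right-hand side. By Definition~\ref{def:string2complement}, $\overline{\overline{\mathbf{y}}}=\overline{\mathbf{y}}+\mathbf{1}$ has value $\bigl[(2^n-1)-y\bigr]+1 = 2^n-y$ taken modulo $2^n$, since the $2$'s complement is an $n$-bit quantity. Adding $\mathbf{x}$ yields $x+\bigl((2^n-y)\bmod 2^n\bigr)$, which equals $2^n+x-y$ whenever $y\neq 0$ and so matches the left-hand side exactly; for $y=0$ the two sides still agree on their $n$ low-order bits (both encode $x$) and differ only in the overflow position, reflecting the familiar fact that the $2$'s complement of $0$ is $0$. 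I would therefore conclude that both sides encode the same element of $\mathbb{Z}_{2^n}$, namely the result register $x-y\bmod 2^n$, together with the sign/borrow bit.

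The main obstacle is precisely this top-bit bookkeeping and the degenerate $y=0$ case: the literal operators live on different widths, since $\overline{\overline{\cdot}}$ truncates to $n$ bits whereas $\overline{(\overline{\mathbf{x}}+\mathbf{y})}$ naturally occupies $n+1$ bits, so the cleanest fully honest statement is equality modulo $2^n$ on the result bits. A bit-exact alternative that avoids the integer reduction is an induction on the bit index, tracking the carry recursion~\eqref{eq:majority} of the addition $\mathbf{x}+\overline{\overline{\mathbf{y}}}$ against the borrow recursion~\eqref{eq:borrow}; however, because the $+\mathbf{1}$ inside $\overline{\overline{\mathbf{y}}}$ introduces its own nested carry, this route is combinatorially heavier and I would keep the integer-value argument as the main proof, relegating the carry/borrow correspondence to a remark.
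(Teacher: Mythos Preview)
Your integer-value argument is correct and is a genuinely different route from the paper's. The paper proves the identity purely at the bit level: it writes the $k$-th bit of $\mathbf{x}+\overline{\overline{\mathbf{y}}}$ as $x_k\oplus(y_k\oplus1)\oplus z_k$ with the carry recursion $z_0=1$, $z_{i+1}=\maj(x_i,y_i\oplus1,z_i)$, and then shows by induction that $z_k=b_k\oplus1$, which immediately matches Definition~\ref{def:stringSubtraction}. The key trick you did not anticipate is that the ``$+\mathbf{1}$'' hidden in $\overline{\overline{\mathbf{y}}}$ need not be propagated as a separate nested carry: it is absorbed once and for all into the initial condition $z_0=1$. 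So the bit-level route is not combinatorially heavier than yours; it is in fact a clean one-line induction.

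What your approach buys is conceptual clarity (the identity is just ``$2^n-y$ is the additive inverse of $y$ modulo $2^n$'') and, notably, it exposes a genuine edge case the paper's proof silently sidesteps. Your observation about $y=0$ and the top-bit mismatch is real: with the paper's stated codomain $\overline{\overline{\cdot}}:\{0,1\}^n\to\{0,1\}^n$, the $(n{+}1)$-st bits of the two sides disagree at $y=0$. The paper's proof implicitly reads $\mathbf{x}+\overline{\overline{\mathbf{y}}}$ as $\mathbf{x}+\overline{\mathbf{y}}$ with carry-in $1$, which is the interpretation under which the proposition holds bit-exactly. Your weakening to equality modulo $2^n$ is the honest way to phrase it under the literal definitions; either fix is acceptable, and it is to your credit that you flagged it.
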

    \begin{proof}
        According to definition~\ref{def:stringAddition} and~\ref{def:string2complement}, the string on the right-hand side is defined bitwise as
        $ r_k=x_k\oplus y_k  \oplus 1 \oplus z_k $,
        where $z_k$ is defined recursively as follows
        \begin{align}
            z_0 & = 1 \notag \\
            z_{i+1} & = \maj(x_i, y_i\oplus 1, z_i) ~. \notag
        \end{align}
        The thesis follows by showing inductively that $z_k=b_k\oplus 1$, where $b_k$ is the borrowing defined in definition~\ref{def:stringSubtraction}.
        Namely, assume $z_k = b_k\oplus 1$, then 
        \begin{align*}
            z_{k+1} &=~
            x_k y_k \oplus x_k y_k \oplus x_k z_k \oplus y_k z_k \oplus x_k \oplus z_k 
            \\
            &=~
            x_k y_k \oplus x_k y_k \oplus x_k (b_k\oplus 1) \oplus y_k (b_k\oplus 1) \oplus x_k \oplus b_k \oplus 1 
            \\
            &=~
            x_k y_k \oplus x_k y_k \oplus x_k b_k \oplus y_k b_k \oplus y_k \oplus b_k \oplus 1 
                        \\
            &=~
            b_{k+1}\oplus 1 ~.
        \end{align*}
    \end{proof}

    \begin{remark}[Binary encoding of a positive integer]\label{rem:unsignedencoding}
        Any bit string can be understood as the binary representation of a given positive integer number via the following bijective function
        \begin{align*}
            \{0,1\}^n &\leftrightarrow \{0,1,\dots,2^n-1\}\\
            \mathbf{x} &\mapsto  \sum_{k=0}^{n-1} x_k 2^k
        \end{align*}
        If not stated differently, we shall often equate the decimal and binary representation of a given positive integer number.
        Note that bit string addition is precisely built in such a way that the string $\mathbf{x}+\mathbf{y}$ corresponds to the positive natural number $x+y$; the extra bit is added to take into account the overflow of the carry when $(x+y)\geq 2^n$.
        Moreover, one can easily check that $\mathbf{x}+\overline{\mathbf{x}} = \mathbf(2^n-1)$.
    \end{remark}
    \begin{proposition}\label{prop:comparation-as-subtraction}
        Given $x,y \in  \{0,1,\dots,2^n-1\}$ two positive integer, one has that $(\mathbf{x}-\mathbf{y})_{n+1} = (x < y)$.
    \end{proposition}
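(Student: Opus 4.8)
The plan is to argue directly from the bitwise borrow description of the difference in Definition~\ref{def:stringSubtraction}, rather than from the decimal value of the string, since the overflow bit is most transparently the final borrow. Recall that $\mathbf{x}-\mathbf{y}\in\{0,1\}^{n+1}$ is given by $d_i = x_i\oplus y_i \oplus b_i$ with $b_0=0$ and $b_{i+1}=\maj(x_i\oplus 1,\,y_i,\,b_i)$. Interpreting $x,y$ as stored with a leading zero (so $x_n=y_n=0$), the most significant/overflow bit is $(\mathbf{x}-\mathbf{y})_{n+1}=d_{n+1}=x_{n+1}\oplus y_{n+1}\oplus b_{n+1}=b_{n+1}$, so it suffices to show $b_{n+1}=\mathbf{1}[x<y]$, i.e. that the borrow-out records exactly the comparison.

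The key step I would carry out is an induction establishing the invariant $b_k = \mathbf{1}[\,x \bmod 2^k < y \bmod 2^k\,]$ for all $0\le k\le n+1$: the borrow into position $k$ compares the low $k$ bits of $x$ and $y$. The base case $b_0=\mathbf{1}[0<0]=0$ is immediate. For the inductive step I would write $x\bmod 2^{k+1}=x_k2^k+(x\bmod 2^k)$ and likewise for $y$, then split on $(x_k,y_k)$. If $(x_k,y_k)=(0,1)$ then $\maj(1,1,b_k)=1$ and the low $(k{+}1)$ bits force $<$; if $(x_k,y_k)=(1,0)$ then $\maj(0,0,b_k)=0$ and they force $\ge$; and if $x_k=y_k$ then $\maj(x_k\oplus 1,y_k,b_k)=b_k$ and the comparison is inherited from the low $k$ bits, matching $b_k$ by the inductive hypothesis. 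Thus the invariant propagates through the majority recursion.

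Applying the invariant at $k=n$ gives $b_n=\mathbf{1}[x<y]$ (since $x\bmod 2^n=x$, $y\bmod 2^n=y$), and because $x_n=y_n=0$ the recursion yields $b_{n+1}=\maj(1,0,b_n)=b_n$; hence $(\mathbf{x}-\mathbf{y})_{n+1}=b_{n+1}=\mathbf{1}[x<y]$, as claimed. An alternative route would be to invoke Proposition~\ref{prop:subtraction-2complement} and read off the sign of $\mathbf{x}+\overline{\overline{\mathbf{y}}}$, but that forces one to track carefully the overflow/bias convention of the $(n+1)$-bit encoding, which is exactly the delicate point; the borrow induction sidesteps it. I expect the only real obstacle to be bookkeeping: pinning down unambiguously that the index $n+1$ denotes the overflow (borrow-out) bit, and checking the four-way case split against $\maj$; once those are fixed the conclusion is immediate.
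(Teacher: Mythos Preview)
Your proof is correct; the induction invariant $b_k=\mathbf{1}[x\bmod 2^k<y\bmod 2^k]$ is exactly the right statement about the borrow, and the four-way case split against the $\maj$ recursion goes through as you describe.

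The paper's argument is a close cousin but organised differently: rather than maintain the full invariant, it simply locates the most significant index $k$ at which $x_k\neq y_k$, observes that for all $j>k$ the equal bits force $b_{j+1}=\maj(x_j\oplus1,x_j,b_j)=b_j$ (so the borrow freezes above $k$), and then reads off $b_{k+1}$ from the single decisive pair $(x_k,y_k)$. This is shorter but implicitly skips the $x=y$ case and leaves the ``borrow freezes when bits agree'' step to the reader; your induction makes both of those explicit and, as you note, cleanly isolates the index bookkeeping (the paper itself wobbles between $d_n$ and $d_{n+1}$). Your alternative via Proposition~\ref{prop:subtraction-2complement} and the $2$'s-complement sign would indeed work too, and in fact the paper later proves exactly that as Proposition~\ref{prop:signeddifference}, but the direct borrow argument is the cleaner route for this statement.
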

    \begin{proof}
        Assume that $x_j= y_j$ for any $k < j < n$ and the $k$-th bit is the most significant one such that $x_k\neq y_k$.
        According to definition~\ref{def:stringSubtraction}, one has $d_j= b_j = b_{k+1}$ for any $k < j \leq n$.
        Therefore if $x>y$ one must have $x_k=1$ and $y_k=0$, thence $d_{n+1}=b_{k+1}=0$. On the contrary, one has $d_{n+1}=b_{k+1}=1$.
    \end{proof}
    
    \begin{remark}[Binary encoding of signed integer]\label{rem:2scomplement}
        Any signed integer number can be encoded as a bit string via the following bijective function
        \begin{align*}
            \{0,1\}^{n+1} &\leftrightarrow \{-2^n,\dots,0,1,\dots,2^n-1\}\\[-0.5px]
            \mathbf{x} &\mapsto  -x_{n}2^n+\sum_{k=0}^{n-1} x_k 2^k \\[-0.75px]
            \mathbf{y|_2} &\mapsfrom y
        \end{align*}
        where $\mathbf{y|_{n+1}}$ denotes the string encoding the signed integer $y$ in the \emph{2's complement}, i.e.
        \begin{equation}
            \mathbf{y|_{n+1}}=
            \begin{cases}
                \mathbf{y} & y\geq 0 \\
                \overline{\overline{\mathbf{|y|}}} & y <0~,
            \end{cases}
        \end{equation}
            and $\mathbf{y}$ and $\mathbf{|y|}$ are the string encoding the positive numbers $y$ and $|y|$ according to remark~\ref{rem:unsignedencoding}.
        According to this convention, the most significant bit of the string encodes $\text{sign}(y)$.
    \end{remark}
    \begin{proposition}\label{prop:signeddifference}
        For any two $n$ bits strings $\mathbf{x},\mathbf{y}$ with corresponding unsigned integers $x$ and $y$, one has
        \begin{equation}
            \mathbf{x} - \mathbf{y} = \mathbf{(x-y)|_{n+1}}~,            
        \end{equation}
        where the left-hand side is the $n$ strings subtraction and the right-hand side is the string corresponding to the signed integer $(x-y)$ in the $2$'s complement convention.
    \end{proposition}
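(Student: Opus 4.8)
The plan is to peel the problem back to the fundamental ``subtract-with-borrow'' identity of schoolbook arithmetic and then read off the sign bit. Recall from Remark~\ref{rem:2scomplement} that an $(n+1)$-bit string $\mathbf{d}=d_n d_{n-1}\cdots d_0$ encodes, in the $2$'s complement convention, the signed integer $-d_n 2^n+\sum_{k=0}^{n-1}d_k 2^k$. Writing $\mathbf{d}=\mathbf{x}-\mathbf{y}$ for the bit-string difference of Definition~\ref{def:stringSubtraction}, it therefore suffices to show that this signed value equals the integer $x-y$: the encoding of Remark~\ref{rem:2scomplement} is a bijection, and $x-y\in\{-(2^n-1),\dots,2^n-1\}\subseteq\{-2^n,\dots,2^n-1\}$ always lies in its range, so a match of values forces a match of strings.

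First I would isolate a single-position relation. With $d_i=x_i\oplus y_i\oplus b_i$ and the borrow $b_{i+1}=\maj(x_i\oplus 1,y_i,b_i)$ from equation~\eqref{eq:borrow}, I claim that for every position $m$,
\[
x_m-y_m-b_m \;=\; d_m-2\,b_{m+1},
\]
treating all symbols as integers in $\{0,1\}$. This is exactly the assertion that $d_m$ is the low bit of $x_m-y_m-b_m$ while $b_{m+1}$ flags whether $x_m-y_m-b_m$ is negative: indeed $d_m=x_m\oplus y_m\oplus b_m\equiv x_m-y_m-b_m\pmod 2$, and a check of the eight cases of $(x_m,y_m,b_m)$ shows that $\maj(x_m\oplus 1,y_m,b_m)$ equals the borrow-out indicator $\mathbf{1}[x_m-y_m-b_m<0]$.

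Next I would promote this to the global borrow identity by induction on $m$, proving
\[
\sum_{i=0}^{m-1}(x_i-y_i)2^i \;=\; \sum_{i=0}^{m-1}d_i 2^i\;-\;b_m 2^m.
\]
The base case $m=0$ is trivial since $b_0=0$, and the inductive step follows by multiplying the single-position relation by $2^m$ — so that $(x_m-y_m-b_m)2^m=(d_m-2b_{m+1})2^m=d_m 2^m-b_{m+1}2^{m+1}$ — and adding it to the induction hypothesis, which reproduces the statement at $m+1$.

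Finally, specializing to $m=n$ and using $\sum_{i=0}^{n-1}x_i 2^i=x$ and $\sum_{i=0}^{n-1}y_i 2^i=y$ (Remark~\ref{rem:unsignedencoding}) gives $x-y=\sum_{i=0}^{n-1}d_i 2^i-b_n 2^n$. Because the inputs are $n$-bit, the overflow position carries $x_n=y_n=0$, so the most significant output bit is $d_n=b_n$; substituting this into the right-hand side reproduces precisely the $2$'s complement value $-d_n 2^n+\sum_{k=0}^{n-1}d_k 2^k$ of $\mathbf{x}-\mathbf{y}$, which thus equals $x-y$ as desired. The only delicate point is the bookkeeping of the top (overflow) bit: recognizing that the final borrow $b_n$ is simultaneously the sign bit of the result is what collapses the borrow identity into the $2$'s complement statement, and it is consistent with Proposition~\ref{prop:comparation-as-subtraction}, where the same bit is shown to equal $\mathbf{1}[x<y]$.
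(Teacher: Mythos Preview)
Your proof is correct and takes a genuinely different route from the paper. The paper argues by a sign case split: when $x\ge y$ it simply asserts that $\mathbf{x}-\mathbf{y}$ is the unsigned encoding of $x-y$, and when $x<y$ it chains the identity $\mathbf{x}-\mathbf{y}=\mathbf{x}+\overline{\overline{\mathbf{y}}}$ of Proposition~\ref{prop:subtraction-2complement} through several $1$'s/$2$'s complement rewrites to land on $\overline{\overline{\mathbf{|x-y|}}}$. Your argument instead stays entirely at the level of the borrow recurrence of Definition~\ref{def:stringSubtraction}: you prove the single-digit identity $x_m-y_m-b_m=d_m-2b_{m+1}$, telescope it into $x-y=\sum_{i<n}d_i2^i-b_n2^n$, and then identify $b_n$ with the top output bit $d_n$ via the zero-padding $x_n=y_n=0$. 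The paper's approach is more algebraic and reuses earlier lemmas, but it is somewhat loose about the width of the intermediate strings (particularly the top bit) and relies on an unstated fact in the $x\ge y$ case; your approach is more elementary and fully self-contained, handles both signs uniformly without a case split, and makes explicit the one nontrivial point---that the final borrow \emph{is} the sign bit---which is exactly the content of Proposition~\ref{prop:comparation-as-subtraction}.
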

    \begin{proof}
        One has to check two separate cases. First case, when $x\geq y$, one has that $(x-y)$ is a positive number and $ \mathbf{x} - \mathbf{y}$ correspond to the string $\mathbf{(x-y)}$. In the other case, when $x<y$, one gets from the definition of $n$ strings subtraction that
        \begin{align*}
            \mathbf{x} - \mathbf{y} =&~ \mathbf{x} + \overline{\overline{\mathbf{y}}} 
            \\
            =&~ \overline{\mathbf{y}} + \mathbf{x} + \mathbf{1}
            \\
            =&~
            \overline{\mathbf{(y-x)}} + 1
            \\
            =&~
            \overline{\overline{\mathbf{(y-x)}}}
            \\
            =&~
            \overline{\overline{\mathbf{|x-y|}}}~.
        \end{align*}
    \end{proof}

    The upshot is that given any classical circuit implementing $n$ strings addition, one can implement signed integer arithmetic, encoding them in $2$'s complement.
    \begin{proposition}
        For any two $n$ bits strings $\mathbf{x|_{n}}, \mathbf{y|_{n}}$, corresponding to the signed integers $x$ and $y$ in  the $2$'s complement convention, one has
        $
            \mathbf{x|_{n}} + \mathbf{y|_{n}} = \mathbf{(x+y)|_{n+1}}
        $,
         where the the left-hand side is the $n$ strings addition and the right-hand side is the string corresponding to the signed integer $(x+y)$ in the $2$'s complement convention.
    \end{proposition}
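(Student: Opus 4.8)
The plan is to reduce the claimed identity of bit strings to an identity of integers modulo $2^{n+1}$, exploiting the fact that the $2$'s complement encoding is nothing but reduction modulo a power of two into a symmetric range. Write $U_m(\mathbf{z})$ and $S_m(\mathbf{z})$ for the unsigned value (remark~\ref{rem:unsignedencoding}) and the signed $m$-bit $2$'s complement value (remark~\ref{rem:2scomplement}) of a string $\mathbf{z}$. The elementary fact I would isolate first is $S_m(\mathbf{z}) = U_m(\mathbf{z}) - z_{m-1}2^m$, so that $S_m(\mathbf{z})\equiv U_m(\mathbf{z}) \pmod{2^m}$ and $S_m(\mathbf{z})$ is the \emph{unique} representative of that residue class in $[-2^{m-1},2^{m-1})$. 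Since two strings of equal length coincide exactly when their signed values agree, it suffices to show that both sides of the asserted equation have signed $(n+1)$-bit value equal to $x+y$.

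First I would sign-extend the two $n$-bit inputs $\mathbf{x|_{n}},\mathbf{y|_{n}}$ to $(n+1)$-bit strings $\hat{\mathbf{x}},\hat{\mathbf{y}}$ by repeating the top bit; this leaves the signed values unchanged, $S_{n+1}(\hat{\mathbf{x}})=x$ and $S_{n+1}(\hat{\mathbf{y}})=y$, while converting the width-increasing $n$-string addition into an honest $(n+1)$-bit addition whose own carry-out is discarded. By remark~\ref{rem:unsignedencoding} the output string $\mathbf{s}$ then satisfies $U_{n+1}(\mathbf{s})\equiv U_{n+1}(\hat{\mathbf{x}})+U_{n+1}(\hat{\mathbf{y}}) \pmod{2^{n+1}}$, and combining this with the congruence $S_{n+1}\equiv U_{n+1}$ from the previous paragraph yields $S_{n+1}(\mathbf{s})\equiv x+y \pmod{2^{n+1}}$. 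To close the argument I would invoke a range bound: because $x,y$ lie in the $n$-bit range $[-2^{n-1},2^{n-1})$, the exact sum satisfies $x+y\in[-2^n,2^n-2]\subset[-2^n,2^n)$, which is precisely the interval of representatives realised by $S_{n+1}$. Hence the congruence upgrades to an equality $S_{n+1}(\mathbf{s})=x+y$, i.e.\ $\mathbf{s}=\mathbf{(x+y)|_{n+1}}$, as desired. As an alternative I could imitate the case analysis in the proof of proposition~\ref{prop:signeddifference}, splitting on the signs of $x$ and $y$ and using proposition~\ref{prop:subtraction-2complement} to track the carry chain explicitly in each case.

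The hard part will be handling the most significant (overflow) bit correctly. The bare $n$-string adder of definition~\ref{def:stringAddition} deposits the carry-out into position $n$, which is the correct behaviour for unsigned operands but \emph{differs} from the sign bit demanded by the $2$'s complement encoding whenever $x$ and $y$ carry opposite signs. The genuine content of the proof is therefore the sign-extension step, which replaces that spurious carry-out by a faithfully propagated sign bit and thereby reconciles the $n\to n+1$ widening with the signed interpretation; once this is made precise, the surrounding modular bookkeeping and the range check are routine.
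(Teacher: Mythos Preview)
Your modular-arithmetic strategy is a genuinely different route from the paper's, which proceeds by a four-way case split on the signs of $x$ and $y$, rewriting each negative operand via its $2$'s complement and then invoking proposition~\ref{prop:subtraction-2complement} and proposition~\ref{prop:signeddifference} to identify the resulting bit string with $\mathbf{(x+y)|_{n+1}}$. Your reduction to a congruence mod $2^{n+1}$ together with a range bound is more conceptual and avoids the case analysis entirely; the alternative you mention at the end (imitating the proof of proposition~\ref{prop:signeddifference}) is essentially what the paper does.

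There is, however, a real gap in your main argument, and it lies exactly where you yourself flag the difficulty. You assert that sign-extending the inputs to $(n+1)$ bits ``convert[s] the width-increasing $n$-string addition into an honest $(n+1)$-bit addition whose own carry-out is discarded''. But these two operations are \emph{not} the same string. The $n$-string adder of definition~\ref{def:stringAddition} places the raw carry $c_n$ in bit~$n$; your sign-extended mod-$2^{n+1}$ sum places $x_{n-1}\oplus y_{n-1}\oplus c_n$ there. For $n=2$, $x=1$ (so $\mathbf{x|_2}=01$) and $y=-1$ (so $\mathbf{y|_2}=11$), the bare $2$-string addition gives $01+11=100$, while your sign-extended computation (equivalently $\mathbf{(x+y)|_3}$) gives $000$. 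So the step where you replace one operation by the other is not a rewriting but a change of statement: what you actually prove is the (correct) claim about sign-extended addition, not the proposition about the adder of definition~\ref{def:stringAddition}. Your final paragraph recognises the discrepancy but treats the sign-extension as a fix to be applied rather than as an identity to be established; that identity is precisely what is missing, and your own example-level observation shows it fails at bit~$n$ in the mixed-sign case. If you want to rescue the modular approach for the proposition as stated, you would need to argue directly---bit by bit or via the case analysis you mention---that the adder's carry-out already coincides with the required sign bit, which is the content the paper extracts from proposition~\ref{prop:signeddifference}.
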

    \begin{proof}
        One has to consider four separate cases. In the first case $x,y\geq 0$, one gets the usual unsigned addition. When $x\geq0$ and $y<0$ one has
        \begin{align}
            \text{lhs}=
            \mathbf{x} + \overline{\overline{\mathbf{|y|}}} = \mathbf{x} - \mathbf{|y|} = \mathbf{(x-|y|)|_{n+1}}
        \end{align}
        employing proposition~\ref{prop:signeddifference} in the last equality. Similarly, when $x<0$ and $y\geq 0$, one has
        \begin{align}
            \text{lhs}=
            \overline{\overline{\mathbf{|x|}}} + \mathbf{y} =
            \mathbf{|y|}-\mathbf{x} =
            \mathbf{(y-|x|)|_{n+1}}
            ~.
        \end{align}
         At last, when $x,y<0$, one notice that
        \begin{align}
            \text{lhs}=
            \overline{\overline{\mathbf{|x|}}} + \overline{\overline{\mathbf{|y|}}} =
            {\overline{\mathbf{|x|}}} + \overline{\overline{\mathbf{|y|}}} +1 =
            \overline{\overline{\mathbf{|x|}-\overline{\overline{|y|}}}} =
            \overline{\overline{\mathbf{|x|+|y|}}}
            ~.
        \end{align}
    \end{proof}

\end{document}